\let\olddiv\div
\setlist[itemize]{leftmargin = 0.3in}
\renewcommand{\eqref}[1]{\textup{\eqreftagform@{\ref{#1}}}}
\let\eqreftagform@\tagform@
\def\tagform@#1{%
  \maketag@@@{\makebox[1sp][l]{\hspace{-2em}(\ignorespaces#1\unskip\@@italiccorr)}}%
}
\newtheorem{theorem}{Theorem}[section]
\newtheorem{corollary}[theorem]{Corollary} 
\newtheorem{proposition}[theorem]{Proposition} 
\newtheorem{lemma}[theorem]{Lemma} 
\theoremstyle{definition}
\newtheorem{definition}{Definition}[section] 
\theoremstyle{remark}
\newtheorem{remark}{Remark}[section] 
\numberwithin{equation}{section}
\newcommand{\dimMM}{\texttt{M}}
\newcommand{\dimLL}{\texttt{L}}
\newcommand{\dimTT}{\texttt{T}}
\newcommand{\dimone}{\texttt{[1]}}
\newcommand{\dimM}{\texttt{[Mass]}}
\newcommand{\dimL}{\texttt{[Length]}}
\newcommand{\dimT}{\texttt{[Time]}}
\newcommand{\dimQ}{\texttt{[Charge]}}
\newcommand{\dimE}{\texttt{[Energy]}}
\newcommand{\dimP}{\texttt{[Momentum]}}
\newcommand{\dimV}{\texttt{[Velocity]}}
\newcommand{\bbN}{\mathbb{N}}
\newcommand{\bbR}{\mathbb{R}}
\newcommand{\bbC}{\mathbb{C}}
\newcommand{\frake}{\mathfrak{e}}
\newcommand{\frakg}{\bm{\mathfrak{g}}}
\newcommand{\frakS}{\bm{\mathfrak{S}}}
\newcommand{\frakT}{\bm{\mathfrak{T}}}
\newcommand{\frakF}[1]{\mathcal{F}(#1)}
\newcommand{\frakX}[1]{\mathfrak{X}(#1)} 
\newcommand{\extprod}[2]{\bigwedge\nolimits^{#1}(#2)}
\newcommand{\Star}{{\star}}
\newcommand{\rng}[1]{\mathring{#1}}
\newcommand{\eps}{\varepsilon}
\newcommand{\rmd}{\mathrm{d}}
\newcommand{\sfg}{\mathsf{g}}
\newcommand{\sfh}{\mathsf{h}}
\DeclarePairedDelimiter\av{\lvert}{\rvert}
\DeclarePairedDelimiter\nn{\lVert}{\rVert}
\let\oldav\av
\def\av{\@ifstar{\oldav}{\oldav*}}
\let\oldnn\nn
\def\norm{\@ifstar{\oldnn}{\oldnn*}}
\begin{document}

{\fontfamily{lmr}\selectfont

\title[GR solutions with finite energy for a point charge under BLTP electromagnetism]{General relativistic solutions with finite self-energy for a point charge under Bopp-Podolsky electromagnetism}

\author{Érik Amorim}

\address{Department Mathematik, Universität zu Köln, Cologne, Germany, eamorim@uni-koeln.de}

\keywords{Bopp-Podolsky, Field self-energy}

\subjclass[2020]{83C15, 83C22, 83C50}

\date{August 2025}

\begin{abstract}
    We establish the existence of a family of static, spherically symmetric spacetimes that are solutions of the Einstein Field Equations of General Relativity coupled to the electric field of a static point charge obeying the equations of electromagnetism of Maxwell-Bopp-Landé-Thomas-Podolsky. The point charge is modeled as a naked singularity with non-positive bare mass. The singularity at the location of the charge is milder than that of the Reissner-Weyl-Nordström solution to the conventional Einstein-Maxwell equations, and, contrary to what happens for the latter, the electric-field energy of these solutions is finite.
\end{abstract}

\maketitle

\tableofcontents

\section{Introduction}
\label{sec_intro}

The problem of the \textbf{self-force} in electrodynamics consists of finding an expression for the force that the electromagnetic field generated by a charged particle exerts on the particle itself. The possibility to write well-posed classical systems of equations for the joint evolution of electromagnetic fields and their sources, without resorting to \textit{ad hoc} field averaging or bare-mass renormalization at point charges, requires working with EM (electromagnetism) laws whose associated field-density energy and momenta are integrable at the location of the point charges, which is not the case for the conventional Maxwell theory. A few generalized theories of electromagnetism have been proposed over the years specifically to address this problem, among them one originating in the 1940s that is often called \textbf{Bopp-Podolsky theory}, although it should rightfully be called \textbf{Bopp-Land\'{e}-Thomas-Podolsky theory} in honor of its original proponents Bopp~\cite{bopp}, Land\'{e} and Thomas~\cite{lande}, and Podolsky~\cite{podolsky}. We will refer to it as the \textbf{BLTP theory}.

Working in the Minkowski spacetime of SR (Special Relativity), and taking advantage of the linearity of the equations of BLTP theory, Kiessling and Tahvildar-Zadeh~\cite{kiessling_tahvildar-zadeh_SR} have shown how to formulate a well-posed system for the joint evolution of point particles and their BLTP electromagnetic fields. Naturally, one might now ask whether this result could be extended to the context of GR (General Relativity), with the point particles being modeled as naked singularities of a manifold that obeys the Einstein Field Equations with a source term that necessarily includes a contribution by the electromagnetic fields' energy and momenta densities. The present work is a natural first step towards this greater goal: The study of a universe containing a \textit{single}, \textit{static} point charge in this context.

More precisely: We rigorously establish the existence of a one-parameter family of finite-energy solutions to the Einstein equations for the spacetime of a static point charge whose electric field obeys the equations of BLTP theory and has a finite associated energy. The parameter in question is the so-called \textbf{bare mass} of the particle, namely the value of the spacetime's mass function at area-radial coordinate $r=0$, which (if finite) must be a non-positive number in order for the singularity at the location of the particle to be naked. These \textbf{E-M-BLTP (Einstein-Maxwell-BLTP) spacetimes} are to BLTP theory what the well-known RWN (Reissner-Weyl-Nordström) solution is to the conventional Maxwell theory; the RWN spacetimes, however, are to be considered \textit{too singular} at the location of the charge: Similarly to what happens in SR, the blow-up of the electric potential $\varphi(r)$ at $r=0$ causes the particle's electric-field energy to be infinite. Thus the E-M-BLTP spacetimes should be viewed as a remedy to the problem of Maxwell theory that a point charge modeled in GR as a naked singularity possesses infinite self-energy.

\begin{remark}
    There is another well-known modification of the Maxwell-Maxwell equations which also deals with the problem of infinite field energy of a point particle in flat space. Originally proposed by Born~\cite{born}, it is part of what nowadays is commonly called \textbf{Born-Infeld electrodynamics}. It was then first observed by Hoffmann~\cite{hoffmann} that, under this formulation of EM, the singularity of the static, spherically symmetric spacetime of a resting point charge is milder than that of the RWN spacetime, in the sense that the blowup of certain curvature scalars at $r=0$ is less severe in it. In~\cite{tahvildar-zadeh}, Tahvildar-Zadeh studies a class of static, spherically symmetric spacetimes that generalize that of Hoffmann with some of the same goals as ours here, but through fundamentally distinct methods: The metric coefficients $g_{\mu\nu}$ of the Hoffmann spacetimes can be explicitly solved by quadrature, partly due to the helpful fact that they satisfy the relation $g_{tt}g_{rr} = -1$, which will not be true for us.

    In~\cite{burtscher}, a weak form of the twice-contracted second Bianchi identity for static, spherically symmetric spacetimes with a ``singularity worldline'' is studied. Sufficient conditions for it to hold are found when $g_{tt}g_{rr} = -1$ and the bare mass is strictly negative. Considering how the classical Bianchi identities imply the local conservation laws $\nabla_\mu T^{\mu\nu} = 0$ (equations of motion of the matter/fields), the importance of this work is that it provides a rigorous formulation of an equation of motion for a spacetime singularity, such as a point charge in the way that it is modeled in our spacetime. However, we will find in the present work that the E-M-BLTP spacetimes satisfy $g_{tt}g_{rr} < -1$ at all points, which means that the work in~\cite{burtscher} is not directly applicable to study the motion of point-charge singularities in GR under the BLTP laws of electromagnetism, and thus more work in this topic will be needed when considering spacetimes containing two or more point charges that obey the BLTP laws.
\end{remark}

\begin{remark}
    No experimental results in physics have ever indicated that BLTP theory should be the correct classical theory of electromagnetism. The conventional Maxwell theory has stood the test of time, and it is generally assumed that quantum theories need to be taken into account for the resolution of problems such as a charged particle's apparently infinite self-energy. With that said, it bears noting that our present study is not motivated by direct application to the physical world, but rather as a theoretical toy that enables mathematically rigorous studies such as the formulation of well-posed equations of motion for charged naked singularities.
\end{remark}

\subsection{Layout of the paper}
\label{subsec_layout}

Section~\ref{sec_EM_SR} formulates both the conventional Maxwell theory and the BLTP theory of electromagnetism in the flat space of Special Relativity. Section~\ref{sec_EM_GR} explains how the equations of the previous section can be adapted to General Relativity. For completeness, the general-relativistic stress-energy tensor of BLTP theory is derived in details, and the system of Einstein equations for our case of interest of a static point charge is formulated with care, although a large portion of the calculations is relegated to the appendix. In Section~\ref{sec_main}, our main result is stated in Theorem~\ref{main_theorem}, with the rest of the section constituting its proof. It is a self-sufficient section containing all of the math in our work, and the reader can safely skip the physics from the previous sections if so inclined.

\subsection{Notations and conventions}
\label{subsec_notations}

Here we collect some basic definitions that will be needed throughout the paper, some of which are especially important to declare now because several different conventions exist for them in the literature. A good reference for all of the tensor formulas that we will introduce shortly is~\cite{straumann} --- but the reader is warned that our definition~\eqref{delta_def} of the \textit{codifferential} operator differs from the one in that work by a minus sign.

We will be working under the system of \textbf{Gaussian units}. The three fundamental physical dimensions that we need are $\dimMM = \dimM$, $\dimLL = \dimL$ and $\dimTT = \dimT$. To state the dimension of a quantity $x$ in terms of them, we use the self-explanatory notation
\begin{equation*}
    \mathrm{dim}(x) = \dimMM^a \, \dimLL^b \, \dimTT^c,
\end{equation*}
while the name of the dimension is written in square brackets. Of particular note is the fact that electric charge is a derived dimension in this system: Its basic unit, the \textbf{statcoulomb}, is defined as $1 \, \mathrm{statC} = 1 \, \mathrm{g}^{\frac12} \cdot \mathrm{cm}^{\frac32} \cdot \mathrm{s}^{-1}$, and therefore
\begin{equation*}
    \dimQ = \dimMM^{\frac12} \dimLL^{\frac32} \dimTT^{-1}.
\end{equation*}
Dimensionless quantities have dimension $\dimone$ by definition.

Einstein's summation convention is used: For any term featuring two copies of the index $\mu$ or any other Greek letter, with one of them being a lower index and the other an upper index, the sum $\sum_{\mu = 0}^3$ is implicit. Given variables $x^0,\ldots,x^3$, the partial derivative of a function $f = f(x^0,\ldots,x^3)$ with respect to $x^\mu$ is denoted by either $\partial_\mu f$ or $\partial f/(\partial x^\mu)$, and the index $\mu$ is considered to be a lower index in either of these symbols. We do not apply the summation convention to Latin indices --- sums over Latin indices are always explicitly written.

We will be working on an oriented 4-dimensional spacetime manifold $\mathcal{M}$ equipped with a Lorentzian metric $g$ of signature $({-}\,{+}\,{+}\,{+})$ and with its corresponding Levi-Civita connection $\nabla$. We denote the set of differentiable vector fields on $\mathcal{M}$ by $\frakX{\mathcal{M}}$, and that of differentiable $p$-forms (that is, totally antisymmetric tensor fields of rank $(0,p)$) by $\extprod{p}{\mathcal{M}}$. Consequently, $\extprod{0}{\mathcal{M}} =: \frakF{\mathcal{M}}$ and $\extprod{1}{\mathcal{M}} =: \mathfrak{X}^\ast(\mathcal{M})$ are the sets of differentiable scalar fields and covector fields on $\mathcal{M}$, respectively. More generally, the set of differentiable tensor fields of rank $(p,q)$ on $\mathcal{M}$ is denoted by $\mathcal{T}^p_q(\mathcal{M})$; in particular $\frakX{\mathcal{M}} = \mathcal{T}^1_0(\mathcal{M})$ and $\mathfrak{X}^\ast(\mathcal{M}) = \mathcal{T}^0_1(\mathcal{M})$. The canonical basis of vector fields corresponding to a coordinate system $(x^\mu)$ on $\mathcal{M}$ is denoted by $\{\partial_\mu: \ \mu=0,1,2,3\}$, while its dual basis of 1-forms is $\{\rmd x^\mu: \ \mu=0,1,2,3\}$. The components of the covariant derivative of a tensor field in the appropriate tensor basis formed from the canonical coordinate bases are denoted using the $\nabla_\ast$ symbol; for example for a tensor $U$ of rank $(2,1)$:
\begin{equation*}
    U = \tensor{U}{^{\mu\nu}_\lambda} \ \partial_\mu\otimes\partial_\nu\otimes\rmd x^\lambda \quad\Longrightarrow\quad \nabla U = \nabla_\xi\tensor{U}{^{\mu\nu}_\lambda} \ \rmd x^\xi\otimes\partial_\mu\otimes\partial_\nu\otimes\rmd x^\lambda.
\end{equation*}
The components of the metric $g\in\mathcal{T}^0_2(\mathcal{M})$ and of the inverse metric $g^{-1}\in\mathcal{T}^2_0(\mathcal{M})$ in a given coordinate basis are denoted respectively by $g_{\mu\nu}$ and $g^{\mu\nu}$, and they are used to lower and raise indices of tensor fields as usual. A handy property that will be repeatedly used in our tensor calculations is the fact that index contractions do not change if the upper and lower roles of the contracted indices are reversed; for example $\tensor{R}{^\alpha_\mu_\alpha_\nu} = \tensor{R}{_\alpha_\mu^\alpha_\nu}$. We also define the coordinate-dependent differential operator $\partial^\mu = g^{\mu\alpha}\partial_\alpha$. The symbol $\av{g}$ is used for $-\det\big( g_{\mu\nu}\big)$, a scalar which is also coordinate-dependent. There are no preferred coordinate systems on $\mathcal{M}$: Unless otherwise specified, if a formula makes explicit use of tensor components, it is because the formula is covariant, that is, it retains the same expression regardless of the coordinate system being used.

Einstein's equations with a stress-energy tensor $T\in\mathcal{T}^0_2(\mathcal{M})$ acting as source term are
\begin{equation} \label{EFE}
    R_{\mu\nu} - \frac{R}{2}g_{\mu\nu} = \frac{8\pi G}{c^4} T_{\mu\nu}
\end{equation}
for the Ricci scalar, Ricci tensor, Riemann tensor and Christoffel symbols
\begin{equation} \label{Ricci_Christoffel}
    R = \tensor{R}{^\mu_\mu}, \ R_{\mu\nu} = \tensor{R}{^\alpha_{\mu\alpha\nu}}, \ \tensor{R}{^\mu_{\alpha\beta\gamma}} = \partial_\beta\Gamma^\mu_{\alpha\gamma} - \partial_\gamma \Gamma^\mu_{\alpha\beta} + \Gamma^\mu_{\sigma\beta} \Gamma^\sigma_{\gamma\alpha} - \Gamma^\mu_{\sigma\gamma} \Gamma^\sigma_{\beta\alpha}, \ \Gamma^\alpha_{\mu\nu} = \frac{g^{\alpha\lambda}}{2}\big( \partial_\mu g_{\nu\lambda} + \partial_\nu g_{\mu\lambda} - \partial_\lambda g_{\mu\nu} \big).
\end{equation}

Our convention for the normalization of the wedge product between basis 1-forms is that
\begin{equation*}
    \rmd x^{\mu_1}\wedge\cdots\wedge\rmd x^{\mu_p} = \sum_{\sigma\in S_p} \mathrm{sgn}(\sigma) \ \rmd x^{\mu_{\sigma(1)}}\otimes\cdots\otimes\rmd x^{\mu_{\sigma(p)}}.
\end{equation*}
In particular, a general 2-form $\omega$ is written in the basis $\{ \rmd x^\mu\wedge\rmd x^\nu: \ \mu<\nu \}$ using the same components that appear in its expression $\omega_{\mu\nu} \ \rmd x^\mu\otimes\rmd x^\nu$ as a tensor:
\begin{equation*}
    \omega = \sum_{j=1}^3 \omega_{0j} \ \rmd x^0\wedge\rmd x^j + \omega_{12} \ \rmd x^1\wedge\rmd x^2 + \omega_{13} \ \rmd x^1\wedge\rmd x^3 + \omega_{23} \ \rmd x^2\wedge\rmd x^3.
\end{equation*}
But we will also display the components of such a differential form as an antisymmetric matrix:
\begin{equation*}
    (\omega_{\mu\nu}) = \left(
        \begin{array}{rrrr}
            0 & \omega_{01} & \omega_{02} & \omega_{03} \\
            -\omega_{01} & 0 & \omega_{12} & \omega_{13} \\
            -\omega_{02} & -\omega_{12} & 0 & \omega_{23} \\
            -\omega_{03} & -\omega_{13} & -\omega_{23} & 0
        \end{array}
    \right).
\end{equation*}
A tensor product of the form $\rmd f\otimes\rmd f$ is denoted by $\rmd f^2$. This confusing notation is standard when specifying a metric tensor in a coordinate system.

The scalar product on the space of 1-forms, that is, the map defined by the inverse metric
\begin{equation*}
    g^{-1}: \extprod{1}{\mathcal{M}}\times\extprod{1}{\mathcal{M}} \longrightarrow \frakF{\mathcal{M}}, \quad g^{-1}( \psi,\omega ) = g^{\mu\nu} \psi_\mu\omega_\nu,
\end{equation*}
generalizes to scalar products $\langle \cdot,\cdot\rangle_g: \extprod{p}{\mathcal{M}}\times\extprod{p}{\mathcal{M}} \longrightarrow \frakF{\mathcal{M}}$ for all $p$: If $p=0$ this is simply defined as $\langle f,h\rangle_g := fh$, while if $p>0$ it is uniquely defined by its action on blades:
\begin{equation} \label{blades_inner_product}
    \big\langle \psi^{(1)}\wedge\cdots\wedge\psi^{(p)} , \omega^{(1)}\wedge\cdots\wedge\omega^{(p)} \big\rangle_g := \det\big( g^{\mu\nu} \psi^{(j)}_\mu \omega^{(k)}_\nu \big)_{j,k=1,\ldots,p}, \quad \psi^{(j)},\omega^{(k)}\in\extprod{1}{\mathcal{M}}.
\end{equation}
This yields the formula
\begin{equation} \label{definition_product_pforms}
    \big\langle \psi,\omega \big\rangle_g = \frac{1}{p!} \psi_{\mu_1\cdots\mu_p}\omega^{\mu_1\cdots\mu_p}, \quad \psi,\omega\in\extprod{p}{\mathcal{M}}.
\end{equation}
Then, for each $p=0,\ldots,4$, the Hodge star $\Star: \extprod{p}{\mathcal{M}} \longrightarrow \extprod{4-p}{\mathcal{M}}$ is defined as the unique $\frakF{\mathcal{M}}$-linear map satisfying
\begin{equation} \label{definition_hodge}
    \psi \wedge (\Star\omega) = \big\langle \psi, \omega \big\rangle_g \ \mathrm{vol}_g, \quad  \psi,\omega\in\extprod{p}{\mathcal{M}},
\end{equation}
where
\begin{equation*}
    \mathrm{vol}_g = \sqrt{\av{g}} \ \rmd x^0\wedge\cdots\wedge\rmd x^3 \quad\text{for positively oriented coordinate systems } (x^\mu)
\end{equation*}
is the canonical volume form defined by the metric $g$. We note some useful properties of $\Star$:
\begin{equation} \label{properties_hodge}
    (\Star\psi)\wedge(\Star\omega) = -\psi\wedge\omega, \quad \Star\Star\psi = \left\{\begin{array}{rl}
        \psi &\text{ if } \psi\in\extprod{p}{M} \text{ for odd } p, \\
        -\psi &\text{ if } \psi\in\extprod{p}{M} \text{ for even } p,
    \end{array}\right. \quad \Star 1 = \mathrm{vol}_g.
\end{equation}
If $g = \eta$ is the Minkowski metric and a system of inertial coordinates $\{x^\mu\}$ is used (see~\eqref{mink_mettric} ahead), one checks that this definition of the Hodge-star operator yields
\begin{equation*}
    \hspace{-1em}\begin{array}{rclcrcl}
        \Star (1) &=& \phantom{-} \rmd x^0\wedge\rmd x^1\wedge\rmd x^2\wedge\rmd x^3, &\,& \Star (\rmd x^0\wedge\rmd x^1\wedge\rmd x^2\wedge\rmd x^3) &=& -1 \\
        \Star (\rmd x^0) &=& - \rmd x^1\wedge\rmd x^2\wedge\rmd x^3, &\,& \Star (\rmd x^1\wedge\rmd x^2\wedge\rmd x^3) &=& - \rmd x^0 \\
        \Star (\rmd x^1) &=& - \rmd x^0\wedge\rmd x^2\wedge\rmd x^3, &\,& \Star (\rmd x^0\wedge\rmd x^2\wedge\rmd x^3) &=& - \rmd x^1 \\
        \Star (\rmd x^2) &=& \phantom{-} \rmd x^0\wedge\rmd x^1\wedge\rmd x^3, &\,& \Star (\rmd x^0\wedge\rmd x^1\wedge\rmd x^3) &=& \phantom{-} \rmd x^2 \\
        \Star (\rmd x^3) &=& - \rmd x^0\wedge\rmd x^1\wedge\rmd x^2, &\,& \Star (\rmd x^0\wedge\rmd x^1\wedge\rmd x^2) &=& - \rmd x^3 \\
        \Star (\rmd x^0\wedge\rmd x^1) &=& - \rmd x^2\wedge\rmd x^3, &\,& \Star (\rmd x^2\wedge\rmd x^3) &=& \phantom{-} \rmd x^0\wedge\rmd x^1 \\
        \Star (\rmd x^0\wedge\rmd x^2) &=& \phantom{-} \rmd x^1\wedge\rmd x^3, &\,& \Star (\rmd x^1\wedge\rmd x^3) &=& - \rmd x^0\wedge\rmd x^2 \\
        \Star (\rmd x^0\wedge\rmd x^3) &=& - \rmd x^1\wedge\rmd x^2, &\,& \Star (\rmd x^1\wedge\rmd x^2) &=& \phantom{-} \rmd x^0\wedge\rmd x^3 \\
    \end{array}
\end{equation*}

The codifferential $\delta: \extprod{p}{\mathcal{M}}\longrightarrow\extprod{p-1}{\mathcal{M}}$ is the operator
\begin{equation} \label{delta_def}
    \delta = \Star \circ \rmd \circ \Star = \Star\rmd\Star
\end{equation}
(note that the composition sign $\circ$ is usually omitted when composing operators that act on differential forms), where $\rmd:\extprod{p}{\mathcal{M}}\longrightarrow\extprod{p+1}{\mathcal{M}}$ is the exterior derivative. We remark that a more general formula than~\eqref{delta_def} is normally taken as the definition of $\delta$, but it reduces to our formula as above in the case of a 4-dimensional manifold with a $({-}\,{+}\,{+}\,{+})$ metric. The following expressions for the operators $\rmd$ and $\delta$ are valid in any coordinate system and will be extensively used: For $\psi\in\extprod{p}{\mathcal{M}}$,
\begin{equation} \label{d_coordinates}
    (\rmd\psi)_{\mu_1\cdots\mu_{p+1}} = \sum_{j=1}^{p+1} (-1)^{j+1} \partial_{\mu_j}\psi_{\mu_1\cdots\mu_{j-1}\mu_{j+1}\cdots\mu_p},
\end{equation}
\begin{equation} \label{delta_coordinates}
    (\delta \psi)^{\mu_1\cdots\mu_{p-1}} = -\frac{1}{\sqrt{\av{g}}} \ \partial_\lambda \big( \sqrt{\av{g}} \psi^{\lambda\mu_1\cdots\mu_{p-1}} \big).
\end{equation}
Note that the indices of $\psi$ appearing in~\eqref{delta_coordinates} are raised, highlighting the fact that $\delta$ depends on the metric, as is clear from its definition~\eqref{delta_def} in terms of the Hodge star. Formulas~\eqref{d_coordinates} and~\eqref{delta_coordinates} are remarkable because, although they feature \textit{partial} derivatives as opposed to \textit{covariant} ones, they still manage to be valid in any coordinate system --- in fact they are equivalent to
\begin{equation} \label{ddelta_cov}
    (\rmd\psi)_{\mu_1\cdots\mu_{p+1}} = \sum_{j=1}^{p+1} (-1)^{j+1} \nabla_{\mu_j}\psi_{\mu_1\cdots\mu_{j-1}\mu_{j+1}\cdots\mu_p}, \quad (\delta \psi)^{\mu_1\cdots\mu_{p-1}} = -\nabla_\lambda \psi^{\lambda\mu_1\cdots\mu_{p-1}}.
\end{equation}
But, contrary to what happens for~\eqref{ddelta_cov}, the indices of formulas~\eqref{d_coordinates} and~\eqref{delta_coordinates} cannot be freely raised/lowered on both sides of the equation as in manifestly tensorial equations (unless $g = \eta$ is the Minkowski metric), so care must be taken when applying them on a general manifold.

Finally, for a function $f:(a,\infty)\longrightarrow\bbR$, where $a\in\bbR$, we often denote the limits $\lim_{r\to a^+} f(r)$ and $\lim_{r\to\infty} f(r)$ (should they exist) by the symbols $f(a)$ and $f(\infty)$, respectively. We also make the following definition, to be used throughout Section~\ref{sec_main}:
\begin{definition} \label{def_equivalent}
    Let $a\in\bbR$. For two strictly positive functions $f,g:(a,\infty)\longrightarrow\bbR$, we say that $g$ is \textbf{bracketed} by $f$ and write $f\simeq g$ (it is an equivalence relation) when there exist constants $C_1,C_2>0$ such that
    \begin{equation*}
        C_1 f(r) \leq g(r) \leq C_2 f(r) \quad\text{for all } r>a.
    \end{equation*}
\end{definition}
If $f$ and $g$ in the above definition are continuous, a sufficient condition for $f\simeq g$ is that the quotient $f/g$ attain finite, non-zero limits as $r\to a^+$ and $r\to\infty$. Indeed, the existence of these limits implies that the continuous, strictly positive functions $f/g$ and $g/f$ remain bounded above in two intervals of the form $(a,a+\eps)$ and $(N,\infty)$, while continuity also yields finite bounds for them in $[a+\eps,N]$.

\section{Electromagnetism in Special Relativity}
\label{sec_EM_SR}

In this section we define on Minkowski spacetime the theories of electromagnetism that we wish to study, and we explain why BLTP theory gives rise to better-behaved fields and potentials than the conventional Maxwell theory, at least in the case of a single, static point charge.

\subsection{The Maxwell and BLTP theories of electromagnetism}
\label{subsec_SR_intro}

Minkowski spacetime is by definition the manifold $\bbR^4$ endowed with a Lorentzian metric $\eta$ which, in a suitable coordinate system $(x^0,x^1,x^2,x^3)$ (called \textbf{inertial}), takes the form
\begin{equation} \label{mink_mettric}
    \eta = - \rmd x^0 \otimes\rmd x^0 + \sum_{j=1}^3 \rmd x^j \otimes\rmd x^j.
\end{equation}
All four coordinates are assumed to have a dimension of $\dimL$. The last three, $\bm{x} = (x^1,x^2,x^3)$, are interpreted as giving a spatial location, while $x^0/c$ (where $c$ is the speed of light) has the physical meaning of time as perceived by an inertial observer. Given a differentiable function $f:\bbR^4\longrightarrow\bbR$ of the variables $(x^0,\ldots,x^3)$, we often abuse notation and write $(1/c)\partial_t f$ in place of $\partial_0 f$ for its partial derivative with respect to the variable $x^0$.

Now let a \textbf{charge-density} scalar field $\rho: \bbR^4\longrightarrow\bbR$, with dimension $\dimQ \dimL^{-3} = \dimMM^{\frac12}\dimLL^{-\frac32}\dimTT^{-1}$, and a \textbf{current-density} vector field $\bm{j}:\bbR^4\longrightarrow\bbR^3$, whose components $j_1,j_2,j_3:\bbR^4\longrightarrow\bbR$ are measured with dimension $\dimQ \dimV \dimL^{-3} = \dimMM^{\frac12}\dimLL^{-\frac12}\dimTT^{-2}$, be given satisfying the \textbf{continuity equation}
\begin{equation} \label{continuity}
    \frac{\partial\rho}{\partial t} + \nabla\cdot\bm{j} = 0,
\end{equation}
where the divergence operator $\nabla\cdot$ uses derivatives with respect to the three spatial coordinates --- this will also be the case for gradients $\nabla$, curls $\nabla\times$ and Laplacians $\Delta$ in what follows. We make no assumption on the regularity of $\rho$, $\bm{j}$ or any of the other fields described in this section; our goal is simply to formally state equations that are applicable to sufficiently regular unknowns.

\textbf{Maxwell's equations} in their most general form, often called the \textbf{macroscopic} form, stipulate that the distribution of charges and currents modeled by $\rho$ and $\bm{j}$ generate EM fields $\bm{E},\bm{D},\bm{B},\bm{H}: \bbR^4\longrightarrow\bbR^3$, all of whose components having the dimension $\dimMM^{\frac12}\dimLL^{-\frac12}\dimTT^{-1}$, that satisfy the systems
\begin{equation} \label{Maxwell}
    \left\{\begin{array}{l}
    \displaystyle\frac{1}{c}\displaystyle\frac{\partial\bm{B}}{\partial t} + \nabla\times\bm{E} = \bm{0}, \\ \\
    \nabla\cdot\bm{B} = 0,
\end{array}\right. \qquad \left\{\begin{array}{l}
    \displaystyle\frac{1}{c}\displaystyle\frac{\partial\bm{D}}{\partial t} - \nabla\times\bm{H} = -\displaystyle\frac{4\pi}{c}\bm{j}, \\ \\
    \nabla\cdot\bm{D} = 4\pi\rho.
\end{array}\right.
\end{equation}
The fields $\bm{E},\bm{D},\bm{B},\bm{H}$ are respectively known as the \textbf{electric}, \textbf{electric displacement}, \textbf{magnetic} and \textbf{magnetizing} fields. The continuity equation~\eqref{continuity} is a compatibility condition for this system, as it follows immediately from the two non-homogeneous equations. The homogeneous ones are equivalent, at least locally, to the existence of a \textbf{scalar potential} $\varphi$ and a \textbf{vector potential} $\bm{A}$ --- which however are not uniquely defined --- such that
\begin{equation} \label{field_potentials}
    \bm{B} = \nabla\times\bm{A}, \quad \bm{E} = -\nabla\varphi - \frac{1}{c}\frac{\partial\bm{A}}{\partial t}.
\end{equation}
The Maxwell equations are often considered to be a system for the unknowns $\varphi$ and $\bm{A}$. We also note their dimensions:
\begin{equation*}
    \dim(\varphi) = \dim(A_j) = \dimMM^{\frac12}\dimLL^{\frac12}\dimTT^{-1}, \quad \text{where } \bm{A} = (A_1,A_2,A_3).
\end{equation*}
However, these equations alone are clearly not enough to determine the fields from their sources. A so-called \textbf{vacuum law}, also commonly referred to by the name \textbf{constitutive relations}, must be stipulated: one relation between the electric fields $\bm{E}$ and $\bm{D}$, and another one between the magnetic fields $\bm{B}$ and $\bm{H}$. We call the relations
\begin{equation} \label{Maxwell_vacuum}
    \bm{D} = \bm{E}, \quad \bm{H} = \bm{B}
\end{equation}
\textbf{Maxwell's vacuum law} and label the set of equations~\eqref{Maxwell} and~\eqref{Maxwell_vacuum} the \textbf{Maxwell-Maxwell equations}, which form a system of linear, second-order PDEs (that is, in the unknowns $\varphi$ and $\bm{A}$) that defines the conventional Maxwell theory. In turn, BLTP theory is defined by imposing the \textbf{BLTP vacuum law}
\begin{equation} \label{BLTP_vacuum}
    \bm{D} = \bm{E} - \varkappa^{-2}\square\bm{E}, \quad \bm{H} = \bm{B} - \varkappa^{-2}\square\bm{B}.
\end{equation}
Here $\varkappa > 0$ is a postulated constant with dimension $\dimL^{-1}$, while
\begin{equation*}
    \square = \partial_\mu \partial^\mu = \eta^{\mu\nu}\partial_\mu\partial_\nu = -\frac{1}{c^2}\frac{\partial^2}{\partial t^2} + \Delta
\end{equation*}
is D'Alembert's operator, acting componentwise on vector fields. Then~\eqref{Maxwell} and~\eqref{BLTP_vacuum} are called the \textbf{Maxwell-BLTP equations}, a system of fourth-order linear PDEs for $\varphi$ and $\bm{A}$. See Remark~\ref{rem_BLTP_generalization} ahead for an explanation on why BLTP theory can be viewed as a \textit{natural} modification of the conventional Maxwell theory.

\begin{remark}
    The constant featuring in the equations of BLTP theory is not $\varkappa$ itself, but rather its inverse $\varkappa^{-1}$ (or powers thereof). If $\varkappa^{-1}$ is set to 0, we see from~\eqref{BLTP_vacuum} that the equations reduce to the Maxwell-Maxwell system. Hence, in everything that follows, we will be able to treat both theories under the same set of equations, with the understanding that the Maxwell-Maxwell equations are obtained by setting $\varkappa = \infty$, that is, by removing from the equations all terms accompanied by a positive power of $\varkappa^{-1}$.
\end{remark}

\subsection{Field energy and momenta} \label{subsec_SR_energy}

The fields $\bm{E}$ and $\bm{B}$ determine the motion of test particles by acting on them via the Lorentz force: Let a test particle of rest mass $m$ and charge $q$ be observed in the coordinate system $(x^\mu)$ to be moving with velocity vector $\bm{v} = (v_1,v_2,v_3)$, kinetic energy $\mathcal{E} = mc^2\gamma$ and linear momentum $\bm{p} = mc\gamma\bm{v}$, where $\gamma = \big(1-(v/c)\big)^{-\frac12}$ for $v = \sqrt{(v_1)^2+(v_2)^2+(v_3)^2}$. Then the particle's law of motion under the influence of the EM fields is the statement that the rate of change of its momentum is
\begin{equation} \label{dpdt}
    \frac{\rmd\bm{p}}{\rmd t} = \bm{F} := q\bm{E} + \frac{1}{c} (q\bm{v})\times\bm{B},
\end{equation}
where $\times$ denotes the vector cross product. In particular, the power imparted by the EM fields to the particle is
\begin{equation} \label{dEdt}
    \frac{\rmd\mathcal{E}}{\rmd t} = \bm{F}\cdot\bm{v} = \bm{E}\cdot (q\bm{v}).
\end{equation}
With the help of~\eqref{dpdt} and~\eqref{dEdt}, it is now possible to obtain expressions (in any general EM theory) for the following observer-dependent quantities that measure, at each event $(x^0,\ldots,x^3)$, how much energy and momentum are contained in the EM fields and where they are flowing into:
\begin{itemize}
    \item The EM fields' \textbf{energy density}: a scalar field $\mathfrak{e}$ measuring the volumetric density of energy contained in them. Its dimension is $\dimE \dimL^{-3} = \dimMM \dimLL^{-1} \dimTT^{-2}$.
    \item The EM fields' \textbf{momentum density}: a vector field $\bm{\mathfrak{g}}$ measuring the volumetric density of momentum contained in them. In particular, for each $j=1,2,3$, its component $g_j$ in $(x^1,x^2,x^3)$ coordinates is a scalar measuring the density of field momentum in the direction of $x^j$, with dimension $\dimP \dimL^{-3} = \dimMM \dimLL^{-2} \dimTT^{-1}$.
    \item The EM fields' \textbf{energy flux}: a vector field $\bm{\mathfrak{S}}$, pointing in the direction where energy flows, whose magnitude measures the amount of energy per unit time and unit area that crosses a small hypersurface orthogonal to its flow direction. It is known as the \textbf{Poynting vector}. In particular, for each $k=1,2,3$, its component $\mathfrak{S}_k$ in $(x^1,x^2,x^3)$ coordinates is a scalar measuring the rate at which field energy crosses hypersurfaces normal to the direction of $x^k$, towards the positive direction, with dimension $\dimE \dimL^{-2}\dimT^{-1} = \dimMM\dimTT^{-3}$.
    \item The EM fields' \textbf{momentum flux}: a $(3\times 3)$-matrix field $\bm{\mathfrak{T}}$ whose component ${\mathfrak{T}}_{jk}$ in $(x^1,x^2,x^3)$ coordinates is a scalar measuring the rate at which the $j$-th component of field momentum crosses hypersurfaces normal to the direction of $x^k$, towards the positive direction. The dimension of each component of $\bm{\mathfrak{T}}$ is $\dimP \dimL^{-2} \dimT^{-1} = \dimMM \dimLL^{-1} \dimTT^{-2}$.
\end{itemize}
For the sake of completeness of our work, we wish to derive from scratch an expression for $\frake$ in BLTP theory. So let us briefly explain how $\frake$, $\frakg$, $\frakS$ and $\frakT$ are derived for any general EM theory. Suppose that the fields act back on their source charge distribution modeled by $\rho$ and $\bm{j}$ via the Lorentz force. If $V\subset\bbR^3$ is an arbitrary bounded open set with a smooth boundary $\partial V$, then the rate of change of field energy contained inside it, $\iiint_V \partial_t\frake \ \rmd^3\bm{x}$, is brought about by two processes: field energy entering the boundary of $V$ at a rate of $-\oiint_{\partial V} \frakS\cdot\rmd S = -\iiint_V \nabla\cdot\frakS \ \rmd^3\bm{x}$, and field energy being lost into work done by the Lorentz force on the sources inside $V$ at a rate of $-\iiint_V \bm{E}\cdot\bm{j} \ \rmd^3\bm{x}$. For this last expression, note that the \textit{density} of work is found from~\eqref{dEdt} as $\bm{E}\cdot\bm{j}$, since the volumetric density of $q\bm{v}$ is by definition the current density $\bm{j}$. Now energy conservation in $V$ dictates a relation between these integrals which, given the arbitrariness of $V$, yields the conservation law
\begin{equation} \label{field_energy_conservation}
    \frac{\partial \frake}{\partial t} + \nabla\cdot\frakS = - \bm{E}\cdot\bm{j}.
\end{equation}
Proceeding similarly for the components of field momentum --- note that~\eqref{dpdt} with $q$ and $q\bm{v}$ replaced by $\rho$ and $\bm{j}$, respectively, is the \textit{density} of momentum being transferred from the fields to the sources, --- we find the equations of momentum conservation:
\begin{equation} \label{field_momentum_conservation}
    \frac{\partial \mathfrak{g}_j}{\partial t} + \nabla\cdot\bm{\mathfrak{T}_j} = -\rho\bm{E} - \frac{1}{c}\bm{j}\times\bm{B}, \quad j=1,2,3,
\end{equation}
where $\bm{\mathfrak{T}_j} = (\mathfrak{T}_{j1},\mathfrak{T}_{j2},\mathfrak{T}_{j3})$. It is at this point that the equations of the EM theory of our choice must be used: First the non-homogeneous Maxwell equations~\eqref{Maxwell} are used to replace $\rho$ and $\bm{j}$ on the right side of~\eqref{field_energy_conservation} and~\eqref{field_momentum_conservation} by expressions involving all four EM fields, which are then rewritten in terms of only the fields $\bm{E}$ and $\bm{B}$ by applying the vacuum law~\eqref{BLTP_vacuum}. Finally, using the homogeneous Maxwell equations~\eqref{Maxwell} and some lengthy algebra, these right sides can be cast in the form $\partial_t(\ast) + \nabla\cdot(\bm{\ast})$, where $(\ast)$ and $(\bm{\ast})$ stand for some particular scalar and vector, which are then \textit{defined} to be $\frake$ and $\frakS$ when this procedure is applied to~\eqref{field_energy_conservation}, or $\mathfrak{g}_j$ and $\bm{\mathfrak{T}}_j$ when applied to~\eqref{field_momentum_conservation}. In the $\varkappa = \infty$ case, this calculation can be found in many textbooks on electromagnetism (see~\cite{jackson}, p.~189). We wish to carry it out in the $0<\varkappa<\infty$ case, but not now; we postpone it to the end of this section, where the language of tensor fields makes the task somewhat simpler. For now, we present the result that will be found for $\frake$ (we will not need to consider $\frakg$, $\frakS$ or $\frakT$ in this paper):
\begin{align}
    4\pi \frake &= \frac{\av{\bm{E}}^2+\av{\bm{B}}^2}{2} + \varkappa^{-2} \left( -\bm{E}\cdot\square\bm{E} -\bm{B}\cdot\square\bm{B} - \frac{(\nabla\cdot\bm{E})^2}{2} - \frac{1}{2}\left| \nabla\times\bm{B} - \frac{1}{c}\frac{\partial \bm{E}}{\partial t} \right|^2 \right) \label{e_SR} \\
    &= \bm{B}\cdot\bm{H} + \bm{E}\cdot\bm{D} - \frac{|\bm{E}|^2+|\bm{B}|^2}{2} - \frac{\varkappa^{-2}}{2}\left( (\nabla\cdot\bm{E})^2 + \left| \nabla\times\bm{B} - \frac{1}{c}\frac{\partial \bm{E}}{\partial t} \right|^2 \right). \label{e_SR_aux}
\end{align}
This is the same formula that is stated, for example, in~\cite{bopp} and~\cite{podolsky}.

\begin{remark}
    Given that there are infinitely many possible expressions for $(\ast)$ and $(\bm{\ast})$ preserving the value of $\partial_t(\ast) + \nabla\cdot(\bm{\ast})$, there are also infinitely many possible definitions for the quantities $\frake$, $\frakg$, $\frakS$ and $\frakT$. This fact remains true even if one restricts the search to a \textit{symmetric} $\frakT$ as well as to quantities $\frakg$ and $\frakS$ related by $\frakS = c^2\frakg$, which are physically motivated impositions that hold for matter models in Classical Mechanics. Hence there is no definitive answer to the question of how the field energy and momenta are defined in BLTP theory or even conventional Maxwell theory. One generally settles for expressions that, in some way, feel the simplest or most natural to derive.
\end{remark}

\subsection{SR solutions for a static point charge}
\label{subsec_SR_pointcharge}

Now we would like to restrict the EM equations to the case of a single static point of charge $Q \neq 0$ sitting at position $(x^1,x^2,x^3) = \bm{0}$ at all times $t$. The goal is to write down the fields that solve both the Maxwell-Maxwell and the Maxwell-BLTP systems corresponding to this particle's associated densities, which are
\begin{equation} \label{static_rho_j}
    \rho = Q\delta_{\bm{0}}, \quad \bm{j} = \bm{0},
\end{equation}
where $\delta_{\bm{0}}:\bbR^3\longrightarrow\bbR$ is Dirac's Delta concentrated at $\bm{0}\in\bbR^3$, formally viewed as a generalized function. Anticipating that the fields and their energy will be constant in time and only $\bm{E}$ and $\bm{D}$ will be nonzero, we see that~\eqref{e_SR_aux} yields the following expression for the total \textbf{electric-field energy} at any given time:
\begin{equation} \label{energy_SR}
    \mathcal{E} = \iiint_{\bbR^3} \frake \ \rmd\bm{x} = \frac{1}{8\pi}\iiint_{\bbR^3} \left( 2\bm{E}\cdot\bm{D} - |\bm{E}|^2 - \varkappa^{-2}(\nabla\cdot\bm{E})^2 \right) \ \rmd\bm{x}.
\end{equation}
The well-known solution to the Maxwell-Maxwell system in this context is given at all events $(x^0,\ldots,x^3)$ with $(x^1,x^2,x^3)\neq\bm{0}$ by
\begin{equation*}
    \bm{D}^{\mathrm{(M)}} = \bm{E}^{\mathrm{(M)}} = \frac{Q}{r^2}\bm{e_r}, \quad \bm{H}^{\mathrm{(M)}} = \bm{B}^{\mathrm{(M)}} = \bm{0},
\end{equation*}
where $r = \sqrt{(x^1)^2+(x^2)^2+(x^3)^2}$ and $\bm{e_r} = (x^1,x^2,x^3)/r$, and we have omitted the argument $(x^0,\ldots,x^3)$ from the fields. The label M stands, of course, for \textit{Maxwell}. We speak of ``the'' solution as opposed to ``a'' solution because one usually imposes reasonable boundary-type conditions to ensure its uniqueness --- in the present case, this is the only solution that vanishes at spatial infinity. Next note that the electric field is conservative, that is, it is given in the form $\bm{E}^{\mathrm{(M)}} = -\nabla\varphi^{\mathrm{(M)}}$ for an \textbf{electric potential} $\varphi^{\mathrm{(M)}}:\bbR^3\setminus\{\bm{0}\}\longrightarrow\bbR$, namely the \textbf{Coulomb potential}, which we view as a function $\varphi^{(\mathrm{M})}:(0,\infty)\longrightarrow\bbR$ of $r$:
\begin{equation} \label{Coulomb}
    \varphi^{\mathrm{(M)}}(r) = \frac{Q}{r}.
\end{equation}
However, plugging $\bm{E}^{\mathrm{(M)}}$ and $\bm{D}^{\mathrm{(M)}}$ into the field-energy integral~\eqref{energy_SR} (recall $\varkappa^{-2} = 0$ in the Maxwell-Maxwell equations) gives $\mathcal{E}^{\mathrm{(M)}} = \infty$, due to $\av{\bm{E}^{\mathrm{(M)}}}^2 = Q^2/r^4$ not being integrable around the location $r=0$ of the particle. We say that a charged particle has \textbf{infinite self-energy} under the conventional Maxwell theory. Also note that $\varphi^{(\mathrm{M})}(0)$ is undefined.

Now let us investigate the solution $\bm{E}^{\mathrm{(BLTP)}}$, $\bm{B}^{\mathrm{(BLTP)}}$ etc.~of the Maxwell-BLTP equations ($0<\varkappa<\infty$) corresponding to $\rho$ and $\bm{j}$ as in~\eqref{static_rho_j}. We carry out the calculations in details in order to have a point of comparison for our work in the next section. We omit the superscript (BLTP) from the unknowns until the end. Given that the equations are now of higher order, one can expect more degrees of freedom as in the case of Maxwell theory, but there will again be restrictions coming from physically motivated impositions: First of all, in parallel to the Maxwell-Maxwell case, we restrict the search to time-independent solutions that vanish at spatial infinity. The Maxwell equations then imply $\nabla\times\bm{E} = \nabla\times\bm{H} = \bm{0}$. The equation $\nabla\cdot\bm{B} = 0$ in conjunction with the constitutive relation of $\bm{H}$ in terms of $\bm{B}$ gives $\nabla\cdot\bm{H} = 0$, and thus $\bm{H} = \bm{0}$. Now $\bm{B}$ is determined by the constitutive relation $\bm{0} = \bm{B} - \varkappa^{-2}\Delta\bm{B}$ and the Maxwell equation $\nabla\cdot\bm{B} = 0$, and, although there are non-zero solutions $\bm{B}$ to this system, we have to consider $\bm{B} = \bm{0}$ as the only physically meaningful one because otherwise we would be dealing with the unphysical situation of a magnetic monopole. Next, the equations for $\bm{D}$ reduce to $\nabla\cdot\bm{D} = 4\pi Q \delta_{\bm{0}}$ and $\nabla\times\bm{D} = \bm{0}$, the latter as a consequence of $\nabla\times\bm{E} = \bm{0}$ and $\bm{D} = \bm{E} - \varkappa^{-2}\Delta\bm{E}$. This yields as before
\begin{equation*}
    \bm{D}^{\mathrm{(BLTP)}} = \frac{Q}{r^2}\bm{e_r}.
\end{equation*}
Since $\nabla\times\bm{E} = \bm{0}$, the electric field is again conservative, at least in simply connected regions of spacetime (though we will find this to be true globally): $\bm{E} = -\nabla\varphi$ for a scalar potential $\varphi:\bbR^3\setminus\{\bm{0}\}\longrightarrow\bbR$ that can be re-imagined as a function of $r$ due to spherical symmetry of the sources $\rho$ and $\bm{j}$. In particular $\bm{E} = -\varphi'(r) \bm{e_r}$. Plugging this into the constitutive relation~\eqref{BLTP_vacuum} of $\bm{D}$ in terms of $\bm{E}$ and using the standard formula for the Laplacian in spherical coordinates, we obtain the following equation for $\varphi$:
\begin{equation} \label{phi_eq_minkowski}
    \frac{Q}{r^2} = -\varphi' -\varkappa^{-2}\left(- \varphi''' - \frac{2\varphi''}{r} + \frac{2\varphi'}{r^2}\right).
\end{equation}
It is most easily solved by introducing a new unknown $\chi$ that we call the \textbf{electric-field relative-deviation scalar} or simply \textbf{electric deviation}:
\begin{equation} \label{chi_SR}
    \chi(r) := \frac{1}{Q}r^2\varphi'(r) + 1.
\end{equation}
Then~\eqref{phi_eq_minkowski} transforms into the homogeneous ODE
\begin{equation*}
    \frac{1}{r^2} = \frac{1-\chi}{r^2} - \varkappa^{-2}\left( -\frac{\chi''}{r^2} + \frac{2\chi'}{r^3} \right) \quad\Longleftrightarrow\quad \chi'' = \frac{2}{r}\chi' + \varkappa^2\chi,
\end{equation*}
whose general solution is $\chi(r) = C_1(1+\varkappa r)e^{-\varkappa r} + C_2(1-\varkappa r)e^{\varkappa r}$.
We must take $C_2 = 0$ in order for $\bm{E} = -\varphi'\bm{e_r} = \big(Q(1-\chi)/r^2\big) \bm{e_r}$ to vanish at infinity. Then the corresponding $\varphi$ that vanishes at infinity is defined by
\begin{equation*}
    \varphi(r) = \int_\infty^r \frac{Q\big(\chi(s)-1\big)}{s^2} \ \rmd s = \int_r^\infty \frac{Q\big(1 - C_1(1+\varkappa s)e^{-\varkappa s}\big)}{s^2} \ \rmd s = \frac{Q\big(1 - C_1 e^{-\varkappa r}\big)}{r},
\end{equation*}
with $C_1$ allowed as a free parameter. The value $C_1 = 0$ would yield $\varphi^{\mathrm{(BLTP)}} = \varphi^{\mathrm{(M)}}$, that is, the solution to the Maxwell-Maxwell system also solves the Maxwell-BLTP system, a fact that will remain true later in the context of GR. However, the choice $C_1 = 1$ is the natural one to make here because it renders $\varphi$ bounded around $r=0$: With this choice, we find the everywhere-smooth potential
\begin{equation} \label{phi_BLTP}
    \varphi^{\mathrm{(BLTP)}}(r) = \frac{Q(1 - e^{-\varkappa r})}{r} \quad\text{for } r>0, \qquad \varphi^{\mathrm{(BLTP)}}(0) = \lim_{r\to 0} \frac{Q(1 - e^{-\varkappa r})}{r} = Q\varkappa.
\end{equation}
Consequently, the electric field
\begin{equation*}
\bm{E}^{\mathrm{(BLTP)}} = \frac{Q\big(1 - (1+\varkappa r)e^{-\varkappa r}\big)}{r^2}\bm{e_r}
\end{equation*}
also remains bounded at small $r$ (though the vector $\bm{E}^{\mathrm{(BLTP)}}(\bm{0})$ itself is not defined). We also point out that the relative deviation of $\bm{E}^{\mathrm{(BLTP)}}$ compared to the Coulomb field $\bm{E}^{\mathrm{(M)}}$ is precisely our quantity $\chi$, justifying the name we chose for it:
\begin{equation} \label{relative_error}
    \frac{\av{\bm{E}^{\mathrm{(BLTP)}}-\bm{E}^{\mathrm{(M)}}}}{\av{\bm{E}^{\mathrm{(M)}}}} = \frac{Q\big(1-(1+\varkappa r)e^{-\varkappa r}\big)/r^2 - Q/r^2}{Q/r^2} = (1+\varkappa r)e^{-\varkappa r} = \chi(r).
\end{equation}
In particular, the analogous scalar $\chi$ corresponding to the Coulomb potential would be identically null:
\begin{equation*}
    \chi^{(\mathrm{M})}(r) := \frac{1}{Q}r^2\big(\varphi^{(\mathrm{M})}\big)'(r) + 1 = 0.
\end{equation*}
Finally, we investigate the electric-field energy~\eqref{energy_SR} of our static particle. Using $\bm{D} = (Q/r^2)\bm{e_r}$ and $\bm{E} = -\varphi'(r)\bm{e_r}$, we find
\begin{equation*}
   \bm{E}\cdot\bm{D} = -\frac{Q\varphi'}{r^2}, \quad |\bm{E}|^2 = (\varphi')^2, \quad (\nabla\cdot\bm{E})^2 = \left( -\varphi'' - \frac{2\varphi'}{r} \right)^2 
\end{equation*}
(the last of these equations comes from the standard formula for the divergence in spherical coordinates). This culminates in a finite value for the energy:
\begin{align*}
    \mathcal{E}^{\mathrm{(BLTP)}} &= \frac{1}{8\pi}\iiint_{\bbR^3} \left( 2\bm{E}\cdot\bm{D} - |\bm{E}|^2 - \varkappa^{-2}(\nabla\cdot\bm{E})^2 \right) \ \rmd\bm{x} \\
    &= \frac{4\pi}{8\pi} \int_0^\infty \left( -2Q\frac{\varphi'(r)}{r^2} - \big(\varphi'(r)\big)^2 - \varkappa^{-2} \left( -\varphi''(r) - \frac{2\varphi'(r)}{r} \right)^2 \right) \ \rmd r \\
    &= \frac{Q^2}{2}\int_0^\infty \frac{1}{r^2}\bigg( 1 - \left( 1+2\varkappa r + 2\varkappa^2 r^2 \right)e^{-2\varkappa r} \bigg) \ \rmd r \\
    &= \frac{Q^2}{2\varkappa}\int_0^\infty \frac{\varkappa^2}{s^2}\bigg( 1 - \left( 1+2s + 2s^2 \right)e^{-2s} \bigg) \ \rmd s \\
    &= \frac{Q^2\varkappa}{2}\left(\frac{-1+(1+s)e^{-2s}}{s}\right)\bigg|_0^\infty \\
    &= \frac{Q^2\varkappa}{2} < \infty.
\end{align*}

\begin{remark}
    Assuming for a moment that BLTP theory is valid in nature, it is possible to make a cursory guess for the range of the parameter $\varkappa$ by looking at what happens at a sub-atomic distance from a charged particle. Considering that the classical Coulomb's law of electric attraction/repulsion (which indirectly follows from the statement that a point charge's electric field is given by $\bm{E}^{\mathrm{(M)}}$) has been experimentally verified down to sub-atomic distances, experimentalists would only find evidence for BLTP theory at a distance of about $10^{-10}$ m or smaller away from a point charge. Since the relative deviation~\eqref{relative_error} between $\bm{E}^{\mathrm{(BLTP)}}$ and $\bm{E}^{\mathrm{(M)}}$ is of about 1\% at $r = 10^{-10}$ m when $\varkappa = 6.64 \cdot 10^{10}$ $\mathrm{m}^{-1}$ and decreases exponentially as $r$ increases, we will assume the ballpark estimate
    \begin{equation} \label{kappa_estimate1}
        \varkappa > 10^{11} \ \mathrm{m}^{-1}
    \end{equation}
    whenever we do calculations involving real-life constants. On the other hand, at a distance as small as $10^{-16}$ m --- safely inside atomic nuclei and therefore within the quantum realm, --- one can expect that BLTP theory, as a classical theory of electromagnetism, would have already made its presence clear. Since~\eqref{relative_error} is of about 50\% at $r = 10^{-16}$ m when $\varkappa = 1.68\cdot 10^{16}$ $\mathrm{m}^{-1}$, we will also assume
    \begin{equation} \label{kappa_estimate2}
        \varkappa < 10^{16} \ \mathrm{m}^{-1}.
    \end{equation}
\end{remark}

\subsection{Covariant formulation of electromagnetism in SR}
\label{subsec_SR_covariant}

As is well known, the equations of electromagnetism assume a simpler, manifestly covariant form when written in the language of differential forms and other tensor fields. We briefly explain this formulation now, as it will be needed to treat the GR case next.

Let source terms $\rho,\bm{j}$ and their associated fields $\bm{E},\bm{D},\bm{B},\bm{H}$ and field energy and momenta $\frake$, $\frakg$, $\frakS$ and $\frakT$, satisfying $\frakS = c^2\frakg$ and $\mathfrak{T}_{ij} = \mathfrak{T}_{ji}$, be observed in an inertial coordinate system $(x^\mu)$ to solve the EM equations presented in the previous subsection, either with or without the $\varkappa^{-1}$ terms. Now consider a 1-form field $J$ (called the \textbf{four-current} 1-form), two 2-form fields $F,M$ (called \textbf{Faraday}'s and \textbf{Maxwell}'s tensors, among many other names in the literature) and a symmetric $(2,0)$-tensor field $T$ (called the \textbf{stress-energy} tensor) which are given in these coordinates by
\begin{equation} \label{J_components}
    J = -c\rho \ \rmd x^0 + \sum_{k=1}^3 j_k \ \rmd x^k,
\end{equation}
\begin{equation} \label{F_components}
    (F_{\mu\nu}) = \left(\begin{array}{rrrr}
        0 & -E_1 & -E_2 & -E_3 \\
        E_1 & 0 & B_3 & -B_2 \\
        E_2 & -B_3 & 0 & B_1 \\
        E_3 & B_2 & -B_1 & 0
    \end{array}\right), \quad (M_{\mu\nu}) = \left(\begin{array}{rrrr}
        0 & H_1 & H_2 & H_3 \\
        -H_1 & 0 & D_3 & -D_2 \\
        -H_2 & -D_3 & 0 & D_1 \\
        -H_3 & D_2 & -D_1 & 0
    \end{array}\right),
\end{equation}
\begin{equation} \label{Tmunu_components}
    T^{00} = \frake, \quad T^{0j} = \frac{\mathfrak{S}_j}{c} = c\mathfrak{g}_j = T^{j0}, \quad T^{jk} = \mathfrak{T}_{jk}, \quad j,k=1,2,3,
\end{equation}
where $\bm{j} = (j_1,j_2,j_3)$, $\bm{E} = (E_1,E_2,E_3)$ etc. We remark that the tensors $F$ and $M$ themselves have dimension $\dimQ$; indeed, each of their components has dimension $\dimMM^{\frac12}\dimLL^{-\frac12}\dimTT^{-1}$, as mentioned before, while each of the basis tensors $\rmd x^\mu\otimes\rmd x^\nu$ for $\mathcal{T}^0_2(\bbR^4)$ has dimension $\dimLL^2$, yielding for example
\begin{equation*}
    \dim(F) = \dim\left( E_1 \ \rmd x^0\otimes\rmd x^1 + \cdots \right) = \dimMM^{\frac12}\dimLL^{-\frac12}\dimTT^{-1} \dimLL^2 = \dimQ.
\end{equation*}
The reader is now invited to check that the continuity equation~\eqref{continuity}, Maxwell equations~\eqref{Maxwell}, vacuum law~\eqref{BLTP_vacuum} and field-energy and --momenta conservation laws~\eqref{field_energy_conservation} and~\eqref{field_momentum_conservation} can be written as follows:
\begin{align}
    \label{conservation_cov}
        &\delta J = 0, \\
    \label{maxwell_cov}
        &\rmd F = 0, \quad \rmd M = \dfrac{4\pi}{c} \Star J, \\
    \label{vacuum_cov}
        &M = \Star \big(F + \varkappa^{-2} \rmd\delta F\big), \\
    \label{divergence_cov}
        &\partial_\nu T^{\mu\nu} = -\dfrac{1}{c} \tensor{F}{^\mu_\lambda} J^\lambda.
\end{align}
The tensors $J$, $F$, $M$ and $T$ satisfying the above tensorial equations (the first three of which are covariant) should be regarded as the intrinsic constructs of electromagnetism, while the sources and fields from the previous subsection are merely observer-dependent functions that constitute their components in a given inertial coordinate system.

\begin{remark}
    The reader should remain aware that different sign conventions exist in the literature for the metric signature, the components of $F$ and $M$, and the definitions of $\Star$ and $\delta$. Usually a choice of placement of minus signs in the definitions of these objects is adopted in order for certain standard formulas to always remain the same. As an example: For a test particle of charge $q$, 4-velocity vector $U$ and 4-momentum vector $P$, equations~\eqref{dpdt} and~\eqref{dEdt} can be collected together as
    \begin{equation} \label{lorentz_covariant}
        \frac{\rmd P_\mu}{\rmd\tau} = q F_{\mu\nu} U^\nu, \quad \mu=0,1,2,3,
    \end{equation}
    where $\tau$ denotes the particle's proper time. The appearance of this formula is almost universal in the physics literature. But note that what features on its left side are the \textit{index-lowered} components of $P$, whose definition depends on the metric. In particular, if the $({+}\,{-}\,{-}\,{-})$ metric signature were to be used instead of ours, then $(F_{\mu\nu})$ would need to be defined as the negative of what we introduced above, in order to keep~\eqref{lorentz_covariant} unaltered.
\end{remark}

Next we present the \textit{Lagrangian formulation} of both the Maxwell-Maxwell and the Maxwell-BLTP systems of tensor equations. For this purpose, a \textbf{four-potential} $A$ must be introduced: The equation $\rmd F = 0$ is equivalent (at least on simply connected regions of spacetime) to the existence of a 1-form field $A$ such that
\begin{equation} \label{FdA}
    F = \rmd A.
\end{equation}
It can be easily checked that this relation is equivalent to equations~\eqref{field_potentials} for $\bm{E}$ and $\bm{B}$ in terms of the potentials $\varphi$ and $\bm{A} = (A_1,A_2,A_3)$ when $A$ is defined in the inertial coordinate system $(x^\mu)$ by
\begin{equation} \label{A_SR}
    A = -\varphi \ \rmd x^0 + \sum_{k=1}^3 A_k \ \rmd x^k.
\end{equation}
Then the remaining equations of EM theory can be put together as one equation for $A$: We take a $\Star$ on both sides of the $M$ equation~\eqref{maxwell_cov} and replace $M$ in it by the expression given in~\eqref{vacuum_cov} to find the fourth-order equation
\begin{equation} \label{eq_A_J}
    \frac{1}{c}J - \frac{1}{4\pi}\delta\rmd A - \frac{\varkappa^{-2}}{4\pi} \delta\rmd\delta\rmd A = 0.
\end{equation}
It is this equation that arises from an action principle expressed in terms of the dynamical variable $A$:

\begin{proposition} \label{prop_lagrangian}
    Let a 1-form field $J$ satisfying $\delta J = 0$ be given. For an arbitrary 1-form field $A$, consider the Lagrangian 4-form
    \begin{equation} \label{lagrangian_4form}
        \mathcal{L}_{\mathrm{EM}}(A) = \frac{1}{c} A\wedge\Star J - \frac{1}{8\pi} (\rmd A)\wedge (\Star\rmd A) - \frac{\varkappa^{-2}}{8\pi} (\delta\rmd A)\wedge(\Star\delta\rmd A),
    \end{equation}
    and define the action functional $\mathcal{S}(A;U) = \int_U \mathcal{L}_{\mathrm{EM}}(A)$ for arbitrary open sets $U\subseteq\bbR^4$. Then equation~\eqref{eq_A_J} is equivalent to the Euler-Lagrange equations of $\mathcal{S}$, that is, in an inertial coordinate system,
    \begin{equation} \label{euler_lagrange}
        \frac{\partial\ell_{\mathrm{EM}}}{\partial A_\xi} - \frac{\partial}{\partial x^\mu} \left(\frac{\partial\ell_{\mathrm{EM}}}{\partial(\partial_\mu A_\xi)}\right) + \frac{\partial^2}{\partial x^\mu\partial x^\nu} \left(\frac{\partial\ell_{\mathrm{EM}}}{\partial(\partial_\mu\partial_\nu A_\xi)}\right) = 0, \quad \xi= 0,1,2,3,
    \end{equation}
    where $\mathcal{L}_{\mathrm{EM}}(A) = \ell_{\mathrm{EM}}(A,\partial A, \partial^2 A) \ \mathrm{vol}_\eta$ expresses $\mathcal{L}_{\mathrm{EM}}$ as a scalar $\ell_{\mathrm{EM}}$ multiplied by the canonical volume form $\mathrm{vol}_\eta = \rmd x^0\wedge\cdots\wedge\rmd x^3 =: \rmd^4 x$ of Minkowski space, with $\ell_{\mathrm{EM}}$ considered as a function of the components of $A$ and of their derivatives up to second order.
\end{proposition}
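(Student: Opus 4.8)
The plan is to work in a fixed inertial coordinate system $(x^\mu)$ and reduce the statement to a direct differentiation. The first step is to rewrite the three summands of $\mathcal{L}_{\mathrm{EM}}(A)$ in \eqref{lagrangian_4form} in components, using the defining relation \eqref{definition_hodge} of the Hodge star together with the coordinate formula \eqref{definition_product_pforms} for the inner product of $p$-forms. Setting $F := \rmd A$, so that $F_{\mu\nu} = \partial_\mu A_\nu - \partial_\nu A_\mu$ by \eqref{d_coordinates}, and recalling from \eqref{delta_coordinates} that $(\delta F)^\mu = -\partial_\lambda F^{\lambda\mu}$ in inertial coordinates (where $\sqrt{\av{g}} = 1$), one gets $A\wedge\Star J = A_\mu J^\mu \, \mathrm{vol}_\eta$, $(\rmd A)\wedge(\Star\rmd A) = \tfrac12 F_{\mu\nu}F^{\mu\nu}\,\mathrm{vol}_\eta$ and $(\delta\rmd A)\wedge(\Star\delta\rmd A) = (\delta F)_\mu(\delta F)^\mu\,\mathrm{vol}_\eta$, so that
\[ \ell_{\mathrm{EM}} = \frac{1}{c}A_\mu J^\mu - \frac{1}{16\pi}F_{\mu\nu}F^{\mu\nu} - \frac{\varkappa^{-2}}{8\pi}(\delta F)_\mu(\delta F)^\mu. \]
The virtue of this form is that the $A$-dependence separates cleanly: the source term is the only one involving $A$ undifferentiated, the $F_{\mu\nu}F^{\mu\nu}$ term depends only on $\partial A$, and the $(\delta F)_\mu(\delta F)^\mu$ term depends only on $\partial^2 A$ (indeed $(\delta F)^\mu = -\square A^\mu + \partial^\mu(\partial_\lambda A^\lambda)$ has no lower-order part in inertial coordinates).

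Next I would evaluate the three pieces of the left-hand side of \eqref{euler_lagrange} term by term. The zeroth-order piece is immediate: $\partial\ell_{\mathrm{EM}}/\partial A_\xi = \tfrac1c J^\xi$. For the first-order piece only the $F_{\mu\nu}F^{\mu\nu}$ term contributes; using $\partial F_{\mu\nu}/\partial(\partial_\alpha A_\xi) = \delta^\alpha_\mu\delta^\xi_\nu - \delta^\alpha_\nu\delta^\xi_\mu$ one finds $\partial\ell_{\mathrm{EM}}/\partial(\partial_\alpha A_\xi) = -\tfrac{1}{4\pi}F^{\alpha\xi}$, whence $-\partial_\alpha\big(\partial\ell_{\mathrm{EM}}/\partial(\partial_\alpha A_\xi)\big) = \tfrac{1}{4\pi}\partial_\alpha F^{\alpha\xi} = -\tfrac{1}{4\pi}(\delta\rmd A)^\xi$. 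For the second-order piece only the $(\delta F)_\mu(\delta F)^\mu$ term contributes; one differentiates $(\delta F)^\mu$ with respect to the array $\partial_\gamma\partial_\delta A_\xi$, contracts against $(\delta F)_\mu$, and then applies $\partial_\gamma\partial_\delta$ to the result — the inner two derivatives reconstruct a $\delta$ acting on $\rmd\delta\rmd A$ and the outer index contractions assemble the remaining $\rmd$, giving $\partial_\gamma\partial_\delta\big(\partial\ell_{\mathrm{EM}}/\partial(\partial_\gamma\partial_\delta A_\xi)\big) = -\tfrac{\varkappa^{-2}}{4\pi}(\delta\rmd\delta\rmd A)^\xi$. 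Summing the three contributions, the left-hand side of \eqref{euler_lagrange} is exactly the components (with raised index) of $\tfrac1c J - \tfrac{1}{4\pi}\delta\rmd A - \tfrac{\varkappa^{-2}}{4\pi}\delta\rmd\delta\rmd A$; since raising an index is an invertible operation, its vanishing for all $\xi$ is equivalent to \eqref{eq_A_J}.

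The one point that requires care — and the main obstacle — is the bookkeeping in the second-order term: since $\partial_\gamma\partial_\delta A_\xi$ is symmetric in $\gamma,\delta$, the symbol $\partial\ell_{\mathrm{EM}}/\partial(\partial_\gamma\partial_\delta A_\xi)$ is meaningful only once one fixes the convention implicit in \eqref{euler_lagrange}, namely: regard $\ell_{\mathrm{EM}}$ as a function of an unconstrained array of second derivatives, differentiate, and then symmetrize in $\gamma\delta$ (consistently with the unrestricted sum over $\gamma,\delta$ in the term $\partial^2/\partial x^\gamma\partial x^\delta$). With this understanding one must then chase the index contractions through the various copies of $\eta^{\mu\nu}$ to confirm that the two outer derivatives really do collapse to $\delta\rmd\delta\rmd A$ rather than to some other fourth-order operator. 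As an independent check — which I would at least record — one can avoid \eqref{euler_lagrange} entirely and vary $\mathcal{S}$ directly: for a variation $B$ compactly supported in $U$, $\tfrac{\rmd}{\rmd\eps}\big|_0 \mathcal{S}(A+\eps B;U) = \int_U\big(\tfrac1c\langle B,J\rangle_\eta - \tfrac{1}{4\pi}\langle\rmd A,\rmd B\rangle_\eta - \tfrac{\varkappa^{-2}}{4\pi}\langle\delta\rmd A,\delta\rmd B\rangle_\eta\big)\,\mathrm{vol}_\eta$, and the integration-by-parts identity $\int_U\langle\rmd\alpha,\beta\rangle_\eta\,\mathrm{vol}_\eta = \int_U\langle\alpha,\delta\beta\rangle_\eta\,\mathrm{vol}_\eta$ for compactly supported forms (a consequence of Stokes's theorem together with \eqref{delta_def}, \eqref{definition_hodge} and \eqref{properties_hodge}), applied repeatedly, moves all derivatives off $B$ and turns the integrand into $\langle B, \tfrac1c J - \tfrac{1}{4\pi}\delta\rmd A - \tfrac{\varkappa^{-2}}{4\pi}\delta\rmd\delta\rmd A\rangle_\eta$; the fundamental lemma of the calculus of variations then yields \eqref{eq_A_J}, and the standard reduction of such an abstract variation to coordinates is precisely \eqref{euler_lagrange}.
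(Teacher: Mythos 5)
Your proposal is correct and follows essentially the same route as the paper: expand $\mathcal{L}_{\mathrm{EM}} = \ell_{\mathrm{EM}}\,\mathrm{vol}_\eta$ in inertial components and evaluate the three Euler--Lagrange pieces term by term, with the convention for differentiating with respect to the symmetric array $\partial_\mu\partial_\nu A_\xi$ handled exactly as needed, and your second-order bookkeeping—differentiating through $(\delta F)^\mu$ as a linear function of the second derivatives and obtaining $\partial_\mu\partial_\nu\big(\partial\ell_{\mathrm{EM}}/\partial(\partial_\mu\partial_\nu A_\xi)\big) = -\tfrac{\varkappa^{-2}}{4\pi}(\delta\rmd\delta\rmd A)^\xi$—checks out and is in fact tidier than the paper's fully expanded index computation. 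The appended coordinate-free variation using the adjointness $\int_U\langle\rmd\alpha,\beta\rangle_\eta\,\mathrm{vol}_\eta = \int_U\langle\alpha,\delta\beta\rangle_\eta\,\mathrm{vol}_\eta$ for compactly supported variations is a valid independent confirmation, though not the route the paper takes.
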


\begin{proof}
    Using the definitions~\eqref{definition_hodge} and~\eqref{definition_product_pforms} of the Hodge star and of the inner product $\langle\ast,\ast\rangle_\eta$ on $p$-forms, we find that $\mathcal{L}_{\mathrm{EM}} = \ell_{\mathrm{EM}} \ \mathrm{vol}_\eta$ for the scalar
    \begin{equation} \label{proof_ell}
        \ell_{\mathrm{EM}} = \frac{1}{c} A_\alpha J^\alpha - \frac{1}{16\pi} (\rmd A)_{\alpha\beta}(\rmd A)^{\alpha\beta} - \frac{\varkappa^{-2}}{8\pi} (\delta\rmd A)_\alpha (\delta\rmd A)^\alpha = \frac{1}{c}\ell_{\mathrm{EM}}^{(1)} -\frac{1}{16\pi} \ell_{\mathrm{EM}}^{(2)} -\frac{\varkappa^{-2}}{8\pi}\ell_{\mathrm{EM}}^{(3)},
    \end{equation}
    where each $\ell_{\mathrm{EM}}^{(j)}$ only depends on the $(j-1)$-st--order derivatives of the components $A_\alpha$. Now we need to write $\ell_{\mathrm{EM}}$ explicitly in terms of these components. For $\ell_{\mathrm{EM}}^{(1)} = A_\alpha J^\alpha$, the job is already done. For $\ell_{\mathrm{EM}}^{(2)}$, we use the coordinate expression~\eqref{d_coordinates} for the exterior derivative:
    \begin{align*}
        \ell_{\mathrm{EM}}^{(2)} &= (\rmd A)_{\alpha\beta}(\rmd A)^{\alpha\beta} \\
        &= (\rmd A)_{\alpha\beta} \eta^{\alpha\rho}\eta^{\beta\sigma}(\rmd A)_{\rho\sigma} \\
        &= \big( 
        \partial_\alpha A_\beta - \partial_\beta A_\alpha \big) \eta^{\alpha\rho}\eta^{\beta\sigma} \big( \partial_\rho A_\sigma - \partial_\sigma A_\rho \big) \\
        &= 2\eta^{\alpha\rho}\eta^{\beta\sigma}(\partial_\alpha A_\beta) \big( \partial_{\rho} A_\sigma - \partial_\sigma A_\rho \big).
    \end{align*}
    For $\ell_{\mathrm{EM}}^{(3)}$, we also use the coordinate expression~\eqref{delta_coordinates} for the codifferential, noting that $\av{\eta} = 1$ and $\partial_\xi\eta_{\mu\nu} = 0$:
    \begin{align*}
        \ell_{\mathrm{EM}}^{(3)} &= (\delta\rmd A)_\alpha (\delta\rmd A)^\alpha \\
        &= \eta_{\alpha\rho}(\delta\rmd A)^\rho (\delta\rmd A)^\alpha \\
        &= \eta_{\alpha\rho} \big( 
        -\partial_\lambda(\rmd A)^{\lambda\rho} \big)\big( 
        -\partial_\theta(\rmd A)^{\theta\alpha} \big) \\
        &= \eta_{\alpha\rho} \partial_\lambda \big( \eta^{\lambda\sigma}\eta^{\rho\tau}(\rmd A)_{\sigma\tau} \big) \partial_\theta \big( \eta^{\theta\phi}\eta^{\alpha\chi}(\rmd A)_{\phi\chi} \big) \\
        &= \eta_{\alpha\rho}\eta^{\lambda\sigma}\eta^{\rho\tau}\eta^{\theta\phi}\eta^{\alpha\chi} \partial_\lambda\big( \partial_\sigma A_\tau - \partial_\tau A_\sigma \big) \partial_\theta\big( \partial_\phi A_\chi - \partial_\chi A_\phi \big) \\
        &= 2\eta_{\alpha\rho}\eta^{\lambda\sigma}\eta^{\rho\tau}\eta^{\theta\phi}\eta^{\alpha\chi} \partial_\lambda\big( \partial_\sigma A_\tau \big) \partial_\theta\big( \partial_\phi A_\chi - \partial_\chi A_\phi \big) \\
        &= 2\eta^{\lambda\sigma}\eta^{\theta\phi}\eta^{\alpha\chi} (\partial_\lambda\partial_\sigma A_\alpha)(\partial_\theta\partial_\phi A_\chi) - 2\eta^{\lambda\sigma}\eta^{\alpha\phi}\eta^{\theta\chi} (\partial_\lambda\partial_\sigma A_\alpha)(\partial_\theta\partial_\phi A_\chi).
    \end{align*}
    Then, for each fixed $\mu,\nu,\xi$, we differentiate the above expressions with respect to $A_\xi$, $\partial_\mu A_\xi$ and $\partial_\mu\partial_\nu A_\xi$ respectively. The use of the Kronecker Delta $\delta^\ast_\ast$ helps with writing this calculation in a compact form, due to
    \begin{equation*}
        \frac{\partial A_\alpha}{\partial A_\xi} = \delta^\xi_\alpha, \quad \frac{\partial (\partial_\rho A_\alpha)}{\partial (\partial_\mu A_\xi)} = \delta^\mu_\rho \delta^\xi_\alpha, \quad \frac{\partial (\partial_\rho\partial_\sigma A_\alpha)}{\partial (\partial_\mu\partial_\nu A_\xi)} = \delta^\mu_\rho \delta^\nu_\sigma \delta^\xi_\alpha.
    \end{equation*}
    We find
    \begin{align*}
        \frac{\partial\ell_{\mathrm{EM}}^{(1)}}{\partial A_\xi} &= \frac{\partial}{\partial A_\xi} \big( A_\alpha J^\alpha \big) = \delta_\alpha^\xi J^\alpha = J^\xi, \\
        \frac{\partial\ell_{\mathrm{EM}}^{(2)}}{\partial(\partial_\mu A_\xi)} &= 2\frac{\partial}{\partial(\partial_\mu A_\xi)} \bigg( \eta^{\alpha\rho}\eta^{\beta\sigma}(\partial_\alpha A_\beta) (\partial_{\rho} A_\sigma - \partial_\sigma A_\rho) \bigg) \\
        &= 2\eta^{\alpha\rho}\eta^{\beta\sigma} \delta^\mu_\alpha\delta^\xi_\beta(\partial_{\rho} A_\sigma - \partial_\sigma A_\rho) + 2\eta^{\alpha\rho}\eta^{\beta\sigma}(\partial_\alpha A_\beta) ( \delta^\mu_\rho\delta^\xi_\sigma - \delta^\mu_\sigma\delta^\xi_\rho ) \\
        &= 2\eta^{\mu\rho}\eta^{\xi\sigma} (\partial_{\rho} A_\sigma - \partial_\sigma A_\rho) + 2\eta^{\alpha\mu}\eta^{\beta\xi}\partial_\alpha A_\beta - 2\eta^{\alpha\xi}\eta^{\beta\mu}\partial_\alpha A_\beta \\
        &= 2\partial^\mu A^\xi - 2\partial^\xi A^\mu + 2\partial^\mu A^\xi - 2\partial^\xi A^\mu \\
        &= 4\partial^\mu A^\xi - 4\partial^\xi A^\mu, \\
        \frac{\partial\ell_{\mathrm{EM}}^{(3)}}{\partial(\partial_\mu\partial_\nu A_\xi)} &= 2\frac{\partial}{\partial(\partial_\mu\partial_\nu A_\xi)} \bigg( \eta^{\lambda\sigma}\eta^{\theta\phi}\eta^{\alpha\chi} (\partial_\lambda\partial_\sigma A_\alpha)(\partial_\theta\partial_\phi A_\chi) - \eta^{\lambda\sigma}\eta^{\alpha\phi}\eta^{\theta\chi} (\partial_\lambda\partial_\sigma A_\alpha)(\partial_\theta\partial_\phi A_\chi) \bigg) \\
        &= 2\eta^{\lambda\sigma}\eta^{\theta\phi}\eta^{\alpha\chi} \delta^\mu_\lambda \delta^\nu_\sigma \delta^\xi_\alpha(\partial_\theta\partial_\phi A_\chi) + 2\eta^{\lambda\sigma}\eta^{\theta\phi}\eta^{\alpha\chi} (\partial_\lambda\partial_\sigma A_\alpha) \delta^\mu_\theta \delta^\nu_\phi \delta^\xi_\chi \\
        &\qquad - 2\eta^{\lambda\sigma}\eta^{\alpha\phi}\eta^{\theta\chi} \delta^\mu_\lambda \delta^\nu_\sigma \delta^\xi_\alpha(\partial_\theta\partial_\phi A_\chi) - 2\eta^{\lambda\sigma}\eta^{\alpha\phi}\eta^{\theta\chi} (\partial_\lambda\partial_\sigma A_\alpha)\delta^\mu_\theta \delta^\nu_\phi \delta^\xi_\chi \\
        &= 2\eta^{\mu\nu}\eta^{\theta\phi}\eta^{\xi\chi}\partial_\theta\partial_\phi A_\chi + 2\eta^{\lambda\sigma}\eta^{\mu\nu}\eta^{\alpha\xi} \partial_\lambda\partial_\sigma A_\alpha - 2\eta^{\mu\nu}\eta^{\xi\phi}\eta^{\theta\chi}\partial_\theta\partial_\phi A_\chi - 2\eta^{\lambda\sigma}\eta^{\alpha\nu}\eta^{\mu\xi} \partial_\lambda\partial_\sigma A_\alpha \\
        &= 2\eta^{\mu\nu}\partial^\phi \partial_\phi A^\xi + 2\eta^{\mu\nu}\partial^\sigma \partial_\sigma A^\xi - 2\eta^{\mu\nu}\partial^\chi \partial^\xi A_\chi - 2\eta^{\mu\xi}\partial^\sigma \partial_\sigma A^\nu.
    \end{align*}
    The coordinate derivatives of the last two of these expressions, as featured in~\eqref{euler_lagrange}, are
    \begin{align*}
        \frac{\partial}{\partial x^\mu}\left(\frac{\partial\ell_{\mathrm{EM}}^{(2)}}{\partial(\partial_\mu A_\xi)}\right) &= 4\partial_\mu \big( \partial^\mu A^\xi - \partial^\xi A^\mu \big) \\
        &= 4\square A^\xi - 4\partial^\xi \mathrm{div}(A), \\
        \frac{\partial^2}{\partial x^\mu\partial x^\nu}\left(\frac{\partial\ell_{\mathrm{EM}}^{(3)}}{\partial(\partial_\mu\partial_\nu A_\xi)}\right) &= 2\partial_\mu\partial_\nu \big( \eta^{\mu\nu}\partial^\phi \partial_\phi A^\xi + \eta^{\mu\nu}\partial^\sigma \partial_\sigma A^\xi - \eta^{\mu\nu}\partial^\chi \partial^\xi A_\chi - \eta^{\mu\xi}\partial^\sigma \partial_\sigma A^\nu \big) \\
        &= 2\partial^\nu\partial_\nu\partial^\phi\partial_\phi A^\xi + 2\partial^\nu\partial_\nu\partial^\sigma\partial_\sigma A^\xi - 2\partial^\nu\partial_\nu\partial^\xi\partial^\chi A_\chi - 2\partial^\xi\partial_\nu\partial^\sigma\partial_\sigma A^\nu \\
        &= 2\square^2 A^\xi + 2\square^2 A^\xi - 2\partial^\xi\square \mathrm{div}(A) - 2\partial^\xi\square \mathrm{div}(A) \\
        &= 4\square^2 A^\xi - 4\partial^\xi\square\mathrm{div}(A),
    \end{align*}
    where we used the symbols $\mathrm{div}(A) = \partial_\lambda A^\lambda = \partial^\lambda A_\lambda$ and $\square = \partial_\lambda\partial^\lambda = \partial^\lambda\partial_\lambda$. Thus the Euler-Lagrange equation~\eqref{euler_lagrange} becomes
    \begin{equation} \label{EL_proof_goal}
        \frac{1}{c} J^\xi - \frac{1}{4\pi} \big( \partial^\xi \mathrm{div}(A) - \square A^\xi \big) - \frac{\varkappa^{-2}}{4\pi} \big( \square^2 A^\xi - \partial^\xi\square\mathrm{div}(A) \big) = 0.
    \end{equation}
    Finally, we need to compare this to the coordinate expression of the target equation~\eqref{eq_A_J}:
    \begin{equation} \label{eq_A_J_2}
        0 = \frac{1}{c}J - \frac{1}{4\pi}\delta\rmd A - \frac{\varkappa^{-2}}{4\pi} \delta\rmd\delta\rmd A = \frac{1}{c}B_{(1)} - \frac{1}{4\pi}B_{(2)} -\frac{\varkappa^{-2}}{4\pi}B_{(3)},
    \end{equation}
    whose terms are found to be
    \begin{align*}
        B_{(1)}^\xi &= J^\xi, \\
        B_{(2)}^\xi &= (\delta\rmd A)^\xi = -\partial_\lambda\big( \eta^{\lambda\rho}\eta^{\xi\sigma} ( \partial_\rho A_\sigma - \partial_\sigma A_\rho) \big) = -\partial^\rho\partial_\rho A^\xi + \partial^\rho \partial^\xi A_\rho = -\square A^\xi + \partial^\xi\mathrm{div}(A), \\
        B_{(3)}^\xi &= (\delta\rmd\delta\rmd A)^\xi \\
        &= -\square (\delta\rmd A)^\xi + \partial^\xi \mathrm{div}(\delta\rmd A) \\
        &= -\square (\delta\rmd A)^\xi + \partial^\xi \partial_\lambda(\delta\rmd A)^\lambda \\
        &= -\square\big( -\square A^\xi + \partial^\xi\mathrm{div}(A) \big) + \partial^\xi \partial_\lambda\big( -\square A^\lambda + \partial^\lambda\mathrm{div}(A) \big) \\
        &= \square^2 A^\xi - \partial^\xi\square\mathrm{div}(A) - \partial^\xi\square\mathrm{div}(A) + \partial^\xi\square\mathrm{div}(A) \\
        &= \square^2 A^\xi - \partial^\xi\square\mathrm{div}(A).
    \end{align*}
    Then~\eqref{eq_A_J_2} is immediately seen to also be equivalent to~\eqref{EL_proof_goal}.
\end{proof}

\begin{remark} \label{rem_BLTP_generalization}
    The Lagrangian 4-form~\eqref{lagrangian_4form}, which looks like
    \begin{equation} \label{lagrangian_4form_F}
        \mathcal{L}_{\mathrm{EM}}(A) = \frac{1}{c} A\wedge\Star J - \frac{1}{8\pi} F\wedge\Star F - \frac{\varkappa^{-2}}{8\pi} (\delta F)\wedge(\Star\delta F)
    \end{equation}
    when written in terms of $F = \rmd A$, can be used to explain why BLTP theory is a \textit{natural} Lagrangian generalization of the conventional Maxwell theory of electromagnetism: If one wishes to generalize the latter while retaining the \textit{linear} nature of the field equations, one needs to take the Lagrangian given by the first two terms in~\eqref{lagrangian_4form_F} and add to it a new term that is \textit{quadratic} in quantities related to $F$. But there are no quadratic invariants of $F$ itself left to use apart from the term $F\wedge\Star F$ that is already present. One might consider using $F\wedge F$, but this expression would not contribute anything to the Euler-Lagrange equations on account of it being an exact differential, namely of $A\wedge F$. Thus one is led to consider the next best thing: quadratic invariants involving \textit{derivatives} of $F$, such as the term $(\delta F)\wedge (\Star\delta F)$ that appears in~\eqref{lagrangian_4form_F}.
\end{remark}

We close this section by deriving a stress-energy tensor $(T^{\mu\nu})$ from which the expression for $\frake$ claimed in~\eqref{e_SR} can be extracted. As mentioned before, formulas for the fields' (energy/momenta)-(density/flux) quantities are derived by rewriting the right sides of~\eqref{field_energy_conservation} and~\eqref{field_momentum_conservation} into the form $\partial_t(\ast) + \nabla\cdot(\bm{\ast})$. In tensorial language, however, what this amounts to is a rewriting of the right side of the equivalent equation~\eqref{divergence_cov} into the form $\partial_\nu T^{\mu\nu}$ for some symmetric tensor $T$.

\begin{proposition} \label{prop_Tmunu}
    Let a 1-form field $J$ and two 2-form fields $F$ and $M$ satisfy Maxwell's equations~\eqref{maxwell_cov} and the vacuum law~\eqref{vacuum_cov}. Let an inertial coordinate system be fixed. Then the relation
    \begin{equation*}
        \partial_\nu T^{\mu\nu} = -\frac{1}{c}\tensor{F}{^\mu_\lambda} J^\lambda    
    \end{equation*}
    holds for the symmetric $(2,0)$-tensor $T$ whose components are defined by
    \begin{equation} \label{Tmunu_SR}
        T^{\mu\nu} = T^{\mu\nu}_{\mathrm{(M)}} + \varkappa^{-2} T^{\mu\nu}_{\mathrm{(BLTP)}} 
    \end{equation}
    for
    \begin{align*}
        &4\pi T^{\mu\nu}_{\mathrm{(M)}} = F^{\alpha\mu}\tensor{F}{_\alpha^\nu} - \frac{\eta^{\mu\nu}}{4} F_{\alpha\beta}F^{\alpha\beta}, \\
        &4\pi T^{\mu\nu}_{\mathrm{(BLTP)}} = F^{\alpha\mu}\tensor{(\rmd\delta F)}{_\alpha^\nu} + F^{\alpha\nu}\tensor{(\rmd\delta F)}{_\alpha^\mu} - \frac{\eta^{\mu\nu}}{2}F_{\alpha\beta}(\rmd\delta F)^{\alpha\beta} - (\delta F)^\mu(\delta F)^\nu + \frac{\eta^{\mu\nu}}{2} (\delta F)^\alpha (\delta F)_\alpha.
    \end{align*}
\end{proposition}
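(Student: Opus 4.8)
The plan is to verify the identity by a direct component computation of the coordinate divergence $\partial_\nu T^{\mu\nu}$ in the fixed inertial system, after disposing of the symmetry claim. Symmetry needs no work: the contraction $F^{\alpha\mu}\tensor{F}{_\alpha^\nu}$ is symmetric in $\mu,\nu$ (the sum over $\alpha$ together with the symmetry of $\eta$ makes it so), and every other term of $4\pi T^{\mu\nu}_{\mathrm{(M)}}$ and $4\pi T^{\mu\nu}_{\mathrm{(BLTP)}}$ is either explicitly symmetrized under $\mu\leftrightarrow\nu$ or carries the symmetric factor $\eta^{\mu\nu}$. Before computing, I would record the scalar equation that the hypotheses force: applying $\Star$ to $\rmd M = \frac{4\pi}{c}\Star J$ from~\eqref{maxwell_cov}, using $\Star\Star J = J$ for the $1$-form $J$ by~\eqref{properties_hodge}, substituting the vacuum law~\eqref{vacuum_cov}, and recalling $\delta = \Star\rmd\Star$ from~\eqref{delta_def}, one obtains
\begin{equation*}
    \delta F + \varkappa^{-2}\,\delta\rmd\delta F = \frac{4\pi}{c}\,J.
\end{equation*}
I would also note that~\eqref{delta_coordinates} in Minkowski space (where $\sqrt{\av{g}}\equiv 1$) yields $\partial_\nu\tensor{\omega}{_\alpha^\nu} = (\delta\omega)_\alpha$ for every $2$-form $\omega$ and $\partial_\nu\xi^\nu = -\delta\xi$ for every $1$-form $\xi$; in particular $\partial_\nu(\delta F)^\nu = -\delta^2 F = 0$.

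Next I would dispatch the Maxwell part, which is the classical computation. Expanding $4\pi\partial_\nu T^{\mu\nu}_{\mathrm{(M)}}$ by the Leibniz rule, the piece $F^{\alpha\mu}\partial_\nu\tensor{F}{_\alpha^\nu} = F^{\alpha\mu}(\delta F)_\alpha = -\tensor{F}{^\mu_\lambda}(\delta F)^\lambda$ is kept, while the remaining piece $(\partial_\nu F^{\alpha\mu})\tensor{F}{_\alpha^\nu} - \tfrac12(\partial^\mu F_{\alpha\beta})F^{\alpha\beta}$ vanishes after symmetrizing the derivative indices and using the Bianchi identity $\rmd F = 0$, i.e.\ $\partial_\gamma F_{\alpha\beta} + \partial_\alpha F_{\beta\gamma} + \partial_\beta F_{\gamma\alpha} = 0$. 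Hence $4\pi\partial_\nu T^{\mu\nu}_{\mathrm{(M)}} = -\tensor{F}{^\mu_\lambda}(\delta F)^\lambda$, which together with the field equation already proves the $\varkappa = \infty$ case.

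For the BLTP part, set $G := \rmd\delta F$, a $2$-form obeying $\rmd G = \rmd^2\delta F = 0$. I would expand $4\pi\partial_\nu T^{\mu\nu}_{\mathrm{(BLTP)}}$ by Leibniz into its eight terms, discard the one proportional to $\partial_\nu(\delta F)^\nu = 0$, and show that the remaining seven split into three groups that each cancel internally: (i) the three terms that become quadratic in $\delta F$ once $G$ is written as $G_{\alpha\beta} = \partial_\alpha(\delta F)_\beta - \partial_\beta(\delta F)_\alpha$ cancel one another; (ii) the pair $(\partial_\nu F^{\alpha\mu})\tensor{G}{_\alpha^\nu} - \tfrac12(\partial^\mu F_{\alpha\beta})G^{\alpha\beta}$ cancels by the Bianchi identity for $F$, exactly as in the Maxwell part; (iii) the pair $F^{\alpha\nu}\partial_\nu\tensor{G}{_\alpha^\mu} - \tfrac12 F_{\alpha\beta}\partial^\mu G^{\alpha\beta}$ cancels by the same manipulation applied with the Bianchi identity $\rmd G = 0$. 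The single survivor is $F^{\alpha\mu}(\delta G)_\alpha = -\tensor{F}{^\mu_\lambda}(\delta\rmd\delta F)^\lambda$, so $4\pi\partial_\nu T^{\mu\nu}_{\mathrm{(BLTP)}} = -\tensor{F}{^\mu_\lambda}(\delta\rmd\delta F)^\lambda$. Putting the two parts together, $4\pi\partial_\nu T^{\mu\nu} = -\tensor{F}{^\mu_\lambda}\big((\delta F)^\lambda + \varkappa^{-2}(\delta\rmd\delta F)^\lambda\big) = -\tfrac{4\pi}{c}\tensor{F}{^\mu_\lambda}J^\lambda$ by the field equation, which is~\eqref{divergence_cov}.

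There is no conceptual obstacle here; the work is entirely bookkeeping, and the two points requiring care are: tracking index positions when raising and lowering with $\eta$ --- legitimate to commute past $\partial$ only because the coordinates are inertial, and needed to reconcile the raised indices of~\eqref{delta_coordinates} with the lowered-index identities above --- and correctly sorting the eight terms of the BLTP divergence into their three cancelling groups (keeping the term $F^{\alpha\mu}(\delta G)_\alpha$, which supplies the right-hand side, out of the cancellation). The $\varkappa\to\infty$ limit and the structural parallel between the two Bianchi cancellations serve as useful consistency checks.
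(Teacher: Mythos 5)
Your proposal is correct and is essentially the paper's proof run in the opposite direction: the paper \emph{constructs} $T^{\mu\nu}$ by repeatedly integrating $\tensor{F}{^\mu_\lambda}\partial_\nu F^{\nu\lambda}$ and $\tensor{F}{^\mu_\lambda}\partial_\nu(\rmd\delta F)^{\nu\lambda}$ by parts until a symmetric total divergence appears, whereas you \emph{verify} the divergence of the stated $T^{\mu\nu}$ directly, but both arguments rest on exactly the same ingredients --- the Bianchi identity for $F$ (and its analogue $\rmd(\rmd\delta F)=0$, which the paper uses implicitly by expanding $(\rmd\delta F)_{\nu\lambda}=\partial_\nu(\delta F)_\lambda-\partial_\lambda(\delta F)_\nu$), the flat-space coordinate formula for $\delta$, antisymmetry killing $\partial_\nu\partial_\lambda$-contractions, and the field equation $\delta F+\varkappa^{-2}\delta\rmd\delta F=\tfrac{4\pi}{c}J$. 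The only blemish is a harmless miscount: the Leibniz expansion of $4\pi\partial_\nu T^{\mu\nu}_{\mathrm{(BLTP)}}$ yields nine terms rather than eight, and your three cancelling groups together with the discarded $\partial_\nu(\delta F)^\nu$ term and the surviving $F^{\alpha\mu}(\delta\rmd\delta F)_\alpha$ in fact account for all nine.
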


\begin{proof}
    Equation~\eqref{eq_A_J}, recast in terms of the tensor $F$ given by $F=\rmd A$, becomes
    \begin{equation*}
        \frac{4\pi}{c} J^\lambda = (\delta F)^\lambda + \varkappa^{-2} (\delta\rmd\delta F)^\lambda = -\partial_\nu F^{\nu\lambda} - \varkappa^{-2}\partial_\nu (\rmd\delta F)^{\nu\lambda},
    \end{equation*}
    where we used formula~\eqref{delta_coordinates} in inertial coordinates to find $\delta\psi$ for the 2-forms $\psi = F$ and $\psi = \rmd\delta F$. Hence, we need to find a symmetric tensor $(T^{\mu\nu})$ satisfying
    \begin{equation*}
        4\pi\partial_\nu T^{\mu\nu} = -\frac{4\pi}{c} \tensor{F}{^\mu_\lambda} J^\lambda = \tensor{F}{^\mu_\lambda}\partial_\nu F^{\nu\lambda} + \varkappa^{-2} \tensor{F}{^\mu_\lambda}\partial_\nu (\rmd\delta F)^{\nu\lambda}.
    \end{equation*}
    This will be achieved in the form~\eqref{Tmunu_SR} if we can find separate symmetric tensors $(T^{\mu\nu}_{\mathrm{(M)}})$ and $(T^{\mu\nu}_{\mathrm{(BLTP)}})$ with
    \begin{equation*}
        4\pi \partial_\nu T^{\mu\nu}_{\mathrm{(M)}} = \tensor{F}{^\mu_\lambda}\partial_\nu F^{\nu\lambda}, \quad 4\pi \partial_\nu T^{\mu\nu}_{\mathrm{(BLTP)}} = \tensor{F}{^\mu_\lambda}\partial_\nu (\rmd\delta F)^{\nu\lambda}.
    \end{equation*}
    We look for $(T^{\mu\nu}_{\mathrm{(M)}})$ first. Using ``integration by parts'' to move the $\nu$-derivative over to the other term of the expression $\tensor{F}{^\mu_\lambda}\partial_\nu F^{\nu\lambda}$, we have
    \begin{align}
        \tensor{F}{^\mu_\lambda}\partial_\nu F^{\nu\lambda} &= \partial_\nu \big( \tensor{F}{^\mu_\lambda}F^{\nu\lambda} \big) - F^{\nu\lambda} \partial_\nu\tensor{F}{^\mu_\lambda} \nonumber \\
        &= \partial_\nu U_{(1)}^{\mu\nu} - \eta^{\mu\alpha} F^{\nu\lambda} \partial_\nu F_{\alpha\lambda} \qquad \text{where}\quad\boxed{U_{(1)}^{\mu\nu} := \tensor{F}{^\mu_\lambda}F^{\nu\lambda}}. \label{SET_proof1}
    \end{align}
    By switching the positions of the upper and lower copies of the index $\lambda$, we realize that $U_{(1)}$ is symmetric:
    \begin{equation*}
        U_{(1)}^{\mu\nu} = \tensor{F}{^\mu_\lambda}F^{\nu\lambda} = \tensor{F}{^\mu^\lambda}\tensor{F}{^\nu_\lambda} = U_{(1)}^{\nu\mu}.
    \end{equation*}
    So we just need to study the rest of the expression~\eqref{SET_proof1}. Using the coordinate formula~\eqref{d_coordinates} for the equation $\rmd F = 0$, which reads $\partial_\nu F_{\alpha\lambda} - \partial_{\alpha} F_{\nu\lambda} + \partial_\lambda F_{\nu\alpha} = 0$, we can prove that $F^{\nu\lambda}\partial_\nu F_{\alpha\lambda} = F^{\nu\lambda}\partial_{\alpha} F_{\nu\lambda}/2$ --- indeed, first using the antisymmetry of $(F^{\alpha\beta})$ and $(F_{\alpha\beta})$ and then relabeling indices on the second term, we have:
    \begin{equation*}
        F^{\nu\lambda}\partial_\nu F_{\alpha\lambda} = F^{\nu\lambda}\partial_{\alpha} F_{\nu\lambda} - F^{\nu\lambda}\partial_\lambda F_{\nu\alpha} = F^{\nu\lambda}\partial_{\alpha} F_{\nu\lambda} - F^{\lambda\nu}\partial_\lambda F_{\alpha\nu} = F^{\nu\lambda}\partial_{\alpha} F_{\nu\lambda} -F^{\nu\lambda}\partial_\nu F_{\alpha\lambda},
    \end{equation*}
    and now moving $F^{\nu\lambda}\partial_\nu F_{\alpha\lambda}$ to the left side yields the claimed identity. So let us apply this on the leftover term in~\eqref{SET_proof1}:
    \begin{align*}
        -\eta^{\mu\alpha} F^{\nu\lambda} \partial_\nu F_{\alpha\lambda} &= -\frac{1}{2}\eta^{\mu\alpha} F^{\nu\lambda}\partial_\alpha F_{\nu\lambda} \\
        &= -\frac{1}{4}\eta^{\mu\alpha} F^{\nu\lambda}\partial_\alpha F_{\nu\lambda} -\frac{1}{4}\eta^{\mu\alpha} F^{\nu\lambda}\partial_\alpha F_{\nu\lambda} \\
        &= -\frac{1}{4}\eta^{\mu\alpha} F^{\nu\lambda}\partial_\alpha F_{\nu\lambda} -\frac{1}{4}\eta^{\mu\alpha} F_{\nu\lambda}\partial_\alpha F^{\nu\lambda} \\
        &= -\frac{1}{4}\eta^{\mu\alpha}\partial_\alpha \big( 
        F^{\nu\lambda}F_{\nu\lambda} \big) \\
        &= \partial_\nu U^{\mu\nu}_{(2)} \qquad \text{where}\quad\boxed{U^{\mu\nu}_{(2)} := -\frac{1}{4}\eta^{\mu\nu} F^{\alpha\beta} F_{\alpha\beta}}.
    \end{align*}
    Thus we have proved $\tensor{F}{^\mu_\lambda}\partial_\nu F^{\nu\lambda} = \partial_\nu \big( U^{\mu\nu}_{(1)} + U^{\mu\nu}_{(2)} \big)$ for symmetric expressions $U_{(j)}$. Defining $4\pi T^{\mu\nu}_{\mathrm{(M)}}$ as $U^{\mu\nu}_{(1)} + U^{\mu\nu}_{(2)}$ gives the first term of~\eqref{Tmunu_SR} as claimed in the proposition.
    
    It remains to derive $T^{\mu\nu}_{\mathrm{(BLTP)}}$. The beginning of the computation is analogous to the above:
    \begin{align*}
        \tensor{F}{^\mu_\lambda}\partial_\nu (\rmd\delta F)^{\nu\lambda} &= \partial_\nu \big( \tensor{F}{^\mu_\lambda} (\rmd\delta F)^{\nu\lambda} \big) - (\rmd\delta F)^{\nu\lambda} \partial_\nu\tensor{F}{^\mu_\lambda} \\
        &= \partial_\nu V^{\mu\nu}_{(1)} - \eta^{\mu\alpha}(\rmd\delta F)^{\nu\lambda}\partial_\nu F_{\alpha\lambda} \qquad \text{where}\quad\boxed{V^{\mu\nu}_{(1)} := \tensor{F}{^\mu_\lambda} (\rmd\delta F)^{\nu\lambda}} \\ 
        &= \partial_\nu V^{\mu\nu}_{(1)} -\frac{1}{2}\eta^{\mu\alpha}(\rmd\delta F)^{\nu\lambda}\partial_\alpha F_{\nu\lambda}.
    \end{align*}
    Note that $V_{(1)}$ is \textit{not} symmetric, but we choose to keep it as a piece of our desired tensor because its counterpart with $\mu$ and $\nu$ exchanged will appear later. Next we ``integrate by parts'' again in the leftover term, this time with respect to $\partial_\alpha$:
    \begin{align*}
        -\frac{1}{2}\eta^{\mu\alpha}(\rmd\delta F)^{\nu\lambda}\partial_\alpha F_{\nu\lambda} &= -\frac{1}{2}\eta^{\mu\alpha}\partial_\alpha \big( (\rmd\delta F)^{\nu\lambda} F_{\nu\lambda} \big) + \frac{1}{2}\eta^{\mu\alpha} F_{\nu\lambda} \partial_\alpha(\rmd\delta F)^{\nu\lambda} \\
        &= \partial_\nu V^{\mu\nu}_{(2)} + \frac{1}{2} F_{\nu\lambda}\partial^\mu (\rmd\delta F)^{\nu\lambda} \qquad \text{where}\quad\boxed{V^{\mu\nu}_{(2)} := -\frac{1}{2}\eta^{\mu\nu} (\rmd\delta F)^{\alpha\beta} F_{\alpha\beta}}.
    \end{align*}
    To continue with the leftover term, we apply formula~\eqref{d_coordinates} for $\rmd$ (but with raised indices, which is allowed in an inertial coordinate system) and move the $\nu$ derivative over:
    \begin{align*}
        \frac{1}{2} F_{\nu\lambda}\partial^\mu (\rmd\delta F)^{\nu\lambda} &= \frac{1}{2} F_{\nu\lambda} \partial^\mu \big( \partial^\nu (\delta F)^\lambda - \partial^\lambda (\delta F)^\nu \big) \\
        &= F_{\nu\lambda}\partial^\mu \partial^\nu (\delta F)^\lambda \\
        &= \partial^\nu \big( F_{\nu\lambda} \partial^\mu (\delta F)^\lambda \big) - \big(\partial^\mu(\delta F)^\lambda\big)\big( \partial^\nu F_{\nu\lambda} \big) \\
        &= \eta^{\nu\alpha}\partial_\alpha \big( F_{\nu\lambda} \partial^\mu (\delta F)^\lambda \big) - \big(\partial^\mu(\delta F)^\lambda\big)\big( \partial_\nu \tensor{F}{^\nu_\lambda} \big) \\
        &= \eta^{\alpha\nu}\partial_\nu \big( F_{\alpha\lambda} \partial^\mu (\delta F)^\lambda \big) + \big(\partial^\mu(\delta F)^\lambda\big)(\delta F)_\lambda \\
        &= \partial_\nu \big( \tensor{F}{^\nu_\lambda} \partial^\mu (\delta F)^\lambda \big) + \frac{1}{2}\big(\partial^\mu(\delta F)^\lambda\big)(\delta F)_\lambda + \frac{1}{2}\big(\partial^\mu(\delta F)_\lambda\big)(\delta F)^\lambda \\
        &= \partial_\nu W^{\mu\nu} + \frac{1}{2}\partial^\mu\big( (\delta F)^\alpha(\delta F)_\alpha \big) \qquad \text{where}\quad\boxed{ W^{\mu\nu} := \tensor{F}{^\nu_\lambda} \partial^\mu (\delta F)^\lambda } \\
        &= \partial_\nu W^{\mu\nu} + \frac{1}{2} \eta^{\mu\nu}\partial_\nu \big( (\delta F)^\alpha(\delta F)_\alpha \big) \\
        &= \partial_\nu W^{\mu\nu} + \partial_\nu V^{\mu\nu}_{(3)} \qquad \text{where}\quad\boxed{V^{\mu\nu}_{(3)} := \frac{1}{2} \eta^{\mu\nu}\big( (\delta F)^\alpha(\delta F)_\alpha \big)}.
    \end{align*}
    We have now achieved the decomposition
    \begin{equation} \label{SET_proof2}
        \tensor{F}{^\mu_\lambda}\partial_\nu (\rmd\delta F)^{\nu\lambda} = \partial_\nu \big( V^{\mu\nu}_{(1)} + V^{\mu\nu}_{(2)} + V^{\mu\nu}_{(3)} + W^{\mu\nu}\big),
    \end{equation}
    but there are parts of it that are still not symmetric. We need to work with $W^{\mu\nu}$ a little more:
    \begin{align*}
        W^{\mu\nu} &= \tensor{F}{^\nu_\lambda} \partial^\mu (\delta F)^\lambda \\
        &= \tensor{F}{^\nu_\lambda} (\rmd\delta F)^{\mu\lambda} + \tensor{F}{^\nu_\lambda} \partial^\lambda (\delta F)^\mu \\
        &= V^{\mu\nu}_{(4)} + \partial^\lambda\big(\tensor{F}{^\nu_\lambda} (\delta F)^\mu\big) - (\delta F)^\mu\partial^\lambda \tensor{F}{^\nu_\lambda} \qquad \text{where}\quad\boxed{ V^{\mu\nu}_{(4)} := \tensor{F}{^\nu_\lambda} (\rmd\delta F)^{\mu\lambda} } \\
        &= V^{\mu\nu}_{(4)} + \partial_\lambda\big(F^{\nu\lambda} (\delta F)^\mu\big) - (\delta F)^\mu\partial_\lambda F^{\nu\lambda} \\
        &= V^{\mu\nu}_{(4)} + \partial_\lambda\big(F^{\nu\lambda} (\delta F)^\mu\big) - (\delta F)^\mu(\delta F)^\nu \\
        &= V^{\mu\nu}_{(4)} + V^{\mu\nu}_{(5)} + \partial_\lambda\big(F^{\nu\lambda} (\delta F)^\mu\big) \qquad \text{where}\quad\boxed{V^{\mu\nu}_{(5)} := - (\delta F)^\mu(\delta F)^\nu}.
    \end{align*}
    We recognize that $V_{(4)}$ is the promised counterpart of $V_{(1)}$, that is, $V_{(1)}^{\mu\nu}+V_{(4)}^{\mu\nu}$ is symmetric. Thus~\eqref{SET_proof2} finally turns into
    \begin{equation*}
        \tensor{F}{^\mu_\lambda}\partial_\nu (\rmd\delta F)^{\nu\lambda} = \partial_\nu \bigg( V^{\mu\nu}_{(1)} + V^{\mu\nu}_{(2)} + V^{\mu\nu}_{(3)} + V^{\mu\nu}_{(4)} + V^{\mu\nu}_{(5)} + \partial_\lambda\big(F^{\nu\lambda} (\delta F)^\mu \big) \bigg) = \partial_\nu \sum_{j=1}^5 V^{\mu\nu}_{(j)},
    \end{equation*}
    where $\partial_\nu\partial_\lambda\big(F^{\nu\lambda} (\delta F)^\mu \big)$ vanished due to symmetry of $\partial_\nu\partial_\lambda$ combined with antisymmetry of $F^{\nu\lambda}$. Then we can define $4\pi T^{\mu\nu}_{\mathrm{(BLTP)}}$ as the sum of the $V_{(j)}$ terms, which is a symmetric expression that yields the formula claimed in the proposition.
\end{proof}

By now plugging in the components of $F$ as in~\eqref{F_components} into formula~\eqref{Tmunu_SR} for $(T^{\mu\nu})$, we finally obtain exactly the claimed formula~\eqref{e_SR} for $T^{00} = \frake$ (see~\eqref{Tmunu_components}). Indeed, the individual terms of~\eqref{Tmunu_SR} for $\mu=\nu=0$ come out as
\begin{align*}
    &F^{\alpha0}\tensor{F}{_\alpha^0} = \av{\bm{E}}^2, & &- \dfrac{\eta^{00}}{4} F_{\alpha\beta}F^{\alpha\beta} = -\dfrac{\av{\bm{E}}^2}{2} + \dfrac{\av{\bm{B}}^2}{2}, \\
    &2\varkappa^{-2}F^{\alpha0}\tensor{(\rmd\delta F)}{_\alpha^0} = -2\varkappa^{-2}\bm{E}\cdot\square\bm{E}, & &\varkappa^{-2}\dfrac{\eta^{00}}{2}F_{\alpha\beta}(\rmd\delta F)^{\alpha\beta} = \varkappa^{-2}\big(\bm{E}\cdot\square\bm{E} - \bm{B}\cdot\square\bm{B}\big), \\
    &- \varkappa^{-2}(\delta F)^0(\delta F)^0 = -\varkappa^{-2}(\nabla\cdot\bm{E})^2, & &\varkappa^{-2}\dfrac{\eta^{00}}{2} (\delta F)^\alpha (\delta F)_\alpha = \varkappa^{-2} \left(\dfrac{(\nabla\cdot\bm{E})^2}{2} - \dfrac{1}{2}\left| \nabla\times\bm{B} - \dfrac{1}{c}\dfrac{\partial\bm{E}}{\partial t} \right|^2 \right),
\end{align*}
whose sum is the right side of~\eqref{e_SR}.

\section{Electromagnetism in General Relativity}
\label{sec_EM_GR}

The covariant formulation of electromagnetism in Special Relativity applies promptly to General Relativity. In this section we elaborate on this topic, also explaining how the EM fields couple to gravity via the Einstein equations. Our goal is to derive the GR equations for the spacetime of a static point charge under the Maxwell-Maxwell and Maxwell-BLTP theories of electromagnetism, but we initially let the source term $J$ be general in the discussion.

\subsection{Covariant formulation of electromagnetism in GR}
\label{subsec_GR_intro}

Let $\mathcal{M}$ be an oriented 4-dimensional Lorentzian manifold with a metric $g$. Let a 1-form field $J\in\extprod{1}{\mathcal{M}}$ be given satisfying $\delta J = 0$. Then the \textbf{Maxwell equations} and \textbf{vacuum law} for two unknown 2-form fields $F,M\in \extprod{2}{\mathcal{M}}$ with source term $J$ are by definition the same covariant equations~\eqref{maxwell_cov} and~\eqref{vacuum_cov} as in the SR case, with $\varkappa = \infty$ yielding the conventional Maxwell theory and $0 < \varkappa < \infty$ the BLTP theory. As before, the equation $\rmd F = 0$ implies the local existence of a 1-form field $A\in\extprod{1}{\mathcal{M}}$ such that $F = \rmd A$, and it is now easy to check that the remaining EM equations are given by the Euler-Lagrange equation of the same Lagrangian $\mathcal{L}_{\mathrm{EM}} = \mathcal{L}_{\mathrm{EM}}(A)$ given in~\eqref{lagrangian_4form}. For this purpose, the proof of Proposition~\ref{prop_lagrangian} simply needs to be adapted with the inclusion of factors of $\sqrt{\av{g}}$ and its inverse at the appropriate places, that is, in the Euler-Lagrange equation~\eqref{euler_lagrange} itself and everywhere where the coordinate expression~\eqref{delta_coordinates} of $\delta$ was used. Note how we have been careful in that proof to only apply~\eqref{d_coordinates} to tensors with lower indices and~\eqref{delta_coordinates} to those with upper indices, as would be required in a general manifold. We skip the details of this proof.

\subsection{The Hilbert stress-energy tensor}
\label{subsec_GR_energy}

Since our EM theories (with either $\varkappa = \infty$ or $0<\varkappa<\infty$) are derived from a Lagrangian $\mathcal{L}_{\mathrm{EM}}$, the standard procedure of varying $\mathcal{L}_{\mathrm{EM}}$ with respect to the inverse metric $(g^{\mu\nu})$ can be applied to produce a canonical definition for the \textit{covariant} stress-energy tensor --- that is, $(T_{\mu\nu})$ as opposed to $(T^{\mu\nu})$ --- that acts as the source of gravity in GR. More precisely: Fix a 2-form $F = \rmd A\in\extprod{2}{\mathcal{M}}$ that solves our EM equations, and now consider the Lagrangian $\mathcal{L}_{\mathrm{EM}}$ given in~\eqref{lagrangian_4form_F} to be a function of the inverse metric $g^{-1}\in\mathcal{T}^2_0(\mathcal{M})$. In particular, components of the differential-form fields $A$ and $F$ with lower indices are considered independent of $g^{-1}$, while those with at least one raised index depend on it via the operation of index raising. Given an arbitrary open set $U\subseteq\mathcal{M}$, define the \textbf{Einstein-Hilbert action}
\begin{equation} \label{EH}
    \mathcal{S}_{\text{EH}}(g^{-1};U) := \int_U \left(\frac{c^4}{16\pi G} R \ \mathrm{vol}_g + \mathcal{L}_{\mathrm{EM}}\right) = \int_U \left( \frac{c^4}{16\pi G} R + \ell_{\mathrm{EM}} \right) \sqrt{\av{g}} \ \rmd^4 x,
\end{equation}
where $G$ and $c$ are Newton's gravitational constant and the speed of light, and where the Ricci scalar $R$ is as given in~\eqref{Ricci_Christoffel}.
Consider also a differentiable one-parameter family of inverse metrics $g^{-1}_{(s)}\in\mathcal{T}^2_0(\mathcal{M})$, all coinciding along the boundary of $U$, with the parameter $s$ varying in some small interval $(-\eps,\eps)$. We denote $g_{(0)}^{-1}$ simply by $g^{-1}$. For a quantity $f = f(g^{-1},\partial g^{-1},\partial^2 g^{-1})$ that depends on the components $g^{\mu\nu}$ of an arbitrary inverse metric and on their first-- and second-order derivatives $\partial_\xi g^{\mu\nu}$, $\partial_{\rho}\partial_\xi g^{\mu\nu}$ in some coordinate system, one defines the \textbf{variational derivative}
\begin{equation*}
    \delta f := \frac{\rmd}{\rmd s}\bigg|_{s=0} f\big( g^{-1}_{(s)}, \partial g^{-1}_{(s)}, \partial^2 g^{-1}_{(s)} \big) = \frac{\partial f}{\partial g^{\mu\nu}} \delta g^{\mu\nu} + \frac{\partial f}{\partial(\partial_\xi g^{\mu\nu})} \partial_\xi(\delta g^{\mu\nu}) + \frac{\partial f}{\partial(\partial_\rho\partial_\xi g^{\mu\nu})} \partial_\rho\partial_\xi(\delta g^{\mu\nu}).
\end{equation*}
The Chain Rule was used for the last equality; note that $\delta$ commutes with coordinate derivatives $\partial_\xi$. Then it is well known (see~\cite{straumann}, p.~85) that the variation of the first term of~\eqref{EH} produces an integrand containing the Einstein tensor, calculated with respect to the $s=0$ metric $g_{(0)} = g$. More precisely:
\begin{equation*}
    \delta\int_U \frac{c^4}{16\pi G}R\sqrt{\av{g}} \ \rmd^4 x = \frac{c^4}{16\pi G}\int_U \left( R_{\mu\nu} - \frac{1}{2}Rg_{\mu\nu} \right) (\delta g^{\mu\nu}) \sqrt{\av{g}} \ \rmd^4 x
\end{equation*}
Consequently, the Einstein equations~\eqref{EFE} follow from the action principle $\delta \mathcal{S}_{\mathrm{EM}} = 0$ if one defines the so-called \textbf{Hilbert stress-energy tensor} $T\in \mathcal{T}_2^0(\mathcal{M})$ by the formula
\begin{equation} \label{Hilbert_Tmunu}
    \delta \int_U \ell_{\mathrm{EM}}\sqrt{\av{g}} \ \rmd^4 x =: -\frac{1}{2}\int_U T_{\mu\nu} \ (\delta g^{\mu\nu}) \sqrt{\av{g}} \ \rmd^4 x.
\end{equation}
This object automatically comes out \textit{symmetric} ($T_{\mu\nu} = T_{\nu\mu}$) and \textit{divergence-free} ($\nabla^\nu T_{\mu\nu} = 0$), independently from the Einstein equations --- see~\cite{straumann}, p.~91. We now check that, for BLTP theory, it generalizes the special-relativistic $(T^{\mu\nu})$ that we had defined in~\eqref{Tmunu_SR}:

\begin{proposition} \label{prop_Hilbert_Tmunu}
    The components $T_{\mu\nu}$ of the Hilbert stress-energy tensor defined by~\eqref{Hilbert_Tmunu} are given in any coordinate system by
    \begin{equation} \label{Tmunu_GR}
        T_{\mu\nu} = \frac{1}{c}T_{\mu\nu}^{(1)} + \frac{1}{4\pi}T_{\mu\nu}^{(2)} + \frac{\varkappa^{-2}}{4\pi}T_{\mu\nu}^{(3)}
    \end{equation}
    for
    \begin{align*}
        &T_{\mu\nu}^{(1)} = g_{\mu\nu} J^\alpha A_\alpha - J_\mu A_\nu - J_\nu A_\mu, \\
        &T_{\mu\nu}^{(2)} = \tensor{F}{^\alpha_\mu}F_{\alpha\nu} - \frac{g_{\mu\nu}}{4} F_{\alpha\beta}F^{\alpha\beta}, \\
        &T_{\mu\nu}^{(3)} = \tensor{F}{^\alpha_\mu}(\rmd\delta F)_{\alpha\nu} + \tensor{F}{^\alpha_\nu}(\rmd\delta F)_{\alpha\mu} - \frac{g_{\mu\nu}}{2}F_{\alpha\beta}(\rmd\delta F)^{\alpha\beta} - (\delta F)_\mu(\delta F)_\nu + \frac{g_{\mu\nu}}{2} (\delta F)_\alpha (\delta F)^\alpha.
    \end{align*}
\end{proposition}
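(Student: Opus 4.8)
The plan is to evaluate the variational derivative in~\eqref{Hilbert_Tmunu} directly, term by term, after making the metric dependence of the integrand fully explicit. As in the proof of Proposition~\ref{prop_lagrangian} (see~\eqref{proof_ell}) I would write $\ell_{\mathrm{EM}} = \tfrac{1}{c} A_\alpha J^\alpha - \tfrac{1}{16\pi}F_{\alpha\beta}F^{\alpha\beta} - \tfrac{\varkappa^{-2}}{8\pi}(\delta F)_\alpha(\delta F)^\alpha$, regarding $A$, $F = \rmd A$ and $J$ as having \emph{fixed} lower-index components, so that all of the $g$-dependence sits in the raised indices, in $\sqrt{\av{g}}$, and --- crucially --- in the codifferential of the third term, for which I substitute the coordinate formula~\eqref{delta_coordinates}, $(\delta F)^\mu = -\tfrac{1}{\sqrt{\av{g}}}\partial_\lambda\big(\sqrt{\av{g}}\,F^{\lambda\mu}\big)$. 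Thus $\ell_{\mathrm{EM}}\sqrt{\av{g}}$ depends on $g^{\mu\nu}$ and, through the third term only, on $\partial_\lambda g^{\mu\nu}$. The standard ingredients are $\delta\sqrt{\av{g}} = -\tfrac{1}{2}\sqrt{\av{g}}\,g_{\mu\nu}\,\delta g^{\mu\nu}$, the companion identity $\delta g_{\mu\nu} = -g_{\mu\alpha}g_{\nu\beta}\,\delta g^{\alpha\beta}$, the commutation of $\delta$ with $\partial_\lambda$, and the symmetry of $\delta g^{\mu\nu}$, which lets me symmetrize any coefficient of $\delta g^{\mu\nu}$ in its two indices at will. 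Since the family $g^{-1}_{(s)}$ coincides on $\partial U$, the variation $\delta g^{\mu\nu}$ (hence also $\delta\sqrt{\av{g}}$) vanishes there, so the single integration by parts needed below leaves no boundary terms, and no second integration by parts occurs because the matter Lagrangian contains the metric to at most first order.

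The first two terms require only algebra. Varying $\tfrac{1}{c} A_\alpha g^{\alpha\beta}J_\beta\,\sqrt{\av{g}}$ and $-\tfrac{1}{16\pi}g^{\alpha\rho}g^{\beta\sigma}F_{\alpha\beta}F_{\rho\sigma}\,\sqrt{\av{g}}$ and combining with the $\delta\sqrt{\av{g}}$ contributions, I read off --- after symmetrizing and multiplying by $-2$ as~\eqref{Hilbert_Tmunu} prescribes --- the terms $\tfrac{1}{c} T^{(1)}_{\mu\nu}$ and $\tfrac{1}{4\pi}T^{(2)}_{\mu\nu}$ with $T^{(1)}$ and $T^{(2)}$ exactly as stated; here the rule that a contracted index pair may be raised on one copy and lowered on the other without changing the contraction is used to rewrite the $F$-bilinear in the form $\tensor{F}{^\alpha_\mu}F_{\alpha\nu}$. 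This reproduces, in covariant form, the $\varkappa = \infty$ piece of~\eqref{Tmunu_SR}.

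The third term is the heart of the matter, and the computation mirrors the bookkeeping in the proof of Proposition~\ref{prop_Tmunu}. I would set $P^{\lambda\mu} := \sqrt{\av{g}}\,F^{\lambda\mu}$, an antisymmetric density, so that $\sqrt{\av{g}}\,(\delta F)^\mu = -\partial_\lambda P^{\lambda\mu}$ and the term in the action is $-\tfrac{\varkappa^{-2}}{8\pi}\int_U \tfrac{1}{\sqrt{\av{g}}}\,g_{\mu\nu}\,(\partial_\lambda P^{\lambda\mu})(\partial_\kappa P^{\kappa\nu})\,\rmd^4x$. Its variation breaks into an algebraic part --- from $\delta g_{\mu\nu}$ and $\delta(1/\sqrt{\av{g}})$ --- which directly produces the terms $-(\delta F)_\mu(\delta F)_\nu$ and $\tfrac{g_{\mu\nu}}{2}(\delta F)_\alpha(\delta F)^\alpha$, and a part coming from $\delta(\partial_\lambda P^{\lambda\mu}) = \partial_\lambda(\delta P^{\lambda\mu})$. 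For the latter I integrate by parts once to move $\partial_\lambda$ onto $(\delta F)_\nu$; then antisymmetry of $\delta P^{\lambda\nu}$ lets me replace $2\,\partial_\lambda(\delta F)_\nu$ by $(\rmd\delta F)_{\lambda\nu}$ via~\eqref{d_coordinates}, leaving $(\rmd\delta F)_{\lambda\nu}\,\delta P^{\lambda\nu}$. Finally I expand $\delta P^{\lambda\nu} = -\tfrac{1}{2} P^{\lambda\nu}g_{\rho\sigma}\,\delta g^{\rho\sigma} + \sqrt{\av{g}}\big(\tensor{F}{_\rho^\nu}\,\delta g^{\lambda\rho} + \tensor{F}{^\lambda_\rho}\,\delta g^{\nu\rho}\big)$ and contract against $(\rmd\delta F)_{\lambda\nu}$: the first summand gives $-\tfrac{g_{\mu\nu}}{2}F_{\alpha\beta}(\rmd\delta F)^{\alpha\beta}$, while the other two, once the antisymmetries of $F$ and $\rmd\delta F$ are used and the result is symmetrized in $\mu\nu$, combine to $\tensor{F}{^\alpha_\mu}(\rmd\delta F)_{\alpha\nu} + \tensor{F}{^\alpha_\nu}(\rmd\delta F)_{\alpha\mu}$. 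Collecting all five contributions and multiplying by $-2$ yields $\tfrac{\varkappa^{-2}}{4\pi}T^{(3)}_{\mu\nu}$ with $T^{(3)}$ as claimed, establishing~\eqref{Tmunu_GR}.

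The main obstacle is precisely this third term. Because \emph{both} the codifferential $\delta$ and the operation of index-raising depend on $g$, its variational derivative genuinely contains a derivative of $\delta g^{\mu\nu}$, which forces the integration by parts; and the resulting expression becomes \emph{manifestly} symmetric only after the antisymmetries of $F$ and $\rmd\delta F$ are carefully played off against the symmetry of $\delta g^{\mu\nu}$. This is the same combinatorial juggling as in the flat-space computation of Proposition~\ref{prop_Tmunu}, now complicated by the ubiquitous $\sqrt{\av{g}}$ factors that enter both $P^{\lambda\mu}$ and the overall scalar density.
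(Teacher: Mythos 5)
Your proposal is correct, and the factor bookkeeping in your third-term computation checks out: the algebraic variations of $g_{\mu\nu}$ and of $1/\sqrt{\av{g}}$ produce exactly $-(\delta F)_\mu(\delta F)_\nu + \tfrac{g_{\mu\nu}}{2}(\delta F)_\alpha(\delta F)^\alpha$, and contracting $(\rmd\delta F)_{\lambda\nu}$ against your expansion of $\delta P^{\lambda\nu}$ gives, after symmetrization in $\mu\nu$ and the overall factor $-2$ from~\eqref{Hilbert_Tmunu}, the remaining three terms of $T^{(3)}_{\mu\nu}$ with the right coefficients. Your overall strategy coincides with the paper's --- vary $\mathcal{L}_{\mathrm{EM}}$ with respect to $g^{-1}$ holding the lower-index components of $A$, $F$, $J$ fixed, and treat the three pieces of $\ell_{\mathrm{EM}}$ separately --- but your handling of the $(\delta F)^2$ piece is organized quite differently. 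The paper expands $(\delta F)^\alpha$ explicitly in terms of $g^{\mu\nu}$, $\partial_\xi g^{\mu\nu}$ and $F_{\zeta\eta}$, regards $\ell^{(3)}_{\mathrm{EM}}$ as a function of the metric components and their first derivatives, and applies the Euler--Lagrange-type formula~\eqref{Tmunuj}; this forces it to track Christoffel-symbol terms (via~\eqref{deltaF_alpha_munu} and~\eqref{deltaF_alpha_munuxi}) and to insert a digression on the subtlety of having used $g^{\mu\nu}=g^{\nu\mu}$ before differentiating, repaired by a last-step symmetrization operation. Your route instead packages $\sqrt{\av{g}}\,F^{\lambda\mu}$ into the antisymmetric density $P^{\lambda\mu}$, so the metric enters $\delta F$ only algebraically through $P$, and the derivative structure is absorbed by a single integration by parts against the variation (legitimate, since $\delta P^{\lambda\mu}$ is algebraic in $\delta g^{\mu\nu}$ and vanishes on $\partial U$); because you only ever read off the coefficient of the manifestly symmetric $\delta g^{\mu\nu}$, the paper's symmetrization caveat never arises and no Christoffel symbols appear. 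What the paper's longer computation buys is a general formula for the Hilbert tensor of any Lagrangian depending on $g$ and $\partial g$, plus intermediate identities it can reuse; your version is shorter and less error-prone for this specific Lagrangian while yielding the same $T^{(1)}$, $T^{(2)}$, $T^{(3)}$.
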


Apart from $T^{(1)}$, which is present due to the energy and momentum associated with the motion of the charges modeled by $J$, we see that the remaining terms are exactly as in~\eqref{Tmunu_SR}, only with lowered indices and with $g$ replacing $\eta$.

\begin{proof}
    We start by finding formulas for derivatives of metric-related terms with respect to the inverse-metric components. They will be used matter-of-factly throughout this proof. To begin, we differentiate the relation $g_{\tau\theta}g^{\theta\rho} = \delta_\tau^\rho$ to get
    \begin{equation*}
        0 = \frac{\partial g_{\tau\theta}}{\partial g^{\mu\nu}}g^{\theta\rho} + g_{\tau\theta}\frac{\partial g^{\theta\rho}}{\partial g^{\mu\nu}} = \frac{\partial g_{\tau\theta}}{\partial g^{\mu\nu}}g^{\theta\rho} + g_{\tau\theta}\delta^{\theta}_\mu\delta^{\rho}_\nu = \frac{\partial g_{\tau\theta}}{\partial g^{\mu\nu}}g^{\theta\rho} + g_{\tau\mu}\delta^\rho_\nu,
    \end{equation*}
    and we contract with $g_{\rho\sigma}$ to find
    \begin{equation} \label{delgdelg}
        0 = \frac{\partial g_{\tau\theta}}{\partial g^{\mu\nu}}\delta^\theta_\sigma + g_{\tau\mu}g_{\nu\sigma} = \frac{\partial g_{\tau\sigma}}{\partial g^{\mu\nu}} + g_{\tau\mu}g_{\nu\sigma} \quad\Longrightarrow\quad \frac{\partial g_{\tau\sigma}}{\partial g^{\mu\nu}} = -g_{\tau\mu}g_{\nu\sigma}.
    \end{equation}
    Next, using the well-known \textit{Jacobi formula} $\partial_{g_{\mu\nu}}(\det g) = (\det g) g^{\nu\mu}$ for the derivative of a determinant, together with the Chain Rule and~\eqref{delgdelg}, we compute
    \begin{equation*}
        \frac{\partial(\det g)}{\partial g^{\mu\nu}} = \frac{\partial(\det g)}{\partial g_{\alpha\beta}}\frac{\partial g_{\alpha\beta}}{\partial g^{\mu\nu}} = -(\det g) g^{\beta\alpha}g_{\alpha\mu}g_{\nu\beta} = -(\det g) \delta^\beta_\mu g_{\nu\beta} = -(\det g) g_{\nu\mu}.
    \end{equation*}
    Remembering that $\av{g} = -\det g$ and also considering $g_{\nu\mu} = g_{\mu\nu}$, we rewrite this identity as
    \begin{equation} \label{deldetg}
        \frac{\partial\av{g}}{\partial g^{\mu\nu}} = -\av{g} g_{\mu\nu},
    \end{equation}
    which immediately gives
    \begin{equation} \label{delsqrtdetg}
        \frac{\partial\sqrt{\av{g}}}{\partial g^{\mu\nu}} = \frac{1}{2\sqrt{\av{g}}} \frac{\partial\av{g}}{\partial g^{\mu\nu}} = -\frac{\sqrt{\av{g}}}{2} g_{\mu\nu}.
    \end{equation}
    Then note, again via the Chain Rule, that
    \begin{equation*}
        \frac{\partial\sqrt{\av{g}}}{\partial x^\xi} = \frac{\partial\sqrt{\av{g}}}{\partial g^{\sigma\tau}}\frac{\partial g^{\sigma\tau}}{\partial x^\xi} = -\frac{\sqrt{\av{g}}}{2} g_{\sigma\tau} \partial_\xi g^{\sigma\tau} = \Gamma^\sigma_{\sigma\xi}\sqrt{\av{g}}.
    \end{equation*}
    In the last step we used the identity
    \begin{equation} \label{special_christoffel}
        \Gamma^\sigma_{\sigma\xi} = -\frac{1}{2} g_{\sigma\tau} (\partial_\xi g^{\sigma\tau}),
    \end{equation}
    whose proof follows from the definition~\eqref{Ricci_Christoffel} for the Christoffel symbols:
    \begin{equation*} \label{special_christoffel}
        \Gamma^\sigma_{\sigma\xi} = \frac{g^{\sigma\lambda}}{2} (\partial_\sigma g_{\lambda\xi} + \partial_\xi g_{\lambda\sigma} - \partial_\lambda g_{\sigma\xi}) = \frac{1}{2}g^{\sigma\lambda}(\partial_\xi g_{\sigma\lambda}) = -\frac{1}{2}g_{\sigma\lambda}(\partial_\xi g^{\sigma\lambda}),
    \end{equation*}
    with the last step being a consequence of the product rule applied to $0 = \partial_\xi (4) = \partial_\xi (g_{\sigma\lambda}g^{\sigma\lambda})$.
    
    Having noted down these identities, we can begin expanding $T_{\mu\nu}$ from its definition~\eqref{Hilbert_Tmunu}:
    \begin{align*}
        \int_U T_{\mu\nu} (\delta g^{\mu\nu})\sqrt{\av{g}} \ \rmd^4 x &= -2 \delta\int_U \ell_{\mathrm{EM}}\sqrt{\av{g}} \ \rmd^4 x \\
        &= -2 \int_U \bigg( \ell_{\mathrm{EM}}\delta\big(\sqrt{\av{g}}\big) + (\delta\ell_{\mathrm{EM}})\sqrt{\av{g}} \bigg) \ \rmd^4 x  \\
        &= -2\int_U \bigg( \ell_{\mathrm{EM}} \frac{\partial\sqrt{\av{g}}}{\partial g^{\mu\nu}} (\delta g^{\mu\nu}) + \frac{\partial\ell_{\mathrm{EM}}}{\partial g^{\mu\nu}} (\delta g^{\mu\nu})\sqrt{\av{g}} + \frac{\partial\ell_{\mathrm{EM}}}{\partial(\partial_\xi g^{\mu\nu})} \partial_\xi(\delta g^{\mu\nu})\sqrt{\av{g}}\bigg) \ \rmd^4 x \\
        &= -2\int_U \bigg( \ell_{\mathrm{EM}} \frac{\partial\sqrt{\av{g}}}{\partial g^{\mu\nu}} + \frac{\partial\ell_{\mathrm{EM}}}{\partial g^{\mu\nu}} \sqrt{\av{g}} - \partial_\xi\left( \frac{\partial\ell_{\mathrm{EM}}}{\partial(\partial_\xi g^{\mu\nu})}\sqrt{\av{g}} \right) \bigg) (\delta g^{\mu\nu}) \ \rmd^4 x \\
        &= -2\int_U \bigg( \frac{1}{\sqrt{\av{g}}}\ell_{\mathrm{EM}} \frac{\partial\sqrt{\av{g}}}{\partial g^{\mu\nu}} + \frac{\partial\ell_{\mathrm{EM}}}{\partial g^{\mu\nu}} \ - \frac{1}{\sqrt{\av{g}}}\partial_\xi\left( \frac{\partial\ell_{\mathrm{EM}}}{\partial(\partial_\xi g^{\mu\nu})}\sqrt{\av{g}} \right) \bigg) (\delta g^{\mu\nu}) \sqrt{\av{g}} \ \rmd^4 x,
    \end{align*}
    where integration by parts in the $\xi$ variable was used once, as well as the fact that $\ell_{\mathrm{EM}}$ only depends on metric-component derivatives up to order 1. Comparing the integrands of the expressions at the start and the end, and considering the arbitrariness of the variation $\delta g_{\mu\nu}$, we find the explicit definition of $T_{\mu\nu}$, valid in any coordinate system:
    \begin{align*}
        T_{\mu\nu} &= -\frac{2}{\sqrt{\av{g}}}\ell_{\mathrm{EM}} \frac{\partial\sqrt{\av{g}}}{\partial g^{\mu\nu}} -2 \frac{\partial\ell_{\mathrm{EM}}}{\partial g^{\mu\nu}} 
        + \frac{2}{\sqrt{\av{g}}} \partial_\xi\left( \frac{\partial\ell_{\mathrm{EM}}}{\partial(\partial_\xi g^{\mu\nu})}\sqrt{\av{g}} \right) \\
        &= -\frac{2}{\sqrt{\av{g}}}\ell_{\mathrm{EM}}\left( 
        -\frac{\sqrt{\av{g}}}{2}g_{\mu\nu} \right) - 2\frac{\partial\ell_{\mathrm{EM}}}{\partial g^{\mu\nu}} + \frac{2}{\sqrt{\av{g}}}\left( \partial_\xi\left(\frac{\partial\ell_{\mathrm{EM}}}{\partial(\partial_\xi g^{\mu\nu})}\right)\sqrt{\av{g}} + \frac{\partial\ell_{\mathrm{EM}}}{\partial(\partial_\xi g^{\mu\nu})} \big(\partial_\xi \sqrt{\av{g}}\big) \right) \\
        &= g_{\mu\nu}\ell_{\mathrm{EM}} - 2\frac{\partial\ell_{\mathrm{EM}}}{\partial g^{\mu\nu}} + 2\partial_\xi\left( \frac{\partial\ell_{\mathrm{EM}}}{\partial(\partial_\xi g^{\mu\nu})} \right) + 2\Gamma^\sigma_{\sigma\xi} \frac{\partial\ell_{\mathrm{EM}}}{\partial(\partial_\xi g^{\mu\nu})}.
    \end{align*}
    Now consider the coordinate expression of our Lagrangian scalar $\ell_{\mathrm{EM}}$, given in~\eqref{proof_ell}:
    \begin{equation*}
        \ell_{\mathrm{EM}} = \frac{1}{c} J^\alpha A_\alpha -\frac{1}{16\pi} g^{\alpha\zeta}g^{\beta\eta} F_{\alpha\beta} F_{\zeta\eta} - \frac{\varkappa^{-2}}{8\pi} g_{\alpha\zeta} (\delta F)^\alpha (\delta F)^\zeta = \frac{1}{c}\ell_{\mathrm{EM}}^{(1)} - \frac{1}{16\pi} \ell_{\mathrm{EM}}^{(2)} - \frac{\varkappa^{-2}}{8\pi}\ell_{\mathrm{EM}}^{(3)}.
    \end{equation*}
    Define accordingly
    \begin{equation} \label{Tmunu_allj}
        T_{\mu\nu} = \frac{1}{c}\widetilde{T}_{\mu\nu}^{(1)} - \frac{1}{16\pi}\widetilde{T}_{\mu\nu}^{(2)} - \frac{\varkappa^{-2}}{8\pi}\widetilde{T}_{\mu\nu}^{(3)}
    \end{equation}
    for
    \begin{equation} \label{Tmunuj}
        \widetilde{T}_{\mu\nu}^{(j)} = g_{\mu\nu}\ell_{\mathrm{EM}}^{(j)} - 2\frac{\partial\ell_{\mathrm{EM}}^{(j)}}{\partial g^{\mu\nu}} + 2\partial_\xi\left( \frac{\partial\ell_{\mathrm{EM}}^{(j)}}{\partial(\partial_\xi g^{\mu\nu})} \right) + 2\Gamma^\sigma_{\sigma\xi} \frac{\partial\ell_{\mathrm{EM}}^{(j)}}{\partial(\partial_\xi g^{\mu\nu})}, \quad j=1,2,3.
    \end{equation}
    The $T^{(j)}_{\mu\nu}$ terms as stated in~\eqref{Tmunu_GR} are mere multiples of the ones with tildes, which are used in this proof simply for convenience:
    \begin{equation} \label{relation_Ttilde}
        T^{(1)}_{\mu\nu} = \widetilde{T}^{(1)}_{\mu\nu}, \quad T^{(2)}_{\mu\nu} = -\frac{1}{4}\widetilde{T}^{(2)}_{\mu\nu}, \quad T^{(3)}_{\mu\nu} = -\frac{1}{2}\widetilde{T}^{(3)}_{\mu\nu}.
    \end{equation}
    We now compute $\widetilde{T}^{(j)}_{\mu\nu}$ for $j=2$, $j=3$ and $j=1$ in this order. To begin, since $\ell_{\mathrm{EM}}^{(2)}$ only features $g^{\mu\nu}$ terms as opposed to $\partial_\xi g^{\mu\nu}$ terms, equation~\eqref{Tmunuj} simplifies down to
    \begin{align*}
        \widetilde{T}_{\mu\nu}^{(2)} &= g_{\mu\nu}\ell_{\mathrm{EM}}^{(2)} - 2\frac{\partial\ell_{\mathrm{EM}}^{(2)}}{\partial g^{\mu\nu}} \\
        &= \left( g_{\mu\nu}g^{\alpha\zeta}g^{\beta\eta} F_{\alpha\beta} F_{\zeta\eta} - 2\frac{\partial(g^{\alpha\zeta}g^{\beta\eta})}{\partial g^{\mu\nu}} F_{\alpha\beta} F_{\zeta\eta} \right) \\
        &= g_{\mu\nu}F_{\alpha\beta} F^{\alpha\beta} -2 \big(\delta^\alpha_\mu\delta^\zeta_\nu g^{\beta\eta} + g^{\alpha\zeta}\delta^\beta_\mu\delta^\eta_\nu\big) F_{\alpha\beta} F_{\zeta\eta} \\
        &= g_{\mu\nu}F_{\alpha\beta} F^{\alpha\beta} -2 g^{\beta\eta} F_{\mu\beta} F_{\nu\eta} -2 g^{\alpha\zeta} F_{\alpha\mu} F_{\zeta\nu} \\
        &= g_{\mu\nu}F_{\alpha\beta} F^{\alpha\beta} -2 g^{\beta\eta} F_{\beta\mu} F_{\eta\nu} -2 g^{\alpha\zeta} F_{\alpha\mu} F_{\zeta\nu} \\
        &= g_{\mu\nu}F_{\alpha\beta} F^{\alpha\beta} - 2 \tensor{F}{^\eta_\mu} F_{\eta\nu} -2 \tensor{F}{^\zeta_\mu} F_{\zeta\nu} \\
        &= g_{\mu\nu}F_{\alpha\beta} F^{\alpha\beta} -4 \tensor{F}{^\alpha_\mu} F_{\alpha\nu}.
    \end{align*}
    Considering~\eqref{relation_Ttilde}, this gives $T^{(2)}_{\mu\nu}$ exactly as stated in the proposition. The $j=3$ term now will require the bulk of the work, seeing as how the $\delta$ operator depends on terms of both types $g^{\mu\nu}$ and $\partial_\xi g^{\mu\nu}$ (we remind the reader that each $\partial_\xi g^{\mu\nu}$ is considered a variable that is independent from each $g^{\mu\nu}$). First we calculate the following derivatives:
    \begin{align*}
        \frac{\partial \ell_{\mathrm{EM}}^{(3)}}{\partial g^{\mu\nu}} &= \frac{\partial}{\partial g^{\mu\nu}}\big( g_{\alpha\zeta} (\delta F)^\alpha (\delta F)^\zeta \big) \\
        &= \frac{\partial g_{\alpha\zeta}}{g^{\mu\nu}} (\delta F)^\alpha (\delta F)^\zeta + g_{\alpha\zeta} \frac{\partial (\delta F)^\alpha}{\partial g^{\mu\nu}} (\delta F)^\zeta + g_{\alpha\zeta} (\delta F)^\alpha \frac{\partial (\delta F)^\zeta}{\partial g^{\mu\nu}} \\
        &= -g_{\alpha\mu}g_{\nu\zeta}(\delta F)^\alpha(\delta F)^\zeta + \frac{\partial (\delta F)^\alpha}{\partial g^{\mu\nu}} (\delta F)_\alpha + \frac{\partial (\delta F)^\zeta}{\partial g^{\mu\nu}} (\delta F)_\zeta \\
        &= -(\delta F)_\mu (\delta F)_\nu + 2\frac{\partial (\delta F)^\alpha}{\partial g^{\mu\nu}} (\delta F)_\alpha, \\
        \frac{\partial \ell_{\mathrm{EM}}^{(3)}}{\partial (\partial_\xi g^{\mu\nu})} &= \frac{\partial}{\partial(\partial_\xi g^{\mu\nu})} \big( g_ {\alpha\zeta} (\delta F)^\alpha (\delta F)^\zeta \big) \\
        &= g_{\alpha\zeta} \frac{\partial (\delta F)^\alpha}{\partial(\partial_\xi g^{\mu\nu})} (\delta F)^\zeta + g_{\alpha\zeta} (\delta F)^\alpha \frac{\partial (\delta F)^\zeta}{\partial(\partial_\xi g^{\mu\nu})} \\
        &= \frac{\partial (\delta F)^\alpha}{\partial(\partial_\xi g^{\mu\nu})} (\delta F)_\alpha + \frac{\partial (\delta F)^\zeta}{\partial(\partial_\xi g^{\mu\nu})} (\delta F)_\zeta \\
        &= 2\frac{\partial (\delta F)^\alpha}{\partial(\partial_\xi g^{\mu\nu})} (\delta F)_\alpha,
    \end{align*}
    which we plug into the definition~\eqref{Tmunuj} of $\widetilde{T}^{(3)}_{\mu\nu}$ to obtain
    \begin{align}
        \label{Tmunu3} \widetilde{T}_{\mu\nu}^{(3)} &= g_{\mu\nu} g_{\alpha\zeta} (\delta F)^\alpha (\delta F)^\zeta - 2\left( -(\delta F)_\mu (\delta F)_\nu + 2 \frac{\partial (\delta F)^\alpha}{\partial g^{\mu\nu}} (\delta F)_\alpha \right) + 2\partial_\xi\left( 2\frac{\partial (\delta F)^\alpha}{\partial(\partial_\xi g^{\mu\nu})} (\delta F)_\alpha \right) \\
        \nonumber &\qquad + 2 \Gamma^\sigma_{\sigma\xi} \left( 2\frac{\partial (\delta F)^\alpha}{\partial(\partial_\xi g^{\mu\nu})} (\delta F)_\alpha\right) \nonumber \\
        &= 2(\delta F)_\mu (\delta F)_\nu + 4 \frac{\partial (\delta F)^\alpha}{\partial(\partial_\xi g^{\mu\nu})}\partial_\xi(\delta F)_\alpha \label{Tmunu3_part1} \\
        &\qquad + \left( g_{\mu\nu} (\delta F)^\alpha - 4\frac{\partial (\delta F)^\alpha}{\partial g^{\mu\nu}} + 4\Gamma^\sigma_{\sigma\xi} \frac{\partial (\delta F)^\alpha}{\partial(\partial_\xi g^{\mu\nu})} + 4\partial_\xi\left( \frac{\partial (\delta F)^\alpha}{\partial(\partial_\xi g^{\mu\nu})} \right) \right) (\delta F)_\alpha. \label{Tmunu3_part2}
    \end{align}
    To further expand this, we need a formula for $(\delta F)^\alpha$ showing its explicit dependence on $g^{\mu\nu}$ and $\partial_\xi g^{\mu\nu}$ terms. We have, from~\eqref{delta_coordinates},
    \begin{align}
        (\delta F)^\alpha &= -\frac{1}{\sqrt{\av{g}}} \partial_\lambda \big( \sqrt{\av{g}} F^{\lambda\alpha} \big) \label{deltaF_alpha_initial} \\
        &= -\frac{1}{\sqrt{\av{g}}} \big( \Gamma^\sigma_{\sigma\lambda}\sqrt{\av{g}} F^{\lambda\alpha} + \sqrt{\av{g}} \partial_\lambda F^{\lambda\alpha} \big) \nonumber \\
        &= -\Gamma^{\sigma}_{\sigma\lambda} F^{\lambda\alpha} - \partial_\lambda F^{\lambda\alpha} \label{deltaF_alpha_forlater} \\
        &= \frac{1}{2} g_{\sigma\tau}(\partial_\lambda g^{\sigma\tau}) F^{\lambda\alpha} - \partial_\lambda F^{\lambda\alpha} \nonumber \\
        &= \frac{1}{2} g_{\sigma\tau}(\partial_\lambda g^{\sigma\tau}) g^{\lambda\zeta}g^{\alpha\eta} F_{\zeta\eta} - \partial_\lambda (g^{\lambda\zeta}g^{\alpha\eta} F_{\zeta\eta})  \nonumber \\
        &= \frac{1}{2} g_{\sigma\tau}(\partial_\lambda g^{\sigma\tau}) g^{\lambda\zeta}g^{\alpha\eta} F_{\zeta\eta} - (\partial_\lambda g^{\lambda\zeta})g^{\alpha\eta} F_{\zeta\eta} - g^{\lambda\zeta} (\partial_\lambda g^{\alpha\eta}) F_{\zeta\eta} - g^{\lambda\zeta}g^{\alpha\eta} (\partial_\lambda F_{\zeta\eta}). \label{deltaF_alpha}
    \end{align}
    Now we must differentiate this with respect to $g^{\mu\nu}$ and $\partial_\xi g^{\mu\nu}$ for fixed values of $\mu,\nu,\xi$ and plug the results into~\eqref{Tmunu3}.
    
    Here we must pause to talk about a subtle, often overlooked issue that arises in such computations: the use of the symmetry $g^{\mu\nu} = g^{\nu\mu}$ \textit{before} derivatives with respect to $g^{\mu\nu}$ are taken. It usually leads to a non-symmetric $(T^{\mu\nu})$, prompting many authors to simply symmetrize the end result without further explanation. We wish to clarify what the problem is and show why this last-step symmetrization fixes it. As an illustration, suppose that we have a symmetric function $f = f(u,w)$ --- that is, $f(u,w) = f(w,u)$, which implies $\partial_u f(u,w) = \partial_w f(w,u)$ --- and we wish to find $(\partial_u f)(u,u)$. This symbol means that first $f(u,w)$ must be differentiated with respect to $u$, and then the arguments $(u,w)$ must be set equal to each other. Let $f$ be given as a sum of many symmetric terms, but imagine that we, much like a freshman Calculus student, mistakenly set $u=w$ into some of them (but maybe not \textit{all} of them) before finding the $u$-derivative, then we proceeded to differentiate and set $u=w$ into the remaining parts. That is, letting $f(u,w) = g(u,w) + h(u,w)$ for symmetric $g$ and $h$, we originally wished to find
    \begin{equation} \label{fuv_1}
        (\partial_u f)(u,u) = (\partial_u g)(u,u) + (\partial_u h)(u,u),
    \end{equation}
    but instead we have first set $u=w$ into the $g$ part of $f$ (but not into the $h$ part), then differentiated, and then set $u=w$ into the $h$ part, producing the different result
    \begin{equation} \label{fuv_2}
        \partial_u \big( g(u,u) + h(u,w) \big)\bigg|_{u=w} = (\partial_u g)(u,u) + (\partial_w g)(u,u) + (\partial_u h)(u,w)\bigg|_{u=w} = 2(\partial_u g)(u,u) + (\partial_u h)(u,u).
    \end{equation}
    It turns out that our mistake is easy to fix without having to throw away our work: The desired expression~\eqref{fuv_1} can be recovered from the expression~\eqref{fuv_2} that we currently have by adding to it the $w$ derivative \textit{of the same expression} $g(u,u) + h(u,w)$, then dividing by 2 and setting $u=w$:
    \begin{align*}
        \frac{\partial_u \big( g(u,u) + h(u,w) \big) + \partial_w \big( g(u,u) + h(u,w) \big)}{2}\bigg|_{u=w} &= \frac{2(\partial_u g)(u,u) + (\partial_u h)(u,w) + 0 + (\partial_w h)(u,w)}{2}\bigg|_{u=w} \\
        &= \frac{2(\partial_u g)(u,u) + (\partial_u h)(u,w) + (\partial_u h)(w,u)}{2}\bigg|_{u=w} \\
        &= (\partial_u g)(u,u) + (\partial_u h)(u,u).
    \end{align*}
    For a similar reason, when taking derivatives of formula~\eqref{deltaF_alpha} for $(\delta F)^\alpha$ with respect to each $g^{\mu\nu}$ or $\partial_\xi g^{\mu\nu}$ term, we will \textit{not} necessarily obtain the correct answer, due to the fact that we have already used the property $g^{\alpha\beta} = g^{\beta\alpha}$ at several different places (but not all possible places) when calculating~\eqref{deltaF_alpha}. We used it, for example, in the identity~\eqref{special_christoffel} for $\Gamma^\sigma_{\sigma\lambda}$. However, noting that $(\delta F)^\alpha$ as given in~\eqref{deltaF_alpha_initial} is manifestly symmetric as a function of all the pairs of variables $(g^{\alpha\beta},g^{\beta\alpha})$ and their coordinate derivatives, the remedy is just as explained above: To fix $\partial_{g^{\mu\nu}}(\delta F)^\alpha$ or $\partial_{\partial_\xi g^{\mu\nu}}(\delta F)^\alpha$, we must add to it the expression $\partial_{g^{\nu\mu}}(\delta F)^\alpha$ or $\partial_{\partial_\xi g^{\nu\mu}}(\delta F)^\alpha$, which has the roles of $\mu$ and $\nu$ reversed, then divide the result by 2, and then finish applying the symmetry property if needed. This operation will be denoted by the self-explanatory symbol $[+(\mu\leftrightarrow\nu),\olddiv 2,g=g^{\mathrm{T}}]$; for example,
    \begin{equation*}
        (3g^{\alpha\beta}+4g^{\beta\alpha})(F_{\alpha\mu}F_{\beta\nu}) \quad [+(\mu\leftrightarrow\nu),\olddiv 2,g=g^{\mathrm{T}}] \quad = \frac{7g^{\alpha\beta}(F_{\alpha\mu}F_{\beta\nu} + F_{\alpha\nu}F_{\beta\mu})}{2}.
    \end{equation*}
    To save the reader's printer ink, we write this symbol only at the end of long chains of equalities, but it is implied on all previous lines too.
    
    Now we return to the calculation. Differentiating~\eqref{deltaF_alpha} with respect to $g^{\mu\nu}$, we have
    \begin{align}
        \frac{\partial (\delta F)_\alpha}{\partial g^{\mu\nu}} &= \frac{1}{2} \frac{\partial g_{\sigma\tau}}{\partial g^{\mu\nu}} (\partial_\lambda g^{\sigma\tau}) g^{\lambda\zeta}g^{\alpha\eta} F_{\zeta\eta} + \frac{1}{2} g_{\sigma\tau}(\partial_\lambda g^{\sigma\tau}) \delta^\lambda_\mu\delta^\zeta_\nu g^{\alpha\eta} F_{\zeta\eta} + \frac{1}{2} g_{\sigma\tau}(\partial_\lambda g^{\sigma\tau}) g^{\lambda\zeta}\delta^\alpha_\mu\delta^\eta_\nu F_{\zeta\eta} \nonumber \\
        &\qquad - (\partial_\lambda g^{\lambda\zeta}) \delta^\alpha_\mu\delta^\eta_\nu F_{\zeta\eta} - \delta^\lambda_\mu\delta^\zeta_\nu (\partial_\lambda g^{\alpha\eta}) F_{\zeta\eta} - \delta^\lambda_\mu\delta^\zeta_\nu g^{\alpha\eta} (\partial_\lambda F_{\zeta\eta}) - g^{\lambda\zeta}\delta^\alpha_\mu\delta^\eta_\nu (\partial_\lambda F_{\zeta\eta}) \nonumber \\
        &= -\frac{1}{2} g_{\sigma\mu}g_{\tau\nu} (\partial_\lambda g^{\sigma\tau}) g^{\lambda\zeta}g^{\alpha\eta}F_{\zeta\eta} + \frac{1}{2} g_{\sigma\tau}(\partial_\mu g^{\sigma\tau}) g^{\alpha\eta} F_{\nu\eta} + \frac{1}{2} g_{\sigma\tau}(\partial_\lambda g^{\sigma\tau}) g^{\lambda\zeta}\delta^\alpha_\mu F_{\zeta\nu} \nonumber \\
        &\qquad - (\partial_\lambda g^{\lambda\zeta}) \delta^\alpha_\mu F_{\zeta\nu} - (\partial_\mu g^{\alpha\eta}) F_{\nu\eta} - g^{\alpha\eta} (\partial_\mu F_{\nu\eta}) - g^{\lambda\zeta}\delta^\alpha_\mu (\partial_\lambda F_{\zeta\nu}) \nonumber \\
        &= -\frac{1}{2} g_{\sigma\mu}g_{\tau\nu} (\partial_\lambda g^{\sigma\tau}) F^{\lambda\alpha} - \Gamma^{\sigma}_{\sigma\mu} g^{\alpha\eta} F_{\nu\eta} -\Gamma^\sigma_{\sigma\lambda} \delta^\alpha_\mu \tensor{F}{^\lambda_\nu} - \partial_\mu(g^{\alpha\eta} F_{\nu\eta}) -\delta^\alpha_\mu \partial_\lambda ( g^{\lambda\zeta} F_{\zeta\nu}) \nonumber \\
        &= -\frac{1}{2} g_{\sigma\mu}g_{\tau\nu} (\partial_\lambda g^{\sigma\tau}) F^{\lambda\alpha} + \Gamma^{\sigma}_{\sigma\mu} g^{\alpha\eta} F_{\eta\nu} -\Gamma^\sigma_{\sigma\lambda} \delta^\alpha_\mu \tensor{F}{^\lambda_\nu} + \partial_\mu(g^{\alpha\eta} F_{\eta\nu}) -\delta^\alpha_\mu \partial_\lambda ( g^{\lambda\zeta} F_{\zeta\nu}) \nonumber \\
        &= \frac{1}{2}(\partial_\lambda g_{\mu\nu}) F^{\lambda\alpha} + \partial_\mu \tensor{F}{^\alpha_\nu} +\Gamma^\sigma_{\sigma\mu} \tensor{F}{^\alpha_\nu} - \delta^\alpha_\mu \big( \partial_\lambda \tensor{F}{^\lambda_\nu} + \Gamma^\sigma_{\sigma\lambda} \tensor{F}{^\lambda_\nu} \big) \quad [+(\mu\leftrightarrow\nu),\olddiv 2,g=g^{\mathrm{T}}], \label{deltaF_alpha_munu}
    \end{align}
    where the change in the first term on the last step is explained by
    \begin{equation*}
        \partial_\lambda g_{\mu\nu} = \partial_\lambda \big( 
        g_{\sigma\mu}g_{\tau\nu}g^{\sigma\tau} \big) = (\partial_\lambda g_{\sigma\mu}) g_{\tau\nu}g^{\sigma\tau} + g_{\sigma\mu} (\partial_\lambda g_{\tau\nu}) g^{\sigma\tau} + g_{\sigma\mu}g_{\tau\nu} (\partial_\lambda g^{\sigma\tau}) = 2\partial_\lambda g_{\mu\nu} + g_{\sigma\mu}g_{\tau\nu} (\partial_\lambda g^{\sigma\tau}).
    \end{equation*}
    Similarly, differentiating~\eqref{deltaF_alpha} with respect to $\partial_\xi g^{\mu\nu}$, we have
    \begin{align}
        \frac{\partial (\delta F)_\alpha}{\partial(\partial_\xi g^{\mu\nu})} &= \frac{1}{2}g_{\sigma\tau} \delta^\xi_\lambda\delta^\sigma_\mu\delta^\tau_\nu g^{\lambda\zeta}g^{\alpha\eta} F_{\zeta\eta} - \delta^\xi_\lambda\delta^\lambda_\mu\delta^\zeta_\nu g^{\alpha\eta} F_{\zeta\eta} - g^{\lambda\zeta}\delta^\xi_\lambda\delta^\alpha_\mu\delta^\eta_\nu F_{\zeta\eta} \nonumber \\
        &= \frac{1}{2} g_{\mu\nu} g^{\xi\zeta} g^{\alpha\eta} F_{\zeta\eta} - \delta^\xi_\mu g^{\alpha\eta}F_{\nu\eta} - g^{\xi\zeta}\delta^\alpha_\mu F_{\zeta\nu} \nonumber \\
        &= \frac{1}{2} g_{\mu\nu} F^{\xi\alpha} + \delta^\xi_\mu \tensor{F}{^\alpha_\nu} - \delta^\alpha_\mu \tensor{F}{^\xi_\nu} \quad [+(\mu\leftrightarrow\nu),\olddiv 2,g=g^{\mathrm{T}}]. \label{deltaF_alpha_munuxi}
    \end{align}
    What is left is to plug~\eqref{deltaF_alpha_forlater},~\eqref{deltaF_alpha_munu} and~\eqref{deltaF_alpha_munuxi} back into formula~\eqref{Tmunu3} for $\widetilde{T}_{\mu\nu}^{(3)}$. The long term in parentheses on line~\eqref{Tmunu3_part2} simplifies considerably, and the symmetrization operation, carried over from the previous steps, turns out to not matter for this part:
    \begin{align*}
        g_{\mu\nu} &(\delta F)^\alpha - 4\frac{\partial (\delta F)^\alpha}{\partial g^{\mu\nu}} + 4\Gamma^\sigma_{\sigma\xi} \frac{\partial (\delta F)^\alpha}{\partial(\partial_\xi g^{\mu\nu})} + 4\partial_\xi\left( \frac{\partial (\delta F)^\alpha}{\partial(\partial_\xi g^{\mu\nu})} \right) \\
        &= -g_{\mu\nu}\Gamma^\sigma_{\sigma\lambda} F^{\lambda\alpha} -g_{\mu\nu}\partial_\lambda F^{\lambda\alpha} \\
        &\qquad - 2(\partial_\lambda g_{\mu\nu})F^{\lambda\alpha} - 4\partial_\mu \tensor{F}{^\alpha_\nu} - 4\Gamma^\sigma_{\sigma\mu}\tensor{F}{^\alpha_\nu} + 4\delta^\alpha_\mu \big(\partial_\lambda \tensor{F}{^\lambda_\nu} + \Gamma^\sigma_{\sigma\lambda}\tensor{F}{^\lambda_\nu}\big) \\
        &\qquad +2\Gamma^\sigma_{\sigma\xi}g_{\mu\nu} F^{\xi\alpha} + 4\Gamma^\sigma_{\sigma\xi}\delta^\xi_\mu \tensor{F}{^\alpha_\nu} - 4\Gamma^\sigma_{\sigma\xi}\delta^\alpha_\mu \tensor{F}{^\xi_\nu} \\
        &\qquad + 2 (\partial_\xi g_{\mu\nu}) F^{\xi\alpha} + 2 g_{\mu\nu} (\partial_\xi F^{\xi\alpha}) + 4\delta^\xi_\mu (\partial_\xi\tensor{F}{^\alpha_\nu}) - 4\delta^\alpha_\mu (\partial_\xi\tensor{F}{^\xi_\nu}) \\
        &= -g_{\mu\nu}\Gamma^\sigma_{\sigma\lambda} F^{\lambda\alpha} -g_{\mu\nu}\partial_\lambda F^{\lambda\alpha} \\
        &\qquad - 2(\partial_\lambda g_{\mu\nu})F^{\lambda\alpha} - 4\partial_\mu \tensor{F}{^\alpha_\nu} - 4\Gamma^\sigma_{\sigma\mu}\tensor{F}{^\alpha_\nu} + 4\delta^\alpha_\mu (\partial_\lambda \tensor{F}{^\lambda_\nu}) + 4\delta^\alpha_\mu\Gamma^\sigma_{\sigma\lambda}\tensor{F}{^\lambda_\nu} \\
        &\qquad +2\Gamma^\sigma_{\sigma\xi}g_{\mu\nu} F^{\xi\alpha} + 4\Gamma^\sigma_{\sigma\mu}\tensor{F}{^\alpha_\nu} - 4\delta^\alpha_\mu\Gamma^\sigma_{\sigma\xi} \tensor{F}{^\xi_\nu} \\
        &\qquad + 2 (\partial_\xi g_{\mu\nu}) F^{\xi\alpha} + 2 g_{\mu\nu} (\partial_\xi F^{\xi\alpha}) + 4 \partial_\mu\tensor{F}{^\alpha_\nu} - 4\delta^\alpha_\mu (\partial_\xi\tensor{F}{^\xi_\nu}). \\
        &= g_{\mu\nu} \big( -\Gamma^\sigma_{\sigma\lambda}F^{\lambda\alpha} - \partial_\lambda F^{\lambda\alpha} + 2\Gamma^\sigma_{\sigma\xi} F^{\xi\alpha} + 2\partial_\xi F^{\xi\alpha} \big) \\
        &\qquad + 4\delta^\alpha_\mu \big( \partial_\lambda \tensor{F}{^\lambda_\nu} + \Gamma^\sigma_{\sigma\lambda}\tensor{F}{^\lambda_\nu} - \Gamma^\sigma_{\sigma\xi}\tensor{F}{^\xi_\nu} - \partial_\xi \tensor{F}{^\xi_\nu} \big) \\
        &\qquad - 2(\partial_\lambda g_{\mu\nu})F^{\lambda\alpha} - 4\partial_\mu \tensor{F}{^\alpha_\nu} - 4\Gamma^\sigma_{\sigma\mu}\tensor{F}{^\alpha_\nu} + 4\Gamma^\sigma_{\sigma\mu}\tensor{F}{^\alpha_\nu} + 2 (\partial_\xi g_{\mu\nu}) F^{\xi\alpha} + 4 \partial_\mu\tensor{F}{^\alpha_\nu} \\
        &= g_{\mu\nu} \big( \Gamma^\sigma_{\sigma\lambda}F^{\lambda\alpha} + \partial_\lambda F^{\lambda\alpha}\big) \\
        &= -g_{\mu\nu}(\delta F)^\alpha \quad [+(\mu\leftrightarrow\nu),\olddiv 2,g=g^{\mathrm{T}}] \\
        &= -g_{\mu\nu}(\delta F)^\alpha.
    \end{align*}
    Continuing our simplification of~\eqref{Tmunu3}, the second term on line~\eqref{Tmunu3_part1} can be rearranged with help of the antisymmetry of $(F^{\mu\nu})$ and of the coordinate expression~\eqref{d_coordinates} for the $\rmd$ operator, and this time the symmetrization operation really is important:
    \begin{align*}
        4 \frac{\partial (\delta F)^\alpha}{\partial(\partial_\xi g^{\mu\nu})}\partial_\xi(\delta F)_\alpha &= 2g_{\mu\nu} F^{\xi\alpha}\partial_\xi(\delta F)_\alpha + 4\delta^\xi_\mu\tensor{F}{^\alpha_\nu}\partial_\xi(\delta F)_\alpha - 4\delta^\alpha_\mu\tensor{F}{^\xi_\nu}\partial_\xi(\delta F)_\alpha \\
        &= 2g_{\mu\nu} F^{\xi\alpha}\partial_\xi(\delta F)_\alpha + 4\tensor{F}{^\alpha_\nu}\partial_\mu(\delta F)_\alpha - 4\tensor{F}{^\xi_\nu}\partial_\xi(\delta F)_\mu \\
        &= g_{\mu\nu} F^{\xi\alpha}\partial_\xi(\delta F)_\alpha + g_{\mu\nu} F^{\xi\alpha}\partial_\xi(\delta F)_\alpha + 4\tensor{F}{^\alpha_\nu}\partial_\mu(\delta F)_\alpha - 4\tensor{F}{^\alpha_\nu}\partial_\alpha(\delta F)_\mu \\
        &= g_{\mu\nu} F^{\xi\alpha}\partial_\xi(\delta F)_\alpha - g_{\mu\nu} F^{\xi\alpha}\partial_\alpha(\delta F)_\xi + 4\tensor{F}{^\alpha_\nu}\big(\partial_\mu(\delta F)_\alpha - \partial_\alpha(\delta F)_\mu\big) \\
        &= g_{\mu\nu}F^{\xi\alpha}(\rmd\delta F)_{\xi\alpha} - 4\tensor{F}{^\alpha_\nu}(\rmd\delta F)_{\alpha\mu} \quad [+(\mu\leftrightarrow\nu),\olddiv 2,g=g^{\mathrm{T}}] \\
        &= g_{\mu\nu}F^{\xi\alpha}(\rmd\delta F)_{\xi\alpha} - 2\tensor{F}{^\alpha_\nu}(\rmd\delta F)_{\alpha\mu} - 2\tensor{F}{^\alpha_\mu}(\rmd\delta F)_{\alpha\nu}.
    \end{align*}
    Going back to~\eqref{Tmunu3}, we have proved
    \begin{equation*}
        \widetilde{T}_{\mu\nu}^{(3)} = -2\tensor{F}{^\alpha_\mu}(\rmd\delta F)_{\alpha\nu} - 2\tensor{F}{^\alpha_\nu}(\rmd\delta F)_{\alpha\mu} +g_{\mu\nu}F^{\alpha\beta}(\rmd\delta F)_{\alpha\beta} +2 (\delta F)_\mu(\delta F)_\nu -g_{\mu\nu} (\delta F)^\alpha (\delta F)_\alpha,
    \end{equation*}
    and using~\eqref{relation_Ttilde} we finally arrive at precisely the $T^{(3)}_{\mu\nu}$ that we wanted.
    
    The $j=1$ term, unimportant for us, is all that is left: Given $\ell_{\mathrm{EM}}^{(1)} = J^\alpha A_\alpha = g^{\alpha\beta} J_\beta A_\alpha$, we get
    \begin{align*}
        T^{(1)}_{\mu\nu} = \widetilde{T}^{(1)}_{\mu\nu} &= g_{\mu\nu}\ell_{\mathrm{EM}}^{(1)} - 2\frac{\partial\ell_{\mathrm{EM}}^{(1)}}{\partial g^{\mu\nu}} \\
        &= g_{\mu\nu}J^\alpha A_\alpha - 2\delta^\alpha_\mu\delta^\beta_\nu J_\beta A_\alpha \\
        &= g_{\mu\nu}J^\alpha A_\alpha - 2 J_\nu A_\mu \quad [+(\mu\leftrightarrow\nu),\olddiv 2,g=g^{\mathrm{T}}] \\
        &= g_{\mu\nu}J^\alpha A_\alpha -  J_\mu A_\nu -  J_\nu A_\mu. \qedhere
    \end{align*}
\end{proof}

\begin{remark}
    Our formula~\eqref{Tmunu_GR} for $(T_{\mu\nu})$ is the same as the one that appears in the original paper~\cite{podolsky} by Podolsky as well as the ones in the recent works~\cite{zayats} and~\cite{gratus}. The appendix of the latter presents a different derivation of this formula, not relying so heavily on component calculations, but we do point out that their end result (formula (128) on p.~28) contains an error: The term $F_{cb}\partial^a\partial_a \tensor{F}{^b_d}$ is not symmetrized with respect to the pair of indices $(c,d)$.
\end{remark}

\subsection{The E-M-BLTP system for a static point charge}
\label{subsec_GR_pointcharge}

From now on, we assume that a single point particle of electric charge $Q$, obeying either Maxwell-Maxwell or Maxwell-BLTP theory, sits statically alone in the Universe, and we consider the electrovacuum spacetime $\mathcal{M}$ ``around it.'' On account of its symmetries, this must be a static, spherically symmetric 4-manifold whose topological boundary (not considered part of $\mathcal{M}$) is a line representing the ``singularity worldline'' of the particle. It is well-known (see~\cite{parry}) that a system of coordinates $(ct,r,\theta,\phi) \in \bbR\times (0,\infty) \times [0,\pi] \times [0,2\pi)$ can be found in which the metric of such a spacetime takes a diagonal form, with the $\rmd(ct)^2$ and $\rmd r^2$ terms accompanied by two unknown functions of the so-called \textbf{area-radius} coordinate $r$, and the $\rmd\theta^2$ and $\rmd\phi^2$ terms figuring as in the usual spherical volume element. We choose to write it in the form
\begin{equation} \label{metric}
    g = -\frac{e^{2\alpha(r)}}{\beta(r)} \ \rmd (ct)\otimes\rmd(ct) + \beta(r) \ \rmd r\otimes\rmd r + r^2\big( \rmd\theta\otimes\rmd\theta + \sin^2\theta \ \rmd\phi\otimes\rmd\phi \big)
\end{equation}
for undetermined functions $\alpha$ and $\beta$, with a requirement that
\begin{equation} \label{beta_nohorizons}
    \beta(r) > 0 \quad\text{for all } r>0.
\end{equation}
By convention, the unknowns $\alpha,\beta$ are dimensionless, with the differentials $\rmd(ct)^2$, $\rmd r^2$ both carrying the dimension of $\dimL^2$, while the angles $\theta$, $\phi$ and differentials $\rmd\theta$, $\rmd\phi$ are dimensionless, with their coefficients satisfying $\dim(r^2) = \dim(r^2\sin\theta) = \dimL^2$. We will use the symbols $t,r,\theta,\phi$ as indices in tensor equations written in the $(ct,r,\theta,\phi)$ coordinates; for example, the first metric coefficient $-\beta^{-1}e^{2\alpha}$ is denoted by $g_{tt}$ instead of $g_{(ct)(ct)}$ or $g_{00}$.

We assume these coordinates to be valid \textit{globally} on $\mathcal{M}$. This means, in particular, that the particle is modeled as a \textbf{naked singularity}, not hidden behind any event horizons (where $\beta$ would take the value 0). We also assume the physically motivated condition of \textbf{asymptotic flatness} of the spacetime, which for a metric of the form~\eqref{metric} implies, among a stronger set of conditions (see~\cite{straumann}, p.~528), that
\begin{equation} \label{asymp_flat}
    \lim_{r\to\infty} \alpha(r) = 0, \quad \beta(r) = 1 + O(r^{-1}) \quad\text{as } r\to\infty.
\end{equation}
An interpretation of these conditions is that, away from the influence of the point particle, the metric approximates the Minkowski metric expressed in spherical coordinates, which has $\alpha \equiv 0$ and $\beta \equiv 1$. In particular, the following limit exists:
\begin{equation} \label{ADM}
    M_{\mathrm{ADM}} := \frac{c^2}{G} \lim_{r\to \infty} \frac{r}{2}\left( 1 - \frac{1}{\beta(r)} \right).
\end{equation}
With this, $\beta$ takes the form
\begin{equation*}
    \beta(r) = \left( 1 - \frac{2GM_{\mathrm{ADM}}}{c^2 r} + O(r^{-2}) \right)^{-1} \quad\text{as } r\to\infty.
\end{equation*}
The constant~\eqref{ADM} is called the spacetime's \textbf{ADM mass}, representing the singularity's active gravitational mass as measured by a distant observer. For this reason, physically relevant spacetimes should satisfy $M_{\mathrm{ADM}} > 0$. We also define the spacetime's \textbf{bare mass} as the limit of the same quantity on the other end of the domain $(0,\infty)$, in case it exists:
\begin{equation} \label{bare}
    M_{\mathrm{bare}} := \frac{c^2}{G} \lim_{r\to 0^+} \frac{r}{2}\left( 1 - \frac{1}{\beta(r)} \right).
\end{equation}
If $M_{\mathrm{bare}}$ is defined, it is not hard to see that condition~\eqref{beta_nohorizons} implies $M_{\mathrm{bare}} \leq 0$. A physical interpretation for this constant, however, is not immediately apparent, nor is it clear whether $\av{M_{\text{bare}}}$ should have any relation to the inertial mass of the particle. We will see in equations~\eqref{massofaparticle} and~\eqref{baremassofaparticle} that physically meaningful solutions to our problem will have $\av{M_{\text{bare}}} \gg M_{\mathrm{ADM}}$.

Next we describe the electromagnetic quantities in $\mathcal{M}$. The Faraday tensor $F\in\extprod{2}{\mathcal{M}}$ also has to obey staticity and spherical symmetry, properties which, by definition, are expressed in terms of vanishing Lie derivatives: The equation $\mathcal{L}_X F = 0$ must hold for each of the Killing vector fields $X\in\frakX{\mathcal{M}}$ associated to the properties of $\mathcal{M}$ being static and spherically symmetric. This condition imposes (see~\cite{carroll}, p.~254) that the only nonzero components of $F$ in our coordinates are $F_{tr} = -F_{rt}$ (which must be a general function of $r$) and $F_{\theta\phi} = -F_{\phi\theta}$ (which must have the form $\psi(r)\sin\theta$ for $\psi$ being a general function of $r$). However, a nonzero $F_{\theta\phi}$ would signify the presence of magnetic fields, which is unphysical for a single point source. Thus we assume that only the components $F_{tr} = -F_{rt}$ are not identically zero. We choose to call $F_{tr} =: e^\alpha\varphi'$ for an unknown \textbf{electric potential} $\varphi:(0,\infty)\longrightarrow\bbR$, and therefore
\begin{equation} \label{F_GR}
    F = e^{\alpha(r)}\varphi'(r) \ \rmd (ct)\wedge\rmd r,
\end{equation}
while $\varphi$ is defined so as to vanish at infinity:
\begin{equation} \label{varphi_GR}
    \varphi(r) = -\int_r^\infty \varphi'(s) \ \rmd s = -\int_r^\infty e^{-\alpha(s)} F_{tr}(s) \ \rmd s.
\end{equation}
See subsection A.1 in the appendix for an explanation of why the factor $e^\alpha$ is present in~\eqref{F_GR}. As it turns out, the Einstein and Maxwell equations are most easily written not in terms of $\varphi$ or $\varphi'$, but rather in terms of the same \textbf{electric-deviation scalar} as introduced in~\eqref{chi_SR} in the context of SR:
\begin{equation} \label{chi_GR}
    \chi(r) := \frac{1}{Q}r^2\varphi'(r) + 1,
\end{equation}
which is dimensionless: Since $\dimQ = \dim(F) = \dim(\varphi') \dim(\rmd(ct)\wedge\rmd r) = \dim(\varphi') \dimLL^2$, we get $\dim(\chi) = \dim(r^2/Q)\dim(\varphi') = \dimLL^2 \dimQ^{-1} \dim(\varphi') = \dimone$. In particular~\eqref{F_GR} becomes
\begin{equation} \label{F_GR2}
    F = Q e^{\alpha(r)} \frac{\chi(r)-1}{r^2} \ \rmd(ct)\wedge\rmd r.
\end{equation}
We can now formulate the equations of the problem:

\begin{proposition} \label{prop_main_system}
    Let $g$ and $F$ be the metric~\eqref{metric} and Faraday tensor~\eqref{F_GR2}, depending on sufficiently differentiable unknowns $\alpha,\beta,\chi: (0,\infty)\longrightarrow\bbR$. Then the system of the EM equations~\eqref{maxwell_cov},~\eqref{vacuum_cov} and the Einstein equations~\eqref{EFE} with stress-energy tensor given by~\eqref{Tmunu_GR} is equivalent to
    \begin{equation} \label{embltp}
        \left\{\begin{array}{l}
        \alpha' = -\varkappa^{-2}\dfrac{GQ^2}{c^4}\dfrac{(\chi')^2}{r^3}, \\ \\
        \beta' = \dfrac{\beta(1-\beta)}{r} + \dfrac{GQ^2}{c^4}\dfrac{\beta^2(1-\chi^2) - \varkappa^{-2}\beta(\chi')^2}{r^3}, \\ \\
        \varkappa^{-2}\dfrac{\rmd}{\rmd r}\left( \dfrac{\chi' e^\alpha}{r^2\beta} \right) = \dfrac{\chi e^\alpha}{r^2},
        \end{array}\right.
    \end{equation}
    which we call the \textbf{Einstein-Maxwell-BLTP (E-M-BLTP) system}.
\end{proposition}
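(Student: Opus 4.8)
The plan is to substitute the metric~\eqref{metric} and the Faraday tensor~\eqref{F_GR2} into the coordinate-explicit forms of all the equations involved, working throughout in the chart $(ct,r,\theta,\phi)$ and using formulas~\eqref{d_coordinates} and~\eqref{delta_coordinates} for $\rmd$ and $\delta$, the action of $\Star$ on a diagonal metric, and the Christoffel/Ricci formulas~\eqref{Ricci_Christoffel}; the longest computations, in particular the Ricci tensor of~\eqref{metric}, are relegated to the appendix. Two observations streamline the argument: since $F$ is a function of $r$ times the closed $2$-form $\rmd(ct)\wedge\rmd r$, the equation $\rmd F=0$ holds automatically, so all of the content of the Maxwell system~\eqref{maxwell_cov} lies in $\rmd M=\tfrac{4\pi}{c}\Star J$; and on $\mathcal M$, that is for $r>0$, the four-current of the lone static point charge vanishes, so $J=0$ there and the terms $T^{(1)}_{\mu\nu}$ of~\eqref{Tmunu_GR} drop out.

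First I would dispose of the electromagnetic equations. Formula~\eqref{delta_coordinates}, with $\sqrt{\av{g}}=e^{\alpha}r^{2}\sin\theta$ and $F^{tr}=-e^{-\alpha}\varphi'$, shows that staticity kills the radial component of $\delta F$, leaving $\delta F=(\delta F)_{t}\,\rmd(ct)$ with $(\delta F)_{t}=\tfrac{e^{\alpha}}{\beta r^{2}}(r^{2}\varphi')'=\tfrac{e^{\alpha}Q\chi'}{\beta r^{2}}$ after~\eqref{chi_GR}; hence $\rmd\delta F$ is again a multiple of $\rmd(ct)\wedge\rmd r$, so $F+\varkappa^{-2}\rmd\delta F$ has the same one-component form, and applying $\Star$ (with $\Star(\rmd(ct)\wedge\rmd r)=-e^{-\alpha}r^{2}\sin\theta\,\rmd\theta\wedge\rmd\phi$) gives $M=\big[-Q(\chi-1)+\varkappa^{-2}Q\,e^{-\alpha}r^{2}\big(\tfrac{e^{\alpha}\chi'}{\beta r^{2}}\big)'\big]\sin\theta\,\rmd\theta\wedge\rmd\phi$. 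The equation $\rmd M=\tfrac{4\pi}{c}\Star J$ then says, on $r>0$, that the bracket is independent of $r$, while the point source fixes its value: the flux $\oint_{S^{2}_{r}}M=4\pi Q$ — the curved-space counterpart of $\bm{D}=Q r^{-2}\bm{e_r}$ found in Section~\ref{sec_EM_SR} — forces the bracket to equal $Q$, and rearranging that identity is precisely the third equation of~\eqref{embltp}.

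Next, the Einstein equations. On one side I would compute the Einstein tensor of~\eqref{metric}: it is diagonal, the mixed components $\tensor{G}{^t_t}$ and $\tensor{G}{^r_r}$ contain no second derivatives, and $\tensor{G}{^\phi_\phi}=\tensor{G}{^\theta_\theta}$. On the other side I would evaluate $T^{(2)}_{\mu\nu}$ and $T^{(3)}_{\mu\nu}$ from Proposition~\ref{prop_Hilbert_Tmunu} on the explicit $F$, $\delta F$ and $\rmd\delta F$ above: one needs $F_{\alpha\beta}F^{\alpha\beta}=-2Q^{2}(\chi-1)^{2}/r^{4}$, the contractions $\tensor{F}{^\alpha_\mu}F_{\alpha\nu}$ and $\tensor{F}{^\alpha_\mu}(\rmd\delta F)_{\alpha\nu}$, the products $(\delta F)_{\mu}(\delta F)_{\nu}$, and the traces $F_{\alpha\beta}(\rmd\delta F)^{\alpha\beta}$ and $(\delta F)_{\alpha}(\delta F)^{\alpha}$, all explicit functions of $r,\alpha,\beta,\chi,\chi',\chi''$. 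Imposing~\eqref{EFE}: the mixed $tt$ equation involves $\beta$ only to first order and, after the second derivative of $\chi$ appearing in its $\varkappa^{-2}$ terms is eliminated via the third equation of~\eqref{embltp} just derived, it is exactly the $\beta'$ equation; and the difference $\tensor{G}{^t_t}-\tensor{G}{^r_r}=-2\alpha'/(r\beta)$, in which all second derivatives and the Ricci scalar cancel once $g_{tt}=-e^{2\alpha}/\beta$ is used, equated via~\eqref{EFE} to $\tfrac{8\pi G}{c^{4}}\big(\tensor{T}{^t_t}-\tensor{T}{^r_r}\big)$ — whose purely Maxwellian $T^{(2)}$ contribution vanishes — gives exactly the $\alpha'$ equation.

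It remains to account for the $\theta\theta$ (and $\phi\phi$) component of~\eqref{EFE}, which superficially adds a fourth, second-order relation among $\alpha,\beta,\chi$; I would show it is a consequence of the others. The Hilbert tensor satisfies $\nabla^{\nu}T_{\mu\nu}=0$ once the EM equations hold, and $\nabla^{\nu}G_{\mu\nu}=0$ always, so once the $tt$, $rr$ and EM equations are in force, the $\mu=r$ component of $\nabla^{\nu}\big(G_{\mu\nu}-\tfrac{8\pi G}{c^{4}}T_{\mu\nu}\big)=0$ becomes a homogeneous first-order linear ODE for the single remaining quantity $G_{\theta\theta}-\tfrac{8\pi G}{c^{4}}T_{\theta\theta}$, which therefore vanishes identically; alternatively one just verifies the $\theta\theta$ identity directly from the formulas assembled above. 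The main obstacle is not conceptual but a matter of bookkeeping: the Ricci tensor of~\eqref{metric} must be carried with \emph{both} metric functions free — the simplification $g_{tt}g_{rr}=-1$ of the RWN case is not available — together with the BLTP correction terms of $T_{\mu\nu}$, so that after the third equation is used to kill the $\chi''$ contributions the bulky expressions collapse to the compact system~\eqref{embltp}; this is precisely the role of the appendix.
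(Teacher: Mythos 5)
Your proposal is correct and follows essentially the same route as the paper's Appendix A.2: automatic $\rmd F=0$, reduction of $\rmd M=\tfrac{4\pi}{c}\Star J$ to an $r$-independent bracket fixed to $Q$ by the flux/charge normalization (the paper's Proposition~\ref{prop_charge}), extraction of the $\beta'$ and $\alpha'$ equations from the $(t,t)$ equation and a linear combination of the $(t,t)$ and $(r,r)$ equations (your mixed-index difference $\tensor{G}{^t_t}-\tensor{G}{^r_r}=-2\alpha'/(r\beta)$ is algebraically the same combination the paper uses), and disposal of the angular components via the twice-contracted Bianchi identity together with $\nabla^\nu T_{\mu\nu}=0$. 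No gaps.
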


\begin{proof}
    Deriving these equations is ``merely'' a matter of plugging the above $F$ into~\eqref{maxwell_cov},~\eqref{vacuum_cov},~\eqref{EFE}. We carry out this long calculation in subsection A.2 of the appendix.
\end{proof}

Next we need to establish an intrinsic way to measure the electric-field energy in our spacetime. We choose a canonical definition of the energy contained in a constant-time slice $\{t = t_0\}$ of a static spacetime: the conserved \textbf{Noether charge} corresponding to the Killing vector field $X = \partial_{t}$ that is associated to the property of staticity. To define this concept, consider the 1-form $P\in\extprod{1}{\mathcal{M}}$ given by
\begin{equation} \label{noether_P}
    P(Y) := T(X,Y) \quad\text{for all } Y\in\frakX{\mathcal{M}},
\end{equation}
where $T = (T_{\mu\nu})$ is the stress-energy tensor~\eqref{Tmunu_GR}. We compute the divergence of $P$ using~\eqref{ddelta_cov} and the Leibniz rule for the component expression $P_\mu = T_{\mu\nu}X^\nu$:
\begin{equation*}
    -\Star\rmd\Star P = -\delta P = \nabla^\mu P_\mu = (\nabla^\mu T_{\mu\nu})X^\nu + T_{\mu\nu}(\nabla^\mu X^\nu).
\end{equation*}
The first term on the right vanishes because $T$ is divergence-free. Using the symmetry of $T$ and the formula $0 = (\mathcal{L}_X g)^{\mu\nu} = \nabla^\mu X^\nu - \nabla^\nu X^\mu$ for the Lie derivative of the metric, this equation turns into
\begin{equation*}
    -\Star\rmd\Star P = \frac{1}{2}T_{\mu\nu}(\nabla^\mu X^\nu - \nabla^\nu X^\mu) = \frac{1}{2}T_{\mu\nu}(\mathcal{L}_X g)^{\mu\nu} = 0.
\end{equation*}
In particular $\rmd\Star P = 0$, and now Stokes' Theorem implies that the integral $\int_{\{t=t_0\}} \Star P$ is constant as a function of $t_0$:
\begin{equation*}
    \int_{\{t=t_1\}} \Star P - \int_{\{t=t_0\}} \Star P = \int_{\partial\{ t_0<t<t_1 \}} \Star P = \int_{\{ t_0<t<t_1\}} \rmd\Star P = 0.
\end{equation*}
Strictly speaking, one needs to consider in this calculation a \textit{bounded} region of the spacetime and then let it grow out towards spatial infinity. Hence this result only holds when $T$ is assumed to vanish sufficiently rapidly at spatial infinity, which indeed will be true in our case. We skip these details; we merely want to use this calculation as justification for the following:

\begin{definition}
    Let $P\in\extprod{1}{\mathcal{M}}$ be as in~\eqref{noether_P}. Then the \textbf{electric-field energy} in $\mathcal{M}$ is defined by the integral (where $t_0\in\bbR$ is arbitrary)
    \begin{equation} \label{energy_GR}
        \mathcal{E} := -\int_{\{t=t_0\}} \Star P = \frac{Q^2}{2} \int_0^\infty \frac{e^{\alpha}}{r^2} \left( 1 - \chi^2 - \varkappa^{-2}\frac{(\chi')^2}{\beta} \right) \ \rmd r.
    \end{equation}
\end{definition}
The explicit calculation of $\Star P$ to derive the above expression in terms of $\alpha,\beta,\chi$ will be shown in the appendix, subsection A.3, where the reason for the choice of the normalization constant $-1$ in front of the integral will also be explained.

\begin{remark}
    Using the Maxwell-Maxwell equations (that is, letting $\varkappa=\infty$), system~\eqref{embltp} becomes much simpler:
    \begin{equation*}
        \alpha' = 0, \qquad \beta' = \frac{\beta(1-\beta)}{r} + \frac{GQ^2}{c^4}\frac{\beta^2(1-\chi^2)}{r^3}, \qquad \chi = 0.
    \end{equation*}
    Its asymptotically flat solution is the so-called \textbf{Reissner-Weyl-Nordström} or \textbf{RWN} spacetime:
    \begin{equation} \label{RWN}
        \alpha^{(\mathrm{RWN})} \equiv 0, \qquad \beta^{(\mathrm{RWN})}_M(r) = \left(1 - \frac{2GM}{c^2r}+\frac{GQ^2}{c^4r^2}\right)^{-1}, \qquad \chi^{(\mathrm{RWN})} \equiv 0
    \end{equation}
    for an arbitrary parameter $M\in\bbR$ that is quickly seen to correspond to this spacetime's ADM mass according to~\eqref{ADM}. Using~\eqref{chi_GR}, the electric potential~\eqref{varphi_GR} is found to be the same Coulomb potential $\varphi^{(\mathrm{M})}$ as in~\eqref{Coulomb}. Furthermore, the condition~\eqref{beta_nohorizons} for the absence of event horizons in this case is equivalent to $GM < Q^2$, which holds --- with \textit{lots} of room to spare --- for the values of $M$ and $Q$ corresponding to typical atomic particles. Thus a lonely point charge under the conventional Einstein-Maxwell theory is indeed modeled as a \textit{naked singularity} of a static, spherically symmetric spacetime.
    
    We pointed out in Section~\ref{sec_EM_SR} that the Coulomb potential $\varphi^{\mathrm{(M)}}$ solves the SR equations of BLTP theory for any $0<\varkappa\leq\infty$. We remark now that the analogous fact is true in GR as well: The RWN solution~\eqref{RWN} solves system~\eqref{embltp}  for any $0<\varkappa\leq\infty$. This is trivial to check: Since all derivatives of $\chi$ in the system come accompanied by a $\varkappa^{-2}$ factor, we obtain one and the same reduction of the equations if we either set $\varkappa^{-2} = 0$ (yielding the system whose solution is RWN) or set $\chi \equiv 0$ (which holds for the RWN solution). Nevertheless, RWN must be considered an undesirable solution to the E-M-BLTP system for our purposes because of the divergence of its energy integral and electric potential at $r=0$:
    \begin{equation*}
        \mathcal{E} = \frac{Q^2}{2}\int_0^\infty \frac{1}{r^2}(1 - 0 - 0) \ \rmd r = \frac{Q^2}{2}\int_0^\infty \frac{\rmd r}{r^2} = \infty, \quad \varphi^{(\mathrm{M})}(0) = \frac{Q}{r}\bigg|_{r=0} = \infty.
    \end{equation*}

    A recent result by Cuzinatto et al.~\cite{cuzinatto} shows that, in a sense made precise in their work, RWN is in fact the only physically relevant \textit{exterior solution} of system~\eqref{embltp} when the spacetime contains a black hole --- that is, if the expression of the metric~\eqref{metric} is only valid in a domain of the form $r\in (r_0,\infty)$ for an $r_0>0$ where $\beta(r_0) = 0$. So it is remarkable that we will be able to find other solutions in the naked-singularity case.
    
\end{remark}

Our goal in this paper is to show that the E-M-BLTP system in the case $0<\varkappa<\infty$ admits solutions whose values for $\mathcal{E}$ and $\varphi(0)$ are finite. We will also investigate the value of $M_{\mathrm{ADM}}$ for these solutions and find conditions for it to be positive. Given that $\chi^{\text{(RWN)}} \equiv 0$, we will restrict the search to only those solutions that obey
\begin{equation} \label{chi_infty}
    \lim_{r\to\infty} \chi(r) = 0,
\end{equation}
so that the laws of electromagnetism away from the point charge approximate the ones observed in the real world, that is, the conventional Maxwell theory. In fact, our solutions $(\alpha,\beta,\chi)$ will be such that the quantities $\alpha - \alpha^{(\mathrm{RWN})} = \alpha$, $\beta - \beta^{(\mathrm{RWN})}$ and $\chi - \chi^{(\mathrm{RWN})} = \chi$ decay exponentially for large $r$. There are also other necessary properties of $\chi$ that can be deduced now from the desired behavior of our solution; we state them in a Lemma:

\begin{lemma} \label{lemma_sign_chi}
    Let $(\alpha,\beta,\chi)$ solve~\eqref{embltp} for $0<\varkappa<\infty$ and obey $\beta(r)>0$ for all $r>0$, $\chi(\infty) = 0$ and $\av{\varphi(0)} < \infty$, where $\varphi\in C\big([0,\infty)\big)$ is related to $\chi$ as in~\eqref{chi_GR}. Then
    \begin{equation} \label{chi_zero}
        \chi(0) = 1
    \end{equation}
    and 
    \begin{equation} \label{signs_chi}
        0 < \chi(r) < 1, \quad \chi'(r) < 0 \quad\text{for all } r>0.
    \end{equation}
\end{lemma}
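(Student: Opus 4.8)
The plan is to extract from the third equation of \eqref{embltp} an elementary Sturm--Liouville structure and then argue purely by sign-chasing. First I would set $w := e^{\alpha}/(r^{2}\beta)$, which is $C^{1}$ and strictly positive on $(0,\infty)$ since $\beta>0$; multiplying the third equation of \eqref{embltp} by $\varkappa^{2}$ rewrites it exactly as
\[
 (w\chi')' = \varkappa^{2}\beta\, w\,\chi .
\]
Writing $\Phi := w\chi' = \chi' e^{\alpha}/(r^{2}\beta)$, this says $\Phi' = \varkappa^{2}\beta\, w\,\chi$, so $\operatorname{sign}\Phi' = \operatorname{sign}\chi$ wherever $\chi\neq 0$, while $\operatorname{sign}\chi' = \operatorname{sign}\Phi$ everywhere. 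Since $Q\neq 0$, the trivial solution $\chi\equiv 0$ is the RWN solution, for which $\varphi(r)=Q/r$ and hence $\varphi(0)=\infty$; the hypothesis $\av{\varphi(0)}<\infty$ therefore forces $\chi\not\equiv 0$. Consequently every zero of $\chi$ is simple (an interior double zero would force $\chi\equiv 0$ by uniqueness for the linear ODE), and the standard disconjugacy argument — if $a<b$ were consecutive zeros with, say, $\chi>0$ on $(a,b)$, then $\Phi$ would be strictly increasing on $(a,b)$ yet satisfy $\Phi(a)>0\ge\Phi(b)$ — shows that $\chi$ has at most one zero on $(0,\infty)$.

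The core step is to prove that $\chi$ is strictly decreasing on $(0,\infty)$ with $\chi>0$ and $\chi'<0$; the conclusions of the lemma then follow quickly. I would organize this by noting that, on any interval where $\chi$ keeps a fixed sign, $\Phi$ is monotone there (increasing where $\chi>0$, decreasing where $\chi<0$), so $\chi'=\Phi/w$ eventually has a fixed sign and $\chi$ is eventually strictly monotone; a strictly monotone function that stays bounded away from $0$ on a half-line cannot satisfy $\chi(\infty)=0$. Combining this with the fact that $\chi$ has at most one (simple) zero, a short case analysis — the two possible terminal signs of $\chi$, each against the possible eventual behaviours of the monotone $\Phi$ — eliminates every possibility except ``$\chi>0$ and $\Phi<0$ throughout $(0,\infty)$'', with the sole exception of the configuration ``$\chi<0$ and strictly increasing on $(0,\infty)$'', which I instead eliminate in the last step (there $\av{\chi(s)-1}>1$ near $s=0$). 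In the surviving case $\chi'=\Phi/w<0$ on all of $(0,\infty)$, so $\chi$ is strictly decreasing; together with $\chi>0$ and $\chi(\infty)=0$ this gives $0<\chi(r)$ for all $r>0$ and guarantees that $\chi(0^{+}):=\lim_{r\to 0^{+}}\chi(r)$ exists in $(0,+\infty]$.

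The last step pins down $\chi(0^{+})=1$ and disposes of the deferred configuration. Since $\varphi'(s)=Q\big(\chi(s)-1\big)/s^{2}$ and $\varphi$ is continuous at $0$, the improper integral $\int_{0^{+}}\varphi'(s)\,ds$ converges. If $\chi(0^{+})$ exists and equals some $L\neq 1$ (allowing $L=\pm\infty$), then for small $s$ the factor $\chi(s)-1$ has a fixed sign with $\av{\chi(s)-1}\ge c>0$, so $\varphi'(s)$ has a fixed sign with $\av{\varphi'(s)}\ge c\av{Q}s^{-2}$, forcing $\int_{0^{+}}\varphi'(s)\,ds$ to diverge — a contradiction. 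Applied to the surviving case this yields $\chi(0^{+})=1$, hence $0<\chi(r)<1$ for all $r>0$ by strict monotonicity, which is \eqref{chi_zero}--\eqref{signs_chi}; applied to the configuration ``$\chi<0$ strictly increasing'' (where $L\in[-\infty,0)$) it delivers the contradiction promised above.

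The part I expect to be most delicate is the bookkeeping in the middle step: systematically ruling out each branch (two terminal signs of $\chi$ versus the monotone $\Phi$ being eventually positive, eventually negative, or changing sign exactly once), checking in each branch that the resulting eventual strict monotonicity of $\chi$ genuinely contradicts $\chi(\infty)=0$, and keeping the correct sign of $\chi'(r_{0})$ in the single-zero subcase. Everything else — the Sturm--Liouville rewriting, the disconjugacy remark, and the $\varphi$-integrability argument — is short once that dichotomy is laid out cleanly; I must also remember to invoke $\chi\not\equiv 0$ (equivalently, that RWN is excluded by $\av{\varphi(0)}<\infty$) to license the ``at most one, simple zero'' statement.
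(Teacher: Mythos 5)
Your proposal is correct and follows essentially the paper's own argument: your $\Phi = e^{\alpha}\chi'/(r^{2}\beta)$ is exactly the paper's auxiliary variable $\overline{\chi}$, and the sign relations $\operatorname{sgn}\Phi'=\operatorname{sgn}\chi$, $\operatorname{sgn}\chi'=\operatorname{sgn}\Phi$, combined with uniqueness at a double zero, $\chi(\infty)=0$, and the divergence of $\int_{0^{+}}\varphi'$ when $\chi-1$ stays away from $0$, drive the same sign-chasing case analysis. The only difference is organizational: the paper first extracts $\chi(0)=1$ from $\av{\varphi(0)}<\infty$ and uses it to fix the sign of $\chi$, whereas you establish the sign dichotomy first and then obtain $\chi(0^{+})=1$ (and eliminate the residual ``$\chi<0$ increasing'' branch) from the same integrability argument — a harmless, if anything slightly more careful, reordering rather than a genuinely different route.
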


\begin{proof}
    Since
    \begin{equation*}
        \varphi(r) - \varphi(0) = \int_0^r \varphi'(s) \ \rmd s = \int_0^r \frac{\chi(s)-1}{s^2} \ \rmd s,
    \end{equation*}
    the condition $\av{\varphi(0)} < \infty$ already gives~\eqref{chi_zero} --- in fact, we see that $\chi(s)-1$ needs to decay faster than $s$ as $s\to 0^+$ in order for this integral to be finite. Now define $\overline{\chi} := e^\alpha\chi'/(r^2\beta)$, so that the $\chi$ equation in~\eqref{embltp} can be interpreted as a first-order system for the unknowns $\chi$ and $\overline{\chi}$:
    \begin{equation*}
        \chi' = \frac{r^2\beta}{e^{\alpha}}\overline{\chi}, \quad \overline{\chi}' = \varkappa^2\frac{e^\alpha}{r^2}\chi.
    \end{equation*}
    Considering that $\beta>0$, these equations imply that, at each $r>0$, we have $\chi(r) > 0$ if and only if $\overline{\chi}'(r) > 0$, and the same also holds when exchanging the roles of $\chi$ and $\overline{\chi}$. This allows us to rule out certain scenarios involving the mutual signs of $\chi$ and $\overline{\chi}$:
    \begin{itemize}
        \item If $\chi(r_0) > 0$ and $\overline{\chi}(r_0) > 0$ at some $r_0$, then both $\chi$ and $\overline{\chi}$ would be increasing at $[r_0,\infty)$ (potentially blowing up before $r=\infty$), contradicting $\chi(\infty) = 0$. If $\chi(r_0) < 0$ and $\overline{\chi}(r_0) < 0$ at some $r_0$, the same contradiction would arise, only with decreasing behavior.
        \item If either $\chi$ or $\overline{\chi}$ (but not both) vanished at some $r_0$, we would fall into the previous case at any point $r_1>r_0$, ultimately also leading to a contradiction.
        \item If $\chi(r_0) = \overline{\chi}(r_0) = 0$ at some $r_0$, the existence-and-uniqueness theorem would show that $\chi$ must be null everywhere, contradicting $\chi(0) = 1$.
    \end{itemize}
    Therefore, $\chi(r)$ and $\overline{\chi}(r)$ must have opposite, nonzero signs at all $r>0$, and, taking into account that $\chi(0) = 1 > 0$ and $\mathrm{sgn}(\chi') = \mathrm{sgn}(\overline{\chi})$, we have the proof that $\chi > 0$ and $\chi' < 0$ everywhere in $(0,\infty)$. Since $\chi(0) = 1$ and $\chi(\infty) = 0$, with $\chi$ a strictly decreasing function, it follows as well that $0 < \chi < 1$ everywhere in $(0,\infty)$.
\end{proof}

\section{GR solutions for a static point charge}
\label{sec_main}

This essentially self-contained section includes the statement and proof of our main result, Theorem~\ref{main_theorem}.

\subsection{Statement of the existence theorem} \label{subsec_thm}

Before anything, we must define a special function that will appear in our theorem. As is well-known, given parameters $a,b\in\bbC$, the corresponding \textbf{confluent hypergeometric function of the second kind}, also known as \textbf{Tricomi's function} and denoted by $U = U(a,b;\ast): \bbC\setminus \{0\}\longrightarrow\bbC$, is by definition the only solution $w = w(z)$ of \textbf{Kummer's equation}
\begin{equation} \label{Kummer}
    z w'' + (b-z) w' - aw = 0
\end{equation}
satisfying the following condition:
\begin{equation} \label{U_polyrate}
    \lim_{\Re z\to\infty } z^a w(z) = 1.
\end{equation}
Depending on the values of the parameters $a$ and $b$, it may be a multi-valued function whose principal branch is normally taken to be that where $z = re^{i\theta}$ for $-\pi < \theta < \pi$. For our work, we will only need to consider this function for a real argument $z>0$ and real parameters $a,b$, in which case $U(a,b;z)$ is well-defined and real-valued. Also well-known is the fact that $U$ has an integral representation in a special case:
\begin{equation} \label{U_integral_rep}
    U(a,b;z) = \frac{1}{\Gamma(a)}\int_0^\infty e^{-tz}t^{a-1}(1+t)^{b-a-1} \ \rmd t \quad\text{for } \Re a > 0, \ \Re z > 0,
\end{equation}
where $\Gamma$ denotes the usual Gamma function. This formula is easily proved via the Laplace Transform of Kummer's equation~\eqref{Kummer}. See~\cite{abramowitz} for details. We now make the following definition:

\begin{definition}
    For all reals $\sfh\geq 0$, we denote by $U_{\sfh}$ the function
    \begin{equation} \label{def_Un}
        U_{\sfh}: (0,\infty)\longrightarrow\bbR, \quad U_{\sfh}(s) := U(1-\sfh,2,s).
    \end{equation}
    One can promptly check that $U_0(s) = 1/s$, $U_1(s) = 1$, $U_2(s) = s-2$ and that, for $\sfh\in\bbN$, the function $U_{\sfh}$ is a polynomial of degree $\sfh-1$. However, for most $\sfh\notin\bbN$, it cannot be written in terms of elementary functions.
\end{definition}

The following properties of $U_{\sfh}$ follow directly from~\eqref{U_polyrate} and~\eqref{U_integral_rep}:
\begin{equation} \label{Un_asymp}
    \lim_{s\to\infty} s^{1-\sfh} U_{\sfh}(s) = 1 \quad\text{for all } \sfh\geq 0, \text{ and}
\end{equation}
\begin{equation} \label{Un_integral_rep}
    \Gamma(1-\sfh) \, U_{\sfh}(s) = \int_0^\infty e^{-ts} t^{-\sfh} (1+t)^\sfh \ \rmd t \stackrel{ ts \longmapsto t }{=} \frac{1}{s} \int_0^\infty e^{-t} t^{-\sfh} (t+s)^\sfh \ \rmd t \quad\text{for all } s > 0, \ 0\leq\sfh<1.
\end{equation}
For our work in this section, we will also need the fact that $U_\sfh(4\sfh) > 0$ for all $\sfh > 0$. For this we offer a ``proof by plotting'' --- see Figure~\ref{fig_U4}. From this fact, we will be able to prove later that in fact $U_\sfh(s) > 0$ for all $s\geq 4\sfh$.

\begin{figure}[t!]
    \centering
    \begin{subfigure}[t]{0.4\textwidth}
        \centering
        \includegraphics[scale=.75]{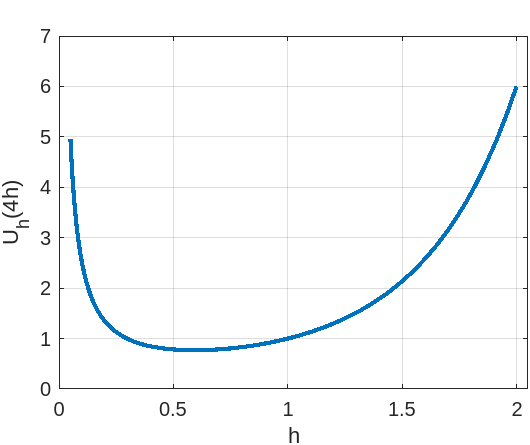}
        \caption{$U_\sfh(4\sfh)$ plotted for $0.05 < \sfh < 2.00$. If this were plotted from $\sfh = 0$, a blow-up towards $\infty$ would be present around $\sfh = 0$, because $U_0(s) = 1/s$.}
    \end{subfigure}
    \quad
    \begin{subfigure}[t]{0.4\textwidth}
        \centering
        \includegraphics[scale=.64]{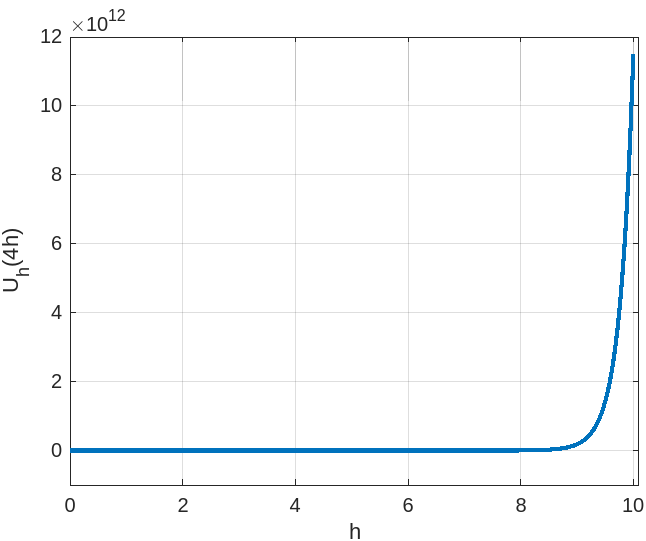}
        \caption{$U_\sfh(4\sfh)$ plotted for $0.05 < \sfh < 10$, showing that the expression grows unboundedly with $\sfh$ after attaining its positive minimum between $\sfh = 0.5$ and $\sfh = 1$.}
    \end{subfigure}
    \caption{$U_\sfh(4\sfh)$ is strictly positive for all $\sfh > 0$.}
    \label{fig_U4}
\end{figure}

\begin{remark}
    Our solutions to the E-M-BLTP system will depend on a constant $\sfh\geq 0$, related to $\av{M_{\text{bare}}}$, that is fixed \textit{a priori}. Formula~\eqref{Un_integral_rep} will then enable us to produce \textit{quantitative} estimates for the metric coefficients and electric potential when $0\leq \sfh<1$, whereas if $\sfh\geq 1$ our statements will only be \textit{qualitative} as a consequence of~\eqref{Un_asymp}. More-concrete statements for all values of $\sfh$ can likely be obtained by making use of other representation formulas for Tricomi's function, but we will leave this to a future paper if we ever find a need for such statements. For now, we are content with having a better grasp on our spacetimes only for $0\leq \sfh < 1$, since, as we shall see, the physically meaningful values of $\sfh$ for the spacetime of a point charge are within this range --- in fact between 0 and about $10^{-18}$.
\end{remark}

We can now state our result:

\begin{theorem} \label{main_theorem}
    Let $G,c,\varkappa$ be Newton's gravitational constant, the speed of light, and the BLTP constant of dimension $\emph{\texttt{[Length]}}^{-1}$, respectively. Also fix a finite constant $M_{\mathrm{bare}} \leq 0$ with dimension $\emph{\texttt{[Mass]}}$. Then there exists a threshold $\sfg_0 > 0$ such that, if a constant $Q\neq 0$ of dimension $\emph{\texttt{[Charge]}}$ is such that
    \begin{equation} \label{def_g}
        \sfg := \frac{GQ^2\varkappa^2}{c^4} < \sfg_0,
    \end{equation}
    then a solution to the Einstein-Maxwell-BLTP system of a point particle of charge $Q$ exists with a naked singularity having bare mass $M_{\mathrm{bare}}$ and with finite values for both its total field energy and its electric potential at the singularity. More precisely: A static, spherically symmetric, asymptotically flat spacetime $\mathcal{M}$ exists equipped with:
    \begin{itemize}
    
        \item a global coordinate system such that $\mathcal{M} = \big\{ (ct,r,\theta,\phi): \, t\in\bbR, \ r>0, \ \theta\in [0,\pi], \ \phi\in [0,2\pi) \big\}$;

        \item a Lorentzian metric of the form $\rmd s^2 = -\beta(r)^{-1} e^{2\alpha(r)}\ \rmd (ct)^2 + \beta(r) \ \rmd r^2 + r^2\big( \rmd\theta^2 + \sin^2\theta \ \rmd\phi^2 \big)$, for coefficients $\alpha,\beta: (0,\infty) \longrightarrow \bbR$ satisfying
        \begin{equation} \label{alphabeta_thm}
            \lim_{r\to\infty} \alpha(r) = 0, \quad \beta(r) = 1 + O(r^{-1}) \quad\text{as } r\to\infty
        \end{equation}
        and
        \begin{equation} \label{beta_thm}
            \beta(r) > 0 \quad\text{for all } r>0;
        \end{equation}

        \item a current 1-form $J\in\extprod{1}{\overline{\mathcal{M}}}$, defined on $\overline{\mathcal{M}} = \big\{ (ct,r,\theta,\phi): \, t\in\bbR, \ r\geq 0, \ \theta\in [0,\pi], \ \phi\in [0,2\pi) \big\}$ and expressed in $(ct,r,\theta,\phi)$ coordinates by
        \begin{equation} \label{J_thm}
            J = -Qc \, \delta_{\bm{0}}(r) \frac{e^{2\alpha}}{r^2\beta\sin\theta} \ \rmd (ct),
        \end{equation}
        where the Dirac Delta $\delta_{\bm{0}}$ is such that $\int_{B_R(0)} \delta_{\bm{0}} \ \rmd^3\bm{x} = 1$ for all balls $B_R(0) = \big\{ (r,\theta,\phi): \, 0\leq r < R, \ \theta\in [0,\pi], \ \phi\in [0,2\pi) \big\}$, with $\rmd^3\bm{x} = r^2\sin\theta \ \rmd r\wedge\rmd\theta\wedge\rmd\phi$;

        \item a 2-form field $F\in\extprod{2}{\mathcal{M}}$ defined by $F = -e^{\alpha(r)}\varphi'(r) \ \rmd(ct)\wedge\rmd r$ for an electric potential $\varphi: (0,\infty) \longrightarrow \bbR$ satisfying 
        \begin{equation} \label{varphi_thm}
            \lim_{r\to\infty} \varphi(r) = \lim_{r\to\infty} \chi(r) = 0, \quad\text{where } \chi(r) = \frac{1}{Q}r^2\varphi'(r) + 1;
        \end{equation}

        \item a stress-energy tensor $(T_{\mu\nu})$ defined in terms of $F$ as in~\eqref{Tmunu_GR};
        
    \end{itemize}
    such that the system of Einstein-Maxwell-BLTP equations is verified, namely
    \begin{equation*}
        R_{\mu\nu} - \frac{1}{2}Rg_{\mu\nu} = \frac{8\pi G}{c^4} T_{\mu\nu}, \quad \rmd F = 0, \quad \rmd M = \frac{4\pi}{c}\Star J, \quad M = \Star F + \varkappa^{-2} \Star\rmd\delta F,
    \end{equation*}
    and such that the following properties hold:
    \begin{equation} \label{bare_thm}
         \frac{c^2}{G} \lim_{r\to 0^+} \frac{r}{2}\left( 1 - \frac{1}{\beta(r)} \right) = M_{\mathrm{bare}},
    \end{equation}
    \begin{equation} \label{energy_finite_thm}
        0 < \mathcal{E} := \frac{Q^2}{2} \int_0^\infty \frac{e^{\alpha}}{r^2} \left( 1 - \chi^2 - \varkappa^{-2}\frac{(\chi')^2}{\beta} \right) \ \rmd r < \infty,
    \end{equation}
    \begin{equation} \label{varphi_finite_thm}
        0 < \av{\varphi(0)} < \infty.
    \end{equation}
    In addition, let constants $\sfh$, $M_{\mathrm{ADM}}$, $\rng{\mathcal{E}}$ and a function $\rng{\varphi}$, of respective dimensions $\emph{\texttt{[1]}}$, $\emph{\texttt{[Mass]}}$, $\emph{\texttt{[Energy]}}$ and $\emph{\texttt{[Charge]}}$, be defined as follows:
    \begin{equation} \label{newdefinition_n}
        \sfh := \frac{\varkappa G}{c^2}\av{M_{\mathrm{bare}}},
    \end{equation}
    \begin{equation} \label{newdefinition_M}
         M_{\mathrm{ADM}} := \frac{c^2}{G} \lim_{r\to\infty} \frac{r}{2}\left( 1 - \frac{1}{\beta(r)} \right),
    \end{equation}
    \begin{equation} \label{Ebolinha}
        \rng{\mathcal{E}} := Q^2\varkappa \left(\frac{1}{2} - \frac{1}{4\sfh} - \frac{U_\sfh'(4\sfh)}{U_\sfh(4\sfh)}\right) \text{ if } \sfh>0, \quad \rng{\mathcal{E}} := Q^2\varkappa\lim_{\sfh\to 0^+} \left(\frac{1}{2} - \frac{1}{4\sfh} - \frac{U_\sfh'(4\sfh)}{U_\sfh(4\sfh)}\right) = \frac{Q^2\varkappa}{2} \text{ if } \sfh=0,
    \end{equation}
    \begin{equation} \label{phibolinha}
        \rng{\mathcal{\varphi}}(r) := \left\{\begin{array}{ll}
            \dfrac{Q}{r} - \dfrac{Q\,U_\sfh(4\sfh+2\varkappa r)}{2\sfh \, U_\sfh(4\sfh)}\left(1 - \dfrac{2GM_{\mathrm{bare}}}{c^2 r}\right)e^{-\varkappa r} &\quad\text{if } \sfh > 0, \\ \\
            \dfrac{Q(1 - e^{-\varkappa r})}{r} &\quad\text{if } \sfh = 0.
        \end{array}\right. 
    \end{equation}
    Then the solution described above satisfies the following estimates at all $r>0$, for $\sfh$-dependent constants $C_1',C_j$:
    \begin{equation} \label{end_alpha}
        \av{ e^{\alpha(r)} - 1 } \leq \left\{\begin{array}{ll}
            \sfg C_1 \left( e + \dfrac{e}{2\varkappa r} \right)^{\delta} \ln\left( e + \dfrac{e}{2\varkappa r} \right) e^{-2\varkappa r} &\text{if } \sfh = 0 \text{ (where } 0<\delta\ll 1 \text{),} \\ \\
            \sfg C_1 \exp\left(\dfrac{C_1'}{\sfh^2}\right) (4\sfh+2\varkappa r)^{-2}(1+4\sfh+2\varkappa r)^{2+2\sfh}e^{-2\varkappa r} &\text{if } \sfh > 0,
        \end{array}\right.
    \end{equation}
    \begin{equation} \label{end_beta1}
        \av{ \beta(r) - \beta^{\mathrm{(RWN)}}_{M_{\mathrm{ADM}}}(r) } \leq \sfg C_2 r(4\sfh+2\varkappa r)^{-1} (1+4\sfh+2\varkappa r)^{2+2\sfh} e^{-2 \varkappa r} \beta^{\mathrm{(RWN)}}_{M_{\mathrm{ADM}}}(r),
    \end{equation}
    \begin{equation} \label{end_beta2}
        \av{ \beta(r) - \left( 1 - \frac{2GM_{\mathrm{bare}}}{c^2 r} \right)^{-1} } \leq \sfg C_3 (\varkappa r)^3(4\sfh+2\varkappa r)^{-3}(1+4\sfh+2\varkappa r)^{-1},
    \end{equation}
    \begin{equation} \label{end_varphi1}
        \av{\varphi(r)-\frac{Q}{r}} \leq \av{Q} C_4 (\varkappa r)^{-1}(1+4\sfh+2\varkappa r)^{\frac12+\sfh}e^{-\varkappa r},
    \end{equation}
    \begin{equation} \label{end_varphi2}
        \av{\varphi(r)- \rng{\varphi}(r) } \leq \sfg \av{Q} C_5 (1+4\sfh+2\varkappa r)^{-\frac12+\sfh} e^{-\varkappa r},
    \end{equation}
    \begin{equation} \label{end_const_potential}
        \av{\varphi(0) - \frac{2\rng{\mathcal{E}}}{Q\varkappa} } \leq \sfg \av{Q} C_6,
    \end{equation}
    \begin{equation} \label{end_const_energy}
         \av{ \mathcal{E} - \rng{\mathcal{E}} } \leq \sfg Q^2\varkappa C_7,
    \end{equation}
    \begin{equation} \label{end_const_mass}
        \av{ M_{\mathrm{ADM}} - \big( M_{\mathrm{bare}} + c^{-2}\rng{\mathcal{E}} \big) } \leq \frac{\sfg Q^2\varkappa}{c^2} C_8.
    \end{equation}    
    Finally, if $0\leq\sfh<1$, the following values of the constants $\sfg_0$ and $C_1',C_j$ may be chosen:
    \begin{multline} \label{numerical_C}
        \begin{array}{|l||c|c|c|c|c|c|c|c} \hline
            & \sfg_0 & C_1 & C_1' & C_2 & C_3 & C_4 & C_5 & \cdots \\ \hline \hline
            \sfh = 0 & 2\cdot 10^{-4} & 1.57 & \text{---} & 13.9 & 389 & 2.14 & 1240 & \cdots \\ \hline
            \sfh\in (0,1) & 1.17\cdot 10^{-17} & 58.5 & 2.68\cdot 10^{-14} & 1910 & 7.41\cdot 10^{6} & 45 & 2.59\cdot 10^{16} & \cdots \\ \hline
        \end{array} \\
        \begin{array}{cc|c|c|} \hline
            \cdots & C_6 & C_7 & C_8 \\ \hline \hline
            \cdots & 1240 & 623 & 620 \\ \hline
            \cdots & 5.80\cdot 10^{16} & 2.91\cdot 10^{16} & 2.90\cdot 10^{16} \\ \hline
        \end{array}
    \end{multline}
\end{theorem}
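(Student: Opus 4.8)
The plan is to realize the solution as a small perturbation of an explicitly solvable ``frozen-metric'' model, close a contraction-mapping argument, and then read off the quantitative bounds \eqref{end_alpha}--\eqref{end_const_mass} by propagating the contraction estimate through the quadratures. The first step is a reduction. The $\alpha$-equation in \eqref{embltp} integrates to $\alpha(r)=\varkappa^{-2}(GQ^2/c^4)\int_r^\infty (\chi'(s))^2 s^{-3}\,\rmd s$ in view of $\alpha(\infty)=0$ (so $\alpha\geq 0$), and, writing $\beta=(1-2Gm(r)/(c^2 r))^{-1}$ for a mass function $m$, the $\beta$-equation becomes the first-order relation $m'(r)=(Q^2/2c^2)\,r^{-2}\big(1-\chi^2-\varkappa^{-2}\beta^{-1}(\chi')^2\big)$ with $m(0)=M_{\mathrm{bare}}$. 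Hence $\alpha$ and $\beta$ are recovered from $\chi$ by quadrature (the recovery of $\beta$ being a mild implicit relation, disposed of by a short sub-iteration because of the $\beta^{-1}(\chi')^2$ term), and the only genuinely differential part of the system is the third equation, which is \emph{linear and homogeneous} of second order in $\chi$ with coefficients depending on the solution only through $\alpha$ and $\beta$. The appropriate boundary data are $\chi(\infty)=0$, selecting the solution recessive at infinity, together with $\chi(0)=1$: a Frobenius analysis at the origin gives indicial exponents $0$ and $3$ or $4$ (according as $M_{\mathrm{bare}}$ does or does not vanish), so both local solutions are regular, and finiteness of $\varphi(0)$ forces precisely the normalization $\chi(0)=1$, recovering Lemma~\ref{lemma_sign_chi}.

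Next I would solve the reduced problem, obtained by freezing the metric at its electrovacuum value $\alpha_0\equiv 0$, $\beta_0(r)=(1-2GM_{\mathrm{bare}}/(c^2r))^{-1}$ --- a globally regular non-positive-mass Schwarzschild metric, positive on all of $(0,\infty)$. With these coefficients the $\chi$-equation is classical: an affine change of independent variable $s=4\sfh+2\varkappa r$ together with an exponential--power change of dependent variable bring it to a form of Kummer's equation \eqref{Kummer} (with $a=1-\sfh$, $b=2$) whose solution subdominant at infinity is $U_\sfh$ by \eqref{U_polyrate}, and one rescales it so that $\chi_0(0)=1$ --- which is possible exactly because $U_\sfh(4\sfh)\neq 0$, the ``proof by plotting'' input of Figure~\ref{fig_U4}. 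I would then promote this to $U_\sfh(s)>0$ for all $s\geq 4\sfh$: writing Kummer's equation in the self-adjoint form $(s^2 e^{-s}w')'=(1-\sfh)\,s e^{-s}w$ and using that $w=U_\sfh>0$ near $s=\infty$ by \eqref{Un_asymp}, a largest-zero argument shows $w$ cannot vanish on $[4\sfh,\infty)$ without contradicting $U_\sfh(4\sfh)>0$ (for $\sfh<1$ directly, and with minor extra care otherwise). Feeding $\chi_0$ through the formulas of the previous paragraph and through the energy integral \eqref{energy_GR} yields the reduced potential $\rng{\varphi}$, the reduced energy $\rng{\mathcal{E}}$ and the reduced ADM mass $M_{\mathrm{bare}}+c^{-2}\rng{\mathcal{E}}$; the integral representation \eqref{Un_integral_rep} supplies the two-sided bounds on $U_\sfh$, $U_\sfh'$, $U_\sfh''$ and their ratios that make all of this effective for $0\leq\sfh<1$, while for $\sfh\geq 1$ only \eqref{Un_asymp} is available, which is why the statements there are merely qualitative.

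For the full system I would write $\chi=\chi_0+\psi$ and recast the $\chi$-equation as $L_0\psi=(L_0-L)(\chi_0+\psi)$, where $L_0$ is the reference operator just constructed and $L$ carries the coefficients $\alpha=\mathcal{A}[\chi]$, $\beta=\mathcal{B}[\chi]$ assembled from the quadratures of the first step. Since $\alpha$ and $\beta-\beta_0$ are $O(\sfg)$, the right-hand side is a small, Lipschitz perturbation. Inverting $L_0$ by its Green's function --- explicit in terms of $U_\sfh$ and the Kummer function of the first kind, well defined because the homogeneous Dirichlet problem for $L_0$ on $(0,\infty)$ has only the trivial solution --- on weighted supremum-norm spaces encoding the $r^{0}/r^{3\text{ or }4}$ behavior at the origin and the $e^{-\varkappa r}(1+4\sfh+2\varkappa r)^{\pm 1+\sfh}$ behavior at infinity, the map $\psi\mapsto L_0^{-1}(L_0-L)(\chi_0+\psi)$ is a contraction on a small ball once $\sfg<\sfg_0$. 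Its fixed point gives $\chi$, hence $\alpha$ and $\beta$ (with $\beta>0$ throughout, since $m$ stays $O(\sfg)$-close to $M_{\mathrm{bare}}$), and then, by Proposition~\ref{prop_main_system}, the full tensorial solution with the stated $F$, $J$ and $T_{\mu\nu}$. The estimates \eqref{end_alpha}--\eqref{end_const_mass} follow by inserting the contraction bound $\|\psi\|=O(\sfg)$ into the quadratures for $\alpha$ and $m$ and into the energy and ADM-mass integrals, and by comparing with the reference quantities; the explicit numbers in \eqref{numerical_C} come from tracking constants at each step.

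The main obstacle is quantitative rather than conceptual: obtaining bounds for $L_0^{-1}$, equivalently for the Green's function and for $U_\sfh,U_\sfh',U_\sfh''$, that are \emph{uniform in} $\sfh$ on $[0,1)$. As $\sfh\to 0^+$ the construction degenerates --- $U_0(s)=1/s$ blows up at the left endpoint $s=4\sfh\to 0$, and the subleading coefficient of $\chi_0$ at $r=0$ is $O(\sfh^{-1})$ --- and this boundary layer is exactly what produces the $\exp(C_1'/\sfh^2)$ factor in \eqref{end_alpha} and the extremely small admissible $\sfg_0$ for $\sfh\in(0,1)$; controlling it while keeping every constant explicit is where the bulk of the work lies.
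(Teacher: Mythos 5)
Your plan follows essentially the same route as the paper: your ``frozen-metric'' reference problem with $\alpha_0\equiv 0$, $\beta_0=(1-2GM_{\mathrm{bare}}/(c^2r))^{-1}$ is exactly the paper's $\sfg=0$ mass/deviation problem, whose solution is the same Tricomi-function formula (Theorem~\ref{thm_chi_zerog}), and your contraction argument in weighted sup norms built on the Green's function of that reference operator, followed by propagation through the quadratures for $\alpha$, $\beta$, $\varphi$, $\mathcal{E}$ and $M_{\mathrm{ADM}}$, is precisely the paper's proof via Theorem~\ref{thm_nuchi} and Subsection~\ref{subsec_proof}. The only organizational differences are that the paper carries the mass perturbation $\xi$ and the derivative $\omega=\psi'$ as extra components of the fixed-point unknown instead of a sub-iteration for $\beta$, uses the reduction-of-order conjugate $\upsilon$ rather than the Kummer function of the first kind for the second homogeneous solution, and shifts the weight exponent by a small $\eps>0$ so that the inhomogeneity avoids the resonant decay rate in Corollary~\ref{cor_inhomog_zerog} --- a detail you will need when executing your weighted Green's-function bounds.
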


Before the proof, we make some comments on the significance of the theorem's hypotheses and conclusions:

\begin{itemize}

\item \textbf{Approximate solution formulas.} Estimates~\eqref{end_alpha},~\eqref{end_beta2} and~\eqref{end_varphi2} provide expressions of $r$ that approximate the true values of $e^{\alpha(r)}$, $\beta(r)$ and $\varphi(r)$. The corresponding error terms are small due to the presence of $\sfg$ in them, and they decay exponentially for large $r$ in the case of~\eqref{end_alpha} and~\eqref{end_varphi2}. These error terms are also uniformly bounded over all $r\geq 0$, except for the one for $\av{e^\alpha - 1}$ when $\sfh=0$, which blows up at $r=0$ at the slow rate of $r^{-\delta}\ln(e+e/r)$, reflecting the fact that $e^\alpha$ diverges at $r=0$ in the zero-bare-mass case.

\item \textbf{Comparison with RWN.} Estimates~\eqref{end_alpha},~\eqref{end_beta1} and~\eqref{end_varphi1} show how $e^{\alpha(r)}$, $\beta(r)$ and $\varphi(r)$ compare to their RWN counterparts. As expected, the respective error terms decay exponentially for large $r$ (where RWN is expected to accurately approximate the true solution), while not necessarily being small for small $r$ (where RWN becomes singular, contrary to the true solution). Also note there is no $\sfg$ in the error term for~\eqref{end_varphi1}; this is not a typo.

\item \textbf{Finiteness of EM constants and ADM mass.} Estimates~\eqref{end_const_potential},~\eqref{end_const_energy} and~\eqref{end_const_mass} provide good approximations for the values of the electric potential at $r=0$, the total field energy, and the ADM mass, also showing that they are finite.

\item \textbf{Blow-up of $\bm{g_{tt}}$ near the singularity.} Information about the first metric coefficient $g_{tt}(r) = -\beta(r)^{-1}e^{2\alpha(r)}$ is immediately derivable from the estimates~\eqref{end_alpha} and~\eqref{end_beta2} on $e^{\alpha(r)}$ and $\beta(r)$. Interestingly, and contrary to what happens to $g_{rr}(r) = \beta(r)$, we find that $g_{tt}(0)$ is infinite no matter what the value of $\sfh$ is: In the zero-bare-mass case $\sfh = 0$, we have $\beta(0) \approx 1$ but $e^{\alpha(0)} = \infty$, while in the negative-bare-mass case $\sfh > 0$ we have $e^{\alpha(0)} \approx 1$ but $\beta(0)^{-1} = \infty$.

\item \textbf{Formula for $\bm{J}$.} Equation~\eqref{J_thm} expresses the general-relativistic form of the 4-current $J$ corresponding to a static point charge. An explanation for why it looks like this can be found in the appendix, inside of the proof of Proposition~\ref{prop_charge}, which is the only place where we truly need it.

\item \textbf{Physical $\bm{\sfg}$ is small.} Let us assume that the typical value of $\av{Q}$ for a charged atomic particle is between one and ten times the elementary charge:
\begin{equation*}
    1.52\cdot 10^{-4} \ \mathrm{kg}^{\frac12}\cdot\mathrm{m}^{\frac32}\cdot\mathrm{s}^{-1} < \av{Q} < 1.52\cdot 10^{-3} \ \mathrm{kg}^{\frac12}\cdot\mathrm{m}^{\frac32}\cdot\mathrm{s}^{-1}
\end{equation*}
Then, using our crude postulated bounds~\eqref{kappa_estimate1} and~\eqref{kappa_estimate2} for $\varkappa$ as well as the known values
\begin{equation*}
    G = 6.67\cdot 10^{-11} \ \mathrm{kg}^{-1}\cdot\mathrm{m}^3\cdot\mathrm{s}^{-2}, \quad c = 3.00\cdot 10^{8} \ \mathrm{m}\cdot\mathrm{s}^{-1},
\end{equation*}
correct to three significant digits, we calculate that the physical value of $\sfg = GQ^2\varkappa^2/c^4$ is indeed very small and falls within the size of the threshold $\sfg_0$ that is given in the theorem for the case $0\leq\sfh<1$:
\begin{equation} \label{range_g}
    1.90\cdot 10^{-30} < \sfg < 1.90\cdot 10^{-18}.
\end{equation}

\item \textbf{Physical $\bm{\sfh}$ is small if $\bm{M_{\mathrm{ADM}} > 0}$.} Let us analyze the constant $\rng{\mathcal{E}}$ defined in~\eqref{Ebolinha}. Conventional mathematical software often includes the capability to compute Tricomi's function $U(a,b;z)$ --- for example, in MATLAB it is implemented under the name \texttt{KummerU}. In particular, the expression
\begin{equation*}
    \frac{\rng{\mathcal{E}}}{Q^2\varkappa} = \frac{1}{2} - \frac{1}{4\sfh} - \frac{U_\sfh'(4\sfh)}{U_\sfh(4\sfh)}, \quad \sfh > 0,
\end{equation*}
can be plotted without much effort --- we note the general property $(\rmd/\rmd z) U(a,b;z) = -a \ U(a+1,b+1;z)$ of Tricomi's function, which implies that also the derivative term $U_\sfh'(4\sfh) = (\sfh-1) \, U(2-\sfh,3,4\sfh)$ can be easily managed by a computer. The plots in Figure~\ref{fig_Ebolinha} reveal that $\rng{\mathcal{E}} \leq Q^2\varkappa/2$ for all $\sfh\geq 0$.
\begin{figure}[t!]
    \centering
    \begin{subfigure}[t]{0.4\textwidth}
        \centering
        \includegraphics[scale=.6]{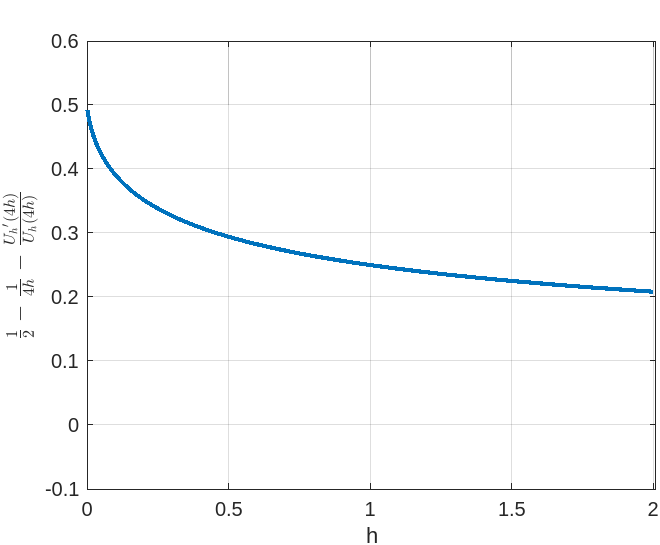}
        \caption{$\rng{\mathcal{E}}/(Q^2\varkappa)$ plotted for $0\leq\sfh\leq 2$. The exact values $1/2$, $1/4$ and $5/24$ at $\sfh = 0$, $1$ and $2$, respectively, are explicitly computable from the formulas $U_0(s) = 1/s$, $U_1(s) = 1$ and $U_2(s) = s-2$.}
    \end{subfigure}
    \quad
    \begin{subfigure}[t]{0.4\textwidth}
        \centering
        \includegraphics[scale=.6]{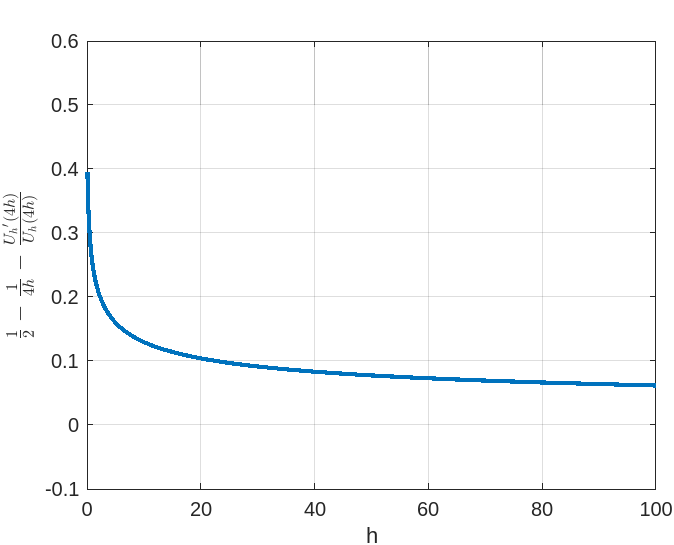}
        \caption{$\rng{\mathcal{E}}/(Q^2\varkappa)$ plotted for $0\leq\sfh\leq 100$, showing that the expression continues to decrease as $\sfh$ grows.}
    \end{subfigure}
    \caption{$\rng{\mathcal{E}}/(Q^2\varkappa)$ stays below $1/2$ for all $\sfh > 0$.}
    \label{fig_Ebolinha}
\end{figure}
As a consequence of this and of estimate~\eqref{end_const_mass}, we find
\begin{equation*}
    c^2 M_{\mathrm{ADM}} \leq c^2 M_{\mathrm{bare}} + \rng{\mathcal{E}} + \sfg Q^2\varkappa C_8 \leq c^2 M_{\mathrm{bare}} + \frac{Q^2\varkappa}{2} + \sfg Q^2\varkappa C_8.
\end{equation*}
If we wish for our spacetime to have a positive ADM mass, it is therefore necessary that the bare mass satisfy
\begin{equation} \label{ineq_baremass_baremass_baremass}
    M_{\mathrm{bare}} > -\frac{Q^2\varkappa}{c^2} \left(\frac{1}{2} + \sfg C_8\right) \approx -\frac{Q^2\varkappa}{2 c^2},
\end{equation}
where we used the fact that $\sfg C_8$ is small compared to $1/2$ (we will see in our proof that $\sfg_0 \ll C_j$ for each $j=1,\ldots,8$). Using the definitions~\eqref{def_g} and~\eqref{newdefinition_n} of $\sfg$ and $\sfh$, inequality~\eqref{ineq_baremass_baremass_baremass} turns into
\begin{equation} \label{ineq_hhh}
    \sfh \leq \sfg \left(\frac{1}{2} + \sfg C_8\right) \approx \frac{\sfg}{2}.
\end{equation}
Hence the physically relevant values of $\sfh$ are indeed small, of the order of at most $10^{-18}$ according to~\eqref{range_g}.

On the other hand, the constants $\sfg_0$ and $C_j$ given in the second line of table~\eqref{numerical_C} are valid not only for $0\leq\sfh\leq 10^{-18}$, but also in the whole range $0<\sfh\leq 1$. This renders them far from optimal when $\sfh$ is as small as $10^{-18}$. At this point we have to mention that our upcoming proofs could easily be adapted to find better values for these constants for small, positive $\sfh$ (they would come out much closer to the ones on the first line of the table, corresponding to the $\sfh = 0$ case), but we opted to leave out this improvement, lest we obscure the arguments with too many calculations.

\item \textbf{Physically relevant range for $\bm{|M_{\mathrm{bare}}|}$.} Let us consider typical values for the mass of an atomic particle, represented in our spacetime by the ADM mass~\eqref{newdefinition_M}:
\begin{equation} \label{massofaparticle}
    10^{-30} \ \mathrm{kg} < M_{\mathrm{ADM}} < 10^{-25} \ \mathrm{kg}.
\end{equation}
Since this assigns a positive value for $M_{\mathrm{ADM}}$, the upper bound~\eqref{ineq_hhh} holds, which makes $\sfh$ very small. The first plot in Figure~\ref{Ebolinha} then shows that $\rng{\mathcal{E}} \approx Q^2\varkappa/2$. We plug this into estimate~\eqref{end_const_mass} and find
\begin{equation} \label{seriously_stop_reading_my_labels}
    \av{ \frac{c^2M_{\mathrm{bare}}}{Q^2\varkappa} - \left( \frac{c^2M_{\mathrm{ADM}}}{Q^2\varkappa} - \frac{1}{2} \right) } \approx \av{ \frac{c^2M_{\mathrm{bare}}}{Q^2\varkappa} - \left( \frac{c^2M_{\mathrm{ADM}}}{Q^2\varkappa} - \frac{\rng{\mathcal{E}}}{Q^2\varkappa} \right) } \leq \sfg C_8 < 0.0551,
\end{equation}
where we used the physical upper bound~\eqref{range_g} on $\sfg$ and the value of $C_8$ from~\eqref{numerical_C}. Moreover, with~\eqref{massofaparticle} and our values for the constants $Q,c,\varkappa$, one checks that $c^2 M_{\mathrm{ADM}}/(Q^2\varkappa) < 3.90\cdot 10^{-12}$ is negligible in comparison to $1/2$ and can practically be left out of~\eqref{seriously_stop_reading_my_labels}. This gives us an approximate range
\begin{equation*}
    -0.556 < \frac{c^2M_{\mathrm{bare}}}{Q^2\varkappa} < -0.444.
\end{equation*}
By then applying to this our known values/ranges for the physical constants, we finally arrive at
\begin{equation} \label{baremassofaparticle}
    1.13 \cdot 10^{-14} \ \mathrm{kg} < \av{ M_{\mathrm{bare}} } < 1.43 \cdot 10^{-7} \ \mathrm{kg}.
\end{equation}
Hence the physically meaningful values for $\av{ M_{\mathrm{bare}} }$ in the E-M-BLTP spacetimes are seen to be several orders of magnitude larger than the particle's inertial mass $M_{\mathrm{ADM}}$.

\end{itemize}

The rest of this section is dedicated to proving Theorem~\ref{main_theorem}. We begin with some simplifications followed by auxiliary results, some with entire subsections dedicated to them. The proof is then concluded in Subsection~\ref{subsec_proof}. The core of the argument will be to treat the problem as a perturbation of the $G=0$ equations, which is a valid strategy because, for a point charge, gravity can be said to be much weaker than electromagnetic effects --- a fact expressed by the smallness of the dimensionless constant $\sfg$.

To begin simplifying the E-M-BLTP system, we point out that the unknown $\alpha$ does not appear on the right side of the $\alpha$ and $\beta$ equations in~\eqref{embltp}. To add to this, if we expand the derivative on the left side of the $\chi$ equation, cancel the factor $e^\alpha$ from both sides and use the $\alpha$ and $\beta$ equations to simplify everything, the result is also an equation from which $\alpha$ is absent:
\begin{equation} \label{chi_improved}
    \frac{1}{\varkappa^{-2}} \frac{\rmd}{\rmd r}\left( \frac{\chi'}{r^2\beta} \right) = \frac{GQ^2}{c^4 \varkappa^{4}} \frac{(\chi')^3}{r^5\beta} + \frac{\chi}{r^2}.
\end{equation}
Hence $\beta$ and $\chi$ can be solved for independently from $\alpha$, and the $\alpha$ equation can subsequently be integrated once $\chi$ is known. We may therefore focus our initial attention only on the unknowns $\beta$ and $\chi$. Expanding the derivative on the left side of~\eqref{chi_improved} with help of the $\beta$ equation and then simplifying the result, we find the following second-order equation for $\chi$:
\begin{equation} \label{chi_2ndorder}
    \chi'' = \frac{3-\beta}{r} \chi' + \varkappa^{2}\beta\chi + \frac{GQ^2}{c^4} \frac{\beta(1-\chi^2)\chi'}{r^3}.
\end{equation}
At this point, like in many other situations involving spherical symmetry in GR, it is better to start considering in place of $\beta$ the spacetime's \textit{mass function} $\mu$ --- except that we will actually consider a different, closely related function $\nu$. Both of these quantities are defined next:

\begin{definition} \label{def_mass_function}
    The spacetime's \textbf{mass function} in the coordinates $(ct,r,\theta,\phi)$ is defined by
    \begin{equation} \label{definition_mu}
        \mu(r) := \frac{c^2r}{2G} \left(1 - \frac{1}{\beta(r)}\right) \quad\Longleftrightarrow\quad \beta(r) = \left( 1 - \frac{2 G \, \mu(r)}{c^2r} \right)^{-1} = \frac{c^2r}{c^2r - 2G \, \mu(r)}.
    \end{equation}
    It has physical significance: An observer who interprets the coordinate $r$ as measuring the distance from the singularity will find that $\mu(r)$ measures the amount of mass contained inside the sphere of radius $r$, at least in the Newtonian regime where $G\mu(r) \ll c^2 r$ (see~\cite{parry}). Building on this, we define the \textbf{re-scaled mass function} by absorbing the factor of $G$ in~\eqref{definition_mu} into it:
    \begin{equation} \label{definition_nu}
        \nu(r) := G \, \mu(r) = \frac{c^2r}{2} \left(1 - \frac{1}{\beta(r)}\right) \quad\Longleftrightarrow\quad \beta(r) = \left( 1 - \frac{2\nu(r)}{c^2r} \right)^{-1} = \frac{c^2r}{c^2r - 2\nu(r)}.
    \end{equation}
    Note the dimensions
    \begin{equation*}
        \dim(\mu) = \dimM = \dimMM, \quad \dim(\nu) = \dimM\cdot\dim(G) = \dimLL^3 \dimTT^{-2}.
    \end{equation*}
\end{definition}

\begin{remark}
    One reason to consider $\nu$ instead of $\mu$ in our calculations has to do with our proof method: We will later decompose $\nu$ in the form $\rng{\nu} + G\xi$, where $\rng{\nu}$ solves the first equation of the E-M-BLTP system for $G=0$, while $\xi$ is a new unknown such that $G$ is supposed to be small in comparison to $\nn{\xi}$ (for a norm that will be defined in our proof). If we instead applied the same idea to $\mu$, that is, if we considered the decomposition $\mu = \rng{\mu} + G\rho$ for a new unknown $\rho$, we would find that $\rho = \xi/G$ is too large for the proof to work. Another way of thinking about this is that, by writing our equations in terms of $\nu = G\mu$ instead of $\mu$, we ensure that our proof method is a perturbation only with respect to the parameter $G$ that appears in Einstein's equations $G_{\mu\nu} = (8\pi G/c^4) T_{\mu\nu}$, and not with respect to the $G$ that appears in the definition~\eqref{definition_mu} of the mass function.
\end{remark}

The differential equations~\eqref{embltp},~\eqref{chi_2ndorder} for $\beta$ and $\chi$, when rewritten in terms of $\nu$ and $\chi$, are checked to become
\begin{equation} \label{system_nu_chi_dim}
    \left\{\begin{array}{l}
        \nu' = \dfrac{GQ^2}{c^2}\left(\dfrac{(1+\chi)(1-\chi)}{2r^2} - \varkappa^{-2}\dfrac{(c^2r-2\nu)(\chi')^2}{2c^2r^3}\right), \\ \\
        \chi'' = \dfrac{2(c^2r-3\nu)}{r(c^2r-2\nu)}\chi' + \dfrac{\varkappa^2 c^2r}{c^2r-2\nu}\chi + \dfrac{GQ^2}{c^2}\dfrac{(1+\chi)(1-\chi)\chi'}{r^2(c^2r-2\nu)},
    \end{array}\right.
\end{equation}
while the definitions~\eqref{bare} and~\eqref{ADM} of the bare and ADM masses give
\begin{equation} \label{nu_and_masses}
    \nu(0) = G M_{\mathrm{bare}}, \quad \nu(\infty) = G M_{\mathrm{ADM}}.
\end{equation}
Next, since there are three independent parameters that will remain constant in this proof, namely $Q,c,\varkappa$, they can be used to define characteristic scales of mass, length and time (and therefore also of any other dimension) for our problem:
\begin{equation*}
    \dimM = \mathrm{dim}(Q^2c^{-2}\varkappa), \quad \dimL = \mathrm{dim}(\varkappa^{-1}), \quad \dimT = \mathrm{dim}(c^{-1}\varkappa^{-1}).
\end{equation*}
(The only reason why we waited so long to finally mention this simplification is that there is no $\varkappa$ in the conventional Maxwell theory, which we also wished to expand upon for comparison with BLTP theory). Using these scales, all variables and constants in the theorem can be made dimensionless: We define $\widetilde{r} = \varkappa r$ as well as $\widetilde{f}(\widetilde{r}) = f(\varkappa^{-1}\widetilde{r})$ for each of the already dimensionless functions $f=\alpha,\beta,\chi$. Similarly, we set $\widetilde{\varphi}(\widetilde{r}) = Q^{-1} \varphi(\varkappa^{-1} \widetilde{r})$, as well as $\widetilde{\mu}(\widetilde{r}) = Q^{-2}c^2\varkappa^{-1}\mu(\varkappa^{-1}\widetilde{r})$ and analogously for $\widetilde{\nu}$. Then the equations obeyed by $\widetilde{\alpha}(\widetilde{r})$ etc.~are the same as those for the original functions, except that the constants $Q,c,\varkappa$ get replaced by 1 and the constant $G$ gets replaced by its corresponding re-scaled parameter $\widetilde{G} = \sfg$ --- see~\eqref{def_g}. To recover the original objects from the dimensionless ones, they simply need to be multiplied by the only combination of $Q,c,\varkappa$ that produces their correct dimension; for example, for the field energy $\mathcal{E}$, measured in units of $\dimMM\dimLL^2\dimTT^{-2}$, the following relation holds:
\begin{equation*}
    \mathcal{E} = \dimM \dimL^2 \dimT^{2} \widetilde{\mathcal{E}} = (Q^2c^{-2}\varkappa)(\varkappa^{-1})^2(c^{-1}\varkappa^{-1})^{-2} \widetilde{\mathcal{E}} = Q^2\varkappa \, \widetilde{\mathcal{E}}.
\end{equation*}
We stop writing the tildes from now on, but they are implied above all constants, functions and their arguments in everything that follows. The claims of the theorem will only be proved for these dimensionless variables, and the reader is tasked with recovering the original version of these claims by applying the procedure just explained. We also note that definition~\eqref{newdefinition_n} of the constant $\sfh$ is now written as
\begin{equation} \label{nu_zero}
    \sfh = \sfg \av{M_{\mathrm{bare}}} = -\sfg M_{\mathrm{bare}} = -\nu(0),
\end{equation}
where we used~\eqref{nu_and_masses}. Putting together the differential equations~\eqref{system_nu_chi_dim} and boundary conditions~\eqref{nu_zero},~\eqref{chi_zero} and~\eqref{chi_infty} for the re-scaled mass function $\nu$ and the relative-deviation scalar $\chi$, we arrive at the following boundary-value problem, which constitutes the core of the whole proof:

\begin{definition} \label{def_mdproblem} The \textbf{mass/deviation problem of parameters} $\bm{(\sfg,\sfh)}$, where $\sfg>0$ and $\sfh\geq 0$ are given \textit{a priori}, consists of finding a solution $(\nu,\chi)\in C^1\big((0,\infty)\big)\times C^2\big((0,\infty)\big)$ to the equations
\begin{equation} \label{system_nu_chi}
    \left\{\begin{array}{l}
        \nu' = \sfg\left(\dfrac{(1+\chi)(1-\chi)}{2r^2} - \dfrac{(r-2\nu)(\chi')^2}{2r^3}\right), \\ \\
        \chi'' = \dfrac{2(r-3\nu)}{r(r-2\nu)}\chi' + \dfrac{r}{r-2\nu}\chi + \dfrac{\sfg(1+\chi)(1-\chi)\chi'}{r^2(r-2\nu)}
    \end{array}\right.
\end{equation}
such that
\begin{equation} \label{conditions_nu_chi}
    \nu(0) = -\sfh, \quad \chi(0) = 1, \quad \chi(\infty) = 0.
\end{equation}
\end{definition}

We now dedicate three subsections to solving this problem. The rest of the claims in the theorem will follow easily afterwards. We will be able to prove that, for all $\sfh\geq 0$, a solution exists as long as $\sfg$ is suitably small. The idea will be to look for solutions in the form $\nu = \rng{\nu} + \sfg\xi$ and $\chi = \rng{\chi} + \sfg\psi$, with $(\rng{\nu},\rng{\chi})$ being a solution to the problem with parameter $\sfg=0$, and with $(\xi,\psi)$ being a new pair of unknowns.

\subsection{Integral lemmas}
\label{subsec_lemmas}

Here we collect three auxiliary results on the asymptotic behavior of certain integrals.

\begin{lemma} \label{lemma_U}
    For all $0\leq\sfh<1$, we have
    \begin{equation} \label{estimate_U}
        \frac{0.232}{(1-\sfh) \, \Gamma(1-\sfh)} s^{-1}(1+s)^\sfh \leq U_{\sfh}(s) \leq \frac{1.37}{(1-\sfh) \, \Gamma(1-\sfh)} s^{-1}(1+s)^\sfh \quad\text{for all } s\geq 4\sfh,
    \end{equation}
    where $U_\sfh$ is the function defined in~\eqref{def_Un}.
\end{lemma}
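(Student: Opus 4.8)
The plan is to derive both inequalities directly from the integral representation~\eqref{Un_integral_rep}, used in its second form, so that proving~\eqref{estimate_U} reduces to sandwiching the integral $I(s):=\int_0^\infty e^{-t}t^{-\sfh}(t+s)^\sfh\ \rmd t$ between $\tfrac{0.232}{1-\sfh}(1+s)^\sfh$ and $\tfrac{1.37}{1-\sfh}(1+s)^\sfh$ for $s\geq 4\sfh$: indeed $\Gamma(1-\sfh)\,U_\sfh(s)=I(s)/s$, and $\Gamma(2-\sfh)=(1-\sfh)\Gamma(1-\sfh)$. Everything then follows from elementary pointwise bounds on $(t+s)^\sfh$ together with $\int_0^\infty e^{-t}t^{-\sfh}\ \rmd t=\Gamma(1-\sfh)$ and $\int_0^1 t^{-\sfh}\ \rmd t=(1-\sfh)^{-1}$ (both valid since $\sfh<1$).

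\emph{Upper bound.} I would split $I(s)=\int_0^1+\int_1^\infty$. For $t\leq 1$ one has $(t+s)^\sfh\leq(1+s)^\sfh$ and $e^{-t}\leq 1$, so $\int_0^1 e^{-t}t^{-\sfh}(t+s)^\sfh\ \rmd t\leq(1+s)^\sfh\int_0^1 t^{-\sfh}\ \rmd t=(1+s)^\sfh/(1-\sfh)$. For $t\geq 1$ one has $t+s\leq t(1+s)$, hence $(t+s)^\sfh\leq t^\sfh(1+s)^\sfh$ and $\int_1^\infty e^{-t}t^{-\sfh}(t+s)^\sfh\ \rmd t\leq(1+s)^\sfh\int_1^\infty e^{-t}\ \rmd t=(1+s)^\sfh e^{-1}$. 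Adding these, $I(s)\leq(1+s)^\sfh\left(\tfrac{1}{1-\sfh}+\tfrac1e\right)=\tfrac{(1+s)^\sfh}{1-\sfh}\left(1+\tfrac{1-\sfh}{e}\right)\leq\tfrac{1+1/e}{1-\sfh}(1+s)^\sfh$, and $1+1/e<1.37$. This half needs no restriction on $s$.

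\emph{Lower bound.} The trivial estimate $(t+s)^\sfh\geq s^\sfh$ gives $I(s)\geq s^\sfh\Gamma(1-\sfh)$, i.e.\ $U_\sfh(s)\geq s^{\sfh-1}$; dividing the claimed lower bound by this, it remains to check $(1-\sfh)\Gamma(1-\sfh)\left(\tfrac{s}{1+s}\right)^\sfh=\Gamma(2-\sfh)\left(\tfrac{s}{1+s}\right)^\sfh\geq 0.232$ for $s\geq 4\sfh$. On one hand $\Gamma(2-\sfh)\geq(1-\sfh)\int_0^1 e^{-t}t^{-\sfh}\ \rmd t\geq(1-\sfh)\,e^{-1}(1-\sfh)^{-1}=e^{-1}$. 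On the other hand — and this is the only place the hypothesis $s\geq 4\sfh$ is used — for $0<\sfh<1$ we have $1+\tfrac1s\leq 1+\tfrac1{4\sfh}$, and since $x\mapsto(1+\tfrac1x)^x$ is increasing on $(0,\infty)$ (its logarithm has derivative $\ln(1+\tfrac1x)-\tfrac1{x+1}>0$) and $4\sfh<4$,
$$\left(1+\tfrac{1}{4\sfh}\right)^{\sfh}=\left(\left(1+\tfrac{1}{4\sfh}\right)^{4\sfh}\right)^{1/4}\leq\left(\left(1+\tfrac14\right)^{4}\right)^{1/4}=\tfrac54,$$
so $\left(\tfrac{s}{1+s}\right)^\sfh\geq\tfrac45$ (immediate when $\sfh=0$). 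Hence $\Gamma(2-\sfh)\left(\tfrac{s}{1+s}\right)^\sfh\geq\tfrac{4}{5e}>0.232$. (In fact this argument yields~\eqref{estimate_U} with the sharper constants $4/(5e)$ and $1+1/e$ in place of $0.232$ and $1.37$.)

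\emph{Main obstacle.} There is no serious obstacle: once~\eqref{Un_integral_rep} is available the lemma is a short computation. The only genuinely non-automatic point is in the lower bound, namely recognizing that the ratio $(1+s)^\sfh/s^\sfh$ is \emph{not} uniformly bounded on $(0,\infty)$ but \emph{is} bounded (by $5/4$) on the range $s\geq 4\sfh$ appearing in the hypothesis, the clean constant arising precisely because the threshold is a fixed multiple of $\sfh$ and because $(1+1/x)^x$ is monotone; the rest is bookkeeping of constants.
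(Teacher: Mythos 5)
Your proof is correct. The upper bound is essentially the paper's own argument (split at $t=1$, use $t+s\leq 1+s$ on $[0,1]$ and $t+s\leq t(1+s)$ on $[1,\infty)$, arriving at $\tfrac{1}{1-\sfh}+e^{-1}$). The lower bound takes a mildly different route: the paper discards the tail $t>1$, bounds $\tfrac{t+s}{1+s}\geq\tfrac{4\sfh}{5}$ pointwise on $[0,1]$ using $s\geq 4\sfh$, and then minimizes $(4\sfh/5)^{\sfh}$ over $\sfh\in[0,1)$, which produces the factor $e^{-5/(4e)}e^{-1}\approx 0.232$; you instead keep the full integral via $(t+s)^\sfh\geq s^\sfh$ (recovering $\Gamma(1-\sfh)$ exactly), and then control the ratio $(s/(1+s))^\sfh$ by the monotonicity of $x\mapsto(1+1/x)^x$, giving $(1+\tfrac{1}{4\sfh})^\sfh\leq\tfrac54$, together with the crude but sufficient bound $\Gamma(2-\sfh)\geq e^{-1}$. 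Both proofs use the hypothesis $s\geq 4\sfh$ in the same essential place; your version avoids the single-variable optimization over $\sfh$ and in fact yields the slightly sharper constant $4/(5e)\approx 0.294$ in place of $0.232$ (which of course still implies the stated inequality), at the cost of the extra, easily verified monotonicity fact.
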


\begin{proof}
    Given the integral representation~\eqref{Un_integral_rep} for $U_{\sfh}$, valid precisely for $0\leq\sfh<1$, we will be done if we can bracket the expression
    \begin{equation}
        \Gamma(1-\sfh) \, U_{\sfh}(s) \, s(1+s)^{-\sfh} = \int_0^\infty e^{-t} t^{-\sfh} \left(\frac{t+s}{1+s}\right)^\sfh \ \rmd t
    \end{equation}
    between multiples of $(1-\sfh)^{-1}$. Using the fact that $t+s \leq 1+s$ if $t\leq 1$, and $t+s \leq t+ts$ if $t\geq 1$, we find the upper bound
    \begin{multline*}
        \Gamma(1-\sfh) \, U_{\sfh}(s) \, s(1+s)^{-\sfh} \leq \int_0^1 e^{-t} t^{-\sfh} \ \rmd t + \int_1^\infty e^{-t} t^{-\sfh} t^\sfh \ \rmd t \\
        < \int_0^1 t^{-\sfh} \ \rmd t + \int_1^\infty e^{-t} \ \rmd t = \frac{1}{1-\sfh} + e^{-1} < \frac{1.37}{1-\sfh}.
    \end{multline*}
    To find a lower bound, we ignore values $t>1$ and use the fact that $(t+s)/(1+s)$ is increasing in $s$ for fixed $t\leq 1$. Thus, considering that $s\geq 4\sfh$ and $\sfh<1$, we have
    \begin{equation}
        \frac{t+s}{1+s} \geq \frac{t+4\sfh }{1+4\sfh } \geq \frac{4\sfh }{1+4\sfh } > \frac{4\sfh }{5},
    \end{equation}
    which gives
    \begin{equation*}
        \Gamma(1-\sfh) \, U_{\sfh}(s) \, s(1+s)^{-\sfh} \geq \left( 
        \min_{0\leq \sfh<1} \left(\frac{4\sfh }{5}\right)^\sfh \right) \int_0^1 e^{-t} t^{-\sfh} \ \rmd t > e^{-5/(4e)}e^{-1}\int_0^1 t^{-\sfh} \ \rmd t > \frac{0.232}{1-\sfh}. \qedhere
    \end{equation*}
\end{proof}

\begin{lemma} \label{lemma_int_quali}
    Let $c,d,k\in\bbR$ with $c\neq -1$. Let $f:(0,\infty)\longrightarrow\bbR$ be a continuous, strictly positive function such that
    \begin{equation} \label{characteristic_exponents}
        \lim_{r\to 0^+} r^{-c} f(r) \in (0,\infty), \quad \lim_{r\to\infty} r^{-d} f(r) \in (0,\infty),
    \end{equation}
    and consider the following integrals for $r>0$:
    \begin{align*}
        I_0(r) &:= \int_0^r f(s) e^{ks} \ \rmd s, \quad\ \text{well-defined when } c>-1, \\
        I_\infty(r) &:= \int_r^\infty f(s) e^{ks} \ \rmd s, \quad\text{well-defined when } (k<0) \text{ or } (k=0, \, d<-1).
    \end{align*}
    Then the following description of the behavior of $I_0(r)$ and $I_\infty(r)$ holds (see Definition~\ref{def_equivalent} for the symbol $\simeq$):
    \begin{equation*}
        \begin{array}{lll}
            \text{(a) If } c>-1,\, k<0, &\text{then} & \ I_0(r) \simeq r^{1+c}(1+r)^{-1-c}. \\
            \text{(b) If } c>-1,\, k=0,\, d\neq -1, &\text{then} & \ I_0(r) \simeq r^{1+c}(1+r)^{-c+d}. \\
            \text{(c) If } c>-1,\, k>0, &\text{then} & \ I_0(r) \simeq r^{1+c}(1+r)^{-1-c+d}e^{kr}. \\
            \text{(d) If } c<-1,\, k<0, &\text{then} & I_\infty(r) \simeq r^{1+c}(1+r)^{-1-c+d}e^{kr}. \\
            \text{(e) If } c<-1,\, k=0,\, d<-1, &\text{then} & I_\infty(r) \simeq r^{1+c}(1+r)^{-c+d}. \\
            \text{(f) If } c>-1,\, k<0, &\text{then} & I_\infty(r) \simeq (1+r)^{d}e^{kr}. \\
            \text{(g) If } c>-1,\, k=0,\, d<-1, &\text{then} & I_\infty(r) \simeq (1+r)^{1+d}. \\
        \end{array}
    \end{equation*}
\end{lemma}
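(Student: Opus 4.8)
The plan is to prove each of the seven cases by the same routine strategy: reduce the integral of $f(s)e^{ks}$ to the integral of $s^e e^{ks}$ (where $e$ is the relevant characteristic exponent, $c$ near $0$ and $d$ near $\infty$) via the bracketing relation $\simeq$ of Definition~\ref{def_equivalent}, and then evaluate the resulting elementary-type integral asymptotically. The key observation making this work is that $\simeq$ is preserved under integration: if $f \simeq g$ on $(0,\infty)$ with both continuous and positive, then $\int_0^r f \simeq \int_0^r g$ and $\int_r^\infty f \simeq \int_r^\infty g$ whenever these are finite. This follows immediately from the two-sided bound $C_1 g \le f \le C_2 g$, which integrates term by term.

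First I would dispose of the two-sided comparison: from hypothesis~\eqref{characteristic_exponents}, the remark following Definition~\ref{def_equivalent} gives $f(s) \simeq s^c(1+s)^{d-c}$, because the quotient $f(s)/\big(s^c(1+s)^{d-c}\big)$ has finite nonzero limits as $s\to 0^+$ (limit equal to $\lim r^{-c}f$) and as $s\to\infty$ (limit equal to $\lim r^{-d}f$), and is continuous and positive on $(0,\infty)$. Hence $f(s)e^{ks} \simeq s^c(1+s)^{d-c}e^{ks}$, and by the integration-preserves-$\simeq$ principle it suffices to prove each case with $f(s)$ replaced by the model function $\phi(s) := s^c(1+s)^{d-c}$. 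For the $I_0$ cases (a), (b), (c) the integrand near $s=0$ behaves like $s^c$, which is integrable precisely when $c>-1$, matching the stated well-definedness; for the $I_\infty$ cases the integrand near $s=\infty$ behaves like $s^d e^{ks}$, integrable when $k<0$ or ($k=0$, $d<-1$), again matching.

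Next I would carry out the model estimates. For $I_0$ with the model $\phi$: split the integral at $s=1$. On $(0,\min(r,1))$ one has $\phi(s) \simeq s^c$, contributing $\simeq r^{1+c}$ when $r\le 1$; on $(1,r)$ (only relevant when $r>1$) one has $\phi(s) \simeq s^d$, and $\int_1^r s^d e^{ks}\,ds$ is handled by the standard asymptotics: $\simeq r^{d-?}$-type polynomial growth if $k=0$ (giving $r^{1+d}$ for $d>-1$, bounded for $d<-1$, i.e.\ the $(1+r)^{-c+d}$ factor after recombining — case (b)), or $\simeq r^{d}e^{kr}$ if $k>0$ via integration by parts (case (c)), or a convergent integral plus exponentially small tail if $k<0$ (case (a), where $I_0(r) \to$ const as $r\to\infty$, matching $r^{1+c}(1+r)^{-1-c}\to$ const). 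Assembling the two pieces into a single expression valid for all $r>0$ — using that $r^{1+c}(1+r)^{e}$ interpolates between $r^{1+c}$ for $r\le 1$ and $r^{1+c+e}$ for $r\ge 1$ — produces exactly the stated right-hand sides. The $I_\infty$ cases (d)--(g) are dual: for (d),(f) with $k<0$ the dominant contribution is the integrable exponential tail, estimated by integration by parts to get $\simeq$ the value of the integrand near the lower endpoint $r$ times $|k|^{-1}$, which for $r$ large behaves like $r^d e^{kr}$ (or $r^{1+c}e^{kr}$-adjusted when $c<-1$ and the near-zero part of the tail dominates, namely case (d)); for (e),(g) with $k=0$ and $d<-1$ the tail $\int_r^\infty s^d\,ds \simeq r^{1+d}$ for large $r$, with the small-$r$ behavior determined by whether $c<-1$ (tail blows up like $r^{1+c}$, case (e)) or $c>-1$ (tail stays bounded, case (g)).

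The only genuinely delicate point — and the one I would spend real care on — is getting a single closed-form bracketing function valid uniformly on all of $(0,\infty)$, rather than separate estimates for small and large $r$; the mechanism for this is precisely the ``finite nonzero limits at both ends $\Rightarrow \simeq$'' criterion from the remark after Definition~\ref{def_equivalent}, applied one last time to the quotient of $I_0(r)$ (or $I_\infty(r)$) by the candidate right-hand side. Concretely, one shows that $I_0(r)/\big(r^{1+c}(1+r)^{-1-c}\big)$ (case (a)) extends continuously and positively to $[0,\infty]$: at $r\to 0^+$ this is $\big(\int_0^r f e^{ks}\big)/r^{1+c} \to \tfrac{1}{1+c}\lim_{s\to0}s^{-c}f(s) > 0$ by L'Hôpital, and at $r\to\infty$ it tends to $\int_0^\infty f e^{ks}\,ds / 1 \in (0,\infty)$; and similarly for the other six cases, each requiring one L'Hôpital computation at each endpoint. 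This reduces the entire lemma to seven pairs of limit computations, which are routine but must be done honestly to confirm the exponents $1+c$, $-1-c$, $-c+d$, $-1-c+d$, $d$, $1+d$ appearing in the statement. I do not expect any conceptual obstacle beyond bookkeeping; the potential pitfall is mis-tracking which of $c$ or $d$ controls each endpoint in the mixed cases (d) and (e), where $c<-1$ makes the near-origin behavior of an $I_\infty$ integral non-integrable at $0$ yet still finite on $(r,\infty)$.
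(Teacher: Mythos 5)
Your proposal is correct and takes essentially the same route as the paper: reduce $f$ to the model $s^c(1+s)^{d-c}$ via the bracketing relation, then verify each case by showing that the ratio of $I_0$ (or $I_\infty$) to the candidate bracket has finite, nonzero limits as $r\to 0^+$ and $r\to\infty$, computed by L'H\^opital at each endpoint --- exactly the paper's argument (which likewise only writes out representative cases), so the intermediate ``split at $s=1$'' discussion is redundant once the quotient-limit criterion is invoked. One caveat you share with the paper: an honest endpoint check in case (b) with $d<-1$ shows the stated bracket $r^{1+c}(1+r)^{-c+d}$ decays at infinity while $I_0(r)$ tends to a positive constant (so there the correct bracket is $r^{1+c}(1+r)^{-1-c}$, as in case (a)); this is a defect of the statement itself rather than of your method, and it does not affect the paper's later applications, where only $d>-1$ or the trivially weaker upper bound is used.
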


\begin{proof}
    Condition~\eqref{characteristic_exponents} quickly implies $f(r) \simeq r^c(1+r)^{-c+d}$. Therefore, instead of $I_0$ and $I_\infty$, it is enough to prove the lemma for the integrals
    \begin{equation*}
        J_0(r) := \int_0^r s^c(1+s)^{-c+d}e^{ks} \ \rmd s, \quad J_\infty(r) := \int_r^\infty s^c(1+s)^{-c+d}e^{ks} \ \rmd s.
    \end{equation*}
    Calling $g_1(r),\ldots,g_7(r)$ the functions of $r$ that appear on the right side of the $\simeq$ symbol in items (a)--(g) of the lemma, we can prove each of the claims by showing that the limits $\lim_{r\to 0^+}$ and $\lim_{r\to\infty}$ of the ratios $J_0/g_m$, $m=1,2,3$, and $J_\infty/g_m$, $m=4,5,6,7$, are finite and nonzero. In each case, these limits can be found either directly or with help of the L'Hôpital Rule when applicable. We only show items (a) and (d) as examples, the others being completely analogous:
    \begin{itemize}
    
        \item[(a)] Since the integrand $r^c(1+r)^{-c+d}e^{kr}$ is integrable near $\infty$ due to $k<0$, the limit $J_0(\infty)$ is some positive number, just like $g_1(\infty)$ for the function $g_1(r) = r^{1+c}(1+r)^{-1-c}$. Thus $J_0(\infty)/g_1(\infty) \in (0,\infty)$. As for the $\lim_{r\to 0^+}$ limit, the L'Hôpital Rule is applicable due to both $J_0(0)$ and $g_1(0)$ being null (note that $r^{1+c}$ vanishes at $r=0$), and it gives
        \begin{align*}
            \lim_{r\to 0^+} \frac{J_0(r)}{g_1(r)} &= \lim_{r\to 0^+} \frac{J_0'(r)}{g_1'(r)} \\
            &= \lim_{r\to 0^+} \frac{r^c(1+r)^{-c+d}e^{kr}}{r^{1+c}(1+r)^{-1-c}\big( (1+c)r^{-1} + (-1-c)(1+r)^{-1} \big)} \\
            &= \lim_{r\to 0^+} \frac{(1+r)^d e^{kr}}{(1+c)(1+r)^{-1}\big( 1 - r(1+r)^{-1} \big)} \\
            &= \frac{1}{1+c} \in (0,\infty).
        \end{align*}
    
        \item[(d)] In this case, with $g_4(r) = r^{1+c}(1+r)^{-1-c+d}e^{kr}$, we have $g_4(0) = J_\infty(0) = \infty$ and $g_4(\infty) = J_\infty(\infty) = 0$. Hence L'Hôpital is applicable to find $\lim_{r\to a}$ at both extremes $a = 0^+$ and $a=\infty$:
        \begin{align*}
            \lim_{r\to a} \frac{J_\infty(r)}{g_4(r)} &= \lim_{r\to a} \frac{J_\infty'(r)}{g_4'(r)} \\
            &= \lim_{r\to a} \frac{-r^c(1+r)^{-c+d}e^{kr}}{r^{1+c}(1+r)^{-1-c+d}e^{kr}\big( (1+c)r^{-1} + (-1-c+d)(1+r)^{-1} + k \big)} \\
            &= \lim_{r\to a} \frac{-(1+r)}{(1+c) + (-1-c+d)r(1+r)^{-1} + kr},
        \end{align*}
        which equals $-1/(1+c) \in (0,\infty)$ at $a=0^+$ and $-1/k \in (0,\infty)$ at $a=\infty$. \qedhere
    \end{itemize}
\end{proof}

The next lemma provides bounds for 14 different integrals that will show up at various points of our proofs.

\begin{lemma} \label{lemma_int_quanti}
    Let $\sfh\geq0$ and $0<\eps\leq 1/2$ be real constants. Define, for $r>0$,
    \begin{equation*}
        p(r) = 4\sfh +2r, \quad q(r) = 1+4\sfh +2r.
    \end{equation*}
    Then there exist constants $K_{1},\ldots,K_{14} > 0$, possibly depending on $\sfh$ and/or $\eps$, such that for all $r>0$
    \begin{align}
        &\int_0^r s^2 p(s)^{-1} q(s)^\sfh e^{-s} \ \rmd s \leq K_1 \ r^3p(r)^{-1}q(r)^{-2}, \label{lemma_quanti_K1} \\
        &\int_0^r s^2 p(s)^{-2} q(s)^{2\sfh +2\eps} e^{-2s} \ \rmd s \leq K_2 \ r^3p(r)^{-2}q(r)^{-1}, \label{lemma_quanti_K2} \\
        &\int_0^r s p(s)^{-1} q(s)^{-2} \ \rmd s \leq K_3 \ r^2p(r)^{-1}q(r)^{-1}, \label{lemma_quanti_K3} \\
        &\int_0^r s^4 p(s)^{-3} q(s)^{-2+\eps} \ \rmd s \leq K_4 \ r^5p(r)^{-3}q(r)^{-2+\eps}, \label{lemma_quanti_K4} \\
        &\int_0^r s^5 p(s)^{-4} q(s)^{-2+\eps} \ \rmd s \leq K_5 \ r^6p(r)^{-4}q(r)^{-2+\eps}, \label{lemma_quanti_K5} \\
        &\int_0^r s^3 p(s)^{-1} q(s)^{-2-2\sfh } e^{2s} \ \rmd s \leq K_6 \ r^4p(r)^{-1}q(r)^{-3-2\sfh } e^{2r}, \label{lemma_quanti_K6} \\
        &\int_r^\infty s^{-2}q(s)^{\frac32+\sfh}e^{-s} \ \rmd s \leq K_7 \ r^{-1} q(r)^{\frac12+\sfh}e^{-r}, \label{lemma_quanti_K7} \\
        &\int_r^\infty s^{-2}q(s)^{3+2\sfh}e^{-2s} \ \rmd s \leq K_8 \ r^{-1} q(r)^{2+2\sfh}e^{-2r}, \label{lemma_quanti_K8} \\
        &\int_r^\infty p(s)^{-2} q(s)^{1+2\sfh +\eps} e^{-2s} \ \rmd s \leq K_9 \ p(r)^{-1}q(r)^{2\sfh +\eps} e^{-2r}, \label{lemma_quanti_K9} \\
        &\int_r^\infty s p(s)^{-3} q(s)^{1+2\sfh +\eps} e^{-2s} \ \rmd s \leq K_{10} \ p(r)^{-1}q(r)^{2\sfh +\eps} e^{-2r}, \label{lemma_quanti_K10} \\
        &\int_r^\infty s^2 p(s)^{-2} q(s)^{-\frac12 + \sfh} e^{-s} \ \rmd s \leq K_{11} \ q(r)^{-\frac12+\sfh} e^{-r}, \label{lemma_quanti_K11} \\
        &\int_r^\infty s p(s)^{-2} q(s)^{1+2\sfh} e^{-2s} \ \rmd s \leq K_{12} \ p(r)^{-2}q(r)^{2+2\sfh} e^{-2r}, \label{lemma_quanti_K12} \\
        &\int_r^\infty s^2 p(s)^{-1} q(s)^\sfh e^{-s} \ \rmd s \leq K_{13} \ q(r)^{1+\sfh} e^{-r}, \label{lemma_quanti_K13} \\
        &\int_r^\infty s^2 p(s)^{-1} q(s)^\sfh e^{-s} \ \rmd s \geq K_{14} \ q(r)^{1+\sfh} e^{-r}. \label{lemma_quanti_K14}
    \end{align}
    Furthermore, if $0\leq\sfh<1$, the above inequalities hold for the following values of the constants $K_j$:
    \begin{equation} \label{numerical_K}
        \begin{array}{|l||c|c|c|c|c|c|c|c|c|c|c|c|c|c|} \hline
            & K_1 & K_2 & K_3 & K_4 & K_5 & K_6 & K_7 & K_8 & K_9 & K_{10} & K_{11} & K_{12} & K_{13} & K_{14} \\ \hline \hline
            \sfh = 0 & 4.65 & 2.37 & 1.00 & 1/\eps & 1/\eps & 1.30 & 2.00 & 2.00 & 1.00 & 0.500 & 0.250 & 0.350 & 0.500 & 0.250 \\ \hline
            \sfh\in (0,1) & 91.4 & 625 & 1.00 & 1/\eps & 1/\eps & 5.00 & 5.00 & 5.00 & 1.00 & 0.500 & 0.300 & 0.380 & 0.750 & 0.0169 \\ \hline
        \end{array}
    \end{equation}
\end{lemma}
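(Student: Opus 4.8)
The plan is to reduce all fourteen estimates to a small number of elementary one-variable lemmas about integrals of the shape $\int_0^r$ or $\int_r^\infty$ of a monomial-type integrand $s^a\,p(s)^b\,q(s)^c\,e^{ks}$, where $p(s)=4\sfh+2s$ and $q(s)=1+4\sfh+2s$ are both comparable to $1+s$ up to $\sfh$-dependent constants (indeed $p(s)\simeq q(s)\simeq 1+s$ for fixed $\sfh$, and $p(s)\le q(s)\le p(s)+1$). The qualitative part — that each integral is $\simeq$ the claimed right-hand side — is then an immediate consequence of Lemma~\ref{lemma_int_quali}: each integrand satisfies the two-sided power-law behavior~\eqref{characteristic_exponents} at $0$ and $\infty$ with readily computed characteristic exponents $c$ (the power at $0$) and $d$ (the power at $\infty$), and one reads off which of cases (a)–(g) applies from the sign of the exponential rate $k$ and, when $k=0$, from the sign of $d+1$. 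For instance~\eqref{lemma_quanti_K1} is case (c) with $c=1$, $k=-1$, $d=2+\sfh-1=1+\sfh$, giving $I_0(r)\simeq r^2(1+r)^{-1-1+1+\sfh}e^{-r}=r^2(1+r)^{-1+\sfh}e^{-r}$, which matches $r^3p(r)^{-1}q(r)^{-2}\simeq r^3(1+r)^{-3}$... wait — this is precisely the kind of bookkeeping one must do carefully, rewriting the claimed RHS in terms of $1+r$ and checking the exponents agree; I would tabulate $(a,b,c,k)$ and the resulting exponents for all fourteen lines and verify consistency with Lemma~\ref{lemma_int_quali} line by line. The two $\eps$-dependent ones,~\eqref{lemma_quanti_K4} and~\eqref{lemma_quanti_K5}, are the delicate ones: there $k=0$ and the exponent $d$ equals exactly $-1$ when $\eps=0$ (the borderline excluded case of Lemma~\ref{lemma_int_quali}), so the small positive $\eps$ is exactly what pushes $d$ below $-1$ and lands us in case (e) — or, read the other way, keeps $d$ above... one must check the sign convention; this is where I expect to have to think rather than compute.

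The quantitative part — producing the explicit constants in table~\eqref{numerical_K} valid for all $0\le\sfh<1$ — is where the real work lies, and it is the step I expect to be the main obstacle. The strategy here is the standard one for such ``integrate-up'' estimates: for the $\int_0^r$ integrals, I would prove that the ratio $\Phi(r):=\big(\int_0^r \text{integrand}\big)\big/(\text{claimed RHS})$ is bounded by the target constant $K_j$ by differentiating — the inequality $\int_0^r h \le K_j\,G(r)$ with $G(0)=0$ follows if $K_j G'(r)\ge h(r)$ pointwise, i.e. if $h(r)/G'(r)\le K_j$ for all $r>0$, together with the right behavior as $r\to0^+$; one then extremizes $h(r)/G'(r)$ over $r>0$ and, crucially, over $\sfh\in[0,1)$. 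For the $\int_r^\infty$ integrals one argues symmetrically with $G(\infty)=0$, needing $-G'(r)\ge h(r)/K_j$. In practice the cleanest route for many of these is the elementary bound $\int_0^r s^a e^{-s}\,ds \le \Gamma(a+1)$ combined with the observation that $p,q$ are increasing, so e.g. $q(s)^\sfh\le q(r)^\sfh$ inside $\int_0^r$, pulling slowly-varying factors out of the integral; and dually $\int_r^\infty s^a e^{-s}\,ds = \Gamma(a+1,r)$ with the standard tail bound $\Gamma(a+1,r)\le C_a\,r^a e^{-r}$ for $r$ bounded below. Handling the region $r$ near $0$ separately from $r$ large — with the ``$+1$'' in $q$ and the $4\sfh$ in $p$ keeping everything bounded away from zero uniformly in $\sfh$ — is what makes the constants uniform in $\sfh$.

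Concretely, I would organize the quantitative proof as follows. First dispose of the two genuinely explicit ones: $K_3=1$ for~\eqref{lemma_quanti_K3} because $s\,p(s)^{-1}q(s)^{-2}\le \tfrac12 q(s)^{-2}\le \tfrac12(1+r)^{-2}$... again, one checks this gives exactly $r^2 p(r)^{-1}q(r)^{-1}$ after using $p(r)^{-1}\ge q(r)^{-1}\cdot(\text{something})$ — so in fact the honest argument is the differentiation argument above with $K_3=1$ falling out, and similarly $K_4=K_5=1/\eps$ come from $\int_0^r s^a q(s)^{-2+\eps}ds$ where the $q^{-2+\eps}$ is integrable-at-infinity-adjacent and the $1/\eps$ is the residue of $\int (1+s)^{-2+\eps}$-type behavior. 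Then handle~\eqref{lemma_quanti_K13},~\eqref{lemma_quanti_K14} together, since they are a matched upper/lower pair for the same integral $\int_r^\infty s^2 p(s)^{-1}q(s)^\sfh e^{-s}ds$: the upper bound from $q(s)^\sfh\le q(s)\le q(r)e^{\,s-r}$... no — from $q(s)=q(r)+2(s-r)\le q(r)(1+2(s-r))\le q(r)e^{2(s-r)}$, which is too lossy for the $e^{-s}$; better, $\int_r^\infty s^2 e^{-s}ds\le (r^2+2r+2)e^{-r}\le C q(r)^? e^{-r}$, and bound $s^2 p(s)^{-1}q(s)^\sfh\le s^2\cdot s^{-1}\cdot q(s)^\sfh = s\,q(s)^\sfh$ for $s\ge$ something, requiring a split at $s=4\sfh$ since below that $p(s)^{-1}$ is governed by the $4\sfh$ term. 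The lower bound $K_{14}$ is obtained by restricting the integral to $s\in[r,r+1]$, on which every factor is comparable to its value at $r$ up to an absolute constant, giving $K_{14}\gtrsim (1-e^{-1})\cdot\inf(\cdots)$; the small value $K_{14}=0.0169$ for $\sfh\in(0,1)$ versus $0.25$ for $\sfh=0$ signals that some factor degrades as $\sfh\to1^-$, which I would track down (it is $q(r)^\sfh$ versus $q(s)^\sfh$ over the unit interval, or the $p(s)^{-1}$ near $s=0$). The remaining lines~\eqref{lemma_quanti_K1},~\eqref{lemma_quanti_K2},~\eqref{lemma_quanti_K6}–\eqref{lemma_quanti_K12} all succumb to the same differentiate-the-ratio template, and the numerical constants are obtained by a (finite, but tedious) optimization of the resulting rational-exponential expressions in $r$ over $r\in(0,\infty)$ uniformly in $\sfh\in[0,1)$ — this last uniform optimization, and verifying one has not been too lossy to land inside the stated numbers, is the genuine obstacle and the place where a careful case split ($r\le \sfh$ or $r\le 1$ versus $r$ large) and perhaps a ``proof by plotting'' in the spirit of Figures~\ref{fig_U4} and~\ref{fig_Ebolinha} would be invoked.
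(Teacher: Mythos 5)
Your overall strategy coincides with the paper's: the existence of the $K_j$ is deduced from Lemma~\ref{lemma_int_quali} after bracketing $p$ and $q$ by powers of $s$ and $1+s$, and the explicit constants come from a pointwise comparison of the integrand with $K_j$ times the derivative of the right-hand side, supplemented by monotonicity tricks and computer-assisted optimization over $r$ and $\sfh\in[0,1)$. However, three concrete points in your sketch are gaps or errors. First, the bracketing $p(s)\simeq 1+s$ is false at $\sfh=0$, where $p(s)=2s\simeq s$; since the $\sfh=0$ row of the table is part of the claim, the classification into items (a)--(g) must be done separately for $\sfh=0$ and $\sfh>0$ (the characteristic exponent at $0$ genuinely changes, e.g.\ for~\eqref{lemma_quanti_K1} it is $c=1$ at $\sfh=0$ but $c=2$ at $\sfh>0$), which is exactly how the paper proceeds. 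Second, your sample classifications are wrong:~\eqref{lemma_quanti_K1} has $k=-1<0$ and is item (a), not (c); and~\eqref{lemma_quanti_K4},~\eqref{lemma_quanti_K5} have $c>-1$, $k=0$, hence item (b), not (e) --- the role of $\eps>0$ is to move $d=-1+\eps$ \emph{above} the excluded value $-1$, and this is also the source of the optimal $K_4=K_5=1/\eps$. Third, and this is the one real trap your blanket reduction would miss:~\eqref{lemma_quanti_K12} at $\sfh=0$ has characteristic exponent $c=-1$ at the origin, which falls exactly in the gap between items (d) and (f) of Lemma~\ref{lemma_int_quali}, so even the \emph{existence} of $K_{12}$ in that case is not a corollary of that lemma; the paper flags this and proves it directly (via $p(s)^{-2}\le p(r)^{-2}$, $s\le q(s)/2$, monotonicity of $q(s)^{1/2+\sfh}e^{-s}$, and integration by parts).

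On the quantitative side, your differentiate-the-ratio template is indeed the paper's workhorse for~\eqref{lemma_quanti_K1},~\eqref{lemma_quanti_K2},~\eqref{lemma_quanti_K6}--\eqref{lemma_quanti_K9}, and the remaining items are handled by elementary manipulations of the kind you list; but note that your ``clean'' shortcut for $K_3$ fails as written for $\sfh>0$ (the bound $\int_0^r\tfrac12 q(s)^{-2}\,\rmd s=\tfrac{r}{2(1+4\sfh)q(r)}$ is not $\le r^2p(r)^{-1}q(r)^{-1}$ as $r\to0^+$); the fix used in the paper is to keep the factor $p(s)^{-1}$ and bound it by $(r/s)\,p(r)^{-1}$ for $s\le r$ (valid since $p(s)/s$ is decreasing) before integrating $q^{-2}$ explicitly. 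For the lower bound~\eqref{lemma_quanti_K14} your restriction to $[r,r+1]$ would work; the paper instead discards $(r,r+2\sfh)$ and shifts $s\mapsto s-2\sfh$, which is where the value $0.0169\approx e^{-2}/8$ comes from. Finally, be aware that the table values themselves constitute most of the lemma's content, and your proposal defers all of them to an unexecuted optimization; the paper carries this out item by item (with explicit computer assistance for the $\sfh=0$ maxima), and several constants ($K_1$, $K_2$, $K_6$ in the $\sfh\in(0,1)$ row) are obtained by maximizing the resulting expression over $\sfh$ at $\sfh=1$, a step absent from your outline.
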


\begin{proof}
    The existence of all but one of the constants $K_j$ is a direct consequence of Lemma~\ref{lemma_int_quali}. Indeed, considering
    \begin{equation*}
        p(s) \simeq \left\{\begin{array}{ll}
            s & \text{in case } \sf\sfh = 0, \\
            1+s & \text{in case } \sfh>0,
        \end{array}\right. \qquad q(s) \simeq 1+s,
    \end{equation*}
    we see that
    \begin{itemize}
        \item \eqref{lemma_quanti_K1} and~\eqref{lemma_quanti_K2} follow from item (a) of Lemma~\ref{lemma_int_quali},
        \item \eqref{lemma_quanti_K3},~\eqref{lemma_quanti_K4} and~\eqref{lemma_quanti_K5} follow from item (b),
        \item \eqref{lemma_quanti_K6} follows from item (c),
        \item \eqref{lemma_quanti_K7} and~\eqref{lemma_quanti_K8} follow from item (d),
        \item \eqref{lemma_quanti_K9} and~\eqref{lemma_quanti_K10} follow from item (d) if $\sf\sfh = 0$ and item (f) if $\sfh>0$, and
        \item \eqref{lemma_quanti_K11},~\eqref{lemma_quanti_K12},~\eqref{lemma_quanti_K13} and~\eqref{lemma_quanti_K14} follow from item (f), with the exception of~\eqref{lemma_quanti_K12} in case $\sfh = 0$, which will be proven below.
    \end{itemize}
    But finding explicit values for these constants, which we only wish to do for $0\leq\sfh<1$, requires more work. We analyze each integral individually. A helpful general remark, to be used repeatedly in what follows, is that, for $a,b,c,d\in\bbR$,
    \begin{equation*}
        \frac{\rmd}{\rmd r} \big( r^a p(r)^b q(r)^c e^{dr} \big) = r^a p(r)^b q(r)^c e^{dr} \big( a\,r^{-1} + 2b\,p(r)^{-1} + 2c\,q(r)^{-1} + d \big).
    \end{equation*}
    We also make the following disclaimers about the values to be obtained:
    \begin{itemize}
        \item most of them are in no way optimal;
        \item for some of them, there will clearly be no need to restrict the value of $\sfh$ to $[0,1)$ or $\eps$ to $(0,1/2]$, other than that these just happen to be the ranges that we wish to consider;
        \item for all of them, the proof of the $0<\sfh<1$ case would also apply to the $\sfh = 0$ case; however, we provide separate derivations for when $\sfh = 0$ because we wish to derive better constants for this special case, at least when a better derivation is indeed available.
    \end{itemize}
    \begin{itemize}
    
        \item[($\bm{K_1}$)] For $s\leq r$, one can check that $p(s) \geq (s/r)p(r)$. We use this to estimate $p(s)^{-1}$ from above:
        \begin{equation*}
            \int_0^r s^2 p(s)^{-1} q(s)^\sfh  e^{-s} \ \rmd s \leq p(r)^{-1} \int_0^r s^2 \left(\frac{s}{r}\right)^{-1} q(s)^\sfh  e^{-s} \ \rmd s = rp(r)^{-1} \int_0^r sq(s)^\sfh  e^{-s} \ \rmd s. 
        \end{equation*}
        Thus we will be done if we can find $K_1$ such that
        \begin{equation} \label{mario}
            \int_0^r sq(s)^\sfh  e^{-s} \ \rmd s \leq K_1 r^2q(r)^{-2} \quad\text{for all } r>0.
        \end{equation}
        For $\sfh = 0$, the integral on the left is explicitly computable and we can choose
        \begin{equation*}
            \max_{r>0} r^{-2}q(r)^2 \int_0^r se^{-s} \ \rmd s = \max_{r>0} r^{-2}(1+2r)^2 \big(1 - (1+r)e^{-r} \big) < 4.65 =: K_1.
        \end{equation*}
        The existence of this maximum is clear --- the apparent $r^{-2}$ singularity for small $r$ is canceled by the term $1-(1+r)e^{-r} = O(r^2)$, --- however, the explicit value $4.65$ was found with computer assistance. This will also be the case for many of the other inequalities below in the $\sfh = 0$ case.
        
        For $0<\sfh<1$, noting that both sides of~\eqref{mario} vanish at $r=0$, we can instead compare their derivatives: We will be done if we can find $K_1$ such that
        \begin{equation*}
            rq(r)^\sfh e^{-r} \leq K_1 \frac{\rmd}{\rmd r}\big( r^2q(r)^{-2} \big) = K_1\cdot 2(1+4\sfh ) rq(r)^{-3} \quad\Longleftrightarrow\quad K_1 \geq \max_{r>0}\frac{q(r)^{3+\sfh}e^{-r}}{2+8\sfh}.
        \end{equation*}
        With a little bit of Calculus, this maximum is found to be exactly $(2+8\sfh)^{-1}(6+2\sfh )^{3+\sfh}e^{\sfh-5/2}$, which in turn is checked to be increasing in $\sfh$ and therefore bounded above by its value $10^{-1}8^4e^{-\frac32} < 91.4 =: K_1$ at $\sfh = 1$.
    
        \item[($\bm{K_2}$)] We start by estimating $p(s)^{-2}$ like we did $p(s)^{-1}$ above, as well as bounding $q(s)^{2\eps}$ by $q(s)$ (using the fact that $\eps\leq 1/2$):
        \begin{equation*}
            \int_0^r s^2 p(s)^{-2} q(s)^{2\sfh +2\eps} e^{-2s} \ \rmd s \leq r^2p(r)^{-2} \int_0^r q(s)^{1+2\sfh } e^{-2s} \ \rmd s. 
        \end{equation*}
        Thus we will be done if we can find $K_2$ such that
        \begin{equation} \label{luigi}
            \int_0^r q(s)^{1+2\sfh } e^{-2s} \ \rmd s \leq K_2 rq(r)^{-1} \quad\text{for all } r>0.
        \end{equation}
        For $\sfh = 0$, the integral on the left is explicitly computable and we can choose
        \begin{equation*}
            \max_{r>0} r^{-1}(1+2r) \big(1 - (1+r)e^{-2r} \big) < 2.37 =: K_2.
        \end{equation*}
        For $0<\sfh<1$, we again compare derivatives in~\eqref{luigi}: We will be done if we can find $K_2$ such that
        \begin{equation*}
            q(r)^{1+2\sfh }e^{-2r} \leq K_2 \frac{\rmd}{\rmd r}\big( rq(r)^{-1} \big) = K_2 (1+4\sfh ) q(r)^{-2} \quad\Longleftrightarrow\quad K_2 \geq \max_{r>0}\frac{q(r)^{3+2\sfh }e^{-2r}}{1+4\sfh }.
        \end{equation*}
        This maximum is found to be $(1+4\sfh )^{-1}(3+2\sfh )^{3+2\sfh }e^{2\sfh -2}$, which, for $0<\sfh<1$, is bounded above by $5^{-1}5^5e^{0} = 625 =: K_2$ similarly to the previous item.
    
        \item[($\bm{K_3}$)] We bound $p(s)^{-1}$ as before, but this time the leftover integral can be computed explicitly for any $\sfh \geq 0$:
        \begin{multline*}
            \int_0^r s p(s)^{-1} q(s)^{-2} \ \rmd s \leq \int_0^r s \left(\frac{s}{r}\right)^{-1} p(r)^{-1} q(s)^{-2} \ \rmd s = rp(r)^{-1} \int_0^r q(s)^{-2} \ \rmd s \\
           = rp(r)^{-1} \frac{1}{1+4\sfh } rq(r)^{-1} \leq r^2p(r)^{-1}q(r)^{-1}.
        \end{multline*}
    
        \item[($\bm{K_4}$)] We use not only $p(s)^{-3} \leq (s/r)^{-3}p(r)^{-3}$ like before, but also the analogous bound for $q(s)^{-2+\eps}$ --- note that the exponent $-2+\eps$ is negative. This gives
        \begin{multline*}
            \int_0^r s^4 p(s)^{-3} q(s)^{-2+\eps} \ \rmd s \leq p(r)^{-3}q(r)^{-2+\eps} \int_0^r s^4 \left(\frac{s}{r}\right)^{-3}\left(\frac{s}{r}\right)^{-2+\eps} \ \rmd s \\
           = r^{5-\eps} p(r)^{-3}q(r)^{-2+\eps} \int_0^r s^{-1+\eps} \ \rmd s = \frac{1}{\eps} r^5 p(r)^{-3}q(r)^{-2+\eps}.
        \end{multline*}
        Remark: The L'Hôpital Rule reveals that $\lim_{r\to\infty} \big( r^5 p(r)^{-3} q(r)^{-2+\eps} \big)^{-1} \int_0^r s^4 p(s)^{-3} q(s)^{-2+\eps} \ \rmd s = 1/\eps$, proving that the value $K_4 = 1/\eps$ is optimal. The fact that it blows up when $\eps$ is taken close to 0 is what will explain the unfortunate fact that, in Theorem~\ref{thm_nuchi} ahead, the value of $\sfg_0 = \sfg_0(\eps)$ becomes worse (smaller) when a better (smaller) $\eps$ is chosen.
    
        \item[($\bm{K_5}$)] The same proof as for $K_4$ above applies here, only with different exponents.
        
        \item[($\bm{K_6}$)] Again we begin by using estimates for $p(s)^{-1}$ and $q(s)^{-2}$ like before:
        \begin{equation*}
            \int_0^r s^3p(s)^{-1}q(s)^{-2-2\sfh }e^{2s} \ \rmd s \leq r^3p(r)^{-1}q(r)^{-2}\int_0^r q(s)^{-2\sfh }e^{2s}.
        \end{equation*}
        Now we will be done if we can find $K_6$ such that
        \begin{equation} \label{bowser}
            \int_0^r q(s)^{-2\sfh }e^{2s} \ \rmd s \leq K_6rq(r)^{-1-2\sfh } e^{2r} \quad\text{for all } r>0.
        \end{equation}
        If $\sfh = 0$, the explicit computation of the integral leads to the choice
        \begin{equation*}
            \max_{r>0} (2r^{-1})(1+2r)\big( 1 - e^{-2r} \big) < 1.30 =: K_6.
        \end{equation*}
        If $0<\sfh<1$, we compare derivatives in~\eqref{bowser}: It suffices to find $K_6$ such that
        \begin{align*}
            q(r)^{-2\sfh } e^{2r} &\leq K_6 \frac{\rmd}{\rmd r}\big( rq(r)^{-1-2\sfh }e^{2r} \big) \\
            &= K_6 rq(r)^{-1-2\sfh }e^{2r}\big( r^{-1} - 2(1+2\sfh )q(r)^{-1} + 2 \big) \\
            &= q(r)^{-2\sfh }e^{2r} \cdot K_6 q(r)^{-2}\big( q(r) - 2(1+2\sfh )r + 2rq(r) \big) \\
            \Longleftrightarrow K_6 &\geq \max_{r>0} q(r)^2 \big( q(r) - 2(1+2\sfh )r + 2rq(r) \big)^{-1} \\
            &= \max_{r>0} \frac{(1+4\sfh )^2 + (4+16\sfh)r + 4r^2}{1+4\sfh  + (2+4\sfh )r + 4r^2}.
        \end{align*}
        We estimate this maximum by comparing the coefficients of the distinct powers of $r$: Using $\sfh<1$, we have $(1+4\sfh )^2 < 5(1+4\sfh )$ and $4+16\sfh < (10/3)(2+4\sfh ) < 5(2+4\sfh )$, so we can choose $K_6 := 5$.
    
        \item[($\bm{K_7}$)] The idea is again to compare the derivatives of both sides of the desired inequality, but now what we have to prove looks different because the integral is of the form $\int_r^\infty$ instead of $\int_0^r$: We claim it suffices to find $K_7>0$ such that
        \begin{equation} \label{toad}
            \frac{\rmd}{\rmd r}\bigg( r^{-1}q(r)^{\frac12+\sfh}e^{-r} \bigg) \leq -\frac{1}{K_7} r^{-2}q(r)^{\frac32+\sfh}e^{-r}.
        \end{equation}
        The reason why this is enough is that, if~\eqref{toad} is known, then
        \begin{equation*}
            r^{-1}q(r)^{\frac12+\sfh}e^{-r} = - \int_r^\infty \frac{\rmd}{\rmd s}\bigg( s^{-1}q(s)^{\frac12+\sfh}e^{-s} \bigg) \ \rmd s \geq \frac{1}{K_7}\int_r^\infty s^{-2}q(s)^{-\frac32+\sfh}e^{-s} \ \rmd s
        \end{equation*}
        follows, which is the desired inequality. To prove~\eqref{toad}, we estimate:
        \begin{align*}
            \frac{\rmd}{\rmd r}\bigg( r^{-1}q(r)^{\frac12+\sfh}e^{-r} \bigg) &= r^{-1}q(r)^{\frac12+\sfh}e^{-r} \bigg( -r^{-1} + (1+2\sfh)q(r)^{-1} - 1 \bigg) \\
            &= -r^{-2}q(r)^{\frac32+\sfh}e^{-r} \frac{1+4\sfh + 2(1+\sfh)r + 2r^2}{(1+4\sfh+2r)^2}.
        \end{align*}
        Thus we can define $K_7$ by
        \begin{equation*}
            \frac{1}{K_7} = \inf_{r\geq 0} f_\sfh(r), \quad\text{where } f_\sfh(r) := \frac{1+4\sfh + 2(1+\sfh)r + 2r^2}{(1+4\sfh+2r)^2}.
        \end{equation*}
        For $\sfh = 0$, one checks that $f_0$ is decreasing in $r$ and thus bounded below by its limit $1/2$ at $r=\infty$, which yields $K_7 = 2$. For $0<\sfh<1$, one checks that $f_\sfh$ attains the global minimum $m_\sfh := (1+6\sfh-\sfh^2)(2+12\sfh+16\sfh^2)^{-1}$ at some finite value of $r\geq 0$, and furthermore $m_\sfh$ is decreasing in $\sfh$; therefore $K_7 = m_1^{-1} = 5$ works.
    
        \item[($\bm{K_8}$)] Analogously to the above case, we compute
        \begin{align*}
            \frac{\rmd}{\rmd r}\bigg( r^{-1}q(r)^{2+2\sfh}e^{-2r} \bigg) &= r^{-1}q(r)^{2+2\sfh}e^{-2r} \bigg( -r^{-1} + 4(1+\sfh)q(r)^{-1} - 2 \bigg) \\
            &= -r^{-2}q(r)^{3+2\sfh}e^{-2r} \frac{1+4\sfh+4\sfh r + 4r^2}{(1+4\sfh+2r)^2},
        \end{align*}
        which leads to the ansatz
        \begin{equation*}
            \frac{1}{K_8} = \inf_{r\geq 0} f_\sfh(r), \quad\text{where now } f_\sfh(r) := \frac{1+4\sfh + 4\sfh r + 4r^2}{(1+4\sfh+2r)^2}.
        \end{equation*}
        This time, for any $0\leq\sfh\leq 1$, the global minimum $m_\sfh := (1+4\sfh-\sfh^2)(2+10\sfh+8\sfh^2)^{-1}$ of $f_\sfh$ is attained at a finite $r\geq 0$ and is also decreasing with $\sfh$, again leading to the values $K_8 = 2$ when $\sfh = 0$ and $K_8 = 5$ otherwise.
        
        \item[($\bm{K_9}$)] We proceed similarly to the previous two cases, except that now a simpler estimate is available in the middle of the calculation:
        \begin{align*}
            \frac{\rmd}{\rmd r}\bigg( p(r)^{-1}q(r)^{2\sfh +\eps}e^{-2r} \bigg) &= p(r)^{-1}q(r)^{2\sfh +\eps}e^{-2r} \bigg( -2p(r)^{-1} + 2(2\sfh +\eps)q(r)^{-1} - 2 \bigg) \\
            &= -2p(r)^{-2}q(r)^{1+2\sfh +\eps}e^{-2r}\bigg( 1 - (2\sfh +\eps) \frac{4\sfh +2r}{(1+4\sfh +2r)^2} \bigg) \\
            &\leq -2p(r)^{-2}q(r)^{1+2\sfh +\eps}e^{-2r}\bigg( 1 - \frac{\eps+2\sfh }{1+4\sfh } \bigg) \\
            &\leq -p(r)^{-2}q(r)^{1+2\sfh +\eps}e^{-2r},
        \end{align*}
        where we used $\eps \leq 1/2$ in the last step. This proves that the value $K_9 = 1$ is good enough.
    
        \item[($\bm{K_{10}}$)] Use $sp(s)^{-1} \leq 1/2$:
        \begin{equation*}
            \int_r^\infty s p(s)^{-3} q(s)^{1+2\sfh +\eps} e^{-2s} \ \rmd s \leq \frac{1}{2}\int_r^\infty p(s)^{-2} q(s)^{1+2\sfh +\eps} e^{-2s} \ \rmd s.
        \end{equation*}
        This reduces the estimate to that of the previous case, only with an extra factor of $1/2$ included. Thus $K_{10} = K_9/2 = 1/2$ is good enough.
    
        \item[($\bm{K_{11}}$)] We start by using $s^2p(s)^{-2} \leq 1/4$:
        \begin{equation*}
            \int_r^\infty s^2 p(s)^{-2} q(s)^{-\frac12 + \sfh} e^{-s} \ \rmd s \leq \frac{1}{4} \int_r^\infty q(s)^{-\frac12 + \sfh} e^{-s} \ \rmd s.
        \end{equation*}
        Then we will be done if we can find $K_{11}>0$ such that
        \begin{equation} \label{coffeetime}
            \int_r^\infty q(s)^{-\frac12 + \sfh} e^{-s} \ \rmd s \leq 4K_{11} \, q(r)^{-\frac12+\sfh} e^{-r}.
        \end{equation}
        For $\sfh \leq 1/2$, the factor $q(s)^{-\frac12+\sfh}$ in the integrand is decreasing and thus bounded above by $q(r)^{-\frac12+\sfh}$:
        \begin{equation*}
              \int_r^\infty q(s)^{-\frac12 + \sfh} e^{-s} \ \rmd s \leq  q(r)^{-\frac12 + \sfh}\int_r^\infty e^{-s} \ \rmd s = q(r)^{-\frac12 + \sfh}e^{-r} \quad\Longrightarrow\quad K_{11} = \frac{1}{4} \text{ works.}
        \end{equation*}
        For $\sfh > 1/2$, we integrate by parts once and then use the fact that $q(s)^{-\frac32+\sfh}$ is decreasing and $-1+2\sfh > 0$:
        \begin{align*}
            \int_r^\infty q(s)^{-\frac12 + \sfh} e^{-s} \ \rmd s &= q(r)^{-\frac12+\sfh}e^{-r} +  (-1+2\sfh)\int_r^\infty q(s)^{-\frac32+\sfh} e^{-s} \ \rmd s \\
            &\leq  q(r)^{-\frac12+\sfh}e^{-r} +  (-1+2\sfh) q(r)^{-\frac32+\sfh} \int_r^\infty e^{-s} \ \rmd s \\
            &= \big( 1 + (-1+2\sfh)q(r)^{-1} \big)  q(r)^{-\frac12+\sfh}e^{-r} \\
            &\leq \left( 1 + \frac{-1+2\sfh}{1+4\sfh} \right)  q(r)^{-\frac12+\sfh}e^{-r} \\
            &< \frac{6}{5}q(r)^{-\frac12+\sfh}e^{-r} \quad\Longrightarrow\quad K_{11} = \frac{3}{10} \text{ works}.
        \end{align*}
    
        \item[($\bm{K_{12}}$)] We start by using $p(s)^{-2} \leq p(r)^{-2}$ and $s \leq q(s)/2$. Then we also note that $q(s)^{\frac12+\sfh}e^{-s}$ is decreasing --- its derivative reads $-2q(s)^{\frac12+\sfh}e^{-s} (\sfh+s)/(1+4\sfh+2s) < 0$ --- and therefore bounded above by its value at $s=r$:
        \begin{equation*}
            \int_r^\infty s p(s)^{-2} q(s)^{1+2\sfh} e^{-2s} \ \rmd s \leq \frac{1}{2} p(r)^{-2} \int_r^\infty q(s)^{2+2\sfh} e^{-2s} \ \rmd s \leq \frac{1}{2}p(r)^{-2} q(r)^{\frac12+\sfh}e^{-r} \int_r^\infty q(s)^{\frac32+\sfh} e^{-s} \ \rmd s.
        \end{equation*}
        Thus we will be done if we can find $K_{12}$ such that
        \begin{equation} \label{teatime}
            \int_r^\infty q(s)^{\frac32+\sfh} e^{-s} \ \rmd s \leq 2K_{12} \, q(r)^{\frac32+\sfh} e^{-r}.
        \end{equation}
        After integrating by parts twice, the integral on the left reduces to one for which we already have a bound in~\eqref{coffeetime} above:
        \begin{align*}
            \int_r^\infty q(s)^{\frac32+\sfh} e^{-s} \ \rmd s &= q(r)^{\frac32+\sfh} e^{-r} + (3+2\sfh)\int_r^\infty q(s)^{\frac12+\sfh} e^{-s} \ \rmd s \\
            &= q(r)^{\frac32+\sfh} e^{-r} + (3+2\sfh)q(r)^{\frac12+\sfh} e^{-r} + (3+2\sfh)(1+2\sfh)\int_r^\infty q(s)^{-\frac12+\sfh} e^{-s} \ \rmd s \\
            &\leq q(r)^{\frac32+\sfh} e^{-r} + (3+2\sfh)q(r)^{\frac12+\sfh} e^{-r} + 4K_{11}(3+2\sfh)(1+2\sfh) q(r)^{-\frac12+
            \sfh}e^{-r} \\
            &= \big( 1 + (3+2\sfh)q(r)^{-1} + 4K_{11}(3+2\sfh)(1+2\sfh)q(r)^{-2} \big) q(r)^{\frac32+\sfh} e^{-r} \\
            &\leq \left( 1 + \frac{3+2\sfh}{1+4\sfh} + 4K_{11}\frac{(3+2\sfh)(1+2\sfh)}{(1+4\sfh)^2} \right) q(r)^{\frac32+\sfh} e^{-r}.
        \end{align*}
        If $\sfh = 0$, using $K_{11} = 1/4$, the expression in parenthesis becomes $7$. For all other values of $\sfh$, using $K_{11} = 3/10$ instead, it is bounded above by its limit as $\sfh\to 0^+$, which is now $38/5$. Therefore~\eqref{teatime} implies that $K_{12} = 7/2$ can be taken for $\sfh = 0$ and $K_{12} = 19/5$ can be taken for $0<\sfh<1$.
    
        \item[($\bm{K_{13}}$)] If $\sfh = 0$, the desired inequality becomes
        \begin{equation*}
            \frac{1}{2}\int_r^\infty se^{-s} \ \rmd s \leq K_{13} \ (1+2r)e^{-r},
        \end{equation*}
        whose optimal constant is easily seen to be $\sup_{r>0} (1+2r)^{-1}(1+r)/2 = 1/2 =: K_{13}$. If $0<\sfh<1$, we bound a factor of $s$ by $q(s)/2$ and the expression $sp(s)^{-1}$ by $1/2$, then we integrate by parts until a negative coefficient is reached in front of the integral, at which point it may get ignored:
        \begin{align*}
            \int_r^\infty s^2 p(s)^{-1} q(s)^\sfh  e^{-s} \ \rmd s &\leq \frac{1}{4} \int_r^\infty q(s)^{1+\sfh }e^{-s} \ \rmd s \\
            & = \frac{1}{4}\left( q(r)^{1+\sfh }e^{-r} + 2(1+\sfh )\int_r^\infty q(s)^\sfh  e^{-s} \ \rmd s \right) \\
            & = \frac{1}{4}q(r)^{1+\sfh } e^{-r} + \frac{1+\sfh }{2}\left( q(r)^\sfh e^{-r} + 2\sfh \int_r^\infty q(s)^{\sfh-1}e^{-s} \ \rmd s \right) \\
            & = \frac{1}{4}q(r)^{1+\sfh } e^{-r} + \frac{1+\sfh }{2}q(r)^{\sfh} e^{-r} + \sfh(1+\sfh )\bigg( q(r)^{\sfh-1}e^{-r} \\
            &\qquad + 2(\sfh-1)\int_r^\infty q(s)^{\sfh-2} e^{-s} \ \rmd s \bigg) \\
            & \leq \frac{1}{4}q(r)^{1+\sfh } e^{-r} + \frac{1+\sfh }{2}q(r)^{\sfh} e^{-r} + \sfh(1+\sfh )q(r)^{\sfh-1}e^{-r} \\
            & = \left( \frac{1}{4} + \frac{1+\sfh }{2}q(r)^{-1} + \sfh(1+\sfh )q(r)^{-2} \right) q(r)^{1+\sfh } e^{-r} \\
            & \leq \left( \frac{1}{4} + \frac{1+\sfh }{2(1+4\sfh )} + \frac{\sfh(1+\sfh )}{(1+4\sfh )^2} \right) q(r)^{1+\sfh } e^{-r} \\
            & \leq \frac{3}{4} q(r)^{1+\sfh } e^{-r}.
        \end{align*}
    
        \item[($\bm{K_{14}}$)] This is the only item in the lemma whose inequality sign goes the other way. If $\sfh = 0$, replace $\sup$ with $\inf$ in the derivation of $K_{13}$ above to find $K_{14} := 1/4$. If $0<\sfh<1$, we bound the $q(s)^\sfh$ term below by $q(r)^\sfh$, while the problematic $s^2$ term can be dealt with by first throwing out the $(r,r+2\sfh)$ piece of the integration domain and then changing variables to $t = s-2\sfh$:
        \begin{align*}
            \int_r^\infty s^2p(s)^{-1}q(s)^\sfh e^{-s} \ \rmd s &\geq  q(r)^\sfh  \int_{r+2\sfh }^\infty s^2p(s)^{-1}e^{-s} \ \rmd s \\
            &= q(r)^\sfh  \int_{r}^\infty (2\sfh +t)^2p(2\sfh +t)^{-1}e^{-(2\sfh +t)} \ \rmd t \\
            &= e^{-2\sfh } q(r)^\sfh  \int_{r}^\infty \frac{(2\sfh +t)^2}{8\sfh+2t}e^{-t} \ \rmd t \\
            &\geq \frac{e^{-2\sfh }}{4} q(r)^\sfh  \int_r^\infty (2\sfh +t)e^{-t} \ \rmd t \\
            &= \frac{e^{-2\sfh }}{4} q(r)^\sfh  (1+2\sfh +r)e^{-r} \\
            &= \frac{e^{-2\sfh }}{4} q(r)^{1+\sfh } \frac{1+2\sfh +r}{1+4\sfh +2r}e^{-r} \\
            &\geq \frac{e^{-2}}{8} q(r)^{1+\sfh } e^{-r} \\
            &> 0.0169 \ q(r)^{1+\sfh } e^{-r}. \qedhere
        \end{align*}
        
    \end{itemize}
\end{proof}

\subsection{The mass/deviation problem without gravity} \label{subsec_nuchi_g0}

Although the parameter $\sfg$ defined in~\eqref{def_g} is positive, having an explicit solution of the mass/deviation problem for $\sfg=0$ in our hands will be essential for our subsequent study of it as a perturbation problem. So we derive this solution now:

\begin{theorem} \label{thm_chi_zerog}
    Let $\sfh\geq 0$. Then there exist functions $(\nu,\chi)\in C^1\big((0,\infty)\big)\times C^2\big((0,\infty)\big)$ solving the mass/deviation problem~\eqref{system_nu_chi},~\eqref{conditions_nu_chi} of parameters $(\sfg,\sfh) = (0,\sfh)$ and such that there are constants $X_0,\ldots,X_5 > 0$, depending on $\sfh$, with
    \begin{align}
        &\av{\chi(r)} = \chi(r) \geq X_0 \, (1+4\sfh +2r)^{1+\sfh } e^{-r}, \label{X0} \\
        &\av{\chi(r)} = \chi(r) \leq X_1 \, (1+4\sfh +2r)^{1+\sfh } e^{-r}, \label{X1} \\
        &\av{\chi'(r)} = -\chi'(r) \leq X_2 \, r^2(4\sfh +2r)^{-1}(1+4\sfh +2r)^{\sfh} e^{-r}, \label{X2} \\
        &\av{1-\chi(r)} = 1-\chi(r) \leq X_3 \, r^3(4\sfh +2r)^{-1}(1+4\sfh +2r)^{-2}, \label{X3} \\
        &\av{\upsilon(r)} = \upsilon(r) \leq X_4 \, r^4(4\sfh +2r)^{-1}(1+4\sfh +2r)^{-2-\sfh} e^{r}, \label{X4} \\
        &\av{\upsilon'(r)} \leq X_5 \, r^3(4\sfh +2r)^{-1}(1+4\sfh +2r)^{-1-\sfh} e^{r} \label{X5}
    \end{align}
    for all $r>0$, where
    \begin{equation} \label{upsilon}
        \upsilon(r) := \chi(r) \int_0^r \frac{t^3 \ \rmd t}{(2\sfh +t)\chi(t)^2}
    \end{equation}
    is called the \textbf{conjugate solution} to $\chi$. Furthermore, if $0\leq\sfh<1$, the above inequalities hold for the following values of the constants $X_j$:
    \begin{equation} \label{numerical_X}
        \begin{array}{|l||c|c|c|c|c|c|} \hline
            & X_0 & X_1 & X_2 & X_3 & X_4 & X_5 \\ \hline \hline
            \sfh = 0  & 0.500 & 1.00 & 2.00 & 9.29 & 4.00 & 2.60 \\ \hline
            \sfh\in (0,1) & 1.14\cdot 10^{-3} & 8.93 & 11.9 & 1.09\cdot 10^3 & 6.88\cdot 10^7 & 2.87\cdot 10^6 \\ \hline
        \end{array}
    \end{equation}
\end{theorem}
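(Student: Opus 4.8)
Since $\sfg=0$, the first equation in~\eqref{system_nu_chi} gives $\nu'\equiv0$, so the condition $\nu(0)=-\sfh$ from~\eqref{conditions_nu_chi} forces $\nu\equiv-\sfh$ (hence $\beta\equiv r/(r+2\sfh)$), and the second equation degenerates to the linear ODE $r(r+2\sfh)\chi''=2(r+3\sfh)\chi'+r^2\chi$ on $(0,\infty)$. The plan is to solve it explicitly. I would first pass to the equivalent first-order system $\chi'=\frac{r^3}{r+2\sfh}\overline\chi$, $\overline\chi'=\frac{\chi}{r^2}$ with $\overline\chi:=\frac{(r+2\sfh)\chi'}{r^3}$, and eliminate $\chi$ to get $\overline\chi''+\frac{2}{r}\overline\chi'-\frac{r}{r+2\sfh}\overline\chi=0$. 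The substitutions $\overline\chi=w/r$, then $w=e^{-r}v$, then $z=2(r+2\sfh)$ turn this into Kummer's equation $zV''-zV'+\sfh V=0$, i.e.~\eqref{Kummer} with parameters $a=-\sfh$, $b=0$; one of its solutions is $V(z)=z\,U_\sfh(z)$, as one checks directly from Kummer's equation for $U_\sfh=U(1-\sfh,2,\cdot)$ (see~\eqref{def_Un}). Unwinding, this is the branch for which $\overline\chi(\infty)=0$: it gives $\overline\chi(r)\propto\frac{r+2\sfh}{r}e^{-r}U_\sfh(2r+4\sfh)$, hence, via $\chi=r^2\overline\chi'$ and one use of Kummer's relation, the explicit formula
\begin{equation*}
    \chi(r)=\frac{e^{-r}}{2\sfh\,U_\sfh(4\sfh)}\Bigl[\,2\sfh\,U_\sfh(2r+4\sfh)+r(r+2\sfh)\bigl(U_\sfh(2r+4\sfh)-2U_\sfh'(2r+4\sfh)\bigr)\Bigr]\qquad(\sfh>0),
\end{equation*}
with $\chi(r)=(1+r)e^{-r}$ for $\sfh=0$ (the $\sfh\to0^+$ limit, via $U_0(s)=1/s$). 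One then checks $\chi\in C^\infty\bigl((0,\infty)\bigr)$, $\chi(0)=1$, and $\chi(\infty)=0$ from~\eqref{Un_asymp} (which also gives $\chi(r)\simeq r^{1+\sfh}e^{-r}$), so $(\nu,\chi)$ solves the mass/deviation problem with parameters $(0,\sfh)$. The second, linearly independent solution — the growing branch $z\,M(1-\sfh,2,z)$ — is equivalently obtained by reduction of order, and since $e^{-\int p}=r^3/(r+2\sfh)$ for the coefficient $p=-2(r+3\sfh)/\bigl(r(r+2\sfh)\bigr)$ of $\chi'$, it is precisely the conjugate solution $\upsilon$ of~\eqref{upsilon}.

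The signs in~\eqref{X0}--\eqref{X5} ($\chi>0$, $\chi'<0$, $\upsilon>0$) follow by inspecting~\eqref{Un_integral_rep}, or, uniformly in $\sfh\geq0$, by repeating the argument in the proof of Lemma~\ref{lemma_sign_chi} for the first-order system above (which uses only $\beta>0$); integrating $\overline\chi'=\chi/r^2$ with $\overline\chi(\infty)=0$ also yields the identity $-\chi'(r)=\frac{r^3}{r+2\sfh}\int_r^\infty\frac{\chi(s)}{s^2}\,\rmd s$. For the quantitative estimates I would work in the dimensionless variables set up before Definition~\ref{def_mdproblem} and write $p(r)=4\sfh+2r$, $q(r)=1+4\sfh+2r$ as in Lemma~\ref{lemma_int_quanti}, noting $2r+4\sfh=p(r)$, $1+(2r+4\sfh)=q(r)$, and $2r+4\sfh\geq4\sfh$, so Lemma~\ref{lemma_U} always applies. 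For $0\leq\sfh<1$: Lemma~\ref{lemma_U} gives $U_\sfh(2r+4\sfh)\simeq p(r)^{-1}q(r)^\sfh$, and an argument mirroring its proof — applied to $U(2-\sfh,3,\cdot)$ via $U(2-\sfh,3,s)=\Gamma(2-\sfh)^{-1}\int_0^\infty e^{-ts}t^{1-\sfh}(1+t)^\sfh\,\rmd t$ and the relation $U_\sfh'=-(1-\sfh)U(2-\sfh,3,\cdot)$ — gives $U_\sfh'(2r+4\sfh)\simeq-p(r)^{-2}q(r)^\sfh$; feeding these into the explicit formula and collecting exponents (using $p(r)^{-1}(2\sfh+2\sfh r+r^2)\simeq q(r)$) yields~\eqref{X0} and~\eqref{X1}. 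Estimate~\eqref{X2} comes from the identity for $-\chi'$ together with $\chi\leq\min\{1,\,X_1 q^{1+\sfh}e^{-r}\}$ (monotonicity and~\eqref{X1}) and a split of $\int_r^\infty s^{-2}\chi(s)\,\rmd s$ at an $r$-independent point; then~\eqref{X3} follows from $1-\chi(r)=\int_0^r(-\chi'(t))\,\rmd t$ and~\eqref{lemma_quanti_K1}, so $X_3=X_2 K_1$. For~\eqref{X4} and~\eqref{X5} one uses $\chi(t)\simeq q(t)^{1+\sfh}e^{-t}$ globally, which makes the integrand in~\eqref{upsilon} $\simeq t^3 p(t)^{-1}q(t)^{-2-2\sfh}e^{2t}$, so~\eqref{lemma_quanti_K6} bounds $\int_0^r t^3\,\rmd t/\bigl((2\sfh+t)\chi^2\bigr)$; multiplying by the bounds on $\chi$ gives~\eqref{X4} with $X_4=2X_1 K_6/X_0^2$, and using $\upsilon'=\chi'\!\int_0^r(\cdots)+r^3/\bigl((2\sfh+r)\chi\bigr)$ together with~\eqref{X2} and~\eqref{X0} gives~\eqref{X5}. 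The explicit constants in~\eqref{numerical_X} then follow by feeding the numbers from~\eqref{numerical_K} and Lemma~\ref{lemma_U} through these chains; the $\sfh=0$ row is obtained more cheaply by direct computation with $\chi=(1+r)e^{-r}$. For $\sfh\geq1$ the integral representations are unavailable, so instead I would feed the asymptotics~\eqref{Un_asymp} and $z^{2-\sfh}U(2-\sfh,3,z)\to1$, together with the endpoint behaviour of $\chi$, into the remark following Definition~\ref{def_equivalent} and into Lemma~\ref{lemma_int_quali}, obtaining the same estimates with non-explicit $\sfh$-dependent constants.

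The main obstacle is not existence — once the Kummer structure is recognized, writing down and verifying the explicit $\chi$ is routine — but the derivation of the \emph{uniform} bounds~\eqref{X0}--\eqref{X5} with the stated constants. The naive estimates lose sharpness near $r=0$: the prefactor $p(r)=4\sfh+2r$ behaves qualitatively differently for $\sfh=0$ and $\sfh>0$, and the factor $r^3/(r+2\sfh)$ in the expression for $\chi'$ causes a genuine loss, which is why the zero-bare-mass case must be treated separately and the domain-splitting arguments above are needed; establishing the $U(2-\sfh,3,\cdot)$ analogue of Lemma~\ref{lemma_U} to control $U_\sfh'$ is the other place demanding care.
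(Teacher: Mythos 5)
Your proposal is correct in substance and runs largely parallel to the paper's proof, with a few genuine deviations worth noting. The existence part is the same mathematics dressed differently: the paper verifies the integral ansatz $\chi(r)\propto\int_r^\infty t^2U_\sfh(4\sfh+2t)e^{-t}\,\rmd t$ directly (and extracts, by integration by parts, exactly the closed form you write down, cf.~\eqref{aux_chi}), whereas you reach that closed form through the $\overline\chi$-system and a substitution chain to Kummer's equation with $(a,b)=(-\sfh,0)$, checking that $zU_\sfh(z)$ solves it; both identify the same decaying branch, both rely on $U_\sfh(4\sfh)>0$ and on the sign argument of Lemma~\ref{lemma_sign_chi}, and your identification of $\upsilon$ as the reduction-of-order solution is the paper's Remark as well. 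Your treatment of~\eqref{X3},~\eqref{X4},~\eqref{X5} is literally the paper's ($X_3$ from $\int_0^r(-\chi')$ and~\eqref{lemma_quanti_K1}; $X_4=2X_1K_6/X_0^2$; the same decomposition of $\upsilon'$). The real divergence is in~\eqref{X0}--\eqref{X2}: the paper never needs $U_\sfh'$, because it bounds $\chi$ through its integral representation using Lemma~\ref{lemma_U} plus~\eqref{lemma_quanti_K13}--\eqref{lemma_quanti_K14} (giving $X_0=N_1K_{14}$, $X_1=N_2K_{13}$), and reads~\eqref{X2} off the exact formula~\eqref{chi_prime_zerog} with $X_2=N_2$. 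You instead estimate the closed form, which forces you to prove a second bracketing lemma for $U_\sfh'=-(1-\sfh)U(2-\sfh,3,\cdot)$; that lemma is provable by mirroring Lemma~\ref{lemma_U} (the representation you quote is valid since $\Re(2-\sfh)>0$), so there is no gap in principle, just extra machinery. Your detour for~\eqref{X2} via $-\chi'(r)=\frac{r^3}{r+2\sfh}\int_r^\infty s^{-2}\chi(s)\,\rmd s$ and a domain split is also unnecessary: your own construction already gives $\chi'(r)\propto -r^2e^{-r}U_\sfh(2r+4\sfh)$, from which the bound is immediate.

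The one point that is genuinely under-justified is the claim that the table~\eqref{numerical_X} ``then follows by feeding the numbers through these chains.'' The listed values of $X_0,X_1,X_2$ for $\sfh\in(0,1)$ are the output of the paper's specific chain; your alternative chain (closed form plus a $U_\sfh'$ bracketing with its own constants) will generally produce different, and quite possibly larger, upper-bound constants -- e.g.\ the $r(r+2\sfh)\,U_\sfh'$ term contributes on top of the $M_2/M_1$ ratio that already accounts for most of $8.93$ -- and a smaller lower-bound constant. To certify the stated numerical row you would need either to carry out your chain numerically and check it beats the table, or to switch to the integral-formula route for those three estimates (as the paper does); likewise your split argument for~\eqref{X2} needs its own small integral estimate not contained in Lemma~\ref{lemma_int_quanti}. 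For $\sfh\geq 1$ your plan (endpoint asymptotics plus the remark after Definition~\ref{def_equivalent} and Lemma~\ref{lemma_int_quali}, with non-explicit constants) matches the paper, provided you note, as the paper does, that positivity of $U_\sfh$ on all of $[4\sfh,\infty)$ comes from $\chi'<0$ together with $U_\sfh(4\sfh)>0$.
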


\begin{proof}
    The $\nu$ equation in~\eqref{system_nu_chi} with $\sfg$ set equal to 0 becomes trivial, and its boundary condition (the first condition in~\eqref{conditions_nu_chi}) implies $\nu \equiv -\sfh$. The rest of the problem is now a boundary-value problem for $\chi$:
    \begin{equation} \label{system_zerog}
        \chi''(r) = \frac{2(r+3\sfh)}{r(r+2\sfh )}\chi'(r) + \frac{r}{r+2\sfh }\chi(r), \quad \chi(0) = 1, \quad \chi(\infty) = 0.
    \end{equation}
    In the case $\sfh = 0$, there is not much to do: Elementary formulas for the solution $\chi$ and its conjugate $\upsilon$ are available, as can be promptly checked:
    \begin{equation} \label{chi_sol_gn0}
        \chi(r) = (1+r)e^{-r}, \quad \upsilon(r) = \frac{ (1+r)e^{-r} - (1-r)e^r }{2}.
    \end{equation}
    Note that this $\chi$ is the same electric-deviation scalar~\eqref{relative_error} from section~\ref{sec_EM_SR}, which makes sense because setting $\sfg = 0$ turns General Relativity into Special Relativity, while $\sfh = 0$ means $\nu \equiv 0$ and thus $\beta \equiv 1$, that is, the metric becomes Minkowski. Also note the derivative $\chi'$ for future use:
    \begin{equation} \label{chi_prime_sol_gn0}
        \chi'(r) = -re^{-r}.
    \end{equation}
    The desired inequalities~\eqref{X0}--\eqref{X5} for these functions become
    \begin{align*}
        &(1+r)e^{-r} \geq X_0 \, (1+2r) e^{-r}, \\
        &(1+r)e^{-r} \leq X_1 \, (1+2r) e^{-r}, \\
        &re^{-r} \leq X_2 \, r^2(2r)^{-1} e^{-r}, \\
        &1-(1+r)e^{-r} \leq X_3 \, r^3(2r)^{-1}(1+2r)^{-2}, \\
        &\frac{(1+r)e^{-r} - (1-r)e^r}{2} \leq X_4 \, r^4(2r)^{-1}(1+2r)^{-2} e^r, \\
        &\frac{re^r - re^{-r}}{2} \leq X_5 \, r^3(2r)^{-1}(1+2r)^{-1} e^r.
    \end{align*}
    Here the values $X_0 = 1/2$, $X_1 = 1$, $X_2 = 2$ are easily seen to be optimal, while $X_3,X_4,X_5$ can be defined and numerically estimated as in the proof of Lemma~\ref{lemma_int_quanti}, that is, by bounding the ratio between the left and right sides of each inequality:
    \begin{align}
        X_3 &= \max_{r>0} \frac{1-(1+r)e^{-r}}{r^3(2r)^{-1}(1+2r)^{-2}} < 9.29, \label{proof_X3_sup} \\
        X_4 &= \sup_{r>0} \frac{\big( (1+r)e^{-r} - (1-r)e^r \big)}{2\big( r^4(2r)^{-1}(1+2r)^{-2} e^r \big)} = 4, \label{proof_X4_sup} \\
        X_5 &= \max_{r>0} \frac{re^r - re^{-r}}{2\big( r^3(2r)^{-1}(1+2r)^{-1} e^r \big)} < 2.60. \label{proof_X5_sup}
    \end{align}
    This concludes the $\sfh = 0$ case. Now let $\sfh>0$ for the remainder of the proof. The non-trivial step in the study of problem~\eqref{system_zerog} is to realize that the transformations
    \begin{equation*}
        \overline{\chi}(r) := -\frac{2\sfh  \, \chi'(r)}{r^2}, \quad \overline{\overline{\chi}}(s) := e^{-2\sfh } \, \overline{\chi}\left( \frac{s}{2} - 2\sfh  \right)e^{s/2}
    \end{equation*}
    reduce the ODE to Kummer's equation~\eqref{Kummer} of parameters $a=1-\sfh$ and $b=2$, ultimately leading to the ansatz
    \begin{equation} \label{chi_formula_zerog}
        \chi(r) := \frac{1}{2\sfh  \, U_{\sfh}(4\sfh )}\int_r^\infty t^2 U_{\sfh}(4\sfh +2t)e^{-t} \ \rmd t.
    \end{equation}
    We do not have to show proof of this reduction; for mathematical correctness, it suffices to prove now that~\eqref{chi_formula_zerog} is indeed a solution to~\eqref{system_zerog}. To this end, first note that $\chi$ is well-defined: We have already noted that $U_\sfh(4\sfh)$ is never zero, and the integrability of $t^2 U_{\sfh}(4\sfh +2t)e^{-t}$ around $t=\infty$ is granted by the $e^{-t}$ term, which decays faster than the polynomial-like term $t^2U_{\sfh}(4\sfh +2t) \sim t^2 t^{\sfh-1}$ grows (see~\eqref{Un_asymp}). This also implies $\chi(\infty) = 0$. Next, to check that $\chi$ solves the differential equation in~\eqref{system_zerog}, we find an expression for $\chi$ not involving any integrals. Consider the auxiliary function
    \begin{equation*}
        V(t) = U_{\sfh}(4\sfh +2t)e^{-t}.    
    \end{equation*}
    The Kummer equation~\eqref{Kummer} for $U_{\sfh}$ yields, after some simple algebra, the following second-order equation for $V$:
    \begin{equation} \label{V_equation}
        (2\sfh +t)V''(t) + 2V'(t) = tV(t).
    \end{equation}
    Then we use integration by parts to express a multiple of $\chi$ in terms of $V$ and $V'$ without having to use any integral signs in the end:
    \begin{align}
        -2\sfh  \, U_{\sfh}(4\sfh ) \chi(r) &= - \int_r^\infty t^2 V(t) \ \rmd t \label{aux_chi_aux} \\
        &= -\int_r^\infty t \big( (2\sfh +t)V''(t) + 2V'(t) \big) \ \rmd t \notag \\
        &= -(t^2+2\sfh t)V'(t)\bigg|_{t=r}^\infty + \int_r^\infty (2\sfh +t)V'(t) \ \rmd t - 2tV(t)\bigg|_{t=r}^\infty + \int_r^\infty 2V(t) \ \rmd t \notag \\
        &= (r^2+2\sfh r)V'(r) + (2\sfh +t)V(t)\bigg|_{t=r}^\infty - \int_r^\infty 2V(t) \ \rmd t + 2rV(r) + \int_r^\infty 2V(t) \ \rmd t \notag \\
        &= (r^2+2\sfh r)V'(r) - 2\sfh V(r). \label{aux_chi}
    \end{align}
    Differentiating this two times and using the $V$ equation~\eqref{V_equation}, we also find
    \begin{equation*}
        -2\sfh  \, U_{\sfh}(4\sfh ) \chi'(r) = r^2 V(r), \quad -2\sfh  \, U_{\sfh}(4\sfh ) \chi''(r) = 2r V(r) + r^2 V'(r),
    \end{equation*}
    and now the $\chi$ equation~\eqref{system_zerog} is easy to verify:
    \begin{align*}
        -2\sfh  \, &U_{\sfh}(4\sfh ) \left( \chi'' - \frac{2(r+3\sfh)}{r(r-2\sfh )}\chi' - \frac{r}{r+2\sfh }\chi \right) \\
        &= 2rV + r^2 V' - \frac{2(r+3\sfh )}{r(r+2\sfh )}r^2V - \frac{r}{r+2\sfh }\big( (r^2+2\sfh r)V' - 2\sfh V \big) \\
        &= \left( 2r - \frac{2r(r+3\sfh )}{r+2\sfh } + \frac{2\sfh r}{r+2\sfh } \right)V + \left(r^2 - \frac{r(r^2+2\sfh r)}{r+2\sfh }\right)V' \\
        &= 0.
    \end{align*}
    To finish checking the boundary conditions on $\chi$, a different integral expression is more convenient:
    \begin{equation} \label{chi_alternative}
        \chi(r) = 1 - \frac{1}{2\sfh  \, U_{\sfh}(4\sfh )}\int_0^r t^2 U_{\sfh}(4\sfh +2t)e^{-t} \ \rmd t, \quad r>0.
    \end{equation}
    This expression follows immediately from our definition~\eqref{chi_formula_zerog} of $\chi$ after using the fact that
    \begin{equation*}
        \int_0^\infty t^2 U_{\sfh}(4\sfh +2t)e^{-t} \ \rmd t = 2\sfh  \, U_{\sfh}(4\sfh ),    
    \end{equation*}
    which in turn is obtained from~\eqref{aux_chi_aux} and ~\eqref{aux_chi} when setting $r=0$. Then plugging $r=0$ into~\eqref{chi_alternative} is allowed because the integrand is not singular at $t=0$, and we find $\chi(0) = 1$.
    
    Hence a solution $\chi$ to~\eqref{system_zerog} exists also for all $\sfh >0$. We also point out that, since the ODE~\eqref{system_zerog} arises from the $\chi$ equation from~\eqref{embltp} with $\alpha=0$ and $\beta=r/(r+2\sfh)$, our previous Lemma~\ref{lemma_sign_chi} is applicable and yields
    \begin{equation} \label{signs_chi2}
        0 < \chi(r) < 1, \quad \chi'(r) < 0 \quad\text{for all } r>0.
    \end{equation}
    We will use these properties shortly. Next we tackle the estimates~\eqref{X0}--\eqref{X5}. For this we need lower and upper bounds for the quantity $U_{\sfh}(4\sfh +2t)/\big(2\sfh  \, U_{\sfh}(4\sfh )\big)$ featuring in the solution formulas proved for $\chi$. Lemma~\ref{lemma_U} provides bounds for $U_{\sfh}(s)$, $s\geq 4\sfh$, in the $0\leq \sfh < 1$ regime by bracketing $U_{\sfh}(s)$ between constant multiples of the function $s^{-1}(1+s)^\sfh$. We claim that this bracketing holds in general (but for different constants): For $\sfh > 0$, there exist $\sfh$-dependent constants $M_1,M_2>0$ such that
    \begin{equation} \label{estimate_U_general}
        M_1 s^{-1}(1+s)^\sfh \leq U_{\sfh}(s) \leq M_2 s^{-1}(1+s)^\sfh \quad\text{for all } s\geq 4\sfh .
    \end{equation}
    To prove this, we check the sufficient conditions described after Definition~\ref{def_equivalent} of the equivalence relation $\simeq$. First note that the two limits
    \begin{equation*}
        \lim_{s\to a} \frac{U_{\sfh}(s)}{s^{-1}(1+s)^\sfh}, \quad a = 4\sfh ^+ \text{ and } a=\infty,
    \end{equation*}
    are finite and nonzero --- for $a=4\sfh ^+$ this is due to $U_{\sfh}(4\sfh ) > 0$, while for $a=\infty$ it follows from the
    asymptotic property~\eqref{Un_asymp}. Moreover, given that $\chi'(r)$ is found from~\eqref{chi_formula_zerog} as
    \begin{equation} \label{chi_prime_zerog}
        \chi'(r) = -\frac{U_{\sfh}(4\sfh +2r)}{2\sfh \,U_{\sfh}(4\sfh )} r^2e^{-r},
    \end{equation}
    and considering that we already know $U_{\sfh}(4\sfh )>0$ and $\chi'(r)<0$ (see~\eqref{signs_chi2}), the strict positivity of $U_{\sfh}(4\sfh+2r)$ for all $r>0$ follows, that is, $U_{\sfh}(s) > 0$ for all $s\geq 4\sfh$. Then $U_{\sfh}(s) \simeq s^{-1}(1+s)^\sfh$ for $s\in[4\sfh ,\infty)$, that is, bounds of the form~\eqref{estimate_U_general} hold as claimed. They imply, for all $t>0$,
    \begin{multline} \label{Ufrac_bound1}
        \frac{U_{\sfh}(4\sfh +2t)}{2\sfh \,U_{\sfh}(4\sfh )} \geq \frac{1}{2\sfh }\frac{M_1 (4\sfh +2t)^{-1}(1+4\sfh +2t)^{\sfh}}{M_2 (4\sfh )^{-1}(1+4\sfh )^{\sfh} } \\
        = \frac{2M_1}{(1+4\sfh )^{\sfh}M_2} (4\sfh +2t)^{-1}(1+4\sfh +2t)^{\sfh} =: N_1 (4\sfh +2t)^{-1}(1+4\sfh +2t)^{\sfh}
    \end{multline}
    and analogously
    \begin{equation} \label{Ufrac_bound2}
        \frac{U_{\sfh}(4\sfh +2t)}{2\sfh \,U_{\sfh}(4\sfh )} \leq \frac{2M_2}{(1+4\sfh )^{\sfh}M_1} (4\sfh +2t)^{-1}(1+4\sfh +2t)^{\sfh} =: N_2 (4\sfh +2t)^{-1}(1+4\sfh +2t)^{\sfh}.
    \end{equation}
    Also recall that in Lemma~\eqref{lemma_U} we derived the values
    \begin{equation*}
        M_1 = \frac{0.232}{(1-\sfh) \, \Gamma(1-\sfh)}, \quad M_2 = \frac{1.37}{(1-\sfh) \, \Gamma(1-\sfh)} \quad\text{for } 0\leq \sfh <1,
    \end{equation*}
    with which we now deduce
    \begin{equation} \label{Ufrac_bound3}
        N_1 = \frac{2\cdot 0.232}{1.37(1+4\sfh )^{\sfh}} \geq \frac{2\cdot 0.232}{1.37\cdot 5} > 0.0677, \quad N_2 = \frac{2\cdot 1.37}{0.232(1+4\sfh )^{\sfh}} < 11.9 \quad\text{for } 0\leq \sfh <1.
    \end{equation}
    Now we simply have to use~\eqref{Ufrac_bound1} and~\eqref{Ufrac_bound2} to prove~\eqref{X0}--\eqref{X5} for all $\sfh>0$. Some of the inequalities from Lemma~\eqref{lemma_int_quanti}, employing the constants $K_j$, will be needed. Thus each $X_j$ will be found in terms of the constants $K_j$ and $N_1,N_2$, and furthermore, as the reader's calculator will check, the explicit values of $X_j$ claimed in~\eqref{Ufrac_bound3} for the case $0<\sfh<1$ follow by subbing in the explicit ones for $K_j$ (see~\eqref{numerical_K}) and $N_1,N_2$ (see~\eqref{Ufrac_bound3} just above).
    
    \begin{itemize}
    
        \item[($\bm{X_{0}}$)] Use the definition~\eqref{chi_formula_zerog} of $\chi$ to see that $X_0 := N_1K_{14}$ works:
        \begin{multline*}
            \chi(r) = \frac{1}{2\sfh \,U_{\sfh}(4\sfh )}\int_r^\infty t^2U_{\sfh}(4\sfh +2t)e^{-t} \ \rmd t \geq N_1 \int_r^\infty t^2(4\sfh +2t)^{-1}(1+4\sfh +2t)^{\sfh} e^{-t} \ \rmd t \\
            \geq N_1K_{14} (1+4\sfh +2r)^{1+\sfh }e^{-r}.
        \end{multline*}
    
        \item[($\bm{X_{1}}$)] Analogously to the above item, we find that $X_1 := N_2K_{13}$ works.
    
        \item[($\bm{X_{2}}$)] The derivative $\chi'(r)$ was given in~\eqref{chi_prime_zerog}, immediately yielding $X_2 := N_2$:
        \begin{equation}
            \chi'(r) = -\frac{U_{\sfh}(4\sfh +2r)}{2\sfh  \, U_{\sfh}(4\sfh )}r^2e^{-r} \geq -N_2r^2(4\sfh +2r)^{-1}(1+4\sfh +2r)^{\sfh}e^{-r}.
        \end{equation}
    
        \item[($\bm{X_{3}}$)] This time we use the alternative formula~\eqref{chi_alternative} for $\chi$ to find $X_3 := N_2K_{1}$:
        \begin{multline*}
            1 - \chi(r) = \frac{1}{2\sfh  \, U_{\sfh}(4\sfh )}\int_0^r t^2U_{\sfh}(4\sfh +2t)e^{-t} \ \rmd t \leq N_2\int_0^r t^2(4\sfh +2t)^{-1}(1+4\sfh +2t)^{\sfh} e^{-t} \ \rmd t \\
            \leq N_2K_1 r^3(4\sfh +2r)^{-1}(1+4\sfh +2r)^{-2}.
        \end{multline*}
    
        \item[($\bm{X_{4}}$)] We use~\eqref{X0} and~\eqref{X1} to bound the $\chi$ factors in formula~\eqref{upsilon} for $\upsilon(r)$, ultimately arriving at the value $X_4 := 2X_1K_6/X_0^2$:
        \begin{align*}
            \upsilon(r) &= \chi(r)\int_0^r \frac{t^3 \ \rmd t}{(2\sfh +t)\chi(t)^2 } \\
            &\leq \frac{2X_1}{X_0^2} (1+4\sfh +2r)^{1+\sfh }e^{-r} \int_0^r \frac{t^3 e^{2t} \ \rmd t}{(4\sfh +2t)(1+4\sfh +2t)^{2+2\sfh }} \\
            &\leq \frac{2X_1}{X_0^2} (1+4\sfh +2r)^{1+\sfh }e^{-r} K_6 r^4 (4\sfh +2r)^{-1}(1+4\sfh +2r)^{-3-2\sfh }e^{2r} \\
            &= \frac{2X_1K_6}{X_0^2} r^4 (4\sfh +2r)^{-1} (1+4\sfh +2r)^{-2-\sfh} e^r.
        \end{align*}
    
        \item[($\bm{X_{5}}$)] We differentiate $\upsilon(r)$ and bound away with the help of~\eqref{X0} and~\eqref{X2}, ultimately arriving at the value $X_5 := X_2K_6/(4X_0^2) + 2/X_0$:
        \begin{align*}
            \av{\upsilon'(r)} &= \av{\chi'(r)\int_0^r \frac{t^3 \ \rmd t}{(2\sfh +t)\chi(t)^2 } + \frac{r^3}{(2\sfh +r)\chi(r)}} \\
            &\leq 2\av{\chi'(r)}\int_0^r \frac{t^3 \ \rmd t}{(4\sfh +2t)\chi(t)^2} + \frac{2r^3}{(4\sfh +2r)\chi(r)} \\
            &\leq \frac{2X_2}{X_0^2} r^2(4\sfh +2r)^{-1}(1+4\sfh +2r)^{\sfh}e^{-r} \int_0^r \frac{t^3 e^{2t} \ \rmd t}{(4\sfh +2t)(1+4\sfh +2t)^{2+2\sfh }} \\
            &\qquad + \frac{2}{X_0}\frac{r^3}{(4\sfh +2r)(1+4\sfh +2r)^{1+\sfh }e^{-r}} \\
            &\leq \frac{2X_2}{X_0^2} r^2(4\sfh +2r)^{-1}(1+4\sfh +2r)^{\sfh}e^{-r} K_6 r^4 (4\sfh +2r)^{-1}(1+4\sfh +2r)^{-3-2\sfh }e^{2r} \\
            &\qquad + \frac{2}{X_0} r^3(4\sfh +2r)^{-1}(1+4\sfh +2r)^{-1-\sfh}e^r \\
            &= r^3(4\sfh +2r)^{-1}(1+4\sfh +2r)^{-1-\sfh}e^r\left( \frac{2X_2K_6}{X_0^2} r^3(4\sfh +2r)^{-1}(1+4\sfh +2r)^{-2} + \frac{2}{X_0} \right) \\
            &\leq r^3(4\sfh +2r)^{-1}(1+4\sfh +2r)^{-1-\sfh}e^r\left( \frac{X_2K_6}{4X_0^2} + \frac{2}{X_0} \right). \qedhere
        \end{align*}
        
    \end{itemize}
\end{proof}

\begin{remark}
    The conjugate solution $\upsilon$ defined in~\eqref{upsilon} also solves the second-order ODE~\eqref{system_zerog} and is independent from $\chi$ at all $r>0$, in the sense that the Wronskian of $\chi$ and $\upsilon$ never vanishes over this range. In fact, $\upsilon$ is the solution that results from the reduction-of-order technique applied to $\chi$. After quickly checking that $\upsilon(0) = 0$ and $\upsilon(\infty) = \infty$, it can be deduced that $\chi$ as given in~\eqref{chi_formula_zerog} is in fact the \textit{unique} solution of the boundary-value problem~\eqref{system_zerog}.
\end{remark}

To close this subsection, we prove a formula for the solution to the \textit{inhomogeneous} version of the differential equation~\eqref{system_zerog} for $\chi$. It will be required when tackling the $\sfg>0$ problem, and it is here that the notion~\eqref{upsilon} of the conjugate $\upsilon$ to $\chi$ is needed.

\begin{corollary} \label{cor_inhomog_zerog}
    Let $\sfh\geq 0$. Let $\chi$ and $\upsilon$ be as in Theorem~\ref{thm_chi_zerog}, and suppose that a function $H\in C^0\big((0,\infty)\big)$ satisfies
    \begin{equation} \label{thm_inhomog_omega}
        \av{H(r)} \leq C r^x (4\sfh +2r)^y (1+4\sfh +2r)^z  e^{-r} \quad\text{for all } r>0,
    \end{equation}
    for some $C>0$ and $x,y,z\in\bbR$ with $x>2$,  $x+y\in (-2,1)$ and $x+y+z\neq \sfh$. Then the formula
    \begin{equation} \label{solution_inhomog}
        \psi(r) = -\chi(r)\int_0^r \frac{s+2\sfh }{s^3} \upsilon(s)H(s) \ \rmd s
        - \upsilon(r) \int_r^\infty \frac{s+2\sfh }{s^3} \chi(s)H(s) \ \rmd s
    \end{equation}
    defines a solution to the inhomogeneous boundary-value problem
    \begin{equation} \label{inhomog_bvp}
        \left\{\begin{array}{l}
            \psi''(r) = \dfrac{2(r+3\sfh )}{r(r+2\sfh )}\psi'(r) + \dfrac{r}{r+2\sfh }\psi(r) + H(r), \\ \\
            \psi(0) = \psi(\infty) = 0.
        \end{array}\right.
    \end{equation}
    We also write down its derivative:
    \begin{equation} \label{solution_prime_inhomog}
        \psi'(r) = -\chi'(r)\int_0^r \frac{s+2\sfh }{s^3} \upsilon(s)H(s) \ \rmd s
        - \upsilon'(r) \int_r^\infty \frac{s+2\sfh }{s^3} \chi(s)H(s) \ \rmd s.
    \end{equation}
\end{corollary}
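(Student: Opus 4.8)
The plan is to recognize~\eqref{solution_inhomog} as the classical variation-of-parameters representation for the inhomogeneous equation in~\eqref{inhomog_bvp}, built from the two homogeneous solutions $\chi$ and $\upsilon$ produced in Theorem~\ref{thm_chi_zerog}, and then to check three things in turn: that the improper integrals appearing in~\eqref{solution_inhomog} and~\eqref{solution_prime_inhomog} are finite for every $r>0$ under the stated hypotheses on $H$; that $\psi$ solves the differential equation; and that $\psi$ satisfies both boundary conditions. First I would write the homogeneous ODE~\eqref{system_zerog} in the form $\psi''-P\psi'-Q\psi=0$ with $P(r)=\tfrac{2(r+3\sfh)}{r(r+2\sfh)}=\tfrac{3}{r}-\tfrac{1}{r+2\sfh}$ and $Q(r)=\tfrac{r}{r+2\sfh}$, and record the Wronskian. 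By Theorem~\ref{thm_chi_zerog} and the Remark that follows it, $\chi$ and $\upsilon$ are independent homogeneous solutions; Abel's identity, together with the explicit $\upsilon'(r)=\chi'(r)\int_0^r\frac{t^3\,\rmd t}{(2\sfh+t)\chi(t)^2}+\frac{r^3}{(2\sfh+r)\chi(r)}$ obtained by differentiating~\eqref{upsilon}, gives
\[
    W(r):=\chi(r)\upsilon'(r)-\upsilon(r)\chi'(r)=\frac{r^3}{r+2\sfh}.
\]
Hence $1/W(s)=(s+2\sfh)/s^3$, so~\eqref{solution_inhomog} is precisely $\psi=\chi\,u_1+\upsilon\,u_2$ with $u_1'=-\upsilon H/W$ and $u_2'=\chi H/W$, where the antiderivative $u_1$ is based at $0$ and $u_2$ at $\infty$ — a choice made so that the two boundary conditions come out automatically.

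Next I would verify the ODE. Differentiating~\eqref{solution_inhomog}, the boundary terms generated by differentiating the two integrals are $-\chi(r)\tfrac{r+2\sfh}{r^3}\upsilon(r)H(r)$ and $+\upsilon(r)\tfrac{r+2\sfh}{r^3}\chi(r)H(r)$, which cancel and leave exactly~\eqref{solution_prime_inhomog}. Differentiating once more, the analogous pair of terms now combines to $\tfrac{r+2\sfh}{r^3}\bigl(\chi(r)\upsilon'(r)-\upsilon(r)\chi'(r)\bigr)H(r)=\tfrac{r+2\sfh}{r^3}W(r)H(r)=H(r)$, while the remaining contributions are $-\chi''(r)\int_0^r(\cdots)-\upsilon''(r)\int_r^\infty(\cdots)$; subtracting $P\psi'+Q\psi$ and using that $\chi,\upsilon$ solve the homogeneous equation annihilates these, leaving $\psi''-P\psi'-Q\psi=H$. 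This part is routine variation of parameters.

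The technical heart is the convergence of the four improper integrals and the two boundary conditions, and this is where the hypotheses $x>2$, $x+y\in(-2,1)$ and $x+y+z\neq\sfh$ enter, through the exponent estimates~\eqref{X0}--\eqref{X5} of Theorem~\ref{thm_chi_zerog} and the bound~\eqref{thm_inhomog_omega} on $H$. Near $s=0$ one has $\chi(s)\to1$ and $\upsilon(s)=O(s^{4})$ for $\sfh>0$ (respectively $O(s^{3})$ for $\sfh=0$), so $\tfrac{s+2\sfh}{s^3}\upsilon(s)H(s)=O(s^{1+x})$ — integrable since $x>2$ — while $\tfrac{s+2\sfh}{s^3}\chi(s)H(s)=O(s^{x-3})$ (respectively $O(s^{x-2})$), again controlled by $x>2$; near $s=\infty$, combining the $e^{\pm s}$ factors in~\eqref{X0}--\eqref{X5} with the $e^{-s}$ of~\eqref{thm_inhomog_omega} leaves every integrand with a net $e^{-s}$ or $e^{-2s}$, so all four integrals converge there. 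For $\psi(0)=0$: the $\int_0^r$ term vanishes as $r\to0^+$ since $\chi(0)=1$ and $\int_0^0(\cdots)=0$, and the $\upsilon(r)\int_r^\infty(\cdots)$ term vanishes because $\upsilon(r)=O(r^{3})$ while $\int_r^\infty\tfrac{s+2\sfh}{s^3}\chi(s)H(s)\,\rmd s$ grows at most like $r^{x+y-1}$ as $r\to0^+$ (using $x+y<1$), giving a product $O(r^{x+y+2})\to0$ by $x+y>-2$. For $\psi(\infty)=0$: the $\int_0^r$ term is bounded by $\chi(r)$ — which decays like $e^{-r}$ times a power — times $\int_0^r(\cdots)$, which grows at most polynomially with exponent governed by $x+y+z-\sfh$ (the condition $x+y+z\neq\sfh$ excludes a logarithmic correction), hence tends to $0$; and the $\upsilon(r)\int_r^\infty(\cdots)$ term is $O\!\bigl(r^{x+y+z}e^{-r}\bigr)\to0$, using $\upsilon(r)=O(r^{1-\sfh}e^{r})$ and $\int_r^\infty(\cdots)=O(r^{x+y+z+\sfh-1}e^{-2r})$.

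I expect the main obstacle to be this last step: the careful bookkeeping of the power-law exponents at both endpoints, matching those of $\chi$, $\upsilon$, $\upsilon'$ from Theorem~\ref{thm_chi_zerog} against those of $H$, and in particular confirming that the boundary value at $r=\infty$ genuinely vanishes — which forces the exponential decay of $\chi$ (resp.\ the $e^{-2r}$ in the tail integral) to outrun the polynomial growth of the companion factor. Everything else — the Wronskian computation and the verification of the equation — is standard.
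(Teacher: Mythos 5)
Your proposal is correct and takes essentially the same route as the paper: variation of parameters built on $\chi$ and its conjugate $\upsilon$ (the paper skips the Wronskian computation and the verification of the ODE and of~\eqref{solution_prime_inhomog} as routine, just as you anticipate), with the substantive work being the endpoint bookkeeping based on the estimates~\eqref{X0}--\eqref{X5}, which the paper carries out by invoking Lemma~\ref{lemma_int_quali} (items (b), (d), (f)) rather than by hand. Two cosmetic slips do not affect the argument: in the $\sfh=0$ case you drop the $s^{y}$ factor coming from $(4\sfh+2s)^{y}=(2s)^{y}$ (the first integrand is $O(s^{1+x+y})$, integrable because $x+y>-2$, not because $x>2$), and your blanket claim that every integrand retains net exponential decay at infinity fails for the $\int_0^r$ integrand involving $\upsilon H$ (net $e^{0}$) --- but that integral has a finite upper limit and you later correctly treat its polynomial growth against the $e^{-r}$ decay of $\chi$, so the conclusion stands.
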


\begin{proof}
    Assuming that $\psi$ is well-defined, it is immediate to check formula~\eqref{solution_prime_inhomog} and also that it solves the given inhomogeneous equation. In doing this, it helps to first check that $\upsilon$ solves the homogeneous one. These checks are skipped. The non-trivial claim is that $\psi$ is well-defined and satisfies the claimed null boundary conditions. We prove these properties separately for the two terms that compose it in formula~\eqref{solution_inhomog}.
    
    Due to~\eqref{thm_inhomog_omega} and the estimate~\eqref{X4} for $\upsilon$, there is a constant that we also call $C$ --- we let the value of the arbitrary constant $C$ get updated as needed along this proof --- such that the first integrand in formula~\eqref{solution_inhomog} obeys
    \begin{equation*}
        \av{ \frac{s+2\sfh }{s^3}\upsilon(s)H(s) } \leq C s^{1+x} (4\sfh +2s)^y (1+4\sfh +2s)^{-2-{\sfh}+z}.
    \end{equation*}
    On the right side of this inequality, we see a function $f(s)$ which has positive, finite values for both of the limits $\lim_{s\to 0^+} s^{-c}f(s)$ and $\lim_{s\to \infty} s^{-d}f(s)$, where
    \begin{equation*}
        c = \left\{\begin{array}{ll}
            1+x+y > -1 & \text{in case } \sfh = 0, \\
            1+x > 3 > -1 & \text{in case } \sfh > 0,
        \end{array}\right. \quad d = -1-\sfh+x+y+z \neq -1.
    \end{equation*}
    Then Lemma~\ref{lemma_int_quali} (item (b)) applies and yields the existence of $C>0$ such that
    \begin{multline*}
        \av{ \int_0^r \frac{s+2\sfh }{s^3}\upsilon(s)H(s) \ \rmd s } \leq C\int_0^r s^{1+x} (4\sfh +2s)^y (1+4\sfh +2s)^{-2-{\sfh}+z} \ \rmd s \\
        \leq \left\{\begin{array}{ll}
            C r^{2+x+y}(1+r)^{-2-{\sfh}+z} & \text{in case } \sfh = 0, \\
            C r^{2+x}(1+r)^{-2-{\sfh}+y+z} & \text{in case } \sfh > 0.
        \end{array}\right.
    \end{multline*}
    Now each of these two expressions is bounded above in its corresponding case by a multiple of the function $r^{2+x}(4\sfh +2r)^y(1+4\sfh +2r)^{-2-{\sfh}+z}$. Thus, also using estimate~\eqref{X1} for $\chi$, we find for all $\sfh\geq 0$
    \begin{multline*}
        \av{\chi(r)\int_0^r \frac{s+2\sfh }{s^3} \upsilon(s)H(s) \ \rmd s} \leq C (1+4\sfh +2r)^{1+\sfh }e^{-r} r^{2+x}(4\sfh +2r)^y(1+4\sfh +2r)^{-2-{\sfh}+z} \\
       = C r^{2+x}(4\sfh +2r)^y (1+4\sfh +2r)^{-1+z} e^{-r},
    \end{multline*}
    which is everywhere bounded, vanishes at $r=0$ (due to $x+y > -2$ if $\sfh = 0$, and due to $x>-2$ if $\sfh>0$), and also clearly vanishes at $r=\infty$, as needed.
    
    We proceed similarly for the second integrand in formula~\eqref{solution_inhomog}. This time (details skipped), it will be item (d) for $\sfh=0$ and item (f) for $\sfh > 0$ from Lemma~\ref{lemma_int_quali} that will help in ultimately deriving
    \begin{equation*}
        \av{\upsilon(r)\int_r^\infty \frac{s+2\sfh }{s^3} \upsilon(s)H(s) \ \rmd s} \leq C r^4(4\sfh +2r)^{-2+x+y}(1+4\sfh +2r)^{-2-2\sfh +z} e^{-r},
    \end{equation*}

    which is everywhere bounded, vanishes at $r=0$ (due to $x+y > -2$ if $\sfh = 0$), and also clearly vanishes at $r=\infty$, as needed.
\end{proof}

\begin{remark}
    The solution formula~\eqref{solution_inhomog} was found, of course, through the variation-of-parameters technique applied to the fundamental solutions $\chi$ and $\upsilon$ of the homogeneous ODE. This fact can be used to prove the \textit{uniqueness} of the solution to the boundary-value problem~\eqref{inhomog_bvp}.
\end{remark}

\subsection{The mass/deviation problem with gravity} \label{subsec_nuchi_g}

We now prove the existence of a solution to the mass/devi\-ation problem~\eqref{system_nu_chi},~\eqref{conditions_nu_chi} of parameters $\sfg>0$ and $\sfh\geq 0$ when $\sfg$ is small enough. A free parameter $\eps\in (0,1/2]$ is part of this result's statement, and we immediately see that choosing a smaller $\eps$ improves the power of the estimates to be obtained, which feature $\eps$ as an exponent on the right side, but this comes at the expense of having to make $\sfg$ smaller. For simplicity, we will fix the value $\eps=1/2$ when computing explicit bounds in the case $0\leq\sfh<1$.

\begin{theorem} \label{thm_nuchi}
    Let $\sfh\geq 0$ and $0<\eps\leq 1/2$. Then there exist constants $W_1,\ldots,W_4,w_1,\ldots,w_3 > 0$ and a threshold value $\sfg_0>0$ (all depending on $\sfh$ and/or $\eps$) such that, for all $0<\sfg<\sfg_0$, a solution $(\nu,\chi)\in C^1\big((0,\infty)\big)\times C^2\big((0,\infty)\big)$ to the mass/deviation problem~\eqref{system_nu_chi},~\eqref{conditions_nu_chi} of parameters $(\sfg,\sfh)$ exists satisfying the following inequalities for all $r>0$:
    \begin{align}
        &\av{1 + \chi(r)} = 1+\chi(r) \leq W_1, \label{tomato1} \\
        &\av{1 - \chi(r)} = 1 - \chi(r) \leq W_2\, r^3(4\sfh +2r)^{-1} (1+4\sfh +2r)^{-2}, \label{tomato2} \\
        &\av{\chi(r)} = \chi(r) \leq W_3\, (1+4\sfh +2r)^{1+\sfh+\eps}e^{-r}, \label{tomato3} \\
        &\av{\chi'(r)} = -\chi'(r) \leq W_4\, r^2(4\sfh +2r)^{-1}(1+4\sfh +2r)^{\sfh+\eps} e^{-r},  \label{tomato4} \\
        &\av{\nu(r)-\rng\nu(r)} \leq \sfg w_1 r^2 (4\sfh+2r)^{-1} (1+4\sfh+2r)^{-1}, \label{tomato5} \\
        &\av{\chi(r)-\rng\chi(r)} \leq \sfg w_2 r^4(4\sfh+2r)^{-2}(1+4\sfh+2r)^{-1+\sfh+\eps}e^{-r}, \label{tomato6} \\
        &\av{\chi'(r)-{\rng\chi}'(r)} \leq \sfg w_3 \, r^3(4\sfh+2r)^{-2}(1+4\sfh+2r)^{\sfh+\eps}e^{-r}, \label{tomato7}
    \end{align}
    where $(\rng\nu,\rng\chi)$ is the solution corresponding to parameters $(0,\sfh)$. Furthermore, if $0\leq\sfh<1$ and $\eps = 1/2$, the constants mentioned above can be taken as
    \begin{equation} \label{numerical_W}
        \begin{array}{|l||c|c|c|c|c|c|c|c|} \hline
            & \sfg_0 & W_1 & W_2 & W_3 & W_4 & w_1 & w_2 & w_3 \\ \hline \hline
            \sfh = 0 & 2\cdot 10^{-4} & 3.04 & 10.3 & 1.07 & 2.50 & 24.3 & 4950 & 2230 \\ \hline 
            \sfh\in (0,1) & 1.17\cdot 10^{-17} & 504 & 1120 & 9.00 & 12.4 & 4.63\cdot 10^5 & 8.62\cdot 10^{16} & 8.97\cdot 10^{15} \\ \hline 
        \end{array}
    \end{equation}
\end{theorem}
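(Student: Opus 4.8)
Following the plan announced just before Definition~\ref{def_mdproblem}, I treat~\eqref{system_nu_chi}--\eqref{conditions_nu_chi} as a perturbation of the $\sfg=0$ problem. Set $\nu=\rng\nu+\sfg\xi=-\sfh+\sfg\xi$ and $\chi=\rng\chi+\sfg\psi$, where $(\rng\nu,\rng\chi)$ is the explicit solution of parameters $(0,\sfh)$ furnished by Theorem~\ref{thm_chi_zerog}, and look for corrections $\xi\in C^1$, $\psi\in C^2$. Substituting into~\eqref{system_nu_chi} and using that $\rng\chi$ is annihilated by the $\sfg=0$ operator $L_0[\phi]:=\phi''-\tfrac{2(r+3\sfh)}{r(r+2\sfh)}\phi'-\tfrac{r}{r+2\sfh}\phi$, one checks that $(\xi,\psi)$ must solve
\begin{equation*}
    \xi'(r)=\frac{(1+\chi)(1-\chi)}{2r^2}-\frac{(r-2\nu)(\chi')^2}{2r^3},\qquad \xi(0)=0,
\end{equation*}
together with $L_0[\psi]=H[\xi,\psi]$, $\psi(0)=\psi(\infty)=0$, where $\chi=\rng\chi+\sfg\psi$ and $\nu=-\sfh+\sfg\xi$ are to be substituted throughout and
\begin{equation*}
    H[\xi,\psi]=-\frac{1}{\sfg}\left(\Big[\tfrac{2(r+3\sfh)}{r(r+2\sfh)}-\tfrac{2(r-3\nu)}{r(r-2\nu)}\Big]\chi'+\Big[\tfrac{r}{r+2\sfh}-\tfrac{r}{r-2\nu}\Big]\chi\right)+\frac{(1+\chi)(1-\chi)\chi'}{r^2(r-2\nu)}.
\end{equation*}
The key point is that $H$ is in fact \emph{regular in} $\sfg$: since $r-2\nu=(r+2\sfh)-2\sfg\xi$, the two bracketed differences are $O(\sfg)$ and so cancel the explicit $1/\sfg$. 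Solving the $\xi$-equation by direct integration and the $\psi$-equation by the variation-of-parameters formula~\eqref{solution_inhomog} of Corollary~\ref{cor_inhomog_zerog}, I obtain a map $\Phi\colon(\xi,\psi)\mapsto(\xi_{\mathrm{new}},\psi_{\mathrm{new}})$ whose fixed points are exactly the solutions of the mass/deviation problem; the boundary conditions $\nu(0)=-\sfh$, $\chi(0)=1$, $\chi(\infty)=0$ come for free from $\xi(0)=0$, $\rng\chi(0)=1$ and $\psi(0)=\psi(\infty)=0$.

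\textbf{The contraction argument.} I run $\Phi$ on the closed ball $\norm{(\xi,\psi)}\le R$ in the weighted Banach space whose norm is the maximum of the three quantities $\sup_{r>0}\av{\xi(r)}\,r^{-2}p(r)q(r)$, $\sup_{r>0}\av{\psi(r)}\,r^{-4}p(r)^{2}q(r)^{1-\sfh-\eps}e^{r}$ and $\sup_{r>0}\av{\psi'(r)}\,r^{-3}p(r)^{2}q(r)^{-\sfh-\eps}e^{r}$, with $p(r)=4\sfh+2r$ and $q(r)=1+4\sfh+2r$ as in Lemma~\ref{lemma_int_quanti}; these weights are exactly the ones that appear on the right of~\eqref{tomato5}--\eqref{tomato7}. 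Two things then have to be verified, both under a smallness assumption $\sfg<\sfg_0$: (i) $\Phi$ maps the ball of some radius $R$ into itself, where the $\sfg$-independent part of $R$ is built from the constants $X_j$ of Theorem~\ref{thm_chi_zerog} and the $K_j$ of Lemma~\ref{lemma_int_quanti}; and (ii) $\Phi$ is a contraction on that ball. For~(i), $\xi_{\mathrm{new}}$ is bounded using~\eqref{X1}--\eqref{X3} and the ball bounds, followed by the integral inequalities of Lemma~\ref{lemma_int_quanti}; for $\psi_{\mathrm{new}}$, one first establishes a pointwise bound $\av{H[\xi,\psi](r)}\le C\,r^{x}p(r)^{y}q(r)^{z}e^{-r}$ whose exponents meet the hypotheses $x>2$, $x+y\in(-2,1)$, $x+y+z\ne\sfh$ of Corollary~\ref{cor_inhomog_zerog}, and then propagates it through~\eqref{solution_inhomog}--\eqref{solution_prime_inhomog} using~\eqref{X0},~\eqref{X1},~\eqref{X4},~\eqref{X5} and the $K_j$. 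For~(ii), the difference $\Phi(\xi_1,\psi_1)-\Phi(\xi_2,\psi_2)$ is handled the same way, and every term carries a spare factor of $\sfg$ because all genuinely nonlinear and all $\nu$-dependent contributions are multiplied by $\sfg$; hence the Lipschitz constant is $\le C\sfg<1$ for $\sfg<\sfg_0$. Banach's fixed-point theorem then produces $(\xi,\psi)$, and bootstrapping through the explicit integral formulas upgrades the regularity to $\xi\in C^1$, $\psi\in C^2$, hence $\nu\in C^1$, $\chi\in C^2$.

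\textbf{Reading off the conclusions.} Membership of $(\xi,\psi)$ in the ball is, after multiplying by $\sfg$ and recalling $\nu-\rng\nu=\sfg\xi$, $\chi-\rng\chi=\sfg\psi$, precisely estimates~\eqref{tomato5}--\eqref{tomato7}. The zeroth-order bounds~\eqref{tomato1}--\eqref{tomato4} follow by writing $\chi=\rng\chi+\sfg\psi$, $\chi'=\rng\chi'+\sfg\psi'$, $1-\chi=(1-\rng\chi)-\sfg\psi$, bounding the $\rng\chi$-parts by~\eqref{X1}--\eqref{X3} and the corrections by the $\psi$-bounds, and observing that each correction weight is dominated uniformly in $r>0$ by the corresponding $\rng\chi$-weight, up to the extra factor $q(r)^{\eps}$ that~\eqref{tomato3}--\eqref{tomato4} allow; the elementary inequalities $r^2/p(r)\le(r+2\sfh)/2$ and $r^2/\big(p(r)q(r)\big)\le\tfrac14$ are used here. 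The sign statements $0<\chi<1$, $\chi'<0$ are then obtained exactly as in Lemma~\ref{lemma_sign_chi}: by~\eqref{tomato2} one has $1-\chi=O(r^2)$ near $r=0$ so the associated $\varphi$ is finite there, and by~\eqref{tomato5} one has $2\sfg\av{\xi(r)}\le\sfg w_1(r+2\sfh)$, so $r-2\nu=(r+2\sfh)-2\sfg\xi\ge(1-\sfg w_1)(r+2\sfh)>0$ as soon as $\sfg_0\le1/w_1$ --- which is also what makes $\beta=r/(r-2\nu)$ positive and the ODE system non-degenerate, justifying all of the above. Finally, the explicit table~\eqref{numerical_W} (for $0\le\sfh<1$ and $\eps=\tfrac12$) is obtained by carrying the numerical values of $K_j$ from~\eqref{numerical_K}, of $X_j$ from~\eqref{numerical_X} and of $N_1,N_2$ from~\eqref{Ufrac_bound3} through steps~(i) and~(ii) and choosing $\sfg_0$ small enough to close both; the very small value $\sfg_0=1.17\cdot10^{-17}$ merely reflects how small $\sfg$ must be for $C\sfg<1$ with the (far-from-optimal) constant $C$ produced by this bookkeeping.

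\textbf{Main obstacle.} The genuine difficulty lies entirely in steps~(i)--(ii): keeping the web of weights $r^{a}p(r)^{b}q(r)^{c}e^{dr}$ consistent under the nonlinear map, and in particular verifying that $H[\xi,\psi]$ satisfies the rather rigid exponent constraints of Corollary~\ref{cor_inhomog_zerog} at \emph{both} ends $r\to0^{+}$ and $r\to\infty$ --- a check that has to be done uniformly while the cases $\sfh=0$ (where $p(r)\simeq r$ near the origin) and $\sfh>0$ (where $p(r)\simeq1$) behave differently there. Once those pointwise bounds are in hand, the fixed-point theorem, the boundary conditions, the regularity bootstrap and the passage to the final estimates are all routine.
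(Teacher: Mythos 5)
Your proposal is correct and follows essentially the same route as the paper: the same decomposition $\nu=\rng\nu+\sfg\xi$, $\chi=\rng\chi+\sfg\psi$, a fixed-point map built from direct integration of the $\xi$-equation and the variation-of-parameters formula of Corollary~\ref{cor_inhomog_zerog}, run on exactly the weighted sup-norm ball whose weights are the right-hand sides of~\eqref{tomato5}--\eqref{tomato7}, closed by Banach's theorem thanks to the spare factor of $\sfg$ in every nonlinear term. The only cosmetic difference is that you carry $(\xi,\psi)$ with $\psi'$ folded into the norm, whereas the paper formally tracks the triple $(\xi,\psi,\omega)$ with $\omega=\psi'$.
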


\begin{proof}
    Let $\sfh\geq 0$ and $0<\eps\leq 1/2$ be fixed. We first define $\sfg_0$ without giving any indication of why it will work; this will become clear later in the proof. Let $K_{1},\ldots,K_{14} > 0$ be the constants from Lemma~\ref{lemma_int_quanti} (some of which depend on $\sfh$ and/or $\eps$) and $X_0,\ldots,X_5 > 0$ the constants from Theorem~\ref{thm_chi_zerog} (all of which depend on $\sfh$). Let also
    \begin{equation} \label{def_Y}
        Y := \sup_{r>0} (1+4\sfh +2r)^{1+\sfh+\eps}e^{-r} = \left\{\begin{array}{ll}
            (2+2\sfh +2\eps)^{1+\sfh+\eps} e^{\sfh-\eps-1/2} &\text{if } \sfh \leq 1/2 + \eps, \\
            (1+4\sfh )^{1+\sfh+\eps} &\text{if } \sfh \geq 1/2 + \eps.
        \end{array}\right.
    \end{equation}
    Next, choose arbitrary numbers $Z_1,Z_2,Z_3>0$ under the constraint
    \begin{equation} \label{choice_Z1}
        Z_1 \leq \frac{1}{2},
    \end{equation}
    define/choose constants $W_j$ and \textit{weights} $w_j$ such that
    \begin{equation} \label{big_W_14}
        W_1 := 1 + X_1 Y + \dfrac{YZ_2}{16}, \quad W_2 := X_3 + \dfrac{YZ_2}{2}, \quad W_3 := X_1 + \dfrac{Z_2}{16}, \quad W_4 := X_2 + \dfrac{Z_3}{2},
    \end{equation}
    \begin{equation} \label{weight_w1}
        w_1 \geq \frac{K_3W_1W_2}{2} + \frac{3K_3W_4^2 Y^2}{8},
    \end{equation}
    \begin{equation} \label{big_W_5}
        W_5 := 4W_1W_2W_4 + 8w_1(2W_3 + W_4),
    \end{equation}
    \begin{equation} \label{weight_w23}
        w_2 \geq \dfrac{K_4W_5X_1X_4}{4} + \dfrac{K_9W_5X_1X_4}{2}, \quad w_3 \geq \dfrac{K_4W_5X_2X_4}{32} + \frac{K_9W_5X_1X_5}{2},
    \end{equation}
    and let constants $T_j,U_j,V_j,T,U,V$ be given by
    \begin{equation*}
        \begin{array}{ll}
            T_1 := \dfrac{K_2W_4^2}{4}, & T := 16W_1W_2W_4 + 16(2w_1 + T_1)(2W_3+W_4), \\ \\
            U_1 := \dfrac{K_2W_3}{2}, & U := 8W_3W_4Y^2 + 2w_1 + 16U_1(2W_3+W_4), \\ \\
            V_1 := \dfrac{3K_2W_4}{8}, & V := 4W_1W_2 + 8w_1 + 16V_1(2W_3+W_4), \\ \\
            T_2 := TX_1X_4\left(\dfrac{K_5}{8} + \dfrac{K_{10}}{2}\right), & T_3 := T\left(\dfrac{K_5X_2X_4}{16} + K_{10}X_1X_5\right), \\ \\
            U_2 := UX_1X_4\left(\dfrac{K_5}{8} + \dfrac{K_{10}}{2}\right), & U_3 := U\left(\dfrac{K_5X_2X_4}{16} + K_{10}X_1X_5\right), \\ \\
            V_2 := VX_1X_4\left(\dfrac{K_5}{8} + \dfrac{K_{10}}{2}\right), & V_3 := V\left(\dfrac{K_5X_2X_4}{16} + K_{10}X_1X_5\right).
        \end{array}
    \end{equation*}
    Finally, select also a real number
    \begin{equation} \label{choice_L}
        0 < L < 1.
    \end{equation}
    We choose $\sfg_0 > 0$ so as to satisfy
    \begin{align}
        \sfg_0 &\leq \min\left\{ \,\frac{Z_1}{w_1}, \,\frac{Z_2}{w_2}, \,\frac{Z_3}{w_3}\, \right\} \text{ and} \label{g0_1} \\
        \sfg_0 &\leq \frac{L}{3}\min\left\{ \, \frac{1}{T_1} \, , \frac{w_1}{U_1w_2} \, , \frac{w_1}{V_1w_3} \, , \frac{w_2}{T_2w_1} \, , \frac{1}{U_2} \, , \frac{w_2}{V_2w_3} \, , \frac{w_3}{T_3w_1} \, , \frac{w_3}{U_3w_2} \, , \frac{1}{V_3} \, \right\}. \label{g0_2}
    \end{align}
    When $\eps = 1/2$ and $0\leq\sfh <1$, the explicit values claimed in the theorem statement for the $W_j$ constants come from the following choices, which respect all the restrictions imposed above and have been brute-forced by computer to be close to optimal, in the sense that they yield nearly as large a $\sfg_0$ as possible:
    \begin{equation*}
        Z_1 = \frac{1}{2}, \quad Z_2 = Z_3 = 1, \quad L = 0.999, \quad w_1,w_2,w_3 \text{ as in~\eqref{numerical_W}}.
    \end{equation*}
    
    Now let $0<\sfg<\sfg_0$ be fixed and let us find a solution to mass/deviation problem. We define new unknowns $(\xi,\psi)$ by the relations
    \begin{equation} \label{def_xi_psi}
        \nu(r) = \rng{\nu}(r) + \sf\sfg\xi(r), \quad \chi(r) = \rng{\chi}(r) + \sf\sfg\psi(r),
    \end{equation}
    where
    \begin{equation} \label{bolinha_bolinha}
        \rng{\nu}(r) = -\sfh, \quad \rng{\chi}(r) = \frac{1}{2\sfh  \, U_{\sfh}(4\sfh )}\int_r^\infty t^2 U_{\sfh}(4\sfh +2t)e^{-t} \ \rmd t
    \end{equation}
    is the solution corresponding to the parameters $(0,\sfh)$ found in the previous subsection. In order for $\nu$ and $\chi$ to satisfy the boundary conditions listed in~\eqref{conditions_nu_chi}, we will require null conditions
    \begin{equation} \label{boundary_xi_psi}
        \xi(0) = 0, \quad \psi(0) = 0, \quad \psi(\infty) = 0,
    \end{equation}
    since $\rng{\nu}$ and $\rng{\xi}$ already satisfy~\eqref{conditions_nu_chi}. Plugging~\eqref{def_xi_psi} into the $\nu$ and $\chi$ equations from~\eqref{system_nu_chi}, then simplifying out the terms corresponding to the $\sfg=0$ equations, we arrive at the following equations for $\xi$ and $\psi$:
    \begin{equation} \label{xi_equation_aux}
        \xi' = \frac{(1-\rng{\chi}-\sfg\psi)(1+\rng{\chi}+\sfg\psi)}{2r^2} - \frac{(r+2\sfh -2\sfg\xi)(\rng{\chi}'+\sfg\psi')^2}{2r^3},
    \end{equation}
    \begin{equation} \label{psi_equation_aux}
        \psi'' = \frac{2(r+2\sfh )}{r(r+3\sfh )}\psi' + \frac{r}{r+2\sfh }\psi + \frac{(1+\rng{\chi}+\sfg\psi)(1-\rng{\chi}-\sfg\psi)(\rng{\chi}'+\sfg\psi')}{r^2(r+2\sfh -2\sfg\xi)} + \frac{ 2\xi\big( r(\rng{\chi}+\sfg\psi) - (\rng{\chi}'+\sfg\psi') \big) }{(r+2\sfh )(r+2\sfh -2\sfg\xi)}.
    \end{equation}
    Taking advantage of the smallness of $\sfg$, we will obtain a solution $(\xi,\psi,\omega)$ (where $\omega := \psi'$) to the boundary-value problem composed of~\eqref{xi_equation_aux} ,~\eqref{psi_equation_aux} and~\eqref{boundary_xi_psi} by following a procedure that is standard in problems involving a small parameter: We will look for a fixed point, in an appropriate complete metric space $\mathcal{B}$, of the map
    \begin{equation} \label{F_map}
        \mathcal{F}:\mathcal{B}\longrightarrow\mathcal{B}, \quad (\xi,\psi,\omega)\longmapsto(\Xi,\Psi,\Omega),
    \end{equation}
    that is defined by letting the functions $(\Xi,\Psi,\Omega)$ solve~\eqref{xi_equation_aux} ,~\eqref{psi_equation_aux} ,~\eqref{boundary_xi_psi} and $\Psi' = \Omega$, but with $(\xi,\psi,\omega)$ plugged into all the terms that come accompanied by $\sfg$ in the equations. That is, given a ``guess'' $(\xi,\psi,\omega)$, we let $(\Xi,\Psi,\Omega)$ solve
    \begin{equation} \label{xi_equation}
        \Xi' = \frac{(1-\rng{\chi}-\sfg\psi)(1+\rng{\chi}+\sfg\psi)}{2r^2} - \frac{(r+2\sfh -2\sfg\xi)(\rng{\chi}'+\sfg\psi')^2}{2r^3},
    \end{equation}
    \begin{equation} \label{psi_equation}
        \Psi'' = \frac{2(r+2\sfh )}{r(r+3\sfh )}\Psi' + \frac{r}{r+2\sfh }\Psi + \frac{(1+\rng{\chi}+\sfg\psi)(1-\rng{\chi}-\sfg\psi)(\rng{\chi}'+\sfg\omega)}{r^2(r+2\sfh -2\sfg\xi)} + \frac{ 2\Xi\big( r(\rng{\chi}+\sfg\psi) - (\rng{\chi}'+\sfg\omega) \big) }{(r+2\sfh )(r+2\sfh -2\sfg\xi)},
    \end{equation}
    \begin{equation*}
        \Omega = \Psi'
    \end{equation*}
    with $\Xi(0) = \Psi(0) = \Psi(\infty) = 0$. This solution is easy to write out: The unknown $\Xi$ is immediately found by integrating the right side of~\eqref{xi_equation} --- which does not include any $\Xi$, $\Psi$ or $\Omega$, --- while $\Psi$ and $\Omega$ are given by the formulas~\eqref{solution_inhomog} and~\eqref{solution_prime_inhomog} in Corollary~\ref{cor_inhomog_zerog}, whose prerequisites will be checked to be met:
    \begin{align}
        \Xi(r) &:= \int_0^r \left( \frac{(1+\rng{\chi}+\sfg\psi)(1-\rng{\chi}-\sfg\psi)}{2s^2} - \frac{(s+2\sfh -2\sfg\xi)(\rng{\chi}'+\sfg\psi')^2}{2s^3} \right) \ \rmd s, \label{Xi_map} \\
        \Psi(r) &:= -\rng{\chi}(r)\int_0^r \frac{s+2\sfh }{s^3} \rng{\upsilon}H \ \rmd s - \rng{\upsilon}(r)\int_r^\infty \frac{s+2\sfh }{s^3} \rng{\chi}H \ \rmd s, \label{Psi_map} \\
        \Omega(r) &:= -\rng{\chi}'(r)\int_0^r \frac{s+2\sfh }{s^3} \rng{\upsilon}H \ \rmd s - \rng{\upsilon}'(r)\int_r^\infty \frac{s+2\sfh }{s^3} \rng{\chi}H \ \rmd s, \label{Omega_map}
    \end{align}
    where $\rng{\upsilon}$ is the conjugate of $\rng{\chi}$ (see~\eqref{upsilon}) and we abbreviated as $H = H(s) = H(\rng{\chi},\rng{\chi}',\xi,\psi,\omega;s)$ the inhomogeneous term
    \begin{equation} \label{i_have_so_many_labels}
        H(s) := \frac{(1+\rng{\chi}+\sfg\psi)(1-\rng{\chi}-\sfg\psi)(\rng{\chi}'+\sfg\omega)}{s^2(s+2\sfh -2\sfg\xi)} + \frac{ 2\Xi\big( s(\rng{\chi}+\sfg\psi) - (\rng{\chi}'+\sfg\omega) \big) }{(s+2\sfh )(s+2\sfh -2\sfg\xi)}
    \end{equation}
    of the ODE~\eqref{psi_equation}. The argument $s$ where all functions on the right sides of~\eqref{Xi_map},~\eqref{Psi_map},~\eqref{Omega_map} and~\eqref{i_have_so_many_labels} are evaluated has been omitted to save space. Then it is clear that, if an ordered triple of continuous functions $(\xi,\psi,\omega):(0,\infty)\longrightarrow\bbR^3$ is a fixed point of $\mathcal{F}$ in $\mathcal{B}$, then we have $\xi\in C^1\big((0,\infty)\big)$, $\psi\in C^2\big((0,\infty)\big)$, $\omega = \psi'$ and the pair $(\xi,\psi)$ solves the differential equations~\eqref{xi_equation} and~\eqref{psi_equation}. The existence of a fixed point will be proved via the Banach Fixed-Point Theorem, while the desired boundary conditions~\eqref{boundary_xi_psi} for $\xi$ and $\psi$ will follow automatically from the properties of the metric space $\mathcal{B}$, to be defined shortly. With this fixed point, the functions $(\nu,\chi)$ given by~\eqref{def_xi_psi} will be the solution for the problem in the theorem as we wanted, with the desired inequalities~\eqref{tomato1}--\eqref{tomato7} following as collateral damage of the proof.
    
    We now give the definition of $\mathcal{B}$. For the constant $\eps$ that has already been fixed since the start, consider the following weighted $L^\infty$ norms defined for appropriate functions $f:(0,\infty)\longrightarrow\bbR$:
    \begin{align}
        \nn{f}_1 &:= \sup_{r>0} r^{-2}(4\sfh +2r)(1+4\sfh +2r) \av{f(r)}, \label{norm1} \\
        \nn{f}_2 &:= \sup_{r>0} r^{-4}(4\sfh +2r)^2(1+4\sfh +2r)^{1-\sfh-\eps} e^r \av{f(r)}, \label{norm2} \\
        \nn{f}_3 &:= \sup_{r>0} r^{-3}(4\sfh +2r)^2(1+4\sfh +2r)^{-{\sfh}-\eps} e^r \av{f(r)}. \label{norm3}
    \end{align}
    Then, for the weights $w_j$ chosen in~\eqref{weight_w1} and~\eqref{weight_w23}, construct the following norm for ordered triples of such functions:
    \begin{equation} \label{norm}
        \nn{(\xi,\psi,\omega)} := \max\left\{ \frac{\nn{\xi}_1}{w_1}, \frac{\nn{\psi}_2}{w_2}, \frac{\nn{\omega}_3}{w_3} \right\}.
    \end{equation}
    We define the Banach space
    \begin{equation*}
        \mathcal{X} := \big\{ (\xi,\psi,\omega):(0,\infty)\longrightarrow\bbR^3 : \ \xi,\psi,\omega \text{ are measurable functions and } \nn{(\xi,\psi,\omega)} < \infty \big\}
    \end{equation*}
    and the closed subset $\mathcal{B} := \big\{ (\xi,\psi,\omega)\in\mathcal{X}: \ \nn{(\xi,\psi,\omega)} \leq 1 \}$. What is left now is to prove that the map $\mathcal{F}$ satisfies $\mathcal{F}(\mathcal{B}) \subseteq \mathcal{B}$ and is a strict contraction in the sense of the statement of the Banach Fixed-Point Theorem. Proving these claims only requires lengthy but elementary estimates; the non-trivial work has already been completed in Lemma~\ref{lemma_int_quanti} and in the choice of the norms $\nn{\ast}_j$, which have been carefully crafted (especially with the presence of the exponent $-\eps<0$) to make everything work. From now on, as in Lemma~\ref{lemma_int_quanti}, we use the following notation for brevity:
    \begin{equation} \label{pee_and_queueueue}
        p(r) := 4\sfh +2r, \quad q(r) := 1+4\sfh +2r,
    \end{equation}
    and we may omit the argument $(r)$ from $p(r)$ and $q(r)$ when it is clear from context (it will always be either $r$ or $s$). Also note, for repeated future use, that
    \begin{equation} \label{obvious}
        rp(r)^{-1} \leq \frac{1}{2}, \quad rq(r)^{-1} \leq \frac{1}{2}, \quad r+2\sfh = \frac{1}{2}p(r), \quad \frac{1}{r+2\sfh } = 2p(r)^{-1}.
    \end{equation}
    
    \textbf{Claim 1:} $\mathcal{F}$ as given by~\eqref{F_map},~\eqref{Xi_map},~\eqref{Psi_map} and~\eqref{Omega_map} maps $\mathcal{B}$ into itself.
    
    \textbf{Proof:} Let $(\xi,\psi,\omega)\in\mathcal{B}$; in particular, from the definitions~\eqref{norm1},~\eqref{norm2} and~\eqref{norm3} of the $\nn{\ast}_j$ norms, we have
    \begin{equation} \label{sleepy}
        \av{\xi(r)} \leq w_1 \, r^2 p^{-1} q^{-1}, \quad \av{\psi(r)} \leq w_2 \, r^4p^{-2}q^{-1+\sfh+\eps}e^{-r}, \quad \av{\omega(r)} \leq w_3 \, r^3p^{-2}q^{\sfh+\eps}e^{-r}.
    \end{equation}
    Note that the inequalities~\eqref{tomato5}--\eqref{tomato7} claimed in the theorem will follow from here once the existence of the fixed point is shown, because the solution $(\rng{\nu}+\sfg\xi,\rng{\chi}+\sfg\psi)$ of the mass/deviation problem will be such that $(\xi,\psi,\psi')\in\mathcal{B}$.
    
    Next we derive bounds for some recurring expressions that appear in the formulas for $\Xi$, $\Psi$ and $\Omega$. First, using $q \geq 1$ and $\sfg w_1 \leq Z_1 \leq 1/2$ (see~\eqref{g0_1} and~\eqref{choice_Z1}), we prove
    \begin{equation*}
        \sfg\av{\xi(r)} \leq \sfg w_1 \, r^2p^{-1}q^{-1} \leq Z_1 \, r^2p^{-1} \leq \frac{1}{2}(2\sfh +r)^2p^{-1} = \frac{p}{8},
    \end{equation*}
    which implies
    \begin{equation} \label{positivity_beta}
        r + 2\sfh - 2\sfg\xi(r) = \frac{p}{2} - 2\sfg\xi(r) \geq \frac{p}{4} > 0 \quad\text{for all } r>0
    \end{equation}
    and also yields the following useful bounds:
    \begin{equation} \label{anker1}
        r + 2\sfh  - 2\sfg\xi(r) \leq r+2\sfh  + 2\sfg\av{\xi(r)} \leq \frac{p}{2} + \frac{2p}{8} = \frac{3p}{4}, \quad\text{and } \frac{1}{r + 2\sfh  - 2\sfg\xi(r)} \leq 4p^{-1}.
    \end{equation}
    Using the estimates~\eqref{X1},~\eqref{X2} and~\eqref{X3} for $\rng{\chi}$, $1-\rng{\chi}$ and $\rng{\chi}'$, as well as the definition~\eqref{def_Y} of the constant $Y$ and the restrictions $\sfg w_j \leq Z_j$ from~\eqref{g0_1}, we also obtain:
    \begin{align}
        \label{globetrotter1} \av{1+\rng{\chi}(r)+\sfg\psi(r)} &\leq 1 + X_1 \, q^{1+\sfh }e^{-r} + \sfg w_2 \, r^4p^{-2}q^{-1+\sfh+\eps}e^{-r} \\
        \nonumber &\leq 1 + q^{1+\sfh+\eps}e^{-r} \big( X_1 \, q^{-\eps} + \sfg w_2 \, r^4p^{-2}q^{-2} \big) \\
        \nonumber &\leq 1 + Y \left( X_1 + \frac{Z_2}{16} \right) \\
        \nonumber &= W_1, \\
        \label{globetrotter2} \av{1-\rng{\chi}(r)-\sfg\psi(r)} &\leq \av{1-\rng{\chi}(r)} + g\av{\psi(r)} \\ 
        \nonumber &\leq X_3 \, r^3p^{-1}q^{-2} + \sfg w_2 \, r^4p^{-2}q^{-1+\sfh+\eps}e^{-r} \\
        \nonumber &\leq r^3p^{-1}q^{-2} \big( X_3 + \sfg w_2 \, rp^{-1}q^{1+\sfh+\eps} e^{-r} \big) \\
        \nonumber &\leq r^3p^{-1}q^{-2} \left( X_3 + \frac{Z_2 Y}{2} \right) \\
        \nonumber &= W_2 \, r^3p(r)^{-1}q(r)^{-2}, \\
        \label{globetrotter3} \av{\rng{\chi}(r)+\sfg\psi(r)} &\leq X_1 \, q^{1+\sfh }e^{-r} + \sfg w_2 \, r^4p^{-2}q^{-1+\sfh+\eps}e^{-r} \\
        \nonumber &\leq q^{1+\sfh+\eps}e^{-r} \big( X_1 \, q^{-\eps} + \sfg w_2 \, r^4p^{-2}q^{-2} \big) \\
        \nonumber &\leq q^{1+\sfh+\eps}e^{-r} \left( X_1 + \frac{Z_2}{16} \right) \\
        \nonumber &= W_3 \, q(r)^{1+\sfh+\eps}e^{-r}, \\
        \label{globetrotter4} \av{\rng{\chi}'(r)+\sfg\omega(r)} &\leq X_2 \, r^2p^{-1}q^{\sfh}e^{-r} + \sfg w_3 \, r^3p^{-2}q^{\sfh+\eps}e^{-r} \\
        \nonumber &\leq r^2 p^{-1} q^{\sfh+\eps} e^{-r} \big( X_2 \, q^{-\eps} + \sfg w_3 \, rp^{-1} \big) \\
        \nonumber &\leq r^2 p^{-1} q^{\sfh+\eps} e^{-r} \left( X_2 + \frac{Z_3}{2} \right) \\
        \nonumber &= W_4 \, r^2 p(r)^{-1} q(r)^{\sfh+\eps} e^{-r},
    \end{align}
    which directly yield the rest of the desired inequalities~\eqref{tomato1}--\eqref{tomato4} once we establish that the solution $(\xi,\psi,\omega)$ does indeed exist. Now we are ready to estimate $\Xi$, $\Psi$ and $\Omega$. The goal is to prove $\nn{(\Xi,\Psi,\Omega)} \leq 1$, which, according to~\eqref{norm}, is achieved by showing $\nn{\Xi}_1 \leq w_1$, $\nn{\Psi}_2 \leq w_2$ and $\nn{\Omega}_3 \leq w_3$. Lemma~\eqref{lemma_int_quanti} and the simple relations~\eqref{obvious} will also be needed throughout this process. We have, for all $r>0$,
    \begin{align*}
        \av{\Xi(r)} &\leq \int_0^r \left( \frac{\av{1 + \rng{\chi} + \sfg\psi} \, \av{1 - \rng{\chi} - \sfg\psi}}{2s^2} + \frac{\av{s+2\sfh -2\sfg\xi} \, \av{\rng{\chi}'+\sfg\omega}^2}{2s^3} \right) \ \rmd s \\
        &\leq \int_0^r \bigg( \frac{1}{2}s^{-2}W_1 W_2 \, s^3p^{-1}q^{-2} + \frac{1}{2}s^{-3} \frac{3p}{4} \big( W_4 \, s^2 p^{-1} q^{\sfh+\eps} e^{-s} \big)^2 \bigg) \ \rmd s \\
        &\leq \int_0^r \bigg( \frac{W_1W_2}{2}s p^{-1}q^{-2} + \frac{3W_4^2}{8} \, s p^{-1} q^{2\sfh +2\eps} e^{-2s} \bigg) \ \rmd s \\
        &= \int_0^r \bigg( \frac{W_1W_2}{2} + \frac{3W_4^2}{8} \, q^{2+2\sfh +2\eps} e^{-2s} \bigg) s p^{-1} q^{-2} \ \rmd s \\
        &\leq \left( \frac{W_1W_2}{2} + \frac{3W_4^2 Y^2}{8}\right) \int_0^r s p(s)^{-1} q(s)^{-2} \ \rmd s \\
        &\leq \left( \frac{W_1W_2}{2} + \frac{3W_4^2 Y^2}{8}\right) K_3 \, r^2 p(r)^{-1} q(r)^{-1} \\
        &\leq w_1 \, r^2 p(r)^{-1} q(r)^{-1},
    \end{align*}
    where we used the condition~\eqref{weight_w1} on $w_1$ in the end. Given the definition~\eqref{norm1} of $\nn{\ast}_1$, this proves $\nn{\Xi}_1 \leq w_1$. Next we need a bound for the expression $H$ defined in~\eqref{i_have_so_many_labels}, for which we will also use the freshly proved bound for $\av{\Xi}$ (note that $H$ includes a $\Xi$ term in it):
    \begin{align*}
        \av{H(s)} &\leq \frac{\av{1+\rng{\chi}+\sfg\psi} \, \av{1-\rng{\chi}-\sfg\psi} \, \av{\rng{\chi}'+\sfg\omega}}{s^2(s+2\sfh -2\sfg\xi)} + \frac{ 2\av{\Xi} \, \big( s \av{\rng{\chi}+\sfg\psi} + \av{\rng{\chi}'+\sfg\omega} \big) }{(s+2\sfh )(s+2\sfh -2\sfg\xi)} \\
        &\leq s^{-2}4p^{-1} W_1 W_2 \, s^3p^{-1}q^{-2} W_4 \, s^2p^{-1}q^{\sfh+\eps}e^{-s} \\
        &\qquad + 2\cdot 2p^{-1} 4p^{-1} w_1 \, s^2p^{-1}q^{-1} \big( s W_3 \, q^{1+\sfh+\eps} e^{-s} + W_4 \, s^2p^{-1}q^{\sfh+\eps} e^{-s} \big) \\
        &= 4W_1W_2W_4 \, s^3p^{-3}q^{-2+\sfh+\eps} e^{-s} + 16w_1 \, s^3p^{-3} q^{\sfh+\eps}e^{-s} \big( W_3 + W_4 \, s p^{-1} q^{-1} \big) \\
        &\leq 4W_1W_2W_4 \, s^3p^{-3}q^{\sfh+\eps} e^{-s} + 16w_1 \, s^3p^{-3} q^{\sfh+\eps}e^{-s} \left( W_3 + \frac{W_4}{2}\right) \\
        &= W_5 \, s^3p(s)^{-3}q(s)^{\sfh+\eps}e^{-s}.
    \end{align*}
    The conditions of Corollary~\ref{cor_inhomog_zerog} on the exponents $x=3$, $y=-3$ and $z=\sfh+\eps$ that we just obtained here are satisfied; this is why the small $\eps>0$ was introduced in our definition of the norms, otherwise $x+y+z$ would have been equal to $\sfh$ and the corollary would not have been applicable. Then formula~\eqref{solution_inhomog} does indeed give a zero-boundary-value solution to the $\Psi$ equation~\eqref{Psi_map}, and, using the above bound for $H$ and the estimates~\eqref{X1} and~\eqref{X4} for $\rng{\chi}$ and $\rng{\upsilon}$, we obtain the following bounds on the integrals that appear in the formulas~\eqref{Psi_map} and~\eqref{Omega_map} for $\Psi$ and $\Omega$:
    \begin{align}
        \av{ \int_0^r \frac{s+2\sfh }{s^3} \rng{\upsilon}(s)H(s) \ \rmd s } &\leq \int_0^r s^{-3}\frac{1}{2}p X_4 \, s^4 p^{-1}q^{-2-{\sfh}} e^s W_5 \, s^3p^{-3}q^{\sfh+\eps}e^{-s} \ \rmd s \label{apples} \\
        &= \frac{X_4W_5}{2} \int_0^r s^{4}p(s)^{-3}q(s)^{-2+\eps} \ \rmd s \nonumber \\
        &\leq \frac{X_4W_5K_4}{2} r^{5}p(r)^{-3}q(r)^{-2+\eps}, \nonumber \\
        \av{ \int_r^\infty \frac{s+2\sfh }{s^3} \rng{\chi}(s)H(s) \ \rmd s } &\leq \int_r^\infty s^{-3}\frac{1}{2}p X_1 \, q^{1+\sfh } e^{-s} W_5 \, s^3p^{-3}q^{\sfh+\eps}e^{-s} \ \rmd s \label{oranges} \\
        &= \frac{X_1W_5}{2} \int_r^\infty p(s)^{-2}q(s)^{1+2\sfh +\eps} e^{-2s} \ \rmd s \nonumber \\
        &\leq \frac{X_1W_5K_9}{2} p(r)^{-1}q(r)^{2\sfh +\eps}e^{-2r}. \nonumber
    \end{align}
    Therefore
    \begin{align*}
        \av{\Psi(r)} &\leq \av{\rng{\chi}(r)} \av{ \int_0^r \frac{s+2\sfh }{s^3} \rng{\upsilon}(s)H(s) \ \rmd s } + \av{\rng{\upsilon}(r)} \av{ \int_r^\infty \frac{s+2\sfh }{s^3} \rng{\chi}(s)H(s) \ \rmd s } \\ 
        &\leq X_1 \, q^{1+\sfh } e^{-r} \frac{X_4W_5K_4}{2} r^{5}p^{-3}q^{-2+\eps} + X_4 \, r^4 p^{-1}q^{-2-{\sfh}} e^r \frac{X_1W_5K_9}{2} p^{-1}q^{2\sfh +\eps}e^{-2r} \\
        &= \frac{X_1X_4W_5}{2}\big( K_4 r^5p^{-3}q^{-1+\sfh+\eps}e^{-r} + K_9 r^4p^{-2}q^{-2+\sfh+\eps}e^{-r} \big) \\
        &= \frac{X_1X_4W_5}{2} r^4p^{-2}q^{-1+\sfh+\eps} e^{-r} \big( K_4 r\,p^{-1} + K_9 q^{-1} \big) \\
        &\leq \frac{X_1X_4W_5}{2} r^4p^{-2}q^{-1+\sfh+\eps} e^{-r} \left( \frac{K_4}{2} + K_9 \right) \\
        &\leq w_2 \, r^4p(r)^{-2}q(r)^{-1+\sfh+\eps} e^{-r}, \\
        \av{\Omega(r)} &\leq \av{\rng{\chi}'(r)} \av{ \int_0^r \frac{(s+2\sfh )\rng{\upsilon}(s)H(s)}{s^3} \ \rmd s } + \av{\rng{\upsilon}'(r)} \av{ \int_r^\infty \frac{(s+2\sfh )\rng{\chi}(s)H(s)}{s^3} \ \rmd s } \\ 
        &\leq X_2 \, r^2 p^{-1} q^{\sfh} e^{-r} \frac{X_4W_5K_4}{2} r^{5}p^{-3}q^{-2+\eps} + X_5 \, r^3 p^{-1}q^{-1-\sfh} e^r \frac{X_1W_5K_9}{2} p^{-1}q^{2\sfh +\eps}e^{-2r} \\
        &= \frac{W_5}{2}\big( X_2X_4K_4 r^7p^{-4}q^{-2+\sfh+\eps}e^{-r} + X_1X_5K_9 r^3p^{-2}q^{-1+\sfh+\eps}e^{-r} \big) \\
        &= \frac{W_5}{2} r^3p^{-2}q^{\sfh+\eps} e^{-r} \big( X_2X_4K_4 r^4 p^{-2}q^{-2} + X_1X_5K_9 q^{-1} \big) \\
        &\leq \frac{W_5}{2} r^3p^{-2}q^{\sfh+\eps} e^{-r} \left( \frac{X_2X_4K_4}{16} + X_1X_5K_9 \right) \\
        &\leq w_3 \, r^3p(r)^{-2}q(r)^{\sfh+\eps} e^{-r},
    \end{align*}
    where we used the conditions~\eqref{weight_w23} on $w_2$ and $w_3$. From here, using the definitions~\eqref{norm2} and~\eqref{norm3} of the norms $\nn{\ast}_2$ and $\nn{\ast}_3$, it is immediate that $\nn{\Psi}_2 \leq w_2$ and $\nn{\Omega}_3 \leq w_3$.
    
    \textbf{Claim 2:} $\mathcal{F}:\mathcal{B} \longrightarrow \mathcal{B}$ is a strict contraction with Lipschitz constant $L$.
    
    \textbf{Proof:} Recall that $L$ was chosen in~\eqref{choice_L} to be smaller than 1. Let $(\xi,\psi,\omega),(\overline{\xi},\overline{\psi},\overline{\omega})\in\mathcal{B}$ be given. We must prove
    \begin{equation} \label{contraction}
        \nn{ (\Xi - \overline{\Xi},\Psi - \overline{\Psi},\Omega - \overline{\Omega}) } \leq L\nn{ (\xi - \overline{\xi},\psi - \overline{\psi},\omega - \overline{\omega}) }
    \end{equation}
    for $(\Xi,\Psi,\Omega) = \mathcal{F}(\xi,\psi,\omega)$ and $(\overline{\Xi},\overline{\Psi},\overline{\Omega}) = \mathcal{F}(\overline{\xi},\overline{\psi},\overline{\omega})$ --- we cannot resist the urge to use the cumbersome symbol $\overline{\Xi}$. To show~\eqref{contraction}, it is enough to prove
    \begin{align}
        \nn{\Xi-\overline{\Xi}}_1 &\leq \sfg T_1 \nn{\xi-\overline{\xi}}_1 + \sfg U_1\nn{\psi-\overline{\psi}}_2 + \sfg V_1\nn{\omega-\overline{\omega}}_3, \label{Xi_contraction} \\
        \nn{\Psi-\overline{\Psi}}_2 &\leq \sfg T_2 \nn{\xi-\overline{\xi}}_1 + \sfg U_2\nn{\psi-\overline{\psi}}_2 + \sfg V_2\nn{\omega-\overline{\omega}}_3, \label{Psi_contraction} \\
        \nn{\Omega-\overline{\Omega}}_3 &\leq \sfg T_3 \nn{\xi-\overline{\xi}}_1 + \sfg U_3\nn{\psi-\overline{\psi}}_2 + \sfg V_3\nn{\omega-\overline{\omega}}_3,\label{Psi_prime_contraction}
    \end{align}
    since then it follows immediately from the condition~\eqref{g0_2} on $\sfg_0$ that
    \begin{align*}
        \frac{\nn{\Xi-\overline{\Xi}}_1}{w_1} &\leq \frac{L}{3}\left( \frac{\nn{\xi-\overline{\xi}}_1}{w_1} + \frac{\nn{\psi-\overline{\psi}}_2}{w_2} + \frac{\nn{\omega-\overline{\omega}}_3}{w_3} \right) \\
        &\leq L \, \max\left\{ \frac{\nn{\xi-\overline{\xi}}_1}{w_1}, \frac{\nn{\psi-\overline{\psi}}_2}{w_2}, \frac{\nn{\omega-\overline{\omega}}_3}{w_3} \right\} \\
        &= L\nn{ (\xi - \overline{\xi},\psi - \overline{\psi},\omega - \overline{\omega}) }
    \end{align*}
    and similarly for $\nn{\Psi-\overline{\Psi}}_2/w_2$ and $\nn{\Omega-\overline{\Omega}}_3/w_3$, proving~\eqref{contraction}.
    
    So let us show~\eqref{Xi_contraction} first. Let the integrand of~\eqref{Xi_map} be viewed as a function $F(\rng{\chi},\rng{\chi}',\xi,\psi,\omega;s)$ for each fixed value of $s$:
    \begin{multline*}
        \Xi(r) = \int_0^r F\big(\rng{\chi}(s),\rng{\chi}'(s),\xi(s),\psi(s),\omega(s);s\big) \ \rmd s, \\
        F(u,v,x,y,z;s) := \frac{(1+u+\sfg y)(1-u-\sfg y)}{2s^2} - \frac{(s+2\sfh -2\sfg x)(v+\sfg z)^2}{2s^3}.
    \end{multline*}
    The Mean-Value Theorem gives, for arbitrary arguments $(u,v,x,y,z;s)$ and $(u,v,\overline{x},\overline{y},\overline{z};s)$,
    \begin{equation} \label{MVT}
        \av{F(u,v,x,y,z;s)-F(u,v,\overline{x},\overline{y},\overline{z};s)} \leq \sup_{x,y,z}\left| \frac{\partial F}{\partial x} \right| \av{x-\overline{x}} + \sup_{x,y,z}\left| \frac{\partial F}{\partial y} \right| \av{y-\overline{y}} + \sup_{x,y,z}\left| \frac{\partial F}{\partial z} \right| \av{z-\overline{z}}.
    \end{equation}
    These partial derivatives are quickly computed as
    \begin{equation*}
        \frac{\partial F}{\partial x} = \frac{\sfg(v+\sfg z)^2}{s^3}, \quad \frac{\partial F}{\partial y} = -\frac{\sfg (u+\sfg y)}{s^2}, \quad \frac{\partial F}{\partial z} = -\frac{\sfg(s+2\sfh -2\sfg x)(v+\sfg z)}{s^3},
    \end{equation*}
    and, when evaluated at the arguments $\big(\rng{\chi}(s),\rng{\chi}'(s),\xi(s),\psi(s),\omega(s);s\big)$, are bounded by
    \begin{align*}
        \sup_{(\xi,\psi,\omega)\in\mathcal{B}}\left| \frac{\partial F}{\partial x} \right| &= \sfg s^{-3} \sup_{(\xi,\psi,\omega)\in\mathcal{B}} \av{ \rng{\chi}' + \sfg\omega }^2 \leq \sfg W_4^2 \, s p^{-2} q^{2\sfh +2\eps} e^{-2s}, \\
        \sup_{(\xi,\psi,\omega)\in\mathcal{B}}\left| \frac{\partial F}{\partial y} \right| &= \sfg s^{-2} \, \sup_{(\xi,\psi,\omega)\in\mathcal{B}} \av{ \rng{\chi} + \sfg\psi } \leq \sfg W_3 \, s^{-2} q^{1+\sfh+\eps}e^{-s}, \\
        \sup_{(\xi,\psi,\omega)\in\mathcal{B}}\left| \frac{\partial F}{\partial z} \right| &=
        \sfg s^{-3} \, \sup_{(\xi,\psi,\omega)\in\mathcal{B}} \av{s+2\sfh  - 2\sfg\xi}\,\av{\rng{\chi}' + \sfg\omega} \leq  \frac{3\sfg W_4}{4} s^{-1} q^{\sfh+\eps} e^{-s}.
    \end{align*}
    The presence of $\sfg$ in each of these bounds is what will grant us the contraction property. The other terms in~\eqref{MVT} are bounded using the definition of the norms $\nn{\ast}_j$:
    \begin{align}
        \av{\xi(s)-\overline{\xi}(s)} &\leq s^2 p(s)^{-1}q(s)^{-1} \nn{\xi-\overline{\xi}}_1, \label{xi_difference} \\
        \av{\psi(s)-\overline{\psi}(s)} &\leq s^4 p(s)^{-2}q(s)^{-1+\sfh+\eps}e^{-s} \nn{\psi-\overline{\psi}}_2, \label{psi_difference} \\
        \av{\omega(s)-\overline{\omega}(s)} &\leq s^3 p(s)^{-2}q(s)^{\sfh+\eps}e^{-s} \nn{\omega-\overline{\omega}}_3. \label{omega_difference}
    \end{align}
    Therefore we obtain from~\eqref{Xi_map} and~\eqref{MVT}
    \begin{align}
        \av{ \Xi(r) - \overline{\Xi}(r) } &\leq
        \int_0^r \bigg( \sfg W_4^2 \, s p^{-2} q^{2\sfh +2\eps} e^{-2s} s^2 p^{-1}q^{-1} \nn{\xi-\overline{\xi}}_1 \nonumber \\
        &\qquad + \sfg W_3 \, s^{-2} q^{1+\sfh+\eps}e^{-s} s^4 p^{-2}q^{-1+\sfh+\eps}e^{-s} \nn{\psi-\overline{\psi}}_2 \nonumber \\
        &\qquad + \frac{3\sfg W_4}{4} s^{-1} q^{\sfh+\eps} e^{-s} s^3 p^{-2}q^{\sfh+\eps}e^{-s} \nn{\omega-\overline{\omega}}_3 \bigg) \ \rmd s \nonumber \\
        & = \sfg \int_0^r \bigg( W_4^2 s^3 p^{-3} q^{-1+2\sfh +2\eps} e^{-2s} \nn{\xi-\overline{\xi}}_1 + W_3 s^2 p^{-2} q^{2\sfh +2\eps} e^{-2s} \nn{\psi-\overline{\psi}}_2 \nonumber \\
        &\qquad + \frac{3W_4}{4} s^2 p^{-2} q^{2\sfh +2\eps} e^{-2s} \nn{\omega-\overline{\omega}}_3 \bigg) \ \rmd s \nonumber \\
        & = \sfg\int_0^r s^2p^{-2}q^{2\sfh +2\eps}e^{-2s} \left( W_4^2 s p^{-1}q^{-1}\nn{\xi-\overline{\xi}}_1 + W_3\nn{\psi-\overline{\psi}}_2 + \frac{3W_4}{4}\nn{\omega-\overline{\omega}}_3 \right) \ \rmd s \nonumber \\
        & \leq \sfg\left( \frac{W_4^2}{2}\nn{\xi-\overline{\xi}}_1 + W_3\nn{\psi-\overline{\psi}}_2 + \frac{3W_4}{4}\nn{\omega-\overline{\omega}}_3 \right) \int_0^r s^2p(s)^{-2}q(s)^{2\sfh +2\eps}e^{-2s} \ \rmd s \nonumber \\
        & \leq \sfg\left( \frac{W_4^2}{2}\nn{\xi-\overline{\xi}}_1 + W_3\nn{\psi-\overline{\psi}}_2 + \frac{3W_4}{4}\nn{\omega-\overline{\omega}}_3 \right)K_2 r^3p(r)^{-2}q(r)^{-1} \label{XXi_difference}
    \end{align}
    which then gives, using the definition~\eqref{norm1} of $\nn{\ast}_1$,
    \begin{align*}
        \nn{\Xi-\overline{\Xi}}_1 &\leq \left( \frac{\sfg W_4^2K_2}{2}\nn{\xi-\overline{\xi}}_1 + \sfg W_3K_2\nn{\psi-\overline{\psi}}_2 + \frac{3\sfg W_4K_2}{4}\nn{\omega-\overline{\omega}}_3 \right) \sup_{r>0} rp(r)^{-1} \\
        &= \sfg T_1 \nn{\xi-\overline{\xi}}_1 + \sfg U_1\nn{\psi-\overline{\psi}}_2 + \sfg V_1\nn{\omega-\overline{\omega}}_3,
    \end{align*}
    as claimed. To conclude the proof of claim 2, we show~\eqref{Psi_contraction} and~\eqref{Psi_prime_contraction} using the same strategy, noting that formulas~\eqref{Psi_map} and~\eqref{Omega_map} yield
    \begin{align}
        \av{\Psi(r)-\overline{\Psi}(r)} &\leq \av{\rng{\chi}(r)}\int_0^r \frac{s+2\sfh }{s^3} \av{\rng{\upsilon}} \big| 
        H-\overline{H} \big| \ \rmd s + \av{\rng{\upsilon}(r)} \int_r^\infty \frac{s+2\sfh }{s^3} \av{\rng{\chi}} \big| H - \overline{H} \big| \ \rmd s, \label{psi_diff_aux} \\
        \av{\Omega(r)-\overline{\Omega}(r)} &\leq \av{\rng{\chi}'(r)}\int_0^r \frac{s+2\sfh }{s^3} \av{\rng{\upsilon}} \big| 
        H-\overline{H} \big| \ \rmd s + \av{\rng{\upsilon}'(r)} \int_r^\infty \frac{s+2\sfh }{s^3} \av{\rng{\chi}} \big| H - \overline{H} \big| \ \rmd s.  \label{omega_diff_aux}
    \end{align}
    In here we will first need an estimate for 
    \begin{equation*}
        \av{H-\overline{H}} := \av{H(\rng{\chi}, \rng{\chi}', \xi, \psi, \omega,\Xi;s)-H(\rng{\chi}, \rng{\chi}', \overline{\xi}, \overline{\psi}, \overline{\omega}, \overline{\Xi};s)},
    \end{equation*}
    where $H$, defined in~\eqref{i_have_so_many_labels}, is now viewed as the function
    \begin{equation*}
        H(u,v,x,y,z,w;s) := \frac{(1+u+\sfg y)(1-u-\sfg y)(v+\sfg z)}{s^2(s+2\sfh -2\sfg x)} + \frac{2w\big(s(u+\sfg y)-(v+\sfg z)\big)}{(s+2\sfh )(s+2\sfh -2\sfg x)}.
    \end{equation*}
    Inequality~\eqref{MVT} applies to $H$ with an extra term added for its $w$ variable:
    \begin{multline} \label{MVT2}
        \av{H(u,v,x,y,z,w;s)-H(u,v,\overline{x},\overline{y},\overline{z},\overline{w};s)} \\
        \leq \sup_{x,y,z,w}\left| \frac{\partial H}{\partial x} \right| \av{x-\overline{x}} + \sup_{x,y,z,w}\left| \frac{\partial H}{\partial y} \right| \av{y-\overline{y}} + \sup_{x,y,z,w}\left| \frac{\partial H}{\partial z} \right| \av{z-\overline{z}} + \sup_{x,y,z,w}\left| \frac{\partial H}{\partial w} \right| \av{w-\overline{w}}.
    \end{multline}
    When evaluated at the arguments $\big(\rng{\chi}(s),\rng{\chi}'(s),\xi(s),\psi(s),\omega(s),\Xi(s);s\big)$, the partials of $H$ are checked to be bounded as follows:
    \begin{align*}
        \mathop{\sup_{(\xi,\psi,\omega)\in\mathcal{B}}}_{(\Xi,\psi,\omega)\in\mathcal{B}} \left| \frac{\partial H}{\partial x} \right| &\leq 32\sfg \bigg( W_1W_2W_4 + 2w_1( 2W_3 + W_4) \bigg) s^3 p(s)^{-4} q(s)^{\sfh+\eps} e^{-s}, \\
        \mathop{\sup_{(\xi,\psi,\omega)\in\mathcal{B}}}_{(\Xi,\psi,\omega)\in\mathcal{B}} \left| \frac{\partial H}{\partial y} \right| &\leq 2\sfg \bigg( 4W_3W_4Y^2 + w_1 \bigg) p(s)^{-2}q(s), \\
        \mathop{\sup_{(\xi,\psi,\omega)\in\mathcal{B}}}_{(\Xi,\psi,\omega)\in\mathcal{B}} \left| \frac{\partial H}{\partial z} \right| &\leq 4\sfg \bigg( W_1W_2 + 2w_1 \bigg) sp(s)^{-2}q(s)^{-1}, \\
        \mathop{\sup_{(\xi,\psi,\omega)\in\mathcal{B}}}_{(\Xi,\psi,\omega)\in\mathcal{B}} \left| \frac{\partial H}{\partial w} \right| &\leq 8\bigg( 2W_3 + W_4 \bigg) s p(s)^{-2} q(s)^{1+\sfh+\eps} e^{-s}.
    \end{align*}
    The inequalities~\eqref{xi_difference},~\eqref{psi_difference} and~\eqref{omega_difference} for $\av{\xi-\overline{\xi}}$, $\av{\psi-\overline{\psi}}$ and $\av{\omega-\overline{\omega}}$ are needed again in~\eqref{MVT2}, and this time we also need~\eqref{XXi_difference} for $\av{\Xi-\overline{\Xi}}$, which reads
    \begin{equation*}
        \av{ \Xi(s) - \overline{\Xi}(s) } \leq \sfg\bigg( 2T_1\nn{\xi-\overline{\xi}}_1 + 2U_1\nn{\psi-\overline{\psi}}_2 + 2V_1\nn{\omega-\overline{\omega}}_3 \bigg) s^3 p(s)^{-2}q(s)^{-1}.
    \end{equation*}
    Although $\av{\partial_w H}$ is not bounded by a multiple of $\sfg$ like the other partials of $H$, we see that the corresponding term $\av{\Xi-\overline{\Xi}}$ is. Consequently, all terms in~\eqref{MVT2} are indeed small and we find
    \begin{align*}
        \av{H-\overline{H}} &\leq 32\sfg \bigg( W_1W_2W_4 + 2w_1( 2W_3 + W_4) \bigg) \nn{\xi-\overline{\xi}}_1 s^5 p^{-5} q^{-1+\sfh+\eps} e^{-s} \\
        &\qquad + 2\sfg\bigg( 4W_3W_4Y^2 + w_1 \bigg) \nn{\psi-\overline{\psi}}_2 s^4 p^{-4}q^{\sfh+\eps} e^{-s} \\
        &\qquad+ 4\sfg \bigg( W_1W_2 + 2w_1 \bigg) \nn{\omega-\overline{\omega}}_3 s^4 p^{-4}q^{-1+\sfh+\eps} e^{-s} \\
        &\qquad+ 16\sfg\bigg( 2W_3 + W_4 \bigg) \bigg( T_1\nn{\xi-\overline{\xi}}_1 + U_1\nn{\psi-\overline{\psi}}_2 + V_1\nn{\omega-\overline{\omega}}_3 \bigg) \, s^4 p^{-4} q^{\sfh+\eps} e^{-s} \\
        &\leq \sfg\bigg( T\nn{\xi-\overline{\xi}}_1 + U\nn{\psi-\overline{\psi}}_2 + V\nn{\omega-\overline{\omega}}_3 \bigg) s^4p(s)^{-4}q(s)^{\sfh+\eps} e^{-s}.
    \end{align*}
    We can now promptly estimate the pair of integrals that appeared in~\eqref{psi_diff_aux} and~\eqref{omega_diff_aux}. Analogously to~\eqref{apples} and~\eqref{oranges}, the result is
    \begin{align*}
        \int_0^r \frac{s+2\sfh }{s^3} \av{\rng{\upsilon}(s)} \big| 
        H(s)-\overline{H}(s) \big| \ \rmd s &\leq \frac{g X_4}{2} \big( T\nn{\xi-\overline{\xi}}_1 + U\nn{\psi-\overline{\psi}}_2 + V\nn{\omega-\overline{\omega}}_3 \big) K_5 \, r^6 p(r)^{-4} q(r)^{-2+\eps}, \\
        \int_r^\infty \frac{s+2\sfh }{s^3} \av{\rng{\chi}(s)} \big| 
        H(s)-\overline{H}(s) \big| \ \rmd s &\leq \frac{\sfg X_1}{2} \big( T\nn{\xi-\overline{\xi}}_1 + U\nn{\psi-\overline{\psi}}_2 + V\nn{\omega-\overline{\omega}}_3 \big) K_{10} \, p(r)^{-1} q(r)^{2\sfh +\eps} e^{-2r}.
    \end{align*}
    To save space, let us temporarily write $(T\cdots)$ for $T\nn{\xi-\overline{\xi}}_1 + U\nn{\psi-\overline{\psi}}_2 + V\nn{\omega-\overline{\omega}}_3$. Then~\eqref{psi_diff_aux} and~\eqref{omega_diff_aux} become
    \begin{align*}
        \av{\Psi(r)-\overline{\Psi}(r)} &\leq X_1q^{1+\sfh }e^{-r}\frac{\sfg X_4}{2}(T\cdots)K_5r^6p^{-4}q^{-2+\eps} + X_4r^4p^{-1}q^{-2-{\sfh}}e^r\frac{\sfg X_1}{2}SK_{10}p^{-1}q^{2\sfh +\eps}e^{-2r} \\
        &= \frac{\sfg X_1X_4(T\cdots)}{2} r^4p^{-2}q^{-1+\sfh+\eps}e^{-r} \bigg( K_5 r^2p^{-2} + K_{10} q^{-1} \bigg) \\
        &\leq \frac{\sfg X_1X_4(T\cdots)}{2}\left( \frac{K_5}{4} + K_{10} \right) r^4p^{-2}q^{-1+\sfh+\eps}e^{-r} \\
        &= \bigg( \sfg T_2 \nn{\xi-\overline{\xi}}_1 + \sfg U_2\nn{\psi-\overline{\psi}}_2 + \sfg V_2\nn{\omega-\overline{\omega}}_3\bigg) r^4p^{-2}q^{-1+\sfh+\eps}e^{-r}, \\
        \av{\Omega(r)-\overline{\Omega}(r)} &\leq X_2r^2p^{-1}q^{\sfh}e^{-r}\frac{\sfg X_4}{2}(T\cdots)K_5r^6p^{-4}q^{-2+\eps} + X_5r^3p^{-1}q^{-1-\sfh}e^r\frac{\sfg X_1}{2}SK_{10}p^{-1}q^{2\sfh +\eps}e^{-2r} \\
        &= \frac{g(T\cdots)}{2} r^3p^{-2}q^{\sfh+\eps}e^{-r} \bigg( X_2X_4K_5 r^{-5}p^{-3}q^{-2} + X_1X_5K_{10} q^{-1} \bigg) \\
        &\leq \frac{g(T\cdots)}{2}\left( \frac{X_2X_4K_5}{32} + X_1X_5K_{10} \right) r^3p^{-2}q^{\sfh+\eps}e^{-r} \\
        &= \bigg( \sfg T_3 \nn{\xi-\overline{\xi}}_1 + \sfg U_3\nn{\psi-\overline{\psi}}_2 + \sfg V_3\nn{\omega-\overline{\omega}}_3\bigg) r^3p^{-2}q^{\sfh+\eps}e^{-r}.
    \end{align*}
    Using definitions~\eqref{norm2} and~\eqref{norm3} of the norms $\nn{\ast}_2$ and $\nn{\ast}_3$, we arrive at the desired inequalities~\eqref{Psi_contraction} and~\eqref{Psi_prime_contraction}.
\end{proof}

\subsection{Proof of the existence theorem}
\label{subsec_proof}

Having completely solved the mass/deviation problem~\ref{def_mdproblem}, we are ready to finish proving Theorem~\ref{main_theorem}. The theorem initially gave us constants $G,c,\varkappa,M_{\mathrm{bare}}$, with which we construct $\sfh$ as in~\eqref{newdefinition_n}. For this $\sfh$ and for the value $\eps = 1/2$ (let us not bother with a general $\eps\in (0,1/2]$ anymore), let $\sfg_0$ be the threshold provided by Theorem~\ref{thm_nuchi}. Now let $Q\neq 0$ be chosen as in the main theorem, that is, in such a way that $\sfg < \sfg_0$, for $\sfg$ as in~\eqref{def_g}. Using $Q,c,\varkappa$, we define dimensionless variables as explained previously. Theorem~\ref{thm_nuchi} then gives us a solution $(\nu,\chi)$ to the mass/deviation problem of parameters $(\sfg,\sfh)$, which, as explained in subsection~\ref{subsec_thm}, has the consequence that the functions $\beta = r/(r-2\nu)$ (see~\eqref{definition_mu} and~\eqref{definition_nu}) and $\chi$ solve their respective differential equations in the E-M-BLTP system~\eqref{embltp} and satisfy the conditions
\begin{equation*}
    \frac{1}{\sfg} \lim_{r\to 0^+} \frac{r}{2}\left( 1 - \frac{1}{\beta(r)} \right) = M_{\mathrm{bare}}, \quad \chi(0) = 1, \quad \chi(\infty) = 0.
\end{equation*}
We continue to consider the notation of the proof of Theorem~\ref{thm_nuchi}, in particular the constants $w_j, W_j$, the definition~\eqref{def_xi_psi} of $\xi$ and $\psi$, as well as $\omega = \psi'$, and the functions $p$ and $q$ from~\eqref{pee_and_queueueue}. Given that $\beta = r/(r+2\sfh - 2g\xi)$, the desired positivity condition $\beta(r) > 0$ for all $r>0$ follows from property~\eqref{positivity_beta}. Next we integrate the $\alpha$ equation in~\eqref{embltp} to obtain the only solution for $\alpha(r)$ that vanishes at infinity:
\begin{equation} \label{stop_reading_my_labels}
    \alpha(r) = \sfg \int_r^\infty \frac{ (\chi')^2 }{s^3} \ \rmd s.
\end{equation}
Thus a solution $(\alpha,\beta,\chi)$ has been found with all desired boundary conditions. It remains to find constants $C_1'$, $C_j$ that verify the estimates~\eqref{end_alpha}--\eqref{end_const_mass}. For any $\sfh\geq 0$, we will be able to express them in terms of the $K_j$ from Lemma~\ref{lemma_int_quanti} and the $W_j$, $w_j$ from Theorem~\ref{thm_nuchi}. By then using their numerical values~\eqref{numerical_K},~\eqref{numerical_W} for the $0\leq\sfh<1$ case, one can check that the values~\eqref{numerical_C} claimed in the theorem for this case are obtained.

\begin{itemize}

    \item \textbf{Proof of~\eqref{end_alpha}}. The desired inequality reads
    \begin{equation} \label{endnew_alpha}
        \av{ e^{\alpha(r)} - 1 } \leq \left\{\begin{array}{ll}
            \sfg C_1 \left( e + \dfrac{e}{2r} \right)^{\delta} \ln\left( e + \dfrac{e}{2r} \right) e^{-2r} &\text{if } \sfh = 0 \text{ (where } 0<\delta\ll 1 \text{),} \\ \\
            \sfg C_1 \exp\left(\dfrac{C_1'}{\sfh^2}\right) p(r)^{-2}q(r)^{2+2\sfh}e^{-2r} &\text{if } \sfh > 0.
        \end{array}\right.
    \end{equation}
    We start by estimating the above definition~\eqref{stop_reading_my_labels} of $\alpha$ using the bound~\eqref{tomato4} for $\chi'$:
    \begin{equation} \label{alpha_initial_kiwi}
        \alpha(r) = \sfg \int_r^\infty \frac{ (\chi')^2 }{s^3} \ \rmd s \leq \sfg W_4^2 \int_r^\infty sp(s)^{-2}q(s)^{1+2\sfh}e^{-2s} \ \rmd s.
    \end{equation}
        The behavior of this integral as a function of $r$ now depends on whether $\sfh = 0$ or $\sfh > 0$. If $\sfh = 0$, which implies $p(s) = 2s$ and $q(s) = 1+2s$, we can integrate it directly with help of the \textbf{exponential-integral} special function, defined for all $s>0$ as
        \begin{equation*}
            E_1(s) := \int_s^\infty \frac{e^{-t}}{t} \ \rmd t
        \end{equation*}
        and sometimes denoted by $-\mathrm{Ei}(-s)$ in the literature. We find
        \begin{equation*}
            \alpha(r) \leq \frac{\sfg W_4^2}{4} \int_r^\infty s^{-1}(1+2s)e^{-2s} \ \rmd s = \frac{\sfg W_4^2}{4} \big( e^{-2r} + E_1(2r) \big).
        \end{equation*}
        Then the well-known inequality $E_1(s) \leq \ln(1 + 1/s)e^{-s}$ (proof: $f(s) := E_1(s) - \ln(1+1/s)e^{-s}$ vanishes at $s=\infty$ and satisfies $f'(s) > 0$ for all $s>0$) implies
        \begin{equation*}
            \alpha(r) \leq \frac{\sfg W_4^2}{4} e^{-2r} \ln\left( e + \frac{e}{2r} \right).
        \end{equation*}
        In particular, using $e^\alpha-1 \leq \alpha e^\alpha$, this proves the first line in~\eqref{endnew_alpha} for the constants $C_1 = W_4^2/4$ and $\delta = \sfg W_4^2/4$. With the values $\sfg \leq \sfg_0 = 0.0002$ and $W_4 = 2.50$ corresponding to $\sfh = 0$, we get $\delta \leq 0.0004 \ll 1$. If instead $\sfh > 0$, we apply Lemma~\ref{lemma_int_quanti} to the integral in~\eqref{alpha_initial_kiwi} and immediately find
        \begin{equation*}
            \alpha(r) \leq \sfg W_4^2 K_{12} p(r)^{-2}q(r)^{2+2\sfh}e^{-2r}.
        \end{equation*}
        In particular, again using $e^\alpha-1 \leq \alpha e^\alpha$, this proves
        \begin{equation*}
            \av{ e^{\alpha(r)} - 1 } \leq \sfg W_4^2 K_{12} p(r)^{-2}q(r)^{2+2\sfh}e^{-2r} \exp\left( \sfg W_4^2 K_{12} (4\sfh)^{-2}(1+4\sfh)^{2+2\sfh} \right),
        \end{equation*}
        where the argument of $\exp$ is a direct upper bound for $e^{\alpha(0)}$, which bounds $e^{\alpha(r)}$ for all $r>0$ since $\alpha' < 0$, as is clear from the $\alpha$ equation in~\eqref{embltp}. With this, the second line in~\eqref{endnew_alpha} follows with $C_1 = W_4^2 K_{12}$ and $C_1' = \sfg_0 W_4^2K_{12} 4^{-2}(1+4\sfh)^{2+2\sfh}$. If $0<\sfh<1$, one has to further bound the $(1+4\sfh)^{2+2\sfh}$ term in $C_1'$ by $5^4$ before plugging in the known values of $W_4$, $K_{12}$ and $\sfg_0$ to obtain values for $C_1$ and $C_1'$ as in the theorem.

    \item \textbf{Proof of~\eqref{end_varphi2},~\eqref{end_const_potential} and~\eqref{end_varphi1}}. The desired inequalities read
    \begin{align}
        \label{endnew_varphi2} &\av{\varphi(r)- \rng{\varphi}(r) } \leq \sfg C_5 q(r)^{-\frac12+\sfh} e^{-r}, \\
        \label{endnew_const_potential} &\av{\varphi(0) - 2\rng{\mathcal{E}} } \leq \sfg C_6, \\
        \label{endnew_varphi1} &\av{\varphi(r)-\frac{1}{r}} \leq C_4 r^{-1}q(r)^{\frac12+\sfh}e^{-r},
    \end{align}
    where $\rng{\varphi}$ is as defined in~\eqref{phibolinha} (but now in the dimensionless convention) and the electric potential is given according to~\eqref{varphi_thm} by
    \begin{equation} \label{phi_sem_bolinha}
        \varphi(r) = \int_r^\infty \frac{1-\chi}{s^2} \ \rmd s.
    \end{equation}
    Note that $\varphi$ is strictly positive and decreasing in $r$, as a consequence of $\chi<1$ (see Lemma~\ref{lemma_sign_chi}), and thus $\varphi(0)>0$ --- but we note that the original, dimensionful $\varphi$ from the theorem could still be negative because the sign of the constant $Q$ is not specified. Also note that $\rng{\varphi}$ is the $\sfg=0$ counterpart of $\varphi$:
    \begin{equation} \label{phi_bolinha}
        \int_r^\infty \frac{1-\rng\chi}{s^2} \ \rmd s = \frac{1}{r} - \frac{(s+2\sfh)\rng\chi'(s)}{s^3}\bigg|_{s=r}^\infty = \left\{\begin{array}{ll}
            \dfrac{1}{r} - \dfrac{U_\sfh(4\sfh+2r)}{2\sfh \, U_\sfh(4\sfh)}\left(1 - \dfrac{2\sfh}{r}\right)e^{-r} &\quad\text{if } \sfh > 0, \\ \\
            \dfrac{1 - e^{-r}}{r} &\quad\text{if } \sfh = 0.
        \end{array}\right\} = \rng\varphi(r).
    \end{equation}
    In this derivation, we first integrated $1/s^2$ and $\rng{\chi}/s^2$ separately, also using the $\rng\chi$ equation (that is,~\eqref{chi_improved} with $\beta = r/(r+2\sfh)$ and $\sfg = 0$); then we used formulas~\eqref{chi_prime_sol_gn0} and~\eqref{chi_prime_zerog} for $\rng\chi'(r)$ to get the final result. Consequently, for all $r>0$, the inequality~\eqref{tomato6} for $\chi-\rng\chi$ gives
    \begin{equation} \label{varphi_comparison}
        \av{\varphi(r)-\rng\varphi(r)} \leq \int_r^\infty \frac{\av{\chi-\rng\chi}}{s^2} \ \rmd s \leq \sfg w_2 \int_r^\infty s^2 p(s)^{-2} q(s)^{-\frac12+\sfh} e^{-s} \ \rmd s \leq \sfg w_2 K_{11} q(r)^{-\frac12+\sfh} e^{-r} ,
    \end{equation}
    proving~\eqref{endnew_varphi2} for the constant $C_5 = w_2 K_{11}$. The exact value of $\rng\varphi(0)$ when $\sfh=0$ is $1 = 2\rng{\mathcal{E}}$, while for $\sfh>0$ it can be found by Taylor-expanding the $U_\sfh(4\sfh+2r)$ and $e^{-r}$ terms in~\eqref{phi_bolinha} to a sufficiently high order in $r$:
    \begin{align*}
        \rng\varphi(r) &= \frac{1}{r} - \frac{1}{2\sfh\, U_\sfh(4\sfh)} \bigg( U_\sfh(4\sfh) + 2U_\sfh'(4\sfh) r + O(r^2) \bigg) \left(1+\frac{2\sfh}{r}\right) \bigg(1 - r + O(r^2)\bigg) \\
        &= 1 - \frac{1}{2\sfh} - \frac{2U_\sfh'(4\sfh)}{U_\sfh(4\sfh)} + O(r) \\
        &= 2\rng{\mathcal{E}} + O(r) \quad\text{as } r\to 0,
    \end{align*}
    where the definition~\eqref{Ebolinha} of $\rng{\mathcal{E}}$ was used. Therefore, for any $\sfh\geq 0$,
    \begin{equation} \label{phi_bolinha_energy}
        \rng\varphi(0) = 2\rng{\mathcal{E}},
    \end{equation}
    and now plugging $r=0$ into~\eqref{varphi_comparison} proves~\eqref{endnew_const_potential} with the constant $C_6 = w_2 K_{11} (1+4\sfh)^{-\frac12+\sfh}$ (which is further bounded by $w_2 K_{11} 5^{\frac12}$ when $0<\sfh<1$). We can also use~\eqref{phi_sem_bolinha} and estimate~\eqref{tomato3} for $\chi$ to obtain a comparison between $\varphi$ and the Coulomb potential $1/r$ (note that this bound will not feature the constant $\sfg$, as it does not compare any function $f$ with its $\sfg=0$ counterpart $\rng{f}$):
    \begin{equation*}
        \av{ \varphi(r) - \frac{1}{r} } = \av{ \int_r^\infty \frac{1-\chi - 1}{s^2} \ \rmd s } \leq \int_r^\infty \frac{\av{\chi}}{s^2} \ \rmd s \leq W_3 \int_r^\infty s^{-2}q(s)^{\frac32+\sfh}e^{-s} \ \rmd s \leq W_3K_7 r^{-1}q(r)^{\frac12+\sfh}e^{-r}.
    \end{equation*}
    This proves~\eqref{endnew_varphi1} for the constant $C_4 = W_3K_7$.

    \item \textbf{Proof of~\eqref{end_const_mass}}. The desired inequality reads
    \begin{equation} \label{endnew_const_mass}
        \av{ M_{\mathrm{ADM}} - \big( M_{\mathrm{bare}} + \rng{\mathcal{E}} \big) } \leq \sfg C_8.
    \end{equation}
    Start with the $\chi$ equation as originally written in the E-M-BLTP system~\eqref{embltp}, but expressed in terms of $\nu$ instead of $\beta$:
    \begin{equation*}
        \frac{\rmd}{\rmd r}\left( \frac{(r-2\nu)\chi' e^{\alpha}}{r^3} \right) = \frac{\chi e^\alpha}{r^2}.
    \end{equation*}
    Apply the $r$-derivative on the left side using the product rule for the factors $(r-2\nu)\chi'/r^3$ and $e^\alpha$, then use the $\alpha$ equation to compute $(e^\alpha)'$ and cancel $e^\alpha$ from the whole equation to find
    \begin{equation*}
        \frac{\rmd}{\rmd r}\left( \frac{(r-2\nu)\chi'}{r^3} \right) - \sfg \frac{(r-2\nu)(\chi')^3}{r^6} = \frac{\chi}{r^2}.
    \end{equation*}
    Add and subtract $1/r^2$ to the right side, multiply the whole equation by $\chi-1$, and rearrange some terms:
    \begin{equation*}
        \frac{1-\chi^2}{r^2} + (\chi-1)\frac{\rmd}{\rmd r}\left( \frac{(r-2\nu)\chi'}{r^3} \right) = \frac{1-\chi}{r^2} + \sfg \frac{(r-2\nu)(\chi-1)(\chi')^3}{r^6}.
    \end{equation*}
    Now integrate from 0 to $\infty$, using integration by parts on the second term on the left side:
    \begin{multline} \label{int_parts_1}
        \frac{(r-2\nu)(\chi-1)\chi'}{r^3}\bigg|_{r=0}^\infty + \int_0^\infty \left(\frac{1-\chi^2}{r^2} - \frac{(r-2\nu)(\chi')^2}{r^3} \right) \ \rmd r \\
       = \int_0^\infty \frac{1-\chi}{r^2} \ \rmd r + \sfg \int_0^\infty \frac{(r-2\nu)(\chi-1)(\chi')^3}{r^6} \ \rmd r.
    \end{multline}
    Here the first term on the left vanishes at both endpoints as a consequence of the bounds~\eqref{anker1},~\eqref{tomato2}, and~\eqref{tomato4} for $r-2\nu$, $\chi-1$ and $\chi'$, which imply that $(r-2\nu)(\chi-1)\chi'/r^3$ behaves like $r^2p(r)^{-1}q(r)^{-3/2+\sfh}e^{-r}$. Moreover, the integrand in the second term is exactly $2\nu'(r)/\sfg = 2\mu'(r)$ according to the $\nu$ equation in~\eqref{system_nu_chi} and the definition~\eqref{definition_nu} of $\nu$ in terms of $\mu$, while the first integral on the right is exactly $\varphi(0)$ according to~\eqref{phi_sem_bolinha}. Seeing as how $\mu(0) = M_{\mathrm{bare}}$ and $\mu(\infty) = M_{\mathrm{ADM}}$, we have thus proved
    \begin{equation} \label{hoka_two_two}
        2M_{\mathrm{ADM}} - 2M_{\mathrm{bare}} = \varphi(0) + \sfg \int_0^\infty \frac{(r-2\nu)(\chi-1)(\chi')^3}{r^6} \ \rmd r.
    \end{equation}
    Using the bounds mentioned just above for $r-2\nu$, $\chi-1$ and $\chi'$, we estimate this integral as
    \begin{align*}
        \av{ \int_0^\infty \frac{(r-2\nu)(\chi-1)(\chi')^3}{r^6} \ \rmd r } &\leq \frac{3 W_2 W_4^3}{4} \int_0^\infty r^3p(r)^{-3} q(r)^{-\frac12+3\sfh}e^{-3r} \ \rmd r \\
        &\leq \frac{3 W_2 W_4^3}{32} \int_0^\infty q(r)^{-\frac12+3\sfh}e^{-3r} \ \rmd r.
    \end{align*}
    Meanwhile, the $\varphi(0)$ term is approximately given by $2\rng{\mathcal{E}}$ with error bound $\sfg C_6$ given in~\eqref{endnew_const_potential}. With this, estimate~\eqref{hoka_two_two} implies~\eqref{endnew_const_mass} with the constant
    \begin{equation*}
        C_8 = \frac{1}{2}\left( C_6 + \frac{3 W_2 W_4^3}{32} \int_0^\infty q(r)^{-\frac12+3\sfh}e^{-3r} \ \rmd r \right).
    \end{equation*}
    The numerical values~\eqref{numerical_C} of $C_8$ in the case $0\leq\sfh<1$ are derived directly from here by first bounding this integral with computer help: For $\sfh=0$, the integral can be directly calculated to be less than $0.280$; for $0<\sfh<1$, it can be estimated above by its value at $\sfh=1$, that comes out smaller than $26.2$.

    \item \textbf{Proof of~\eqref{end_const_energy}}. The desired inequality reads
    \begin{equation} \label{endnew_const_energy}
         \av{ \mathcal{E} - \rng{\mathcal{E}} } \leq \sfg C_7,
    \end{equation}
    Reasoning similarly to how we did in the previous item, but this time not canceling any $e^\alpha$ terms, we prove an equation similar to~\eqref{int_parts_1}:
    \begin{equation*}
        \frac{e^\alpha(r-2\nu)(\chi-1)\chi'}{r^3}\bigg|_{r=0}^\infty + \int_0^\infty e^\alpha\left(\frac{1-\chi^2}{r^2} + \frac{(r-2\nu)(\chi')^2}{r^3} \right) \ \rmd r = \int_0^\infty \frac{e^\alpha(1-\chi)}{r^2} \ \rmd r.
    \end{equation*}
    Here one checks again that the first term vanishes at both endpoints, this time also making use of the known behavior of $e^\alpha$ --- in particular note that the already proved rate of divergence of $e^\alpha$ around $r=0$ when $\sfh = 0$, given in~\eqref{endnew_alpha}, is not fast enough to become a problem, since $\delta < 1$. Moreover, according to definition~\eqref{energy_finite_thm} of the field energy, we recognize $2\mathcal{E}$ in the second term. This means that we have proved a simpler formula for it:
    \begin{equation*}
        \mathcal{E} = \frac{1}{2}\int_0^\infty \frac{e^\alpha(1-\chi)}{r^2} \ \rmd r.
    \end{equation*}
    We also remark that~\eqref{phi_bolinha} and~\eqref{phi_bolinha_energy} together imply that the constant $\rng{\mathcal{E}}$ as defined in the theorem in~\eqref{Ebolinha} is the value of this expression in the $\sfg = 0$ case, that is, with $\alpha$ and $\chi$ replaced by $\rng{\alpha} = 0$ and $\rng\chi$. Indeed:
    \begin{equation}
        \frac{1}{2}\int_0^\infty \frac{1-\rng\chi}{r^2} \ \rmd r = \frac{\rng{\varphi}(0)}{2} = \frac{1}{2}\left(1 - \frac{1}{2\sfh} - \frac{2U_\sfh'(4\sfh)}{U_\sfh(4\sfh)}\right) = \rng{\mathcal{E}}.
    \end{equation}
    Consequently, we can compare $\mathcal{E}$ and $\rng{\mathcal{E}}$ as follows:
    \begin{equation} \label{UV162_umbrella}
        \av{ \mathcal{E} - \rng{\mathcal{E}} } = \frac{1}{2} \av{ \int_0^\infty \left( \frac{e^\alpha(1-\chi)}{r^2} - \frac{1-\rng\chi}{r^2} \right) \ \rmd r } \leq \frac{1}{2} \int_0^\infty \frac{(e^\alpha-1)(1-\chi)}{r^2} \ \rmd r + \frac{1}{2} \int_0^\infty \frac{\av{\rng\chi - \chi}}{r^2} \ \rmd r.
    \end{equation}
    Inequality~\eqref{varphi_comparison} with $r=0$ provides a bound for the second integral on the right:
    \begin{equation} \label{sagrotan}
        \frac{1}{2} \int_0^\infty \frac{\av{\rng\chi - \chi}}{r^2} \ \rmd r \leq \frac{\sfg w_2K_{11}}{2}(1+4\sfh)^{-\frac12+\sfh} = \frac{\sfg C_6}{2}.
    \end{equation}
    As for the first integral, it is necessary to consider the cases $\sfh = 0$ and $\sfh > 0$ separately. If $\sfh = 0$, estimates~\eqref{endnew_alpha} and~\eqref{tomato2} for $e^\alpha-1$ and $1-\chi$ yield
    \begin{align*}
        \frac{1}{2} \int_0^\infty \frac{(e^\alpha-1)(1-\chi)}{r^2} \ \rmd r &\leq \frac{1}{2}\int_0^\infty r^{-2} \sfg C_1 \ln\left(e+\frac{e}{2r}\right)\left(e+\frac{e}{2r}\right)^\delta e^{-2r} W_2 r^3p(r)^{-1}q(r)^{-2} \ \rmd r \\
        &= \frac{\sfg W_2 C_1}{4} \left(\frac{e}{2}\right)^\delta \int_0^\infty r^{-\delta} (1+2r)^{-2+\delta} \ln\left(e+\frac{e}{2r}\right) e^{-2r} \ \rmd r.
    \end{align*}
    Using $\delta < 0.0004$ (note that $r^{-\delta} (1+2r)^{-2+\delta}$ is increasing in $\delta$), a computer will estimate this integral as less than $0.553$. Using also the values of $W_2$, $C_1$ and $C_6$ corresponding to $\sfh = 0$, estimate~\eqref{UV162_umbrella} gives the value of $C_7$ as claimed in~\eqref{numerical_C} for this case. If $\sfh > 0$, estimates~\eqref{endnew_alpha} and~\eqref{tomato2} for $e^\alpha-1$ and $1-\chi$ yield
    \begin{align*}
        \frac{1}{2} \int_0^\infty \frac{(e^\alpha-1)(1-\chi)}{r^2} \ \rmd r &\leq \frac{1}{2}\int_0^\infty r^{-2} \sfg C_1 \exp(C_1'/\sfh^2) p(r)^{-2}q(r)^{2+2\sfh}e^{-2r} W_2 r^3p(r)^{-1}q(r)^{-2} \ \rmd r \\
        &= \frac{\sfg W_2 C_1 \exp(C_1'/\sfh^2)}{2} \int_0^\infty rp(r)^{-3}q(r)^{2n}e^{-2r} \ \rmd r \\
        &\leq \frac{\sfg W_2C_1 \exp(C_1'/\sfh^2)}{4} \int_0^\infty p(r)^{-2}q(r)^{2n}e^{-2r} \ \rmd r.
    \end{align*}
    Together with~\eqref{UV162_umbrella}, this provides the general value
    \begin{equation*}
        C_7 = \frac{C_6}{2} + \frac{W_2C_1 \exp(C_1'/\sfh^2)}{4} \int_0^\infty p(r)^{-2}q(r)^{2n}e^{-2r} \ \rmd r, \quad \sfh > 0.
    \end{equation*}
    For small $\sfh$, however, this is far from being useful, since the $\exp(C_1'/\sfh^2)$ term becomes large. We present now a different estimate, valid for $0<\sfh<1$, that yields a uniformly bounded value of $C_7$. The idea is to use a different estimate for $e^\alpha-1$, one which lets go of the finiteness at small $r$ present in~\eqref{endnew_alpha} in exchange for a bounded-in-$\sfh$ constant. We go back to~\eqref{alpha_initial_kiwi} and this time use $sp(s)^{-2} \leq s^{-1}/4$ and $q(s)^{1+2\sfh} \leq (5+2s)^3$ to find
    \begin{multline*}
        \alpha(r) \leq \frac{\sfg W_4^2}{4} \int_r^\infty s^{-1}(5+2s)^3 e^{-2s} \ \rmd s = \frac{\sfg W_4^2}{4} \bigg( 2(2r^2+17r+46)e^{-2r} + 125 E_1(2r) \bigg) \\
        \leq \frac{\sfg W_4^2}{4} \bigg( 2(2r^2+17r+46) + 125 \ln\left( 1 + \frac{1}{2r} \right) \bigg)e^{-2r}.
    \end{multline*}
    In particular, using the values of $\sfg < \sfg_0$ and $W_4$ for the $0<\sfh<1$ case,
    \begin{multline*}
        e^{\alpha(r)} \leq \Bigg( \exp\Big( 2(2r^2+17r+46)e^{-2r} + 125e^{-2r}\ln\big(1+\frac{1}{2r}\big) \Big) \Bigg)^{\sfg W_4^2/4} \leq \bigg( e^{92}\big( 1 + \frac{1}{2r} \big)^{125} \bigg)^{\sfg W_4^2/4} \\
        < 1.01 \left(1 + \frac{1}{2r}\right)^{5.63\cdot 10^{-14}},
    \end{multline*}
    so that the inequality $e^\alpha-1 \leq \alpha e^\alpha$ now yields
    \begin{equation*}
        e^{\alpha(r)}-1 \leq 38.9 \sfg \left( 1 + \frac{1}{2r} \right)^{\delta} \bigg( 2(2r^2+17r+46) + 125 \ln\left( 1 + \frac{1}{2r} \right) \bigg) e^{-2r}
    \end{equation*}
    for the small exponent $\delta = 5.63\cdot 10^{-14}$. We plug this into the first integral on the right side of~\eqref{UV162_umbrella} and again use~\eqref{sagrotan} on the second integral:
    \begin{align*}
        \av{\mathcal{E} - \rng{\mathcal{E}}} &\leq \frac{1}{2} \int_0^\infty \frac{(e^\alpha-1)(1-\chi)}{r^2} \ \rmd r + \frac{1}{2} \int_0^\infty \frac{\av{\rng\chi - \chi}}{r^2} \ \rmd r \\
        &\leq \frac{38.9\sfg W_2}{2}\int_0^\infty r^{-2} \left( 1 + \frac{1}{2r} \right)^{\delta} \bigg( 2(2r^2+17r+46) + 125 \ln\left( 1 + \frac{1}{2r} \right) \bigg) e^{-2r} r^3 p^{-1}q^{-2} \ \rmd r + \frac{\sfg C_6}{2} \\
        &\leq \frac{38.9\sfg W_2}{4}\int_0^\infty \left( 1 + \frac{1}{2r} \right)^{\delta} \bigg( 2(2r^2+17r+46) + 125 \ln\left( 1 + \frac{1}{2r} \right) \bigg) e^{-2r} \frac{1}{(1+2r)^2} \ \rmd r + \frac{\sfg C_6}{2},
    \end{align*}
    where $rp^{-1}q^{-2}$ was ultimately bounded by $(1/2)(1+2r)^{-2}$ to get rid of all dependency on $\sfh$. This proves~\eqref{endnew_const_energy} in the $0<\sfh<1$ case for the constant
    \begin{equation*}
        C_7 = \frac{38.9 W_2}{4}\int_0^\infty \left( 1 + \frac{1}{2r} \right)^{5.63\cdot 10^{-14}} \bigg( 2(2r^2+17r+46) + 125 \ln\left( 1 + \frac{1}{2r} \right) \bigg) \frac{e^{-2r}}{(1+2r)^2} \ \rmd r + \frac{C_6}{2} < 2.91\cdot 10^{16}.
    \end{equation*}

    \item \textbf{Proof of~\eqref{end_beta2} and~\eqref{end_beta1}}. The desired inequalities read
    \begin{align}
        \label{endnew_beta2} &\av{ \beta(r) - \left( 1 + \frac{2\sfh}{r} \right)^{-1} } \leq \sfg C_3 r^3p(r)^{-3}q(r)^{-1}, \\
        \label{endnew_beta1} &\av{ \beta(r) - \beta^{\mathrm{(RWN)}}_{M_{\mathrm{ADM}}}(r) } \leq \sfg C_2 rp(r)^{-1} q(r)^{2+2\sfh} e^{-2r} \beta^{\mathrm{(RWN)}}_{M_{\mathrm{ADM}}}(r).
    \end{align}
    Comparing $\beta = r/(r-2\nu) = r/(r+2\sfh-2\sfg\xi)$ to its counterpart in the $\sfg = 0$ case, which is $r/(r-2\rng\nu) = r/(r+2\sfh)$, and using the estimate~\eqref{tomato5} for $\nu-\rng\nu = \sfg\xi$ and the last inequality in~\eqref{anker1}, namely $1/(r+2\sfh-2\sfg\xi) \leq 4p(r)^{-1}$, we immediately derive
    \begin{equation}
        \av{ \beta - \frac{r}{r+2\sfh} } = \frac{2\sfg r\av{\xi}}{(r+2\sfh)(r+2\sfh-2\sfg\xi)} \leq 16\sfg w_1 r^3p(r)^{-3}q(r)^{-1},
    \end{equation}
    proving~\eqref{endnew_beta2} for the constant $C_3 = 16w_1$. In turn, comparing $\beta$ to the RWN coefficient
    \[ \beta^{\text{(RWN)}}_{M} = \left( 1 - \frac{2\sfg M}{r} + \frac{\sfg}{r^2} \right)^{-1} = \frac{r^2}{r^2 - 2\nu(\infty)r + \sfg} \]
    defined in~\eqref{RWN}, with the finite parameter $M = M_{\mathrm{ADM}} = \nu(\infty)/\sfg$, we find
    \begin{equation*}
        \beta - \beta^{\mathrm{(RWN)}}_M = \frac{r}{r-2\nu} - \frac{r^2}{r^2 - 2\nu(\infty)r + \sfg} = \frac{2 \beta^{\mathrm{(RWN)}}_M}{r+2\sfh - 2\sfg\xi} \left( \nu - \nu(\infty) + \frac{\sfg}{2r} \right).
    \end{equation*}
    Now this gives us the estimate
    \begin{equation} \label{beta_comp_temp}
        \av{\beta - \beta^{\mathrm{(RWN)}}_M} \leq 8 \beta^{\mathrm{(RWN)}}_M p(r)^{-1} \av{ \nu - \nu(\infty) + \frac{\sfg}{2r} }.
    \end{equation}
    To bound the term in absolute-value signs on the right, we go back to the $\nu$ equation in~\eqref{system_nu_chi} and integrate it from $\infty$ to find
    \begin{equation*}
        \nu(r) = \nu(\infty) - \frac{\sfg}{2}\int_r^\infty \left( \frac{1-\chi^2}{s^2} - \frac{(s-2\nu)(\chi')^2}{s^3} \right) \ \rmd s = \nu(\infty) - \frac{\sfg}{2r} + \frac{\sfg}{2}\int_r^\infty \left( \frac{\chi^2}{s^2} + \frac{(s-2\nu)(\chi')^2}{s^3} \right) \ \rmd s.
    \end{equation*}
    In particular
    \begin{align*}
        \av{ \nu(r) - \nu(\infty) + \frac{\sfg}{2r} } &\leq \frac{\sfg}{2} \int_r^\infty \left( \frac{\chi^2}{s^2} + \frac{(s-2\nu)(\chi')^2}{s^3} \right) \ \rmd s \\
        &\leq \frac{\sfg}{2} \int_r^\infty \left( W_3^2 s^{-2} q(s)^{3+2\sfh} e^{-2s} + \frac{3W_4^2}{4} sp(s)^{-1}q(s)^{1+2\sfh} e^{-2s} \right) \ \rmd r \\
        &\leq \frac{\sfg}{2} \int_r^\infty s^{-2} q(s)^{3+2\sfh} e^{-2s}\left( W_3^2 + \frac{3W_4^2}{4} s^3p(s)^{-1}q(s)^{-2} \right) \ \rmd r \\
        &\leq \frac{\sfg}{2} \left( W_3^2 + \frac{3W_4^2}{32} \right) \int_r^\infty s^{-2} q(s)^{3+2\sfh} e^{-2s} \ \rmd r \\
        &\leq \frac{\sfg}{2} \left( W_3^2 + \frac{3W_4^2}{32} \right) K_8 r^{-1} q(r)^{2+2\sfh} e^{-2r}.
    \end{align*}
    Applying this to the above estimate~\eqref{beta_comp_temp}, we have proved~\eqref{endnew_beta1} for the constant $C_2 = 4 K_8 ( W_3^2 + 3W_4^2/32 )$.

\end{itemize}

\section*{Final remarks}

In this paper, we initiated a potential study of the equations of motion for general-relativistic point charges under the classical, generalized theory of electromagnetism of BLTP, by rigorously proving that the single-particle, static case admits spacetime solutions with a finite electric-field energy and finite values for the electric potential and bare mass at the charge's location. In this sense, it can interpreted as an analogue of the Reissner-Weyl-Nordström solutions, but without the latter's pathological singularities. Our results provide a rigorous framework for embedding static point charges governed by BLTP electrodynamics into general relativity, setting the stage for future efforts to treat multi-particle systems and dynamic evolutions within this setting.

The author would like to thank his PhD advisors Michael Kiessling and Shadi Tahvildar-Zadeh for all the valuable discussions related to this project throughout his course of studies at Rutgers University.

\renewcommand{\theequation}{A.\arabic{equation}}
\renewcommand{\subsection}[1]{A.\arabic{subsection}. \textbf{#1.}}
\setcounter{equation}{0}
\section*{Appendix}

\setcounter{subsection}{1}

\small

\subsection{The electric potential in GR} \label{subsec_app_potential}

Here we provide justification for equation~\eqref{F_GR} relating the general-relativistic electric potential $\varphi$ and the Faraday tensor $F$ in the form $F_{tr} = e^\alpha\varphi'(r)$, where $e^\alpha = \sqrt{-g_{tt} g_{rr}}$. The idea is to invoke the Equivalence Principle to define the quantity $\varphi'(r)$ --- we remark that it is the derivative $\varphi'$ instead of the electric potential $\varphi$ itself that is accessible by an observer measuring the EM fields.

In SR, using spherical coordinates $(c\overline{t},\overline{r},\overline{\theta},\overline{\phi})$ constructed from an inertial system $(c\overline{t},\overline{x}^1,\overline{x}^2,\overline{x}^3)$, the tensor $F = \rmd A$~\eqref{FdA} for $A = -\varphi(\overline{r}) \ \rmd (c\overline{t})$ \eqref{A_SR} is expressed as
\begin{equation}
    F = -\varphi'(\overline{r}) \ \rmd \overline{r}\wedge\rmd(c\overline{t}) = \varphi'(\overline{r}) \ \rmd(c\overline{t})\wedge\rmd \overline{r}.
\end{equation}
This implies that $\varphi'(\overline{r}) = F_{tr}$. Now consider our general-relativistic spacetime $\mathcal{M}$ as in subsection~\ref{subsec_GR_pointcharge} with its coordinate system $(ct,r,\theta,\varphi)$. Loosely speaking, the Equivalence Principle states that the local laws of physics in GR around a given spacetime event reduce to those of SR if one uses \textbf{normal coordinates} at this event, which by definition means a coordinate system under which the metric and its first-order derivatives coincide with those of the Minkowski metric at this event. Thus, given an event $x_0\in\mathcal{M}$ with coordinates $(ct_0,r_0,\theta_0,\phi_0)$, we shall assume that the quantity $(\rmd \varphi/\rmd r)(r_0)$ that we wish to relate to the components of $F$ is given by $F_{\overline{t}\overline{r}}$, where $(c\overline{t},\overline{r},\overline{\theta},\overline{\phi})$ is a system of normal coordinates around $x_0$ for which the metric assumes the spherical form of the Minkowski metric. The tensor-transformation law for changing the components of $F$ from our coordinates $(x^\mu) = (ct,r,\theta,\phi)$ into the coordinates $(c\overline{t},\overline{r},\overline{\theta},\overline{\phi})$ gives a workable expression for this quantity:
\begin{equation} \label{rmk_phi0}
    F_{\overline{t}\overline{r}} = \left( \frac{\partial x^\zeta}{\partial\overline{t}} \right)\left(\frac{\partial x^\eta}{\partial\overline{r}} \right)F_{\zeta\eta} = \left( \frac{\partial t}{\partial\overline{t}} \right)\left(\frac{\partial r}{\partial\overline{r}} \right)F_{tr} + \left( \frac{\partial r}{\partial\overline{t}} \right)\left(\frac{\partial t}{\partial\overline{r}} \right)F_{rt} = \left( \frac{\partial t}{\partial\overline{t}} \frac{\partial r}{\partial\overline{r}} - \frac{\partial r}{\partial\overline{t}} \frac{\partial t}{\partial\overline{r}} \right) F_{tr}.
\end{equation}
(Let us assume $c=1$ here for brevity). Now suppose, if only for simplicity, that our coordinate change acts as $(t,r)\longmapsto(\overline{t},\overline{r})$, leaving the spherical angles $(\theta,\phi)$ untouched. Thus only the radial terms of the metric~\eqref{metric} transform into their corresponding terms of the spherical Minkowski metric:
\begin{equation} \label{rmk_phi}
    g_{tt}(r) \ \rmd t\otimes\rmd t + g_{rr}(r) \ \rmd r\otimes\rmd r = -\rmd \overline{t}\otimes\rmd t + \rmd r\otimes\rmd \overline{r}.
\end{equation}
This equality and all the upcoming ones are meant only to first order in $(\overline{t}-t_0,\overline{r}-r_0)$, which poses no problem for us because we wish to evaluate~\eqref{rmk_phi0} only at $(ct_0,r_0,\theta_0,\phi_0)$ anyway. Using
\begin{equation}
    \rmd t = \frac{\partial t}{\partial \overline{t}}\rmd \overline{t} + \frac{\partial t}{\partial\overline{r}}\rmd\overline{r}, \quad \rmd r = \frac{\partial r}{\partial \overline{t}}\rmd \overline{t} + \frac{\partial r}{\partial\overline{r}}\rmd\overline{r},
\end{equation}
the left side of~\eqref{rmk_phi} evaluates to
\begin{multline*}
    \left( g_{tt} \left(\frac{\partial t}{\partial\overline{t}}\right)^2 + g_{rr} \left(\frac{\partial r}{\partial\overline{t}}\right)^2 \right) \rmd \overline{t}\otimes\rmd \overline{t} + \left( g_{tt} \left(\frac{\partial t}{\partial\overline{r}}\right)^2 + g_{rr} \left(\frac{\partial r}{\partial\overline{r}}\right)^2 \right) \rmd\overline{r}\otimes\rmd\overline{r} \\
    + \left( g_{tt} \frac{\partial t}{\partial\overline{t}}\frac{\partial t}{\partial\overline{r}} + g_{rr} \frac{\partial r}{\partial\overline{t}}\frac{\partial r}{\partial\overline{r}} \right) \big( \rmd \overline{t}\otimes\rmd\overline{r} + \rmd\overline{r}\otimes\rmd \overline{t} \big).
\end{multline*}
Therefore~\eqref{rmk_phi} yields the three equations $-x^2 + y^2 = -1$, $-z^2+w^2 = 1$ and $-xz+yw = 0$ for the auxiliary quantities
\begin{equation*}
    x = \sqrt{\av{g_{tt}(r_0)}} \frac{\partial t}{\partial\overline{t}}\bigg|_{(t_0,r_0)}, \ y = \sqrt{g_{rr}(r_0)} \frac{\partial r}{\partial\overline{t}}\bigg|_{(t_0,r_0)}, \ z = \sqrt{\av{g_{tt}(r_0)}} \frac{\partial t}{\partial\overline{r}}\bigg|_{(t_0,r_0)}, \ w = \sqrt{g_{rr}(r_0)} \frac{\partial r}{\partial\overline{r}}\bigg|_{(t_0,r_0)}.
\end{equation*}
In particular note that
\begin{equation*}
    (xw-yz)^2 = (xw-yz)^2 - (-xz+yw)^2 = (x^2-y^2)(-z^2+w^2) = (1)(1) = 1 \quad \Longrightarrow \quad xw-yz = 1,
\end{equation*}
where, when taking the square root in the end, we also assumed that the transformation $(t,r)\longmapsto(\overline{t},\overline{r})$ must preserve orientation. Thus the quantity~\eqref{rmk_phi0} that expresses our desired $\varphi'(r)$ turns into
\begin{equation} \label{rmk_phi_end}
    F_{\overline{t}\overline{r}} = \left( \frac{\partial t}{\partial\overline{t}} \frac{\partial r}{\partial\overline{r}} - \frac{\partial r}{\partial\overline{t}} \frac{\partial t}{\partial\overline{r}} \right) F_{tr} = \left( \av{g_{tt}(r_0)}^{-1/2} x g_{rr}(r_0)^{-1/2} w - g_{rr}(r_0)^{-1/2} y \av{g_{tt}(r_0)}^{-1/2} z \right) F_{tr} = \av{g_{tt}(r_0) g_{rr}(r_0)}^{-1/2} F_{tr}.
\end{equation}
Using the form~\eqref{metric} of our metric, the final result is precisely $e^{-\alpha}F_{tr}$. This motivates our definition $\varphi' = e^{-\alpha}F_{tr}$ in the GR setting.

\setcounter{subsection}{2}
\subsection{The E-M-BLTP system for a static particle} \label{subsec_app_system}

Here we derive the E-M-BLTP system~\eqref{embltp} as stated in Proposition~\eqref{prop_main_system}.

Our metric~\eqref{metric} has the general form of a static, spherically symmetric metric
\begin{equation} \label{metric_philambda}
    g = -e^{2\Phi(r)} \ \rmd (ct)\otimes\rmd (ct) + e^{2\Lambda(r)} \ \rmd r\otimes\rmd r + r^2 \ \rmd\theta\otimes\rmd\theta + r^2\sin^2\theta \ \rmd\phi\otimes\rmd\phi
\end{equation}
for $e^{2\Phi} = \beta^{-1}e^{2\alpha}$ and $e^{2\Lambda} = \beta$. Thus the non-vanishing metric and inverse-metric components are
\begin{equation} \label{app_g_coeffs}
    \begin{array}{lclclcl}
        g_{tt} = -\beta^{-1}e^{2\alpha}, &\quad& g_{rr} = \beta, &\quad& g_{\theta\theta} = r^2, &\quad& g_{\phi\phi} = r^2\sin^2\theta, \\
        g^{tt} = -\beta e^{-2\alpha}, &\quad& g^{rr} = \beta^{-1}, &\quad& g^{\theta\theta} = r^{-2}, &\quad& g^{\phi\phi} = r^{-2}(\sin\theta)^{-2}, \\
    \end{array}
\end{equation}
and in particular
\begin{equation} \label{app_sqrt_g}
    \sqrt{\av{g}} = r^2e^\alpha\sin\theta.
\end{equation}
The Einstein tensor $(G_{\mu\nu})$ corresponding to~\eqref{metric_philambda} turns out (see~\cite{schutz}, p.~260) to be diagonal and given by
\begin{align}
    G_{tt} &= \frac{e^{2\Phi}}{r^2}\left(r(1-e^{-2\Lambda})\right)', \label{Gtt_philambda} \\
    G_{rr} &= -\frac{e^{2\Lambda}}{r^2}(1-e^{-2\Lambda}) + \frac{2\Phi'}{r}, \label{Grr_philambda} \\
    G_{\theta\theta} &= r^2e^{-2\Lambda}\left( \Phi'' + (\Phi')^2 - \Phi'\Lambda' + \frac{\Phi'-\Lambda'}{r} \right), \nonumber \\
    G_{\phi\phi} &= G_{\theta\theta}\sin^2\theta. \nonumber
\end{align}
The first two of these components are the ones we will need later. Written in terms of $\alpha = \Lambda+\Phi$ and $\beta = e^{2\Lambda}$, they become
\begin{equation} \label{Gmunu_initially}
    G_{tt} = \frac{e^{2\alpha}}{r\beta^2}\left( \frac{\beta'}{\beta} + \frac{\beta}{r}-\frac{1}{r} \right), \quad G_{rr} = \frac{1}{r}\left( -\frac{\beta'}{\beta} - \frac{\beta}{r} + \frac{1}{r} + 2\alpha' \right).
\end{equation}
The 2-form $F$ is given in terms of the electric-deviation scalar $\chi$ as in~\eqref{F_GR2}. The equation $\rmd F = 0$ comes for free using this expression. Two of the other equations defining our spacetime are the vacuum law~\eqref{vacuum_cov} and Maxwell's equation~\eqref{maxwell_cov} for $M$, repeated here:
\begin{equation} \label{appM}
    M = \Star(F + \varkappa^{-2}\rmd\delta F), \quad \rmd M = 0.
\end{equation}
Note that the $M$ equation above is homogeneous now because the 4-current $J$ is identically null on $\mathcal{M}$ --- the point charge's worldline is not considered part of the manifold. The remaining equations are the EFE (Einstein Field Equations)
\begin{equation} \label{appEFE}
    G_{\mu\nu} := R_{\mu\nu} - \frac{1}{2}Rg_{\mu\nu} = \frac{8\pi G}{c^4} T_{\mu\nu}, \quad \mu,\nu = t,r,\theta,\phi,
\end{equation}
where the Hilbert stress-energy tensor $(T_{\mu\nu})$ is defined in~\eqref{Tmunu_GR}. For future use, we compute the Hodge duals of $\rmd(ct)$ and $\rmd(ct)\wedge\rmd r$. According to definition~\eqref{definition_hodge},
\begin{align*}
    &\rmd(ct)\wedge\Star\rmd(ct) = \big\langle \rmd(ct),\rmd(ct)\big\rangle_g \ \sqrt{\av{g}} \ \rmd(ct)\wedge\rmd r\wedge\rmd\theta\wedge\rmd\phi, \\
    &\big(\rmd(ct)\wedge\rmd r\big)\wedge\Star\big(\rmd(ct)\wedge\rmd r\big) = \big\langle \rmd(ct)\wedge\rmd r, \rmd(ct)\wedge\rmd r \big\rangle_g \ \sqrt{\av{g}} \ \rmd(ct)\wedge\rmd r\wedge\rmd\theta\wedge\rmd\phi,
\end{align*}
while definition~\eqref{blades_inner_product} gives
\begin{equation*}
    \big\langle \rmd(ct),\rmd(ct)\big\rangle_g = g^{tt}, \quad \big\langle \rmd(ct)\wedge\rmd r, \rmd(ct)\wedge\rmd r \big\rangle_g = \det\left( \begin{array}{rr} g^{tt} & g^{tr} \\ g^{rt} & g^{rr} \end{array} \right) = g^{tt}g^{rr}.
\end{equation*}
Then, using~\eqref{app_g_coeffs} and ~\eqref{app_sqrt_g}, we quickly deduce
\begin{align}
    \Star\rmd(ct) &= -r^2\beta e^{-\alpha}\sin\theta \ \rmd r\wedge\rmd\theta\wedge\rmd\phi, \label{hodge_dt} \\
    \Star\big(\rmd(ct)\wedge\rmd r\big) &= -r^2 e^{-\alpha}\sin\theta \ \rmd\theta\wedge\rmd\phi. \label{hodge_dtdr} 
\end{align}
As is well known, the EFE do not form a completely independent system of equations. We leave to the end of this section a quick check that the complete information for our spacetime is given by the Maxwell equations~\eqref{appM} together with the $(t,t)$ and $(r,r)$ equations of the EFE~\eqref{appEFE}. Now we proceed to write out these equations in the $(t,r,\theta,\phi)$ coordinates. For this we need to note down the components of $F$, $\delta F$ and $\rmd\delta F$. The only nonzero components of $F$ are
\begin{equation*}
    F_{tr} = Qr^{-2}e^\alpha(\chi-1), \quad F_{rt} = -Qr^{-2}e^\alpha(\chi-1).
\end{equation*}
Since the metric is diagonal, raising an index $\zeta$ of a tensor component amounts to multiplying it by $g^{\zeta\zeta}$ (no sum in $\zeta$ is implied): $\tensor{F}{^\zeta_\eta} = \sum_\mu g^{\zeta\mu}F_{\mu\eta} = g^{\zeta\zeta} F_{\zeta\eta} + \sum_{\mu\neq\zeta} g^{\zeta\mu}F_{\mu\eta} = g^{\zeta\zeta} F_{\zeta\eta}$. Therefore, the only nonzero components of $(\tensor{F}{^\zeta_\eta})$ are
\begin{equation*}
    \tensor{F}{^t_r} = -Qr^{-2}\beta e^{-\alpha}(\chi-1), \quad \tensor{F}{^r_t} = -Qr^{-2}\beta^{-1} e^{\alpha}(\chi-1).
\end{equation*}
Raising the second index, we also find the only nonzero components of $(F^{\zeta\eta})$:
\begin{equation*}
    F^{tr} = -Qr^{-2}e^{-\alpha}(\chi-1), \quad F^{rt} = Qr^{-2}e^{-\alpha}(\chi-1).
\end{equation*}
Next we compute
\begin{equation*}
    (\delta F)^\mu = -\frac{1}{\sqrt{\av{g}}}\partial_\lambda\big( \sqrt{\av{g}} F^{\lambda\mu} \big) = -r^{-2}(\sin\theta)^{-1} e^{-\alpha} \partial_\lambda\big( r^2e^\alpha\sin\theta F^{\lambda\mu} \big), \quad \mu=t,r,\theta,\phi.
\end{equation*}
The expression being differentiated here only depends on the variables $r$ and $\theta$, but for $\lambda = \theta$ the term $F^{\lambda\mu}$ disappears. Hence only $\lambda=r$ must be considered: $(\delta F)^\mu = -r^{-2}e^{-\alpha} \big( r^2 e^\alpha F^{r\mu} \big)'$, which shows that $(\delta F)^\mu$ is only nonzero when $\mu = t$, in which case it is given by
\begin{equation*}
    (\delta F)^t = -r^{-2}e^{-\alpha} \big( r^2 e^\alpha F^{rt} \big)' = -Qr^{-2}e^{-\alpha} \chi'.
\end{equation*}
Lowering its index, we find the 1-form $\delta F = (\delta F)_t \ \rmd (ct)$ for
\begin{equation*}
    (\delta F)_t = Qr^{-2}\beta^{-1} e^\alpha \chi'.
\end{equation*}
We will also need $\rmd\delta F$, which is given by $\big( (\delta F)_t \big)' \ \rmd r\wedge\rmd (ct) = (\rmd\delta F)_{tr} \ \rmd (ct)\wedge\rmd r$ for
\begin{equation*}
    (\rmd\delta F)_{tr} = -\big( (\delta F)_t \big)' = -Q\big( r^{-2}\beta^{-1}e^\alpha\chi' \big)'.
\end{equation*}
We can finally write out $M$ from~\eqref{appM}:
\begin{align}
    M &= \Star(F + \varkappa^{-2} \rmd\delta F) \nonumber \\
    &= \Star \big( Qr^{-2}e^\alpha(\chi-1) - Q\varkappa^{-2}(r^{-2}\beta^{-1}e^\alpha\chi')' \big) \ \rmd(ct)\wedge\rmd r \nonumber \\
    &= Q\big( r^{-2}e^\alpha(\chi-1) - Q\varkappa^{-2}(r^{-2}\beta^{-1}e^\alpha\chi')' \big) \big( -r^2 e^{-\alpha} \sin\theta \ \rmd\theta\wedge\rmd\phi \big) \nonumber \\
    &= Q\sin\theta\bigg( -(\chi-1) + \varkappa^{-2} r^2 e^{-\alpha} (r^{-2}\beta^{-1}e^\alpha\chi')' \bigg) \ \rmd\theta\wedge\rmd\phi, \label{app_M_coords}
\end{align}
where we used equation~\eqref{hodge_dtdr} for $\Star\big( 
\rmd(ct)\wedge\rmd r \big)$. Therefore, in the equation
\begin{equation*}
    0 = \rmd M = Q \sin\theta \frac{\rmd}{\rmd r} \bigg( -(\chi-1) + \varkappa^{-2} r^2 e^{-\alpha} (r^{-2}\beta^{-1}e^\alpha\chi')' \bigg) \ \rmd r\wedge\rmd\theta\wedge\rmd\phi,
\end{equation*}
the term in parentheses must be equal to some constant $q$, giving us $M = Qq\sin\theta \ \rmd\theta\wedge\rmd\phi$ and
\begin{equation} \label{app_chi_eq}
    \frac{(\chi-1+q)e^{\alpha}}{r^2} = \varkappa^{-2}\frac{\rmd}{\rmd r}\left( \frac{\chi' e^\alpha}{r^2\beta} \right).
\end{equation}
Finally, it turns out that $q=1$ must be chosen in order for the charge at the singularity to be $Q$. This choice proves the $\chi$ equation as claimed in~\eqref{embltp}. To see why $q=1$, we simply need to figure out how $Q$ is related to $M$:

\begin{proposition} \label{prop_charge}
    The charge at the singularity is given by
    \begin{equation*}
        Q = \frac{1}{4\pi} \int_{\partial B_R(0)} M,
    \end{equation*}
    where $R>0$ is arbitrary and $\partial B_R(0)$ denotes the boundary of the 3-dimensional sphere
    \begin{equation*}
        B_R(0) := \big\{ct_0\big\} \times \big\{ (r,\theta,\phi): \, r\in [0,R), \ \theta\in [0,\pi], \ \phi\in [0,2\pi) \big\}
    \end{equation*}
    for an arbitrary instant $t_0\in\bbR$. This implies that the value of $q$ in~\eqref{app_chi_eq} is $q=1$.
\end{proposition}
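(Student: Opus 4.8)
The plan is to compute $\int_{\partial B_R(0)} M$ directly from the coordinate expression~\eqref{app_M_coords} for $M$ and then compare the result with a charge-defining integral. The starting point is the recognition that the physically meaningful definition of the enclosed charge in a spacetime with source current $J$ must come from the Maxwell equation $\rmd M = (4\pi/c)\Star J$. Indeed, integrating this over the solid ball $B_R(0)$ and applying Stokes' Theorem converts the flux $\int_{\partial B_R(0)} M$ into $(4\pi/c)\int_{B_R(0)} \Star J$, and the claim is that the latter equals $4\pi Q$ precisely when $J$ has the form~\eqref{J_thm} announced in Theorem~\ref{main_theorem}. So the first step is to verify that the Hodge dual of the $J$ from~\eqref{J_thm} integrates correctly: using the formula for $\Star$ on a 1-form (or more directly the relation $\rmd M = (4\pi/c)\Star J$ read backwards), one finds $\Star J = -Qc\,\delta_{\bm 0}(r)\,r^{-2}\sin\theta \cdot (\text{positive volume factor canceling})\, \rmd r \wedge \rmd\theta \wedge \rmd\phi$ arranged so that $\int_{B_R(0)} \Star J = cQ\int_{B_R(0)} \delta_{\bm 0} \,\rmd^3\bm x = cQ$ by the normalization $\int_{B_R(0)}\delta_{\bm 0}\,\rmd^3\bm x = 1$ stipulated in the theorem. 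This gives $\int_{\partial B_R(0)} M = 4\pi Q$ and hence the displayed formula $Q = (4\pi)^{-1}\int_{\partial B_R(0)} M$.

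Next I would evaluate the same flux integral using the explicit form of $M$ on $\mathcal M$. From~\eqref{app_M_coords} — or, having already integrated $\rmd M = 0$ on $\mathcal M$, from the cleaner expression $M = Qq\sin\theta\ \rmd\theta\wedge\rmd\phi$ derived just above~\eqref{app_chi_eq} — the restriction of $M$ to the 2-sphere $\partial B_R(0) = \{t = t_0,\, r = R\}$ is $Qq\sin\theta\ \rmd\theta\wedge\rmd\phi$, so
\begin{equation*}
    \int_{\partial B_R(0)} M = Qq \int_0^{2\pi}\!\!\int_0^\pi \sin\theta \ \rmd\theta\,\rmd\phi = 4\pi Qq.
\end{equation*}
This is manifestly independent of $R$, consistent with $\rmd M = 0$ away from the singularity. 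Comparing the two evaluations, $4\pi Q = 4\pi Q q$, and since $Q\neq 0$ we conclude $q = 1$, which is exactly the constant appearing in~\eqref{app_chi_eq} and therefore pins down the $\chi$ equation of the E-M-BLTP system~\eqref{embltp}.

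\textbf{Main obstacle.} The only delicate point is the rigorous justification of the step $\int_{\partial B_R(0)} M = (4\pi/c)\int_{B_R(0)}\Star J$, because $J$ is a distribution supported on the excised worldline $r = 0$ and $\mathcal M$ does not include that worldline. The honest way to handle this is to interpret $M$ as a smooth 2-form on $\mathcal M$ whose exterior derivative, computed on the extended manifold $\overline{\mathcal M}$ (or in the sense of currents), picks up the delta contribution; equivalently, one shows that $\lim_{\eps\to 0^+}\int_{\partial B_\eps(0)} M$ exists and equals the ``inner'' flux, which by $\rmd M = 0$ on $\{0 < r < R\}$ equals $\int_{\partial B_R(0)} M$ for every $R>0$. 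Thus the enclosed-charge integral is really a statement about the residue of $M$ at the singularity, and the computation $\int_{\partial B_R(0)} M = 4\pi Q q$ above, being $R$-independent, already encodes this limit; one then just needs to match it against the distributional identity defining $Q$ via $J$. This matching is what fixes the normalization constant in~\eqref{J_thm} and forces $q = 1$; everything else is a one-line angular integration.
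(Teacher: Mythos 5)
Your proposal is correct and follows essentially the same route as the paper: integrate $\rmd M = (4\pi/c)\Star J$ over $B_R(0)$, use Stokes' theorem together with $\int_{B_R(0)}\Star J = cQ$, and compare with the angular integration of $M = Qq\sin\theta\ \rmd\theta\wedge\rmd\phi$ to conclude $q=1$. The only cosmetic difference is that the paper derives the form~\eqref{J_thm} of $J$ from the identification $\sqrt{\av{g}}\,J^t = c\rho$ (motivated by the continuity equation) inside this proof, whereas you take it as given from the theorem statement; the distributional Stokes step you flag as the main obstacle is treated just as formally in the paper.
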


\textbf{Proof}. In GR, the charge-density scalar field $\rho$ associated with a coordinate system like ours is defined in terms of the four-current \textit{vector} $J = (J^\mu)$ by the relation
\begin{equation} \label{GR_rho}
    \sqrt{\av{g}} J^t = c\rho.
\end{equation}
Note that this holds for an inertial coordinate system in SR, where we had the 1-form $J = -c\rho \ \rmd(ct)$ (equation~\eqref{J_components} with $\bm{j} = 0$) and thus, raising the index, $J^t = c\rho$. The reason why~\eqref{GR_rho} is the correct generalization of this definition for GR has to do with the property of charge conservation~\eqref{continuity} that must be satisfied by $\rho$ and $\bm{j}$: Equation~\eqref{divergence_cov}, namely $\delta J = 0$, can only be written as a continuity equation (that is, an equation stating that the sum of \textit{partial derivatives} of scalar quantities is zero) if one uses~\eqref{delta_coordinates}:
\begin{equation*}
    0 = \delta J = -\frac{1}{\sqrt{\av{g}}} \partial_\lambda \big( \sqrt{\av{g}} J^\lambda \big).
\end{equation*}
For this reason, the scalars $\sqrt{\av{g}}J^\lambda$ are identified with charge and current densities in GR. In particular, using the metric~\eqref{metric}, the 1-form $J$ is found by lowering the index from $J^t$ in~\eqref{GR_rho}:
\begin{equation} \label{J_new}
    J = -\frac{e^{2\alpha}}{\beta} \frac{c\rho}{\sqrt{\av{g}}} \ \rmd (ct).
\end{equation}
Of course, we have said before that $J=0$, and this remains true on $\mathcal{M}$ according to this formula, due to $\rho = Q\delta_{\bm{0}}$ being a $\delta$ distribution localized at the singularity. However, we now need to take into account the value of $J$ at the singularity because we are about to perform a volume integral of $\Star J$ over a ball centered at $r=0$, so we must use the more precise equation~\eqref{J_new}. With it, the computation of the Hodge dual of $J$, using equation~\eqref{hodge_dt} for $\Star\rmd(ct)$, reveals
\begin{equation} \label{star_J}
    \Star J = c\rho \ \rmd r\wedge\rmd\theta\wedge\rmd\phi
\end{equation}
and thus $\int_{B_R(0)} \Star J = c \, Q$. Now~\eqref{star_J} and the $M$ equation~\eqref{appM}, namely $\rmd M = (4\pi/c)\Star J$, give us the claimed relation between $Q$ and $M$:
\begin{equation*}
    \frac{4\pi}{c}c \, Q = \frac{4\pi}{c} \int_{B_R(0)} \Star J = \int_{B_R(0)} \rmd M = \int_{\partial B_R(0)} M,
\end{equation*}
where Stokes' theorem was used. To further simplify this, we use the expression $M = Qq \, \sin\theta \ \rmd\theta\wedge\rmd\phi$ (as mentioned just before the statement of this proposition), valid globally on the region where $r>0$ and particularly also along $\partial B_R(0)$, to arrive at
\begin{equation*}
    4\pi Q = Qq \int_{\partial B_R(0)} \sin\theta \ \rmd\theta\wedge\rmd\phi = Qq \int_0^{2\pi} \hspace{-0.7em}\int_0^\pi \sin\theta \ \rmd\theta \, \rmd\phi = 4\pi Qq.
\end{equation*}
Hence $q=1$. \qed

Next we expand the relevant equations of the EFE. Using the expressions for the components of $F$, $\delta F$ and $\rmd\delta F$ found above, the invariants of $F$ featuring in formula~\eqref{Tmunu_GR} for $T_{\mu\nu}$ are determined to be
\begin{align*}
    &F_{\zeta\eta}F^{\zeta\eta} = 2F_{tr} F^{tr} = -2Q^2r^{-4}(\chi-1)^2, \\
    &(\delta F)_\zeta (\delta F)^\zeta = (\delta F)_t (\delta F)^t = -Q^2r^{-4}\beta^{-1}(\chi')^2, \\
    &F_{\zeta\eta}(\rmd\delta F)^{\zeta\eta} = 2F^{tr}(\rmd\delta F)_{tr} = 2Q^2r^{-2}e^{-\alpha}(\chi-1)(r^{-2}\beta^{-1}e^\alpha\chi')'.
\end{align*}
Then we quickly calculate
\begin{align*}
    &\phantom{--} \tensor{F}{^\zeta_t}F_{\zeta t} = Q^2r^{-4}\beta^{-1}e^{2\alpha}(\chi-1)^2, & &\phantom{--} \tensor{F}{^\zeta_r}F_{\zeta r} = -Q^2r^{-4}\beta(\chi-1)^2, \\
    &-\frac{g_{tt}}{4} F_{\zeta\eta}F^{\zeta\eta} = -\frac{Q^2}{2}r^{-4}\beta^{-1}e^{2\alpha}(\chi-1)^2, & &-\frac{g_{rr}}{4} F_{\zeta\eta}F^{\zeta\eta} = \frac{Q^2}{2}r^{-4}\beta(\chi-1)^2, \\
    &\phantom{--} \tensor{F}{^\zeta_t}(\rmd\delta F)_{\zeta t} = -Q^2r^{-2}\beta^{-1}e^\alpha(\chi-1)\big( r^{-2}\beta^{-1}e^\alpha\chi' \big)', & &\phantom{--} \tensor{F}{^\zeta_r}(\rmd\delta F)_{\zeta r} = Q^2r^{-2}\beta e^{-\alpha}(\chi-1)\big( r^{-2}\beta^{-1}e^\alpha\chi' \big)', \\
    &- \frac{g_{tt}}{2}F_{\zeta\eta}(\rmd\delta F)^{\zeta\eta} = Q^2r^{-2}\beta^{-1}e^\alpha(\chi-1)\big( r^{-2}\beta^{-1}e^\alpha\chi' \big)', & 
    &- \frac{g_{rr}}{2}F_{\zeta\eta}(\rmd\delta F)^{\zeta\eta} = -Q^2r^{-2}\beta e^{-\alpha}(\chi-1)\big( r^{-2}\beta^{-1}e^\alpha\chi' \big)', \\
    &-(\delta F)_t(\delta F)_t = -Q^2r^{-4}\beta^{-2}e^{2\alpha}(\chi')^2, & &-(\delta F)_r(\delta F)_r = 0, \\
    &\phantom{--} \frac{g_{tt}}{2} (\delta F)_\zeta (\delta F)^\zeta = \frac{Q^2}{2}r^{-4}\beta^{-2}e^{2\alpha}(\chi')^2, & &\phantom{--} \frac{g_{rr}}{2} (\delta F)_\zeta (\delta F)^\zeta = -\frac{Q^2}{2}r^{-4}(\chi')^2,
\end{align*}
with which we put together the tensor components
\begin{equation*}
    T_{\mu\nu} =: T^{\mathrm{(M)}}_{\mu\nu} + \varkappa^{-2}T^{\mathrm{(BLTP)}}_{\mu\nu}, \quad \mu=\nu=t \text{ and } \mu=\nu=r,
\end{equation*}
piece by piece from definition~\eqref{Tmunu_GR} as follows:
\begin{align*}
    4\pi T^{\mathrm{(M)}}_{tt} &= \tensor{F}{^\zeta_t}F_{\zeta t} -\frac{g_{tt}}{4} F_{\zeta\eta}F^{\zeta\eta} = \frac{Q^2}{2}r^{-4}\beta^{-1}e^{2\alpha}(\chi-1)^2, \\
    4\pi T^{\mathrm{(M)}}_{rr} &= \tensor{F}{^\zeta_r}F_{\zeta r} -\frac{g_{rr}}{4} F_{\zeta\eta}F^{\zeta\eta} = -\frac{Q^2}{2}r^{-4}\beta(\chi-1)^2, \\
    4\pi T^{\mathrm{(BLTP)}}_{tt} &= 2\tensor{F}{^\zeta_t}(\rmd\delta F)_{\zeta t} - \frac{g_{tt}}{2}F_{\zeta\eta}(\rmd\delta F)^{\zeta\eta} - (\delta F)_t(\delta F)_t + \frac{g_{tt}}{2} (\delta F)_\zeta (\delta F)^\zeta \\
    &= -Q^2r^{-2}\beta^{-1}e^\alpha(\chi-1)\big( r^{-2}\beta^{-1}e^\alpha\chi' \big)' - \frac{Q^2}{2} r^{-4}\beta^{-2} e^{2\alpha}(\chi')^2, \\
    4\pi T^{\mathrm{(BLTP)}}_{rr} &= 2\tensor{F}{^\zeta_r}(\rmd\delta F)_{\zeta r} - \frac{g_{rr}}{2}F_{\zeta\eta}(\rmd\delta F)^{\zeta\eta} - (\delta F)_r(\delta F)_r + \frac{g_{rr}}{2} (\delta F)_\zeta (\delta F)^\zeta \\
    &= Q^2r^{-2}\beta e^{-\alpha}(\chi-1)\big( r^{-2}\beta^{-1}e^\alpha\chi' \big)' - \frac{Q^2}{2}r^{-4}(\chi')^2.
\end{align*}
This gives
\begin{align}
    T_{tt} &= \frac{Q^2}{8\pi}\left( \frac{e^{2\alpha}(\chi-1)^2}{r^{4}\beta} - 2\varkappa^{-2}\frac{e^\alpha(\chi-1)}{r^2\beta}\frac{\rmd}{\rmd r}\left(\frac{\chi' e^\alpha}{r^2\beta}\right) - \varkappa^{-2} \frac{e^{2\alpha}(\chi')^2}{r^4\beta^2} \right), \label{Tmunu_initially_tt} \\
    T_{rr} &= \frac{Q^2}{8\pi}\left( -\frac{\beta(\chi-1)^2 }{r^{4}} + 2\varkappa^{-2}\frac{\beta e^{-\alpha}(\chi-1)}{r^2}\frac{\rmd}{\rmd r}\left(\frac{\chi' e^\alpha}{r^2\beta}\right) - \varkappa^{-2}\frac{(\chi')^2}{r^4} \right). \label{Tmunu_initially_rr}
\end{align}
Now using~\eqref{Gmunu_initially} and also using the $\chi$ equation~\eqref{app_chi_eq} to simplify the $\rmd/(\rmd r)$ terms on the right sides of~\eqref{Tmunu_initially_tt} and~\eqref{Tmunu_initially_rr}, the EFE~\eqref{appEFE} for indices $(t,t)$ and $(r,r)$ become
\begin{align}
    \frac{e^{2\alpha}}{r\beta^2}\left( \frac{\beta'}{\beta} + \frac{\beta}{r} - \frac{1}{r} \right) &= \frac{GQ^2}{c^4}\left( \frac{e^{2\alpha}(1-\chi^2)}{r^{4}\beta} - \varkappa^{-2} \frac{e^{2\alpha}(\chi')^2}{r^4\beta^2} \right), \label{app1stEFE} \\
    \frac{1}{r}\left( -\frac{\beta'}{\beta} - \frac{\beta}{r} + \frac{1}{r} + 2\alpha' \right) &= \frac{GQ^2}{c^4}\left( -\frac{\beta(1-\chi^2) }{r^{4}} - \varkappa^{-2}\frac{(\chi')^2}{r^4} \right). \label{app2ndEFE}
\end{align}
From here we see that the combination $(r/2)\big( \beta^2e^{-2\alpha}$\eqref{app1stEFE}$+$\eqref{app2ndEFE}$\big)$ produces the $\alpha$ equation as claimed in the E-M-BLTP system~\eqref{embltp}, while~\eqref{app1stEFE} alone yields the $\beta$ equation after trivial simplifications.

It remains to check our previous claim that the other Einstein equations do not contribute any new information to the problem. We check that the $(t,t)$-- and the $(r,r)$-EFE combined with the identity $\nabla^\nu T_{\mu\nu} = 0$ (which holds independently from the EFE; see remarks right after equation~\eqref{Hilbert_Tmunu}) imply all the other $(\mu,\nu)$-EFE:

\begin{proposition} Equations~\eqref{appEFE} for $\mu=\nu=t$ and $\mu=\nu=r$ together imply the same equations for all other choices of the indices $\mu,\nu$.
\end{proposition}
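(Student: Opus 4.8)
\emph{Proof proposal.} The plan is to combine the twice-contracted second Bianchi identity with the automatic divergence-freedom of the Hilbert stress-energy tensor. Introduce the ``deficit tensor'' $S_{\mu\nu} := G_{\mu\nu} - \tfrac{8\pi G}{c^4}T_{\mu\nu}$, so that the EFE~\eqref{appEFE} are exactly the statement $S_{\mu\nu}=0$. Since $\nabla^\nu G_{\mu\nu}=0$ holds identically (contracted Bianchi), and $\nabla^\nu T_{\mu\nu}=0$ holds as a consequence of the Maxwell equations~\eqref{maxwell_cov},~\eqref{vacuum_cov} alone — which are already part of our system and have been imposed above (see the remarks following~\eqref{Hilbert_Tmunu}) — we get $\nabla^\nu S_{\mu\nu}=0$ \emph{without having used any component of the EFE}. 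This is the one identity that will do all the work.

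The second step is to record the structural form of $S_{\mu\nu}$. For the static, spherically symmetric metric~\eqref{metric_philambda}, the Einstein tensor is diagonal, $r$-dependent only, and satisfies $G_{\phi\phi}=G_{\theta\theta}\sin^2\theta$ (see~\eqref{Gtt_philambda}--\eqref{Grr_philambda} and the two lines immediately following them). Because the metric and the purely electric Faraday tensor $F$ of~\eqref{F_GR2} are both invariant under the rotational Killing fields, so is $T_{\mu\nu}$, which forces it to have the same structure: diagonal, $\theta$-independent except for an overall $\sin^2\theta$ in the $(\phi,\phi)$ slot, with $T_{\phi\phi}=T_{\theta\theta}\sin^2\theta$ (this is also visible in the explicit components~\eqref{Tmunu_initially_tt},~\eqref{Tmunu_initially_rr} and their $(\theta,\theta)$, $(\phi,\phi)$ analogues). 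Hence $S_{\mu\nu}$ is diagonal, depends only on $r$, and obeys $S_{\phi\phi}=S_{\theta\theta}\sin^2\theta$. Consequently every off-diagonal equation in~\eqref{appEFE} is automatically $0=0$, and the $(\phi,\phi)$-equation is $\sin^2\theta$ times the $(\theta,\theta)$-equation; so the only thing left to prove is $S_{\theta\theta}=0$.

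The final step extracts this from the $\mu=r$ component of $\nabla^\nu S_{\mu\nu}=0$. Passing to mixed components and carrying out the standard computation of the radial component of the divergence of a diagonal symmetric tensor in the metric~\eqref{metric_philambda} — the same computation that produces the pressure equation in stellar-structure problems, using $\Gamma^\sigma{}_{\sigma r}=\Phi'+\Lambda'+\tfrac{2}{r}$, $\Gamma^t{}_{tr}=\Phi'$, $\Gamma^r{}_{rr}=\Lambda'$, $\Gamma^\theta{}_{\theta r}=\Gamma^\phi{}_{\phi r}=\tfrac1r$ — one obtains
\begin{equation*}
    0 \;=\; \nabla^\nu S_{r\nu} \;=\; \bigl(S^r{}_r\bigr)' + \Phi'\bigl(S^r{}_r - S^t{}_t\bigr) + \frac{2}{r}\bigl(S^r{}_r - S^\theta{}_\theta\bigr).
\end{equation*}
Now impose the two given equations $S^t{}_t=0$ and $S^r{}_r=0$ (equivalent to $S_{tt}=0$ and $S_{rr}=0$ since the metric is diagonal). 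The displayed identity collapses to $-\tfrac{2}{r}S^\theta{}_\theta=0$ for all $r>0$, which forces $S^\theta{}_\theta=0$, i.e.\ $S_{\theta\theta}=0$, and then $S_{\phi\phi}=0$ as noted. Every remaining equation in~\eqref{appEFE} is thereby verified.

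The main obstacle is not conceptual but one of bookkeeping: one must be sure that $\nabla^\nu T_{\mu\nu}=0$ genuinely holds \emph{at this stage}, using only the Maxwell equations already imposed on $\mathcal{M}$ (so that no circularity with the EFE creeps in), and one must perform the divergence computation above with the correct Christoffel symbols for~\eqref{metric_philambda}. The first point is granted by the variational derivation of $T_{\mu\nu}$ behind~\eqref{Hilbert_Tmunu}; the second is a textbook calculation. With these two routine verifications in hand the argument is short.
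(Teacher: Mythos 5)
Your proposal is correct and follows essentially the same route as the paper: reduce to the $(\theta,\theta)$ equation via diagonality and the relation $S_{\phi\phi}=S_{\theta\theta}\sin^2\theta$, then extract $S_{\theta\theta}=0$ from the radial component of $\nabla^\nu S_{\mu\nu}=0$, which is available because both $G$ and the Hilbert $T$ are divergence-free independently of the EFE. The only difference is notational — you work with mixed components and the stellar-structure form of the radial divergence, while the paper uses contravariant components and the explicit Christoffel symbols $\Gamma^r_{\theta\theta}$, $\Gamma^r_{\phi\phi}$ — and both land on the same nonvanishing coefficient multiplying $S^\theta{}_\theta$.
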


\begin{proof}
    Given the diagonality of the metric and the simple form of the tensor $F$, one quickly checks that $(T_{\mu\nu})$ is diagonal just like $(G_{\mu\nu})$. Thus, only the equations corresponding to $\mu=\nu=\theta$ and $\mu=\nu=\phi$ need to be checked to be superfluous. Moreover, the relation $T_{\phi\phi} = T_{\theta\theta} \sin^2\theta$ is easy to derive from the definition of $(T_{\mu\nu})$, and, given that $(G_{\mu\nu})$ also satisfies it, we see that the $(\phi,\phi)$ equation in~\eqref{appEFE} is superfluous, as it is a multiple of the $(\theta,\theta)$ equation. It is more convenient now to work with the \textit{contravariant} tensors $(G^{\zeta\eta})$ and $(T^{\zeta\eta})$. Given that the metric~\eqref{metric_philambda} is diagonal, these tensors are also diagonal and verify 
    \begin{equation} \label{app_sintheta}
        G^{\phi\phi} = \frac{G^{\theta\theta}}{\sin^2\theta}, \quad T^{\phi\phi} = \frac{T^{\theta\theta}}{\sin^2\theta}.
    \end{equation}
    The divergence freeness of $(T_{\mu\nu})$ is now written in the form $\nabla_\nu T^{\mu\nu} = 0$, which is expanded in components as
    \begin{equation} \label{T_divfree}
        \partial_\nu T^{\mu\nu} + \Gamma^\mu_{\nu\xi} T^{\xi\nu} + \Gamma^\nu_{\nu\xi} T^{\mu\xi} = 0, \quad \mu = t,r,\theta,\phi.
    \end{equation}
    The Einstein tensor is also identically divergence-free, that is, independently from the EFE:
    \begin{equation} \label{G_divfree}
        \partial_\nu G^{\mu\nu} + \Gamma^\mu_{\nu\xi} G^{\xi\nu} + \Gamma^\nu_{\nu\xi} G^{\mu\xi} = 0, \quad \mu = t,r,\theta,\phi.
    \end{equation}
    Multiplying~\eqref{T_divfree} by $8\pi G/c^4$, then subtracting its $\mu=r$ version from that of~\eqref{G_divfree}, we get
    \begin{equation} \label{app_div_GT}
        \frac{\partial}{\partial x^\nu}\left( G^{r\nu} - \frac{8\pi G}{c^4}T^{r\nu} \right) + \Gamma^r_{\nu\xi}\left( G^{\xi\nu} - \frac{8\pi G}{c^4}T^{\xi\nu} \right) + \Gamma^\nu_{\nu\xi}\left( G^{r\xi} - \frac{8\pi G}{c^4}T^{r\xi} \right) = 0.
    \end{equation}
    In the first term between parentheses here, only $\nu = r$ can contribute anything, because $(G^{\zeta\eta})$ and $(T^{\zeta\eta})$ are diagonal, but the $\nu = r$ case also vanishes due to~\eqref{appEFE} having been assumed for $(\mu,\nu)=(r,r)$. For the same reasons (also using the $(\mu,\nu)=(t,t)$ equation), the third term of~\eqref{app_div_GT} vanishes, while in the second one only the values $\xi = \nu = \theta$ and $\xi = \nu = \phi$ can make nonzero contributions. Thus~\eqref{app_div_GT} reduces to
    \begin{multline*}
        0 = \Gamma^r_{\theta\theta}\left( G^{\theta\theta} - \frac{8\pi G}{c^4}T^{\theta\theta} \right) + \Gamma^r_{\phi\phi}\left( G^{\phi\phi} - \frac{8\pi G}{c^4}T^{\phi\phi} \right) = \left( \Gamma^r_{\theta\theta} + \frac{\Gamma^r_{\phi\phi}}{\sin^2\theta} \right)\left( G^{\theta\theta} - \frac{8\pi G}{c^4}T^{\theta\theta} \right) \\
       = -2r\beta^{-1} \left( G^{\theta\theta} - \frac{8\pi G}{c^4}T^{\theta\theta} \right),
    \end{multline*}
    where~\eqref{app_sintheta} was used and the Christoffel symbols $\Gamma^r_{\theta\theta} = -r\beta^{-1}$ and $\Gamma^r_{\phi\phi} = -r\beta^{-1}\sin^2\theta$ were calculated from their definition~\eqref{Ricci_Christoffel}. Since $r\beta^{-1}$ does not vanish, the $(\theta,\theta)$-EFE is established.
\end{proof}

\setcounter{subsection}{3}
\subsection{The field-energy formula} \label{subsec_app_energy}

Here we show a derivation of the integrand in~\eqref{energy_GR} for the electric-field energy $\mathcal{E}$ written as an expression of the unknowns $\alpha,\beta,\chi$. For the 1-form $P$ defined in~\eqref{noether_P} in terms of the vector field $X = \partial_{t}$, we calculate
\begin{equation*}
    P_\mu = T_{\mu\nu}X^\nu = T_{\mu\nu}\delta^\nu_t = T_{\mu t} = T_{tt} \delta_\mu^t,
\end{equation*}
where we used the fact that $(T_{\mu\nu})$ is diagonal. This gives
\begin{equation*}
    P = P_t \ \rmd(ct) \quad\text{for } P_t = T_{tt}.
\end{equation*}
The Hodge dual $\Star P$ is calculated with the help of~\eqref{hodge_dt}:
\begin{equation} \label{star_P_aux}
    \Star P = T_{tt} \big( 
    g^{tt}\sqrt{\av{g}} \ \rmd r\wedge\rmd\theta\wedge\rmd\phi \big) = -T_{tt} r^2\beta e^{-\alpha} \sin\theta \ \rmd r\wedge\rmd\theta\wedge\rmd\phi.
\end{equation}
Plugging this into~\eqref{energy_GR} and using the expression for $T_{tt}$ that can be read off~\eqref{app1stEFE}, we find
\begin{align*}
    \mathcal{E} &= -\int_{\{t=t_0\}} \Star P \\
    &= \frac{Q^2}{8\pi} \int_0^{2\pi} \int_0^\pi \int_0^\infty r^2\beta e^{-\alpha}\sin\theta \left(\frac{e^{2\alpha}(1 - \chi^2)}{r^4\beta} - \varkappa^{-2}\frac{e^{2\alpha}(\chi')^2}{r^4\beta^2} \right) \ \rmd r \, \rmd\theta \, \rmd\phi \\
    &= \frac{Q^2}{2} \int_0^\infty \frac{e^{\alpha}}{r^2} \left( 1 - \chi^2 - \varkappa^{-2}\frac{(\chi')^2}{\beta} \right) \ \rmd r,
\end{align*}
which is as claimed in~\eqref{energy_GR}. To justify why a minus sign appears in front of the integral in the definition~\eqref{energy_GR} of $\mathcal{E}$, we have to check that this result generalizes that of SR, namely~\eqref{energy_SR}. This is immediate: Plugging~\eqref{star_P_aux} directly into the energy integral without explicitly writing out $T_{tt}$ gives
\begin{align*}
    \mathcal{E} &= \int_{\{t=t_0\}} T_{tt} r^2\beta e^{-\alpha} \sin\theta \ \rmd r\wedge \rmd\theta\wedge\rmd\phi \\
    &= \int_{\{t=t_0\}} (g_{tt})^2 T^{tt} r^2\beta e^{-\alpha} \sin\theta \ \rmd r\wedge \rmd\theta\wedge\rmd\phi \\
    &= \int_0^{2\pi}\int_0^\pi\int_0^\infty \frac{e^{3\alpha}}{\beta} T^{tt} r^2\sin\theta \ \rmd r\,\rmd\theta\,\rmd\phi \\
    &= \int_{\bbR^3} \frac{e^{3\alpha}}{\beta} T^{tt} \ \rmd^3\bm{x},
\end{align*}
which reduces to~\eqref{energy_SR} (since $\frake = T^{tt}$) when the Minkowski metric is used, that is, when $\alpha \equiv 0$ and $\beta \equiv 1$.


} 

\end{document}